\documentclass{lmcs} 
\pdfoutput=1
\usepackage[utf8]{inputenc}

\usepackage{lastpage}
\lmcsdoi{21}{2}{16}
\lmcsheading{}{\pageref{LastPage}}{}{}%
{Feb.~02,~2024}{May~30,~2025}{}

\keywords{MANDATORY list of keywords}

\usepackage{hyperref}
\theoremstyle{plain} 


\usepackage{url}
\usepackage{xspace} 
\usepackage{hyperref}
\usepackage{cleveref}
\usepackage{mathabx} 
\usepackage{tikz}
\usetikzlibrary{automata}
\usetikzlibrary{arrows}
\usetikzlibrary{backgrounds}
\usetikzlibrary{positioning}
\usetikzlibrary{fit}
\usetikzlibrary{calc}
\usetikzlibrary{matrix}
\usetikzlibrary{patterns}
\usetikzlibrary{shapes.geometric}
\usetikzlibrary{decorations.pathreplacing}
\usetikzlibrary{decorations.pathmorphing}
\usepackage{mathbbol} 

\usepackage{subcaption}

\newcommand{\set}[1]{\{ #1 \}}

\newcommand{\pair}[2]{(#1,#2)}
\newcommand{\triple}[3]{(#1,#2,#3)}
\newcommand{\tuple}[2]{(#1,\ldots,#2)}

\newcommand{\Nat}{\ensuremath{\mathbb{N}}}
\newcommand{\Rat}{\ensuremath{\mathbb{Q}}}

\newcommand{\Zed}{\ensuremath{\mathbb{Z}}}
\newcommand{\true}{\top}

\newcommand{\powerset}[1]{\mathcal{P}(#1)}

\newcommand{\card}[1]{\mbox{\textup{card}}(#1)}

\newcommand{\adatadomain}{\ensuremath{\mathbb{D}}}
\newcommand{\rabinacc}{F} 

\newcommand {\mynext}{\mathsf{X}}
\newcommand {\until}{\mathsf{U}}
\newcommand {\release}{\mathsf{R}}
\newcommand {\sometimes}{\mathsf{F}}
\newcommand {\always}{\mathsf{G}}

\newcommand {\forallpaths} {\mathsf{A}}
\newcommand {\existspath} {\mathsf{E}}

\newcommand{\aset}{X}
\newcommand{\asetbis}{Y}

\newcommand{\avarprop}{p}



\newcommand{\aformula}{\phi} 
\newcommand{\aformulabis}{\psi} 
\newcommand{\aformulater}{\varphi} 
\newcommand{\subf}[1]{\text{sub}(#1)}
\newcommand{\parsubf}[2]{\text{sub}_{#1}(#2)}

\newcommand{\aautomaton}{{\mathbb A}}
\newcommand{\aautomatonbis}{{\mathbb B}}


\newcommand {\length}[1] {\ensuremath{|#1|}}

\newcommand{\dom}{dom} 

\newcommand{\egdef}{\stackrel{\mbox{\begin{tiny}def\end{tiny}}}{=}} 
\newcommand{\equivdef}{\stackrel{\mbox{\begin{tiny}def\end{tiny}}}{\equivaut}} 
\newcommand{\equivaut}{\;\Leftrightarrow\;}

\newcommand{\amap}{\mathfrak{f}}
\newcommand{\amapbis}{\mathfrak{g}}

\newcommand{\step}[1]{\xrightarrow{\!\!#1\!\!}}

\newcommand {\pspace} {\textsc{PSpace}\xspace}

\newcommand {\np} {\textsc{NP}\xspace}
\newcommand {\conp} {co\textsc{NP}\xspace}

\newcommand {\exptime} {\textsc{ExpTime}\xspace}

\newcommand {\ptime} {\textsc{PTime}\xspace}
\newcommand{\twoexptime}{\textsc{2ExpTime}\xspace}

\newcommand{\aalphabet}{\Sigma}     
\newcommand{\aword}{w}
\newcommand{\awordbis}{v}

\newcommand{\locations}{Q}
\newcommand{\alocation}{q}

\newcommand{\aletter}{\ensuremath{\mathtt{a}}}
\newcommand{\aletterbis}{\ensuremath{\mathtt{b}}}
\newcommand{\aletterter}{\ensuremath{\mathtt{c}}}

\newcommand{\variables}{\VAR}
\newcommand{\avariable}{\mathtt{x}}
\newcommand{\avariablebis}{\mathtt{y}}
\newcommand{\avariableter}{\mathtt{z}}

\newcommand{\atranslation}{\mathfrak{t}}
\newcommand{\defstyle}[1]{{\emph{#1}}}

\newcommand{\cut}[1]{}
\newcommand{\interval}[2]{[#1,#2]}

\newcommand{\adatum}{\ensuremath{\mathbb{d}}}

\newcommand{\Logic}[1]{\textup{#1}\xspace}
\newcommand{\CTL}{\Logic{CTL}}

\newcommand{\CTLStar}{\Logic{CTL$^*$}}
\newcommand{\LTL}{\Logic{LTL}}

\newcommand{\apath}{\pi}
\newcommand{\arun}{\rho}

\newcommand{\atree}{\mathbb{t}}

\newcommand{\asymtree}{t}
\newcommand{\anode}{\mathbf{n}}
\newcommand{\anodebis}{\mathbf{m}}

\newcommand{\abranch}{\mathcal{B}}
\newcommand{\alang}{{\rm L}}

\newcommand{\avaluation}{\mathfrak{v}}

\newcommand{\aterm}{\mathtt{t}}

\makeatletter
\def\mylab#1#2{%
  {\textbf{#1}}%
  \begingroup%
    \def\@currentlabel{\textbf{#1}}%
    \phantomsection\label{#2}%
  \endgroup%
}
\makeatother

\newcommand{\MSO}{\Logic{MSO}}

\newcommand{\transitions}{\delta}
\newcommand{\aatomiccons}{\theta}
\newcommand{\acons}{\Theta}

\newcommand{\apermutation}{\sigma}

\newcommand{\atransition}{\mathtt{T}}

\newcommand{\size}[1]{\mathtt{size}(#1)}

\newcommand{\vect}[1]{\boldsymbol{#1}}

\newcommand{\prefix}{<_{\footnotesize{\textup{pre}}}}

\newcommand{\apathformula}{\Phi}
\newcommand{\apathformulabis}{\Psi}
\newcommand{\myterms}[2]{{\rm T}^{#1}_{#2}}
\newcommand{\VAR}{V}

\newcommand{\adks}{\mathcal{K}}
\newcommand{\arelation}{\mathcal{R}}
\newcommand{\worlds}{\mathcal{W}}
\newcommand{\aworld}{\ensuremath{\mathit{w}}}
\newcommand{\satproblem}[1]{{\rm SAT(}#1{\rm)}}

\newcommand{\domain}[1]{\mathtt{dom}(#1)}

\newcommand{\init}{\text{in}}
\renewcommand{\degree}{D}

\newcommand{\aatm}{\mathcal{M}}
\newcommand{\aks}{\mathcal{K}}

\newcommand{\fd}[1]{\text{fd}(#1)}
\newcommand{\treeconstraints}[1]{{\rm TreeCons}(#1)}
\newcommand{\sattypes}[1]{{\rm STypes}(#1,\adatum_1,\adatum_\alpha)} 

\newcommand{\aconfiguration}{C}

\newcommand{\permut}[1]{\mathbf{S}_{#1}}
\newcommand{\kripke}{\mathcal{K}}

\newcommand{\pathquantifier}{\mathcal{Q}}
\newcommand{\safra}{\mathbf{s}}
\newcommand{\saflab}{\text{Lab}}
\newcommand{\acc}{\text{acc}}

\newcommand{\maxconstraintsize}[1]{\mathtt{MCS}(#1)}

\newcommand{\Configs}{\text{Configs}}

\newcommand{\DVAR}[3]{\mathtt{T}(#1,#2,#3)}
\newcommand{\advar}{\mathtt{xd}}
\newcommand {\slen}[1] {\ensuremath{\mathrm{slen}(#1)}}
\newcommand {\dslen}[1] {\ensuremath{\mathrm{slen}^{\hspace{-0.03in} d}(#1)}}
\newcommand {\rslen}[1] {\ensuremath{\mathrm{slen}^{\hspace{-0.03in} dr}(#1)}}
\newcommand {\uslen}[1] {\ensuremath{\mathrm{slen}^{\hspace{-0.03in} \downtouparrow}(#1)}}

\newcommand{\dlogic}{\mathcal{A}\mathcal{L}\mathcal{C}\mathcal{F}^{\mathcal{P}}(\Zed)}
\newcommand{\arbitraryletter}{\dag}

\newcommand{\newbigstar}{\bigstar^{\hspace{-0.02in}\mbox{\tiny{C}}}}

\newcommand{\newGt}{G_{\asymtree}^{\mbox{\tiny{C}}}}
\newcommand{\domnewGt}{V_{\asymtree}}
\newcommand{\domnewGtvar}{V^{var}_{\asymtree}}

\newcommand{\rp}{rp} 

\newcommand{\ancautomaton}{A}
\newcommand{\locautomaton}{\ancautomaton_{\mbox{\tiny cons($\aautomaton$)}}} 
\newcommand{\starautomaton}{\ancautomaton_{\newbigstar}}

\newcommand{\boundsattypes}{((\adatum_{\alpha} - \adatum_1)+3)^{2 \beta} 3^{2 \beta^2}}

\begin{document}

\title[Constraint Automata on Infinite Data Trees]{
  Constraint Automata on Infinite Data Trees: \texorpdfstring{\\}{} 
       From $\CTL(\Zed)$/$\CTLStar(\Zed)$  To Decision Procedures}
\thanks{Karin Quaas is supported by the Deutsche Forschungsgemeinschaft (DFG), project 504343613.}	

\author[S.~Demri]{St\'ephane Demri\lmcsorcid{0000-0002-3493-2610}}[a]
\author[K.~Quaas]{Karin Quaas}[b]

\address{Universit{\'e} Paris-Saclay, ENS Paris-Saclay, CNRS, Laboratoire M{\'e}thodes Formelles, 91190,
  Gif-sur-Yvette, France}	

\address{Universit\"at Leipzig, Fakult\"at f\"ur Mathematik und Informatik, Germany}	
\email{quaas@informatik.uni-leipzig.de}  



\keywords{
  Constraints,
  Constraint Automata, Temporal Logics, Infinite Data Trees
  }



\begin{abstract}
We introduce the class of tree constraint automata with data values in $\Zed$ equipped with the less than relation  and equality predicates to constants, 
and we show that its nonemptiness problem is in \exptime. Using an automata-based approach, we establish that the satisfiability problem for $\CTL(\Zed)$ (\CTL with constraints in $\Zed$) is \exptime-complete, and the satisfiability problem for $\CTLStar(\Zed)$ is \twoexptime-com\-ple\-te (only decidability was known so far). By-product results with other concrete domains and other logics are also briefly discussed. 
\end{abstract}

\maketitle

\section{Introduction}
\label{section-introduction}
In this paper,  we study the satisfiability problem for the branching-time temporal logics $\CTL(\Zed)$ and $\CTLStar(\Zed)$, extending the classical temporal logics $\CTL$ and $\CTLStar$ in that atomic formulae express constraints about the relational structure $(\Zed, <, =, (=_\adatum)_{\adatum\in\Zed})$. Formulae in these logics are interpreted over Kripke structures that are annotated with values in $\Zed$,
see for instance the tree $\atree$ in  Figure~\ref{figure-symbolic-tree}
(page~\pageref{figure-symbolic-tree}).  A typical $\CTLStar(\Zed)$ formula is the expression $\forallpaths\always\sometimes(\avariable<\mynext\avariable)$
stating that on all paths infinitely often the value of the variable $\avariable$ at the current position is strictly smaller than the value of $\avariable$ at the next position. Formalisms defined over relational structures, also known as \emph{concrete domains}, are considered in many works,  including works on temporal logics~\cite{Groote&Mateescu98,Carapelle15,Mayr&Totzke16,Lechneretal18,Figueira&Majumdar&Praveen20,Conduracheetal21,Felli&Montali&Winkler22,Faella&Parlato24},
description logics~\cite{Lutz02,Lutz03,Lutz04,CarapelleTurhan16,Labai21,Baader&Rydval22,Alrabbaaetal23,Borgwardt&DeBortoli&Koopmann24}, 
and automata~\cite{Gascon09,Segoufin&Torunczyk11,Kartzow&Weidner15,Weidner16,Torunczyk&Zeume22,Peteler&Quaas22}. 
Combining reasoning in your favourite logic with reasoning in a relevant concrete domain reveals to be essential for numerous applications,
for instance for reasoning about ontologies, see e.g.~\cite{Lutz04,Labai&Ortiz&Simkus20}, or data-aware systems,
see e.g.~\cite{Deutsch&Hull&Vianu14,Felli&Montali&Winkler22bis}.
A brief survey can be found in~\cite{Demri&Quaas21}.  

Decidability results for concrete domains handled in~\cite{Lutz&Milicic07,Gascon09,Baader&Rydval22,Baader&DeBortoli24} exclude the ubiquitous
concrete domain $\pair{\Zed}{<,=,(=_{\adatum})_{\adatum \in \Zed}}$. By contrast,  decidability results for logics with concrete domain $\Zed$ require dedicated proof techniques, see e.g.~\cite{Bozzelli&Gascon06,Demri&DSouza07,Segoufin&Torunczyk11,Labai&Ortiz&Simkus20,Bhaskar&Praveen23bis}.
In particular, {\em fragments} of $\CTLStar(\Zed)$ are shown decidable
in~\cite{Bozzelli&Gascon06} using integral relational automata from~\cite{Cerans94}, and the satisfiability problem for the existential and the  universal fragment of \CTLStar with gap-order constraints (more general
than the ones in this paper) can be solved in \pspace~\cite[Theorem 14]{Bozzelli&Pinchinat14}. Another important breakthrough came with the decidability of $\CTLStar(\Zed)$~\cite[Theorem 32]{Carapelle&Kartzow&Lohrey16} (see also~\cite{Carapelle&Kartzow&Lohrey13}) by designing a reduction to a decidable second-order logic, whose formulae are made of  Boolean combinations of formulae from \MSO and from Weak MSO+U~\cite{Bojanczyk&Torunczyk12}, where U is the unbounding second-order quantifier, see e.g.~\cite{Bojanczyk04,Bojanczyk&Colcombet06} (in Weak MSO, second-order
quantification is over finite sets). This is all the more
remarkable as the decidability result is part of a powerful general approach~\cite{Carapelle&Kartzow&Lohrey16}, but no sharp complexity upper bound
can be inferred. More recently, the condition $C_{\Zed}$~\cite{Demri&DSouza07} to approximate the set of satisfiable symbolic models of a given $\LTL(\Zed)$ formula (in a problematic way, not necessarily an $\omega$-regular language)
is extended to the branching case in~\cite{Labai&Ortiz&Simkus20} leading
to the \exptime-membership of the concept satisfiability problem w.r.t. general TBoxes for the description logic $\dlogic$. 
However, no elementary complexity upper bounds for the satisfiability problem for $\CTL(\Zed)$ nor $\CTLStar(\Zed)$ were known since their decidability was
  established in~\cite{Carapelle&Kartzow&Lohrey13,Carapelle15,Carapelle&Kartzow&Lohrey16}. 
 
In this paper, we prove that the satisfiability problem for $\CTL(\Zed)$ is \exptime-complete, and the satisfiability problem for $\CTLStar(\Zed)$ is \twoexptime-complete.  We pursue the \emph{automata-based approach} for solving decision problems for temporal logics, following seminal works for temporal logics, see e.g.~\cite{Vardi&Wolper86,Vardi&Wolper94,Kupferman&Vardi&Wolper00}. 
This popular approach consists of reducing logical problems (satisfiability, model-checking) to automata-based decision problems while taking advantage of existing results and decision procedures from automata theory, see e.g.~\cite{Vardi&Wilke08}. In the presence of concrete domains, one can distinguish two approaches. The first one consists in designing constraint automata (see e.g.~\cite{Revesz02}) accepting
directly structures with data values, 
see e.g.~\cite{Segoufin&Torunczyk11,Kartzow&Weidner15,Peteler&Quaas22}
and also in~\cite{Figueira12} related to  automata accepting finite data trees.
The translation from logics with concrete domains to constraint automata often smoothly follows the plain case with temporal logics, see e.g.~\cite{Vardi&Wolper94} and the difficulty rests on the design of decision procedures for checking nonemptiness of constraint automata (but this is done only once). The second approach consists in reducing the satisfiability problem into the nonemptiness problem for automata handling {\em finite} alphabets,
see e.g.~\cite{Lutz01,Lutz04b,Gascon09,Labai&Ortiz&Simkus20}. In this case, the main effort is focused on the design of the translation (based on abstractions for tuples of data values) since the decision procedures for the target automata are usually already well-studied. 

It is well-known that decision procedures for \CTLStar are difficult to
design, and the combination with the concrete domain $\Zed$ is definitely challenging. Moreover, we aim at proposing a general framework:
we investigate a new class of {\em tree constraint automata}, understood as a target formalism in the pure tradition of the automata-based approach, and easy to reuse. The structures accepted by such tree constraint automata are {\em infinite}  trees in which nodes are labelled by a letter from a finite alphabet and a tuple in $\Zed^{\beta}$ for some $\beta \geq 1$ (this excludes the automata designed in~\cite{Figueira10,Figueira12} dedicated to \emph{finite} trees, and no predicate $<$ is involved). Decision problems for alternating automata over infinite alphabets are often undecidable, see e.g.~\cite{Neven&Schwentick&Vianu04,Lasota&Walukiewicz08,Demri&Lazic09,Iosif&Xu19},
and therefore we advocate the introduction of {\em nondeterministic} tree constraint automata without alternation. Our definition of tree constraint automata naturally extends the definition of constraint automata for words
(see e.g.~\cite{Cerans94,Revesz02,Segoufin&Torunczyk11,Kartzow&Weidner15,Peteler&Quaas22})
and as far as we know, the extension to infinite trees in the way done herein
has not been considered earlier in the literature. Note that our tree automaton model differs from the Presburger B\"uchi tree automata from~\cite{Seidl&Schwentick&Muscholl08,Bednarczyk&Fiuk22} for which, in the runs, arithmetical expressions are related  to constraints between the  numbers of children labelled by different locations. Herein, the arithmetical expressions state constraints between data values (not necessarily at the same node).

As a key result, we show that the nonemptiness problem for tree constraint automata over $\pair{\Zed}{<,=,(=_{\adatum})_{\adatum \in \Zed}}$
is \exptime-complete. In order to obtain the \exptime upper bound, we adapt results from~\cite{Labai&Ortiz&Simkus20,Labai21} (originally expressed in the context of interpretations for description logics) and we take advantage of several automata-based constructions for Rabin/Streett tree automata (see Lemma~\ref{lemma-intersection-rtca} and Lemma~\ref{lemma-intersection-automaton}). As a corollary, we establish that the satisfiability problem for $\CTL(\Zed)$ is \exptime-complete (Theorem~\ref{theorem-ctlz}), which is one of the main results of the paper. As a by-product, it also allows us to conclude that the concept satisfiability problem w.r.t. general TBoxes for the description logic $\dlogic$ is in \exptime, a result  known since~\cite{Labai&Ortiz&Simkus20}. The details can be found in~\cite[Section 5.2]{Demri&Quaas23}. By lack of space, we do not provide the details herein. 
  
Our main  contribution  is the  characterisation of the complexity for $\CTLStar(\Zed)$ satisfiability, which is an open problem evoked in~\cite[Section 9]{Carapelle&Kartzow&Lohrey16} and~\cite[Section 5]{Labai&Ortiz&Simkus20} (decidability was established ten years ago in~\cite{Carapelle&Kartzow&Lohrey13}). In general, our contributions stem from the cross-fertilisation of automata-based techniques for temporal logics and reasoning about (infinite) structures made of symbolic $\Zed$-constraints.
In Section~\ref{section-ctlstarz}, we show that the satisfiability problem for $\CTLStar(\Zed)$ is in \twoexptime by using Rabin tree constraint automata (introduced herein). We have to check that the essential steps for \CTLStar can be lifted to $\CTLStar(\Zed)$ to get the optimal upper bound
while guaranteeing that computationally we are in a position to provide an optimal upper bound. In Section~\ref{section-ctlstarz-special-form}, we establish a special form for $\CTLStar(\Zed)$ formulae from which tree constraint automata are defined, adapting the developments from~\cite{Emerson&Sistla84}. Moreover, determinisation of nondeterministic (B\"uchi) word constraint automata with Rabin word constraint automata is proved in Section~\ref{section-ctlstarz-determinisation-safra} following developments from~\cite[Chapter 1]{Safra89} but carefully adapted to the context of constraint automata. Observe that we can get away with nondeterminism, but
using alternation with registers/variables would more problematic because, in some way, this would require to know how to handle an unbounded number of data values due to alternation. 

\noindent
{\em This paper is a revised and completed version of the conference paper~\cite{Demri&Quaas23bis}. It can be seen also as a revised and more compact version of the arXiv report~\cite{Demri&Quaas23}. In order to limit the length of the body of this paper, several proofs are placed in the technical appendix
(see Appendices~\ref{appendix-first}--\ref{appendix-last}). 
}

\section{Temporal Logics with Numerical Domains}
\label{section-introduction-temporal-logics} 
\subsection{Constraints and Kripke Structures}
Let $\VAR = \set{\avariable, \avariablebis, \ldots}$ be a countably infinite set of variables. A \defstyle{term $\aterm$ over $\VAR$} is an expression of the form $\mynext^i \avariable$, where $\avariable \in \VAR$ and $\mynext^i$ is a (possibly empty) sequence of $i$ symbols `$\mynext$'. 
A term $\mynext^i \avariable$ should be understood as a variable (that needs to be interpreted) but, later on, we will see  that the prefix $\mynext^i$ will have a temporal interpretation. We write $\myterms{}{\VAR}$ to denote the set of all terms over $\VAR$. For all $i \in \Nat$, we write $\myterms{\leq i}{\VAR}$ 
to denote the subset of terms of the form $\mynext^j \avariable$, where $j\leq i$. For instance, $\myterms{\leq 0}{\VAR}=\VAR$ and $\myterms{\leq 1}{\VAR}=\VAR \cup \{\mynext \avariable \mid \avariable\in\VAR\}$.  
A \defstyle{valuation} $\avaluation: \myterms{}{\VAR} \to \Zed$ is a function that maps terms in $\myterms{}{\VAR}$ to elements in $\Zed$. 
To be precise, and a bit more general, 
we assume that the set of valuations is polymorphic: 
as a common feature it admits as argument a term and returns an integer. 
Quite often, valuations $\avaluation$ are of the form
$\set{\avariable_1, \ldots, \avariable_{\beta}} \to \Zed$ when we are only interested in the values for the variables in $\set{\avariable_1, \ldots, \avariable_{\beta}}$. 
Possibly, additional arguments are considered to provide
some context to the interpretation, such as a world or a path in a Kripke structure; details will follow but should not lead to any confusion. 

We consider the concrete domain $\pair{\Zed}{<,=,(=_{\adatum})_{\adatum \in \Zed}}$ (also written $\Zed$), where $=_{\adatum}$ is a unary predicate stating the equality with the constant $\adatum$, $<$ is the usual order on $\Zed$, and $=$ denotes  equality.  An \defstyle{atomic constraint $\aatomiccons$ over $\myterms{}{\VAR}$} is an expression of one of the forms
below:
\[
\aterm < \aterm' \ \ \ \ \
\aterm = \aterm' \ \ \ \ \
=_{\adatum}(\aterm) \ \mbox{(also written $\aterm = \adatum$)},
\]
where $\adatum \in \Zed$ and $\aterm, \aterm' \in \myterms{}{\VAR}$. 
A \defstyle{constraint} $\acons$ is defined as a Boolean combination of 
atomic constraints. Constraints are interpreted on valuations $\avaluation: \myterms{}{\VAR} \to \Zed$: a valuation $\avaluation$ \defstyle{satisfies}  $\aatomiccons$, written  $\avaluation \models \aatomiccons$ iff 
the interpretation of the terms in $\aatomiccons$ makes  $\aatomiccons$ true
in $\Zed$ in the usual way. The Boolean connectives are interpreted as usual. 
A constraint $\acons$  is \defstyle{satisfiable}
iff there is a valuation
$\avaluation: \myterms{}{\VAR} \to \Zed$ such that $\avaluation \models \acons$. 
Similarly, 
a constraint $\acons_1$ \defstyle{entails} a constraint $\acons_2$ (written
$\acons_1 \models \acons_2$)
iff for all valuations $\avaluation$, we have
$\avaluation \models \acons_1$ implies $\avaluation \models \acons_2$. 
The satisfiability problem
restricted to finite conjunctions of atomic constraints
can be solved in \ptime (see e.g.~\cite[Lemma 5.5]{Cerans94}) and entailment is in \conp.
The \ptime upper bound can be refined to a cubic bound by performing a linear
amount of calls to Bellman-Ford algorithm~\cite{Cormen&Leiserson&Rivest&Stein09}
that computes shortest paths in weighted directed graphs and detects negative cycles,
see also~\cite{Dechter&Meiri&Pearl91,Candeagoetal16}. 

\paragraph{Kripke structures.}
In order to define logics with the concrete domain $\Zed$, 
the semantical structures for such logics
are enriched with valuations that interpret
the variables by elements in $\Zed$. 
A \defstyle{$\Zed$-decorated Kripke structure} (or \defstyle{Kripke structure} for short) $\adks$ is a
triple $\triple{\worlds}{\arelation}{\avaluation}$, 
where $\worlds$ is a non-empty set of \defstyle{worlds},
$\arelation \subseteq \worlds \times \worlds$ is the accessibility relation
and $\avaluation: \worlds \times \variables \rightarrow \Zed$ is a valuation.
A Kripke structure $\aks$ is \defstyle{total} if 
for all $\aworld \in \worlds$ there is $\aworld' \in \worlds$ such that~$\pair{\aworld}{\aworld'} \in \arelation$. 
Given a Kripke structure $\aks =  \triple{\worlds}{\arelation}{\avaluation}$ and a world $w \in \worlds$, an \defstyle{infinite path}
$\apath$ from $\aworld$ is an $\omega$-sequence $\aworld_0,\aworld_1 \ldots \aworld_n, \ldots$ such that 
$\aworld_0 = \aworld$ and  for all
$i \in \Nat$, we have $\pair{w_i}{w_{i+1}} \in \arelation$. Finite paths are defined accordingly. 

\paragraph{Labelled trees.}
Along the paper, the expression $\interval{n}{m}$ with $n,m \in \Zed$
denotes the set $\set{k \in \Zed \mid n \leq k \leq m}$. 
Given $\degree \geq 1$, a \defstyle{labelled tree} of degree $\degree$ is a map
$\atree: \dom(\atree) \to \aalphabet$, where $\aalphabet$ is some (potentially infinite) alphabet and $\dom(\atree)$ is an infinite subset of  
$\interval{0}{\degree-1}^*$ such that $\anode\in \dom(\atree)$, and $\anode\cdot i\in\dom(\atree)$ for all $0\leq i <j$ whenever $\anode \cdot  j\in\dom(\atree)$ for some $\anode\in \interval{0}{\degree-1}^*$ and $j\in \interval{0}{\degree-1}$. 
The elements of $\dom(\atree)$ are called \defstyle{nodes}.
The empty word $\varepsilon$ is the \defstyle{root node} of $\atree$. 
For every
$\anode\in \dom(\atree)$, the elements
$\anode\cdot i \in \dom(\atree)$ with $i\in \interval{0}{\degree-1}$ are called the
\defstyle{children nodes of $\anode$}, and $\anode$ is called
the \defstyle{parent node of $\anode \cdot i$}.  
Nodes with the same parent node are called \defstyle{sibling nodes}; sibling nodes are implicitly ordered, though this feature is seldom used
in the document.
We say that the tree $\atree$ is a \defstyle{full $\degree$-ary tree} if every
node $\anode$ has exactly $\degree$
children $\anode\cdot 0, \dots, \anode\cdot (\degree-1)$ (equivalently, $\dom(\atree) = \interval{0}{\degree-1}^*$).
Given a tree $\atree$ and a node $\anode$ in $\dom(\atree)$, an infinite \defstyle{path}
in $\atree$ starting from
$\anode$ is an infinite sequence $\anode\cdot j_1 \cdot j_2 \cdot j_3 \dots$, where
$j_i\in\interval{0}{\degree-1}$ and $\anode \cdot j_1 \dots j_i\in \dom(\atree)$ for all $i\geq 1$.

A \defstyle{tree Kripke structure $\aks$} is a Kripke structure
$\triple{\worlds}{\arelation}{\avaluation}$ such that $\pair{\worlds}{\arelation}$ is a tree
(not necessarily a full $\degree$-ary tree).
Tree Kripke structures $\triple{\worlds}{\arelation}{\avaluation}$ such that
$\pair{\worlds}{\arelation}$ is isomorphic to the full $\degree$-ary tree are represented by maps of the form $\atree: \interval{0}{\degree-1}^* \to \Zed^{\beta}$.
This assumes that we only care about the values of the variables $\avariable_1, \ldots, \avariable_{\beta}\in\VAR$, 
and $\atree(\anode) = \tuple{\adatum_1}{\adatum_{\beta}}$ encodes $\avaluation(\anode, \avariable_i) = \adatum_i$  
for all $i \in \interval{1}{\beta}$. 

\subsection{The Logic $\CTLStar(\Zed)$} 
\label{section-ctlstar}
We introduce the logic $\CTLStar(\Zed)$, which extends the
branching-time temporal logic \CTLStar from~\cite{Emerson&Halpern86}
with constraints over $\Zed$. 
\defstyle{State formulae} $\aformula$ and \defstyle{path formulae}  $\apathformula$ of $\CTLStar(\Zed)$ 
are defined by
\[
  \aformula := \neg \aformula \mid \aformula \wedge \aformula \mid
\existspath \apathformula  
\ \ \ \ \ \ 
\apathformula  := \aformula \mid 
\aterm = \adatum \mid 
\aterm_1 = \aterm_2 \mid 
\aterm_1 < \aterm_2 \mid 
\neg \apathformula \mid \apathformula \wedge \apathformula \mid
\mynext \apathformula  \mid \apathformula \until \apathformula, 
\]
where $\aterm, \aterm_1, \aterm_2 \in \myterms{}{\VAR}$.
State formulae respectively path formulae are interpreted on worlds, respectively on infinite paths of a Kripke structure. 
Let $\adks = \pair{\worlds,\arelation}{\avaluation}$  is 
a total Kripke structure, and $\aworld \in\worlds$. 
We define the satisfaction relation (omitting the clauses for Boolean connectives) for state formulae by
\begin{itemize}
\item $\adks, \aworld \models \existspath \apathformula$ $\equivdef$ 
      there is an infinite path $\apath$ from $\aworld$ such that
       $\adks, \apath \models \apathformula$. 
       \end{itemize}
       Let $\apath=\aworld_0,\aworld_1,\dots$ be an infinite
path of $\adks$. Let us define $\avaluation(\apath,\mynext^j \avariable) \egdef 
\avaluation(\aworld_j,\avariable)$, for all terms of the form $\mynext^j \avariable$.
Hence, the term $\mynext^j \avariable$ refers to the value of the variable $\avariable$ exactly
$j$ steps ahead
along a path. 
For every  $n$, $\apath[n,+\infty)$ is the suffix of $\apath$ truncated by the $n$ first worlds. We define the satisfaction relation for path formula by
\begin{itemize}
\item $\adks, \apath \models \aterm = \adatum$ $\equivdef$
  $\avaluation(\apath,\aterm) = \adatum$;
\item       $\adks, \apath \models \aterm_1 \ \sim \ \aterm_2$  $\equivdef$
      $\avaluation(\apath,\aterm_1) \ \sim \ \avaluation(\apath,\aterm_2)$ for all $\sim \in \set{<,=}$;      
\item $\adks, \apath  \models \apathformula \until \apathformulabis$ $\equivdef$
there is $j \geq 0$ such that $\adks, \apath[j,+\infty) \models \apathformulabis$ and for
  all $j' \in \interval{0}{j-1}$, we have $\adks, \apath[j',+\infty)  \models \apathformula$;
\item    $\adks, \apath \models \mynext \apathformula$ $\equivdef$ 
  $\adks, \apath[1,+\infty) \models \apathformula$.
\end{itemize}

The \defstyle{size of a formula} is understood as its number of symbols with integers encoded in binary. 
As usual, we also use 
disjunction $\vee$, 
the universal path quantifier $\forallpaths$, defined by 
$\forallpaths \apathformula \egdef \neg \existspath \neg \apathformula$,  
and the standard
temporal connectives $\release$ and $\always$, defined by 
$\apathformula_1\release\apathformula_2 \egdef \neg(\neg\apathformula_1\until \, \neg \apathformula_2)$ and $\always \apathformula \egdef \existspath (\avariable<\avariable) \, \release \, \apathformula$. 
Propositional variables $\avarprop$ can easily be encoded with an atomic formula
$\existspath (\avariable_{\avarprop} = 0)$. 
A formula in $\CTLStar(\Zed)$ is in \defstyle{simple form} if it is in negation normal form (using $\vee$, $\forallpaths$, and $\release$ as primitives) and all terms occurring in the formula are from $\myterms{\leq 1}{\VAR}$. 
We define two fragments of $\CTLStar(\Zed)$:  formulae in the logic $\CTL(\Zed)$ are state formulae of the form 
\[
\aformula := \existspath \ \acons \mid \forallpaths \ \acons \mid
\neg \aformula \mid \aformula \wedge \aformula
\mid \aformula \vee \aformula \mid 
\existspath \mynext \aformula \mid
\existspath \aformula \until \aformula \mid
\existspath \aformula \release \aformula \mid
\forallpaths \mynext \aformula \mid
\forallpaths \aformula \until \aformula \mid
\forallpaths \aformula \release \aformula,
\]
where $\acons$ is a constraint.
By way of example, $\forallpaths \mynext(\existspath(\mynext \avariable =3)
\vee \forallpaths(\neg (\mynext \avariable = 3)))$ is a $\CTL(\Zed)$ formula. 
Formulae in the logic 
$\LTL(\Zed)$ are defined from path formulae for
$\CTLStar(\Zed)$ according to
$\apathformula  := \acons
\mid \apathformula \wedge \apathformula 
\mid \apathformula \vee \apathformula \mid 
\mynext \apathformula  \mid \apathformula \until \apathformula \mid \apathformula \release \apathformula
$,
where $\acons$ is a constraint. Negation occurs only in constraints
since the \LTL logical connectives have their dual in $\LTL(\Zed)$. 
In contrast to $\CTLStar(\Zed)$ and $\CTL(\Zed)$, 
$\LTL(\Zed)$ formulae are evaluated over infinite paths of valuations
$\avaluation:\VAR\to\Zed$ (no branching involved).

The \defstyle{satisfiability problem for $\CTLStar(\Zed)$}, written
$\satproblem{\CTLStar(\Zed)}$, is defined as follows.
\begin{description}
\item[Input] A $\CTLStar(\Zed)$ state formula $\aformula$. 
\item[Question] Is there a total Kripke structure $\aks$ and a world $\aworld$ such that
  $\aks, \aworld \models \aformula$? 
\end{description}
The satisfiability problem $\satproblem{\CTL(\Zed)}$ for $\CTL(\Zed)$  is defined analogously; for $\LTL(\Zed)$, $\satproblem{\LTL(\Zed)}$ is the problem to decide whether there exists an infinite sequence of valuations $\avaluation:\VAR\to\Zed$ such that $\avaluation\models\apathformula$ for a given $\LTL(\Zed)$ formula $\apathformula$.

Decidability, and more precisely, $\pspace$-completeness of $\satproblem{\LTL(\Zed)}$ is shown
in~\cite{Demri&Gascon08}. 
For some strict fragments of $\CTLStar(\Zed)$, decidability is shown in~\cite{Bozzelli&Gascon06,Bozzelli&Pinchinat14}.
It is only recently  in~\cite{Carapelle&Kartzow&Lohrey13,Carapelle15,Carapelle&Kartzow&Lohrey16}, that decidability is established for the full logic using a translation into a decidable second-order logic:
\begin{propC}[\cite{Carapelle&Kartzow&Lohrey13,Carapelle15,Carapelle&Kartzow&Lohrey16}]
\label{proposition-ckl16}
$\satproblem{\CTLStar(\Zed)}$ is decidable. 
\end{propC}
The proof
in~\cite{Carapelle&Kartzow&Lohrey13,Carapelle15,Carapelle&Kartzow&Lohrey16}
does not provide a complexity upper bound
as the target second-order 
logic admits an automata-based decision procedure
with open complexity~\cite{Bojanczyk&Torunczyk12,Bojanczyk04,Bojanczyk&Colcombet06}.
Moreover, the target logic uses the standard weak monadic theory of
one successor ($\Logic{WS1S}$) with a non-elementary complexity, which disqualifies
this approach to obtain a direct optimal complexity upper bound. 

Let us shortly explain why the satisfiability problem is challenging. 
First of all, 
observe that $\CTLStar(\Zed)$  has atomic formulae in which integer values at the current and successor states are compared.
This prevents us from using a simple translation from  $\CTLStar(\Zed)$ to \CTLStar with new propositions. Models of $\CTLStar(\Zed)$
formulae can be viewed as an infinite network of constraints on $\Zed$; 
even if a formula contains only a finite set of constants, a model may contain an infinite set of values, as it is the case for, e.g.,
the formula $\existspath \always (\avariable < \mynext \avariable)$. Hence a direct Boolean abstraction does not work;
by contrast, when variables cannot be compared at different positions, Boolean abstraction can be an option,
see recent developments in~\cite{Rodriguez&Sanchez23,Rodriguez&Sanchez24}. 
On the other hand, 
$\CTLStar(\Zed)$ has no freeze quantifier and no data variable quantification, and hence no way to directly compare values at
unbounded distance (this can only be done 
by propagating local constraints), unlike e.g. the formalisms
in~\cite{Deckeretal14,Song&Wu16,Bartek&Lelyk17,Abriola&Figueira&Figueira17}.
Hence, the lower  bounds from~\cite{Jurdzinski&Lazic11} cannot apply either.
Related work about the model-checking problem can be found in Section~\ref{section-rw-temporal-logics}.

In this paper, we prove the precise worst-case computational complexity of the problems $\satproblem{\CTLStar(\Zed)}$
and $\satproblem{\CTL(\Zed)}$, respectively. 
We follow the automata-based approach, that is, we translate formulae in our logics into equivalent
automata -- B\"uchi tree constraint automata for  $\CTL(\Zed)$, and Rabin tree constraint automata for
$\CTLStar(\Zed)$ -- so that we can reduce the satisfiability problem for the logics to the nonemptiness
problem for the corresponding automata. 

\section{Tree Constraint Automata}
\label{section-automata}
In this section, we introduce the class of tree constraint automata that accept
sets of infinite trees of the form $\atree:\interval{0}{\degree-1}^* \to (\aalphabet\times \Zed^\beta)$
for some finite alphabet $\aalphabet$ and some $\beta \geq 1$. The transition relation of such automata
puts constraints between the $\beta$ integer values at a node and the integer values at its children nodes. 
The automaton is equipped with an acceptance condition (B\"uchi, Rabin, Streett) that will be defined later on.  
The forthcoming definition is specific to the concrete domain $\Zed$, but it can be easily adapted to other concrete domains.
  Formally, a \defstyle{tree constraint automaton} (TCA, for short)  is a tuple
  $\aautomaton=(\locations,\aalphabet,\degree,\beta,\locations_\init,\delta,F)$, where
  \begin{itemize}
\item $\locations$ is a finite set of locations; $\aalphabet$ is a finite alphabet,
\item $\degree \geq 1$ is the branching degree of the trees processed by $\aautomaton$, or the degree of $\aautomaton$ for short, 
\item $\beta\geq 1$ is the number of variables, 
\item $\locations_\init\subseteq \locations$ is the set of initial locations, 
\item $\delta$ is a {\em finite} subset of $\locations \times \aalphabet \times (\treeconstraints{\beta} \times \locations)^\degree$, the
  transition relation.  
  Here, $\treeconstraints{\beta}$ denotes the Boolean combinations of atomic constraints  built over the terms $\avariable_1, \ldots, \avariable_{\beta}, \avariable'_1,\dots,\avariable'_\beta$, where $\avariable_i'$ stands for the term $\mynext \avariable_i$.  
  $\delta$ consists of tuples 
  $(\alocation,\aletter,(\acons_0,\alocation_0), \dots, (\acons_{\degree-1},\alocation_{\degree-1}))$,
  where $\alocation\in\locations$ is called the \defstyle{source location}, $\aletter\in\aalphabet$, $\alocation_0$,\dots, $\alocation_{\degree-1}\in \locations$,  and $\acons_0, \dots, \acons_{\degree-1}$ are constraints in $\treeconstraints{\beta}$.
\item  $F$ encodes the acceptance condition, defined below. 
\end{itemize}

\label{definition-run}
\paragraph{Runs.}
Let $\atree:\interval{0}{\degree-1}^* \to (\aalphabet\times \Zed^\beta)$ be an infinite full $\degree$-ary
tree over $\aalphabet\times \Zed^\beta$. 
A \defstyle{run} of $\aautomaton$ on $\atree$ is a mapping
$\arun:\interval{0}{\degree-1}^*\to \delta$ satisfying the following condition: 
for every $\anode \in \interval{0}{\degree-1}^*$ with
$\atree(\anode)=(\aletter,\vect{z})$ and $\atree(\anode\cdot  i)=(\aletter_i,\vect{z}_i)$
for all $0\leq i <\degree$, if $\arun(\anode)= (\alocation,\aletter, (\acons_0,\alocation_0), \dots, (\acons_{\degree -1}, \alocation_{\degree -1}))$, then for all $0\leq i <\degree$, we have
\begin{enumerate}[label=(\roman*)]
\item[(i)] the source location of $\arun(\anode\cdot i)$ is $\alocation_i$, and 
\item[(ii)]$\Zed \models \acons_i(\vect{z},\vect{z}_i)$, where $\Zed \models \acons_i(\vect{z},\vect{z}_i)$ is a shortcut for $[\vec{\avariable} \leftarrow \vect{z}, \vec{\avariable'} \leftarrow \vect{z}_i] \models
  \acons_i$ where $[\vec{\avariable} \leftarrow \vect{z}, \vec{\avariable'} \leftarrow \vect{z}_i]$ is a valuation $\avaluation$ on the variables
  $\set{\avariable_j, \avariable_j' \mid j \in \interval{1}{\beta}}$ with  $\avaluation(\avariable_j) = \vect{z}(j)$ and $\avaluation(\avariable_j') = \vect{z}_i(j)$ for all $j \in \interval{1}{\beta}$.
\end{enumerate}
We show an example of a run $\rho$ 
in Figure~\ref{figure-run}. 

\begin{figure}[t]
    \centering
    \begin{minipage}{0.32\textwidth}
        \centering
        \begin{tikzpicture}[->,>=stealth',shorten >=1pt,auto,node distance=4cm,thick,node/.style={circle,draw,scale=0.9}, roundnode/.style={circle, black, draw=black},]
\tikzset{every state/.style={minimum size=0pt}};
\node at (-1.5,0.2) {$\atree$}; 
\node[roundnode] (root) at (0,0) {}; 
\node[right=0.1mm of root] {$\aletter, (3,7)$};
\node[roundnode] (0) at (-1,-1) {}; 
\node[right=0.1mm of 0]{$\aletterbis, (0,0)$};
\node[roundnode] (1) at (1,-1) {}; 
\node[right=0.1mm of 1] {$\aletter, (2,7)$};
\node[roundnode] (10) at (0,-2) {}; 
\node[right=0.1mm of 10] {$\aletterbis, (0,0)$};
\node[roundnode] (11) at (2,-2) {}; 
\node[right=0.1mm of 11] {$\aletter, (1,7)$};
\node[below=0.05mm of 11] {$\vdots$}; 
\node[below=0.05mm of 10]  {$\vdots$}; 
\node[below=0.05mm of 0]  {$\vdots$}; 

\path [-] (root)  edge (0);
\path [-] (root)  edge (1);
\path [-] (1)  edge (10);
\path [-] (1)  edge (11);
 	\end{tikzpicture}
        \caption{A tree $\atree$}
        \label{figure-tree}
    \end{minipage}
    \hfill
    \begin{minipage}{0.60\textwidth}
        \centering
\begin{tikzpicture}[->,>=stealth',shorten >=1pt,auto,node distance=4cm,thick,node/.style={circle,draw,scale=0.9}, roundnode/.style={circle, black, draw=black},]
\tikzset{every state/.style={minimum size=0pt}};
\node at (-1.5,0.2) {$\arun$}; 
\node[roundnode] (root) at (0,0) {}; 
\node[right=0.1mm of root]{$\atransition_{\aletter}$};
\node[roundnode] (0) at (-0.8,-1) {}; 
\node[right=0.1mm of 0]{$\atransition_{\aletterbis}$};
\node[roundnode] (1) at (0.8,-1) {}; 
\node[right=0.1mm of 1]{$\atransition_{\aletter}$};
\node[roundnode] (10) at (0,-2) {}; 
\node[right=0.1mm of 10] {$\atransition_{\aletterbis}$};
\node[roundnode] (11) at (1.6,-2) {}; 
\node[right=0.1mm of 11]{$\atransition_{\aletter}$};
\node[below=0.05mm of 11] {$\vdots$}; 
\node[below=0.05mm of 10]  {$\vdots$}; 
\node[below=0.05mm of 0]  {$\vdots$}; 

\path [-] (root)  edge (0);
\path [-] (root)  edge (1);
\path [-] (1)  edge (10);
\path [-] (1)  edge (11);

\node at (5.3,-1.4) {\scalebox{.85}{\begin{tabular}{lll}
\multicolumn{3}{l}{$\atransition_{\aletter}=(\alocation, \aletter,(\acons'_0,\alocation),(\acons'_1,\alocation))$} \\
\hspace{5mm}$\acons'_0$  & $=$ & $(\avariable'_1=0)$ \\
\hspace{5mm}$\acons'_1$ & $=$ & $(\avariable'_1<\avariable_1<\avariable_2=\avariable'_2)$ \\
&&\\ 
\multicolumn{3}{l}{$\atransition_{\aletterbis}=(\alocation, \aletterbis,(\acons'_2,\alocation),(\acons'_2,\alocation))$} \\
\hspace{5mm}$\acons'_2$ & $=$ & $(\avariable_1=\avariable_2=0)$ \\
\end{tabular}}};
 	\end{tikzpicture} 
        \caption{A run $\arun$ of some TCA on $\atree$}
        \label{figure-run}
    \end{minipage}
\end{figure}
A run $\rho$ is \defstyle{initialized} if the source location of $\arun(\varepsilon)$ is in $\locations_\init$. 
Given a path $\apath=j_1 \cdot j_2 \cdot j_3 \dots $ in $\rho$ starting from  $\varepsilon$, we define 
$\inf(\arun,\apath)$ to be the set of locations that appear infinitely often as the source locations of the transitions in $\arun(\varepsilon) \arun(j_1) \arun(j_1 \cdot  j_2) \arun(j_1 \cdot j_2 \cdot j_3) \dots$.  

\paragraph{Acceptance Conditions.}
Analogously to classical $\omega$-tree automata, 
TCA are equipped with an acceptance condition. 
We will mainly study TCA with the B\"uchi acceptance condition: a TCA $\aautomaton=(\locations,\aalphabet,\degree,\beta,\locations_\init,\delta,F)$ is a \defstyle{B\"uchi TCA} if $F\subseteq \locations$ and a run $\arun$ is  \defstyle{accepting} if for all paths $\apath$ in $\arun$ starting from $\varepsilon$, we have $\inf(\arun,\apath) \cap  F\neq \emptyset$. 
We write $\alang(\aautomaton)$ to denote the set of trees $\atree$ for which there exists some initialized and accepting run of $\aautomaton$ on $\atree$.   
In the paper, whenever we do not further specify the acceptance condition of a TCA, we assume that the TCA is equipped with the B\"uchi acceptance condition.  

For dealing with the logics defined in the previous sections, 
we will also use TCA with generalised B\"uchi acceptance, Rabin acceptance, and Streett acceptance conditions. 
In a \defstyle{generalised B\"uchi TCA},  $\rabinacc$ is a set $\{F_1,\dots,F_k\} \subseteq \powerset{\locations}$ of states, and a run $\rho$ is accepting 
if for all
paths $\apath$ in $\arun$ starting from $\varepsilon$, for all $F_i \in \rabinacc$, we have 
$\inf(\arun,\apath) \cap  F_i\neq \emptyset$.
Unsurprisingly, as for generalised B\"uchi automata,  
for every generalised B\"uchi TCA $\aautomaton=(\locations,\aalphabet,\degree,\beta,\locations_\init,\delta,\rabinacc)$ with
$\rabinacc = \set{F_1, \ldots, F_{k}}$ there exists a TCA $\aautomaton'=(\locations',\aalphabet,\degree,\beta,\locations_\init',\delta',F')$ such that $\alang(\aautomaton)=\alang(\aautomaton')$ and the size of $\aautomaton'$ is quadratic in the size of
$\aautomaton$:
\begin{itemize}
\item $\locations' \egdef \interval{0}{k} \times \locations$, $\locations_{\init}' \egdef \set{0} \times \locations_{\init}$ and $F' \egdef \set{0} \times \locations$.
\item  Given $i \in \interval{0}{k}$ and $\alocation \in \locations$, we write
  $\text{nxt}(i,\alocation)$ to denote the copy number in $\interval{0}{k}$ such that
  $\text{nxt}(0,\alocation)= 1$ for all $\alocation \in \locations$,
  $\text{nxt}(i, \alocation) = i$ if $\alocation \not \in F_i$ and
  $\text{nxt}(i, \alocation) = i+1 \! \mod  (k+1)$ if $\alocation \in F_i$.
  The transition relation $\delta'$ is defined by
  $(\pair{i}{\alocation},\aletter,(\acons_0,\pair{i_0}{\alocation_0}), \dots, (\acons_{\degree-1},\pair{i_{\degree-1}}{\alocation_{\degree-1}})) \in \delta'$
  $\equivdef$ there is $(\alocation,\aletter,(\acons_0,\alocation_0), \dots, (\acons_{\degree-1},\alocation_{\degree-1})) \in \delta$ and
  for all $j \in \interval{0}{\degree-1}$, $i_j = \text{nxt}(i, \alocation_j)$.    
  \end{itemize}
Therefore, in the sequel, using generalised B\"uchi TCA instead of TCA has no consequence on worst-case complexity results. 

A TCA $\aautomaton$ is a \defstyle{Rabin TCA} if $F$ is a set of pairs of the form $\pair{L}{U}$, where $L,U\subseteq \locations$, and a run $\arun$ is accepting if 
for all paths $\apath$ in $\arun$ starting from $\varepsilon$, there is some pair  $\pair{L}{U} \in \rabinacc$ such that $\inf(\arun,\apath) \cap  L \neq \emptyset$ and $\inf(\arun,\apath) \cap U = \emptyset$. Note that
every TCA with set $F$ of accepting locations can be encoded as a Rabin TCA with a single Rabin pair $(F, \emptyset)$. Hence B\"uchi TCA can be seen as a special case of Rabin TCA. 
Finally, $\aautomaton$ is a \defstyle{Streett TCA} if $\rabinacc$ has the same form as for Rabin TCA, and a run is accepting 
if for all pairs $\pair{L}{U} \in \rabinacc$
(also called \defstyle{completemented pairs} in the literature) 
if $\inf(\arun,\apath) \cap  L \neq \emptyset$, then $\inf(\arun,\apath) \cap  U \neq \emptyset$.

\paragraph{Nonemptiness Problem.}
In this paper, we study the \defstyle{nonemptiness problem for TCA}, which asks whether a given TCA $\aautomaton$ satisfies  $\alang(\aautomaton) \neq \emptyset$. For determining the computational complexity, we need to consider the size of $\aautomaton$, which depends on several parameters.  Note that $\treeconstraints{\beta}$ is infinite, so that, in contrast to plain B\"uchi tree automata~\cite{Vardi&Wolper86}, the number of transitions in a TCA is {\em a priori} unbounded.  In particular, this means that $\card{\delta}$ is a priori unbounded, even if $\locations$, $\aalphabet$ and $\degree$ are fixed.
The maximal size of a constraint occurring in transitions is unbounded, too. 
We write $\maxconstraintsize{\aautomaton}$ to denote the maximal size of a constraint occurring in $\aautomaton$ (with binary encoding of the integers).

\paragraph{Closure Properties and Expressiveness.}
In order to conclude this section, let us drop a few lines about the expressive power of TCA, even though this is not really in the intended scope of this paper. Obviously, the languages accepted by TCA
(for any kind of acceptance condition) are closed under unions. Closure under complementations of tree languages accepted by TCA is to the best of our knowledge open.
For closure under intersection, 
we present the following lemma, which is also key 
for obtaining precise complexity bounds in Section~\ref{section-ctlstarz}. 

\renewcommand{\aautomatonbis}{{\mathbb A}}
\begin{lem}
    \label{lemma-intersection-rtca}
    Let $(\aautomatonbis_k)_{1 \leq k \leq n}$ be a family of
     Rabin TCA s.t. 
    $\aautomatonbis_k = \triple{\locations_k,\aalphabet,\degree,\beta}{\locations_{k,\init},\delta_k}{\rabinacc_k}$,
    $\card{\rabinacc_k} = N_k$ and $N = \underset{k}{\Pi} \ N_k$.
    There is a Rabin TCA $\aautomatonbis$ such that
    $\alang(\aautomatonbis) = \bigcap_{k} \alang(\aautomatonbis_k)$ and
    verifying the conditions below.
    \begin{itemize}
    \item The number of Rabin pairs is equal to $N$.
    \item $\maxconstraintsize{\aautomatonbis} \leq n+ \maxconstraintsize{\aautomatonbis_1} + \cdots +
      \maxconstraintsize{\aautomatonbis_n}$. 
    \item The number of locations is bounded by
      $
      \big( \underset{k}{\Pi} \ \card{\locations_k} \big) \times (2n)^{N}      
      $.
    \item The number of transitions is bounded by
      $\underset{k}{\Pi} \ \card{\delta_k}$.
    \end{itemize} 
\end{lem}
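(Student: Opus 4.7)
The plan is to combine $\aautomatonbis_1,\dots,\aautomatonbis_n$ via a synchronous product, and then to turn the resulting conjunction of $n$ Rabin conditions into a single Rabin condition. The product is built on state space $\underset{k}{\Pi}\locations_k$ with initial set $\underset{k}{\Pi}\locations_{k,\init}$; its transitions pair, for each letter $\aletter$ and each branching position, one transition from each $\delta_k$, conjoining the $n$ tree constraints attached to each child coordinatewise. This yields at most $\underset{k}{\Pi}\card{\delta_k}$ combined transitions, and every produced constraint is a conjunction of $n$ constraints, hence of size at most $(n-1)+\sum_k\maxconstraintsize{\aautomatonbis_k}\leq n+\sum_k\maxconstraintsize{\aautomatonbis_k}$. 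The natural acceptance condition on the product is a \emph{conjunction} of the $n$ original Rabin conditions, which is not a Rabin condition in general; overcoming this is the main obstacle.

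For every tuple $\vec{j}=(j_1,\dots,j_n)\in[1,N_1]\times\cdots\times[1,N_n]$ (giving $N=\underset{k}{\Pi}N_k$ tuples) I would produce one new Rabin pair $(L'_{\vec{j}},U'_{\vec{j}})$ equivalent on a path to ``$(L_{k,j_k},U_{k,j_k})$ is satisfied for every $k$''. The forbidden part is straightforward: take $U'_{\vec{j}}=\bigcup_k\{\vec{q}:q_k\in U_{k,j_k}\}$, so that $U'_{\vec{j}}$ is visited finitely often iff each $U_{k,j_k}$ is. The required part is the delicate one, since one cannot simply intersect the $L_{k,j_k}$'s (they need not hold simultaneously). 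I would attach to each location, for every $\vec{j}$, a round-robin counter $c_{\vec{j}}\in\{1,\dots,n\}$ that advances from $k$ to $k+1$ cyclically whenever the $k$-th coordinate lies in $L_{k,j_k}$, and let $L'_{\vec{j}}$ be the set of locations where $c_{\vec{j}}$ just completes a full cycle; then ``$L'_{\vec{j}}$ visited infinitely often'' is equivalent to ``each $L_{k,j_k}$ visited infinitely often''. Storing the $N$ counters multiplies the location count by at most $n^N\leq(2n)^N$, matching the announced bound on $\card{\locations}$.

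Correctness of the overall construction is then routine: any run of the enlarged automaton projects to a run of the product, and the counter semantics ensure that its new Rabin condition for some $\vec{j}$ is satisfied on a path iff every original $\mathcal{F}_k$ is satisfied on that path (witnessed by $j_k$). Since Rabin acceptance allows the witnessing pair to vary from one path to another, this yields $\alang(\aautomatonbis)=\bigcap_k\alang(\aautomatonbis_k)$. The counter updates being deterministic functions of the current location, the combined transitions remain in bijection with $\underset{k}{\Pi}\delta_k$ up to the counter component, giving the claimed bound on $\card{\delta}$, and the number of Rabin pairs is exactly $N$. The hardest part of the argument is precisely the conjunction-to-Rabin encoding above: it is what forces the $(2n)^N$ blow-up in locations and is where one must be careful to coordinate the $N$ independent round-robin counters while keeping the transition bound tight.
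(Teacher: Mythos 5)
Your construction is essentially the paper's: a synchronous product with coordinatewise conjunction of constraints (giving the $\underset{k}{\Pi}\card{\delta_k}$ and $n+\sum_k\maxconstraintsize{\aautomatonbis_k}$ bounds), plus, for each tuple $\vec{j}$, a round-robin counter whose completed cycles witness that every $L_{k,j_k}$ is visited infinitely often, with $U'_{\vec{j}}$ the union of the bad sets.

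One detail needs care. With a counter ranging only over $\{1,\dots,n\}$, the set ``locations where $c_{\vec{j}}$ just completes a full cycle'' is not well defined: seeing the value $1$ does not tell you whether the counter just wrapped around from $n$ or has been stuck at $1$ waiting for $L_{1,j_1}$ forever, so ``$c_{\vec{j}}=1$ infinitely often'' does not imply that all the $L_{k,j_k}$ are visited infinitely often. The paper resolves exactly this by using a counter over $\interval{0}{2n-1}$ in which the odd values are unstable one-step markers meaning ``I just saw $L_{k}^{j_k}$'' (they advance automatically at the next step), so that ``counter $=1$'' unambiguously certifies a completed cycle; this is why the location blow-up is $(2n)^{N}$ rather than $n^{N}$. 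Your bound $n^{N}\leq(2n)^{N}$ papers over this, but once you add the needed flag (or double the counter range) you land on the paper's construction and stay within the stated bounds.
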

\renewcommand{\aautomatonbis}{{\mathbb B}}

The proof can be found in Appendix~\ref{appendix-proof-lemma-intersection-rtca}.
The proof contains a construction for the intersection of an arbitrary number of Rabin {\em tree constraint} automata mainly based on ideas from the proof of~\cite[Theorem 1]{Boker18} on Rabin {\em word} automata over {\em finite} alphabets. Our contribution is related to the extension to trees (in particular to handle the acceptance conditions),
to the use of constraints and to master the combinatorial explosion of the product automaton so that this is fine to get the final \twoexptime upper bound for \satproblem{$\CTLStar(\Zed)$}.  
 By the way,
the first part of the proof of the forthcoming Lemma~\ref{lemma-intersection-automaton} uses a particular
case of Lemma~\ref{lemma-intersection-rtca} with $n = 2$ Rabin TCA and no use of constraints. 

Regarding the expressive power of TCA,
there is no TCA that accepts the set of all full $\degree$-trees
over $\aalphabet \times \Zed$ such that all the data values are distinct, which is easier to capture with registers and alternation.
Indeed, {\em ad absurdum} suppose there is a TCA $\aautomaton$ such that
$\alang(\aautomaton)$ is the set of full binary trees such that all data values are distinct, say $\beta = 1$.
Let $\atree$ be the tree such that all data values are strictly greater than $\adatum_{\alpha}$ (greatest
constant occurring in $\aautomaton$, if any; otherwise zero),
the root is labelled by $\adatum_{\alpha}+1$ and then the data values
are defined in a breadth-first fashion by incrementing by one the data value. Hence,
the ``left'' child of the root has value  $\adatum_{\alpha}+2$,
the ``right'' child of the root has value  $\adatum_{\alpha}+3$, etc.
By definition, $\atree \in \alang(\aautomaton)$.
Let $\atree'$ be the tree obtained from $\atree$ by replacing the right subtree
of $\atree$'s root by its left subtree. In particular,
both the ``left'' and the ``right'' children of the root of $\atree'$ have data
value  $\adatum_{\alpha}+2$.
However, note that the same constraints are satisfied between a node and its parent node
in $\atree$ and in $\atree'$.
Consequently, $\atree' \in \alang(\aautomaton)$, which leads to a contradiction.

\section{Complexity of the Nonemptiness Problem for TCA}
\label{section-complexity-nonemptiness}
This section is dedicated to prove the
\exptime-completeness of the nonemptiness problem for B\"uchi 
TCA (Theorem~\ref{theorem-exptime-tca}) and Rabin TCA (Theorem~\ref{theorem-exptime-rtca}). 
We make a distinction between B\"uchi TCA and Rabin TCA because the complexity bounds differ slightly, see
Lemma~\ref{lemma-exptime-tca} and Lemma~\ref{lemma-intersection-automaton-rtca}. 
We mainly focus on the $\exptime$-membership, which will be used for establishing upper bounds 
for $\satproblem{\CTL(\Zed)}$ and $\satproblem{\CTLStar(\Zed)}$. 

But first, 
let us drop a few words on the proof of $\exptime$-hardness. 
The proof is 
by reduction from the acceptance problem for alternating Turing machines
running in polynomial space, see e.g.~\cite[Corollary 3.6]{Chandra&Kozen&Stockmeyer81}. 
The proof is presented for B\"uchi TCA in Appendix~\ref{appendix-exptime-hardness}. 
The key idea follows standard patterns: 
we use the variables of the TCA to store the content (that is a letter from a finite alphabet) of each of the polynomial cells of the alternating Turing machine. 
A bit more detailed, the TCA uses one variable for each tape cell. 
The tree structure of the runs of the alternating Turing machine can be enoded into trees that can be accepted by TCA. 
\exptime-hardness for Rabin TCA follows immediately as B\"uchi automata are a special case of Rabin TCA.

The proof of the \exptime-membership  of the nonemptiness problem for TCA is divided into two parts. 
In order to determine whether $\alang(\aautomaton)$ is nonempty for a given TCA $\aautomaton$, 
we first reduce the existence of some tree $\atree\in \alang(\aautomaton)$ to the existence of some \emph{regular symbolic tree} that is \emph{satisfiable}, that is, it admits a concrete model (Sections~\ref{section-symbolic-trees}
and~\ref{section-introduction-starproperty}). Second, we characterise the complexity of determining the
existence of such satisfiable regular symbolic trees
(Section~\ref{section-exptime-upper-bound}). 
The result for Rabin TCA is presented in Section~\ref{section-TCA-Rabin}. 

From now on, we assume a fixed TCA $\aautomaton=(\locations,\aalphabet,\degree,\beta,\locations_\init,\delta,F)$
with the constants $\adatum_1, \ldots, \adatum_{\alpha}$
occurring in $\aautomaton$ such that
$\adatum_1 <  \cdots <  \adatum_{\alpha}$ (we assume there is  at least one constant).

\subsection{Symbolic Trees}
\label{section-symbolic-trees}
A \defstyle{type} over the variables $\avariableter_1, \ldots,
\avariableter_n$ is an expression of the form
\begin{center}
$
(\bigwedge_{i} \acons_i^{\rm CST})
\wedge
(\bigwedge_{i < j} \avariableter_i \sim_{i,j}  \avariableter_j), \ \mbox{where} 
$
\end{center}
\begin{itemize}
\item for all $i \in \interval{1}{n}$,  $\acons_i^{\rm CST}$
  is equal to either $\avariableter_i < \adatum_1$, or $\avariableter_i > \adatum_{\alpha}$
  or $\avariableter_i = \adatum$ for some $\adatum \in \interval{\adatum_1}{\adatum_{\alpha}}$.
  This definition goes a bit beyond the constraint language in $\Zed$ (because of  expressions of the form $\avariableter_i < \adatum_1$ and
  $\avariableter_i > \adatum_{\alpha}$), but this is harmless in the sequel. 
  What really matters in a type is the way the variables are compared to each other
  and to the constants.
\item $\sim_{i,j} \in \set{>,=,<}$ for all $i < j$.   
\end{itemize}
Below, $\avariable = \avariablebis$ and $\avariablebis = \avariable$ are understood as identical,
as well as $\avariable < \avariablebis$ and $\avariablebis > \avariable$.
Checking the satisfiability of a type can be done in polynomial-time, based on a standard
cycle detection, see e.g.~\cite[Lemma 5.5]{Cerans94}. 
The set of \defstyle{satisfiable types} built over the terms $\avariable_1, \ldots, \avariable_{\beta},
\avariable_1', \ldots, \avariable_{\beta}'$ is written $\sattypes{\beta}$ ($n$ above is equal here to $2 \beta$).
By way of example, we provide below a satisfiable type in $\sattypes{2}$ with $\alpha = 1$: 
\[
\avariable_1 > \adatum_1 \wedge \avariable_2 = \adatum_1 \wedge
\avariable_1' = \adatum_1 \wedge \avariable_2' < \adatum_1 \wedge
\avariable_1 > \avariable_2 \wedge \avariable_1 > \avariable_1' \wedge
\avariable_1 > \avariable_2' \wedge
\avariable_2 = \avariable_1' \wedge \avariable_2 > \avariable_2'
\wedge \avariable_2' < \avariable_1'
\]

\begin{lem}
  \label{lemma-bound-sattypes}
$\card{\sattypes{\beta}} \leq \boundsattypes$. 
\end{lem}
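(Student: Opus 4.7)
The plan is to prove the bound by a straightforward counting argument on the \emph{syntactic} form of types, noting that the set $\sattypes{\beta}$ of satisfiable types is a subset of the set of all syntactically well-formed types over the $2\beta$ terms $\avariable_1,\ldots,\avariable_\beta,\avariable_1',\ldots,\avariable_\beta'$. So it suffices to count the latter.

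First, I would count the number of possible conjuncts $\acons_i^{\rm CST}$ for a single variable $\avariableter_i$ (with $i\in\interval{1}{2\beta}$). By definition, $\acons_i^{\rm CST}$ is either $\avariableter_i < \adatum_1$, or $\avariableter_i > \adatum_\alpha$, or $\avariableter_i = \adatum$ for some $\adatum\in\interval{\adatum_1}{\adatum_\alpha}$. The number of integers in $\interval{\adatum_1}{\adatum_\alpha}$ is $(\adatum_\alpha-\adatum_1)+1$, so there are exactly $(\adatum_\alpha-\adatum_1)+3$ choices. Independently choosing $\acons_i^{\rm CST}$ for each of the $2\beta$ variables yields at most $((\adatum_\alpha-\adatum_1)+3)^{2\beta}$ possibilities for the first conjunct $\bigwedge_i \acons_i^{\rm CST}$.

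Second, I would count the choices for the comparisons $\sim_{i,j}$ with $i<j$ in $\interval{1}{2\beta}$. The number of such pairs is $\binom{2\beta}{2} = 2\beta^2-\beta \leq 2\beta^2$, and each $\sim_{i,j}$ has three possible values in $\set{<,=,>}$. Hence, the second conjunct contributes at most $3^{2\beta^2}$ possibilities.

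Multiplying the two bounds yields that the total number of syntactically well-formed types is at most $((\adatum_\alpha-\adatum_1)+3)^{2\beta}\cdot 3^{2\beta^2}$, which is exactly $\boundsattypes$. Since $\sattypes{\beta}$ consists only of the satisfiable such types, we obtain $\card{\sattypes{\beta}}\leq \boundsattypes$ as required. There is no real obstacle: the argument is an elementary product-of-choices estimate, and the mild slack ($2\beta^2$ in place of the exact $\binom{2\beta}{2}$) is absorbed into the stated bound.
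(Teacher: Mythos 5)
Your proof is correct and follows essentially the same counting argument as the paper: at most $(\adatum_\alpha-\adatum_1)+3$ choices for each of the $2\beta$ constraints $\acons_i^{\rm CST}$, and three choices for each of the $\binom{2\beta}{2}=(2\beta-1)\beta\leq 2\beta^2$ comparisons $\sim_{i,j}$, with satisfiability only shrinking the set. No issues.
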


Indeed, in a type
$
(\bigwedge_{i} \acons_i^{\rm CST})
\wedge
(\bigwedge_{i < j} \avariableter_i \sim_{i,j}  \avariableter_j)
$, 
each $\acons_i^{\rm CST}$ can take at most $(\adatum_{\alpha} - \adatum_{1}) + 3$ values
and the generalized conjunction has $(2 \beta -1) \beta \leq 2 \beta^2$ conjuncts, and each conjunct
can take at most three possible values. 

The \defstyle{restriction} of the type $\acons$ to some set of variables $\aset
\subseteq \set{\avariable_i, \avariable_{i}' \mid i \in \interval{1}{\beta}}$ is made of all the
conjuncts in which only variables in $\aset$ occur. The type $\acons_1$ restricted to
$\set{\avariable_{i}' \mid i \in \interval{1}{\beta}}$ \defstyle{agrees} with
the type $\acons_2$ restricted to
$\set{\avariable_{i} \mid i \in \interval{1}{\beta}}$ iff $\acons_1$ and $\acons_2$ are logically
equivalent modulo the renaming for which $\avariable_i$ and $\avariable'_i$ are substituted, for all
$i \in \interval{1}{\beta}$. 
For instance, in Figure \ref{figure-symbolic-tree}, $\acons$ restricted to $\{\avariable'_1,\avariable'_2\}$ agrees with $\acons_0$ restricted to $\{\avariable_1,\avariable_2\}$. 
The main properties of satisfiable types we use below are stated in the next lemma.

 \begin{lem} \label{lemma-types} \ 
  \begin{description}
   \item[(I)]
    Let $\vect{z}, \vect{z'} \in \Zed^{\beta}$. There is a unique
     satisfiable type
    $\acons \in \sattypes{\beta}$
  such that $\Zed \models \acons(\vect{z}, \vect{z'})$.
\item[(II)] For every constraint $\acons$ built over the terms
  $\avariable_1, \ldots, \avariable_{\beta},\avariable_1', \ldots, \avariable_{\beta}'$
  and the constants $\adatum_1, \ldots, \adatum_{\alpha}$ there is a disjunction
  $\acons_1 \vee \cdots \vee \acons_{\gamma}$ logically equivalent to $\acons$
  and each $\acons_i$ belongs to $\sattypes{\beta}$ (empty disjunction stands for $\perp$).
\item[(III)] For all $\acons \neq \acons' \in \sattypes{\beta}$,
      the constraint $\acons \wedge \acons'$ is not satisfiable. 
  \end{description}
   \end{lem}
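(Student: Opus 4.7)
The plan is to prove (I) directly by a constructive syntactic argument, derive (III) as an immediate consequence of (I), and then obtain (II) via (I) plus a \emph{type-determinacy} principle stating that all satisfying valuations of a fixed satisfiable type agree on the truth value of every atomic constraint over the relevant terms and constants. The main work is concentrated in (II); parts (I) and (III) are largely bookkeeping given the tight syntactic shape of a type.

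For (I), I would argue existence constructively. Given $\vect{z},\vect{z'} \in \Zed^\beta$, interpret the $2\beta$ terms $\avariable_1, \ldots, \avariable_\beta, \avariable_1', \ldots, \avariable_\beta'$ by the corresponding components. For each term $\aterm$ with value $v$, pick the conjunct $\acons^{\rm CST}$ equal to $\aterm < \adatum_1$ if $v < \adatum_1$, equal to $\aterm > \adatum_\alpha$ if $v > \adatum_\alpha$, and equal to $\aterm = v$ otherwise (which is licit since then $v \in \{\adatum_1, \ldots, \adatum_\alpha\}$). For every pair $i<j$, pick $\sim_{i,j} \in \{<,=,>\}$ according to the actual comparison of the values. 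The resulting expression lies in $\sattypes{\beta}$ and is satisfied by $(\vect{z},\vect{z'})$. Uniqueness follows because the candidate conjuncts for each slot are pairwise mutually exclusive on any common valuation: no integer is simultaneously $< \adatum_1$ and $= 5$, and no pair is simultaneously $<$ and $=$, etc. Hence two syntactically distinct types cannot be satisfied by the same valuation, proving uniqueness. Part (III) is then immediate: if $\acons \wedge \acons'$ were satisfied by some $(\vect{z},\vect{z'})$, the uniqueness clause of (I) would force $\acons = \acons'$.

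For (II), the key lemma is type-determinacy: if $\acons^* \in \sattypes{\beta}$ and $\avaluation_1, \avaluation_2 \models \acons^*$, then for every atomic constraint $\aatomiccons$ built from the terms $\avariable_i, \avariable_i'$ and constants in $\{\adatum_1,\ldots,\adatum_\alpha\}$, one has $\avaluation_1 \models \aatomiccons$ iff $\avaluation_2 \models \aatomiccons$. This holds by case analysis: an atomic constraint $\aterm \sim \aterm'$ with $\sim \in \{<,=\}$ has its truth value fixed by the relevant $\sim_{i,j}$ conjunct, and $\aterm = \adatum$ has its truth value fixed by the $\acons^{\rm CST}$ conjunct associated to $\aterm$ (it is true iff $\acons^{\rm CST}$ is literally $\aterm = \adatum$). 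A routine induction on Boolean structure then extends the agreement from atomic constraints to arbitrary $\acons$.

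Equipped with this, I define $\gamma \egdef \{\acons^* \in \sattypes{\beta} \mid \acons^* \wedge \acons \text{ is satisfiable}\}$ and claim $\acons$ is equivalent to $\bigvee_{\acons^* \in \gamma} \acons^*$. For the forward direction, a valuation $\avaluation \models \acons$ has by (I) a unique type $\acons^*$ with $\avaluation \models \acons^*$, whence $\acons^* \wedge \acons$ is satisfiable and $\acons^* \in \gamma$. For the converse, any $\avaluation \models \acons^*$ with $\acons^* \in \gamma$ satisfies $\acons$: by definition of $\gamma$ some valuation of $\acons^*$ satisfies $\acons$, and type-determinacy propagates this to $\avaluation$. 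The only subtle point, and thus the main obstacle, is ensuring that type-determinacy applies to \emph{all} atomic constraints appearing in $\acons$, which is precisely why the hypothesis restricts the constants of $\acons$ to $\{\adatum_1, \ldots, \adatum_\alpha\}$; any other constant could yield an atomic formula $\aterm = \adatum$ whose truth is not pinned down by the type.
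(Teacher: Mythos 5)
Your proof is correct and fills in exactly the verification the paper leaves implicit (the paper only remarks that Lemma~\ref{lemma-types} ``is by an easy verification''): (I) by constructing the type realised by $(\vect{z},\vect{z'})$ and observing that the candidate conjuncts in each slot are mutually exclusive, (III) as an immediate corollary, and (II) via the type-determinacy argument, whose reliance on the constants of $\acons$ lying in $\{\adatum_1,\ldots,\adatum_\alpha\}$ you correctly identify as the one point that needs care. No gaps.
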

The proof of Lemma~\ref{lemma-types} is by an easy verification and
its statement justifies the term `type' used in this context.

\paragraph{Abstraction with types.}
A \defstyle{symbolic  tree} $\asymtree$ is a map $\asymtree: \interval{0}{\degree-1}^* \rightarrow
\aalphabet \times \sattypes{\beta}$. Symbolic trees are intended
to be abstractions of trees labelled with concrete values in $\Zed$, defined as follows. 

Given a tree $\atree: \interval{0}{\degree-1}^* \rightarrow
\aalphabet \times \Zed^{\beta}$, its \defstyle{abstraction} is the symbolic tree $\asymtree_{\atree}: \interval{0}{\degree-1}^* \rightarrow
\aalphabet \times \sattypes{\beta}$ such that 
for all $\anode \cdot i \in \interval{0}{\degree-1}^*$ with $\atree(\anode) = \pair{\aletter}{\vect{z}}$ and
$\atree(\anode \cdot i) = \pair{\aletter_i}{\vect{z}_i}$,
$\asymtree_{\atree}(\anode \cdot i) \egdef \pair{\aletter_i}{\acons_i}$ for the
unique $\acons_i \in \sattypes{\beta}$ 
such that $\Zed \models \acons_i(\vect{z},\vect{z}_i)$. 
Note that the primed variables in $\acons_i$ refer to the $\beta$ values at the node $\anode\cdot i$, whereas the unprimed ones refer to
the $\beta$ values at the parent node $\anode$.
For the root node $\varepsilon$ with $\atree(\varepsilon) = \pair{\aletter}{\vect{z}}$, which has no parent node, 
we fix some arbitrary $\vect{0} \in \Zed^{\beta}$, and set $\asymtree_{\atree}(\varepsilon) \egdef \pair{\aletter}{\acons}$
for the
unique $\acons \in \sattypes{\beta}$ such that $\Zed \models \acons(\vect{0},\vect{z})$. 
A symbolic tree $\asymtree$ is \defstyle{satisfiable} $\equivdef$ there is $\atree: \interval{0}{\degree-1}^* \rightarrow
\aalphabet \times \Zed^{\beta}$ such that $\asymtree_{\atree} = \asymtree$. 
We say that $\atree$ \defstyle{witnesses the satisfaction of $\asymtree$}, also written
$\atree \models \asymtree$. 
A symbolic tree $\asymtree$ is \defstyle{regular} if its set of 
subtrees is finite. 

\begin{figure}[t]
    \centering
    \begin{tikzpicture}[->,>=stealth',shorten >=1pt,auto,node distance=4cm,thick,node/.style={circle,draw,scale=0.9}, roundnode/.style={circle, black, draw=black},]
\tikzset{every state/.style={minimum size=0pt}};
\node at (-1.5,0.2) {$\asymtree_\atree$}; 
\node[roundnode] (root) at (0,0) {}; 
\node[right=0.1mm of root] {$\aletter, \acons$};
\node[roundnode] (0) at (-1,-1) {}; 
\node[right=0.1mm of 0] {$\aletterbis, \acons_0$};
\node[roundnode] (1) at (1,-1) {}; 
\node[right=0.1mm of 1] {$\aletter, \acons_1$};
\node[roundnode] (10) at (0,-2) {}; 
\node[right=0.1mm of 10] {$\aletterbis, \acons_0$};
\node[roundnode] (11) at (2,-2) {}; 
\node[right=0.1mm of 11] {$\aletter, \acons_1$};
\node[below=0.05mm of 11] {$\vdots$}; 
\node[below=0.05mm of 10]  {$\vdots$}; 
\node[below=0.05mm of 0]  {$\vdots$}; 

\path [-] (root)  edge (0);
\path [-] (root)  edge (1);
\path [-] (1)  edge (10);
\path [-] (1)  edge (11);

\node at (6.5,-.5) {\scalebox{.85}{\begin{tabular}{lll}
$\acons$  & $\egdef$ & $0=\avariable_1=\avariable_2<\underline{\avariable'_1<\avariable'_2}$ \\
$\acons_0$ & $\egdef$ & $0=\avariable'_1=\avariable'_2<\underline{\avariable_1<\avariable_2}$ \\
$\acons_1$ & $\egdef$ & $0<\avariable'_1<\underline{\avariable_1<\avariable_2}=\avariable'_2$ 
\end{tabular}}}; 
 	\end{tikzpicture} 
    \caption{The symbolic tree $\asymtree_\atree$ for $\atree$ from Figure~\ref{figure-tree}}
    \label{figure-symbolic-tree}
\end{figure}

\paragraph{$\aautomaton$-consistency.}
In our quest to decide whether  $\alang(\aautomaton) \neq \emptyset$, 
we are interested in symbolic trees that satisfy certain properties that we subsume under the name
\defstyle{$\aautomaton$-consistent}. 
A symbolic tree $\asymtree: \interval{0}{\degree-1}^* \rightarrow \aalphabet \times \sattypes{\beta}$ is
\defstyle{$\aautomaton$-consistent} if the following conditions are satisfied:  
\begin{itemize}
\item $\asymtree$ is \defstyle{locally consistent}: for every node $\anode$, the type $\acons$ labelling
 $\anode$ restricted to $\avariable'_1,\dots,\avariable'_\beta$ agrees with all types $\acons_i$ labelling
 its children nodes $\anode\cdot i$ restricted to $\avariable_1,\dots,\avariable_\beta$, and
\item there exists an initialized and accepting tree $\arun:\interval{0}{\degree-1}^*\to \delta$ that satisfies condition (i) of the definition of runs (see page~\pageref{definition-run}),
  and for all $\anode\in\interval{0}{\degree-1}^*$ with $\asymtree(\anode)=(\aletter,\acons)$, $\asymtree(\anode\cdot i)=(\aletter_i,\acons_i)$ for all $i\in\interval{0}{\degree-1}$, and $\arun(\anode)=(\alocation,\aletter,(\acons'_0,\alocation_0)\dots(\acons'_{\degree-1},\alocation_{\degree-1}))$, 
 we have $\acons_i\models\acons'_i$ for all $i\in\interval{0}{\degree-1}$. 
\end{itemize}

\begin{exa}
In Figure~\ref{figure-symbolic-tree}, 
we show the abstraction $\asymtree_\atree$ of the tree $\atree$ from Figure \ref{figure-tree}. We assume that $\adatum_1=\adatum_\alpha=0$
is the only constant; consequently, $\asymtree_\atree$ uses constraints in $\sattypes{\beta}$ that are built
with variables $\avariable_1,\avariable_2$, their primed variants $\avariable'_1,\avariable'_2$, and the
constant $\adatum_1$. We underline constraints to illustrate the property of local consistency. 
\end{exa}

\begin{lem} \label{lemma-A-consistency}
  Let $\atree \in \alang(\aautomaton)$. The symbolic tree $\asymtree_{\atree}$
  is $\aautomaton$-consistent.
\end{lem}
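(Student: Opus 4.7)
\textbf{Proof plan for Lemma~\ref{lemma-A-consistency}.} The plan is to use directly the initialized accepting run $\arun$ of $\aautomaton$ on $\atree$ whose existence is guaranteed by $\atree\in\alang(\aautomaton)$, and to check the two bullet points of the definition of $\aautomaton$-consistency separately. The main points are then (a) that the abstraction $\asymtree_{\atree}$ is automatically locally consistent by construction, and (b) that the constraints $\acons_i'$ occurring in $\arun$ are logically implied by the concrete types $\acons_i$ appearing in $\asymtree_{\atree}$, which follows from Lemma~\ref{lemma-types}.

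\emph{Local consistency.} First I would unfold the definition of $\asymtree_{\atree}$. Consider any node $\anode\in\interval{0}{\degree-1}^*$ with children $\anode\cdot i$, $0\leq i<\degree$. Write $\atree(\anode)=(\aletter,\vect{z})$ and $\atree(\anode\cdot i)=(\aletter_i,\vect{z}_i)$; let $\vect{z}^{\rm par}$ be the value tuple at the parent of $\anode$ (or the fixed $\vect{0}$ if $\anode=\varepsilon$). Then $\asymtree_{\atree}(\anode)=(\aletter,\acons)$ where $\acons$ is the unique satisfiable type with $\Zed\models\acons(\vect{z}^{\rm par},\vect{z})$, and $\asymtree_{\atree}(\anode\cdot i)=(\aletter_i,\acons_i)$ where $\acons_i$ is the unique satisfiable type with $\Zed\models\acons_i(\vect{z},\vect{z}_i)$ (Lemma~\ref{lemma-types}(I)). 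The restriction of $\acons$ to $\{\avariable_1',\dots,\avariable_\beta'\}$ captures precisely the order/equality relationships between the components of $\vect{z}$ and the constants $\adatum_1,\dots,\adatum_\alpha$, and so does the restriction of $\acons_i$ to $\{\avariable_1,\dots,\avariable_\beta\}$ (both tuples describe the same $\vect{z}$). Hence after the renaming $\avariable_i\leftrightarrow\avariable_i'$, the two restrictions are logically equivalent, which is exactly the agreement condition.

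\emph{Existence of the tree $\arun$ witnessing $\aautomaton$-consistency.} I would reuse the accepting initialized run $\arun:\interval{0}{\degree-1}^*\to\delta$ of $\aautomaton$ on $\atree$. Condition (i) of the definition of a run is the same as in the definition of $\aautomaton$-consistency, so it is inherited. It remains to show that for every $\anode$, if $\arun(\anode)=(\alocation,\aletter,(\acons_0',\alocation_0),\dots,(\acons_{\degree-1}',\alocation_{\degree-1}))$ and $\asymtree_{\atree}(\anode\cdot i)=(\aletter_i,\acons_i)$, then $\acons_i\models\acons_i'$. For this, fix $i$. Since $\arun$ is a run, condition (ii) yields $\Zed\models\acons_i'(\vect{z},\vect{z}_i)$. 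By Lemma~\ref{lemma-types}(II), $\acons_i'$ is logically equivalent to a disjunction $\bigvee_{j=1}^{\gamma}\hat{\acons}_j$ of elements of $\sattypes{\beta}$, so some $\hat{\acons}_{j_0}$ is satisfied by $(\vect{z},\vect{z}_i)$. By the uniqueness part of Lemma~\ref{lemma-types}(I), $\hat{\acons}_{j_0}=\acons_i$. Lemma~\ref{lemma-types}(III) then forces every other disjunct to be incompatible with $\acons_i$, whence any model of $\acons_i$ is a model of the single disjunct $\acons_i$, and therefore a model of $\acons_i'$; that is $\acons_i\models\acons_i'$. This verifies both bullets of the definition, so $\asymtree_{\atree}$ is $\aautomaton$-consistent.

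The only step that requires a little care is the passage from $\Zed\models\acons_i'(\vect{z},\vect{z}_i)$ to $\acons_i\models\acons_i'$: the premise is a single-valuation statement whereas the conclusion is a logical entailment. The three parts of Lemma~\ref{lemma-types} together bridge this gap, and no further ingredient is needed.
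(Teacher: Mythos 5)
Your proof is correct, and it supplies exactly the straightforward verification that the paper leaves implicit (the lemma is stated without proof in the text): local consistency follows because both restrictions describe the same tuple $\atree(\anode)$, and the entailment $\acons_i\models\acons_i'$ follows from Lemma~\ref{lemma-types} applied to the accepting run of $\aautomaton$ on $\atree$. The only cosmetic remark is that your appeal to Lemma~\ref{lemma-types}(III) is not needed: once uniqueness identifies $\acons_i$ as one disjunct of the disjunction equivalent to $\acons_i'$, the entailment $\acons_i\models\acons_i'$ is immediate.
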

It is routine to show Lemma~\ref{lemma-A-consistency}. 
Next we show that the 
set of all $\aautomaton$-consistent symbolic trees is $\omega$-regular, that is, it can be accepted by a classical tree automaton without constraints. 
In the following, we use the standard letter $\ancautomaton$ to distinguish
automata \emph{without constraints} from TCA
(in the same way, we use $\asymtree$ to denote symbolic trees whereas trees with data values
are denoted by $\atree$ by default).
\begin{lem}
\label{lemma-consistency-BTA}
There exists  a B\"uchi tree automaton (without constraints) $\locautomaton$ such that
$\alang(\locautomaton)= \{\asymtree \mid \asymtree \text{ is an  $\aautomaton$-consistent symbolic tree}\}$, 
the number of locations is bounded by $\card{\sattypes{\beta}} \times \card{\locations}$, and
the transition relation can be decided in polynomial-time in
$\card{\aalphabet}+\card{\delta}+D+\beta+\maxconstraintsize{\aautomaton}$.
\end{lem}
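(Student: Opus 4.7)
\emph{Proof sketch.} The plan is to construct $\locautomaton$ as a classical (unconstrained) B\"uchi tree automaton over the finite alphabet $\aalphabet \times \sattypes{\beta}$, whose locations pair a type with an $\aautomaton$-location. The type component carries the current node's symbolic label down to its children, so that local consistency can be checked edge by edge, while the second component simulates an initialized accepting run of $\aautomaton$ on the input symbolic tree.

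Concretely, I would set $Q_{\mathbb B} \egdef \sattypes{\beta} \times \locations$, $Q_{\mathbb B,\init} \egdef \sattypes{\beta} \times \locations_{\init}$, $F_{\mathbb B} \egdef \sattypes{\beta} \times F$, and declare that a tuple
\[
\big((\acons, \alocation),\; (\aletter, \acons^{\mathrm{lab}}),\; (\acons_0, \alocation_0),\; \ldots,\; (\acons_{\degree-1}, \alocation_{\degree-1})\big)
\]
belongs to $\delta_{\mathbb B}$ iff \textbf{(a)} $\acons = \acons^{\mathrm{lab}}$, \textbf{(b)} for every $i$, the restriction of $\acons$ to $\avariable'_1,\dots,\avariable'_\beta$ agrees with the restriction of $\acons_i$ to $\avariable_1,\dots,\avariable_\beta$ (local consistency), and \textbf{(c)} there is some $(\alocation,\aletter,(\acons'_0,\alocation_0),\dots,(\acons'_{\degree-1},\alocation_{\degree-1})) \in \delta$ with $\acons_i \models \acons'_i$ for all $i$ (a step of the simulated run). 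Correctness then amounts to two easy directions: given an $\aautomaton$-consistent $\asymtree$ with witness run $\arun$ of $\aautomaton$, labelling each node $\anode$ by $(\acons^{\anode}, \alocation^{\anode})$, where $\acons^{\anode}$ is the type component of $\asymtree(\anode)$ and $\alocation^{\anode}$ is the source location of $\arun(\anode)$, yields an initialized accepting run of $\locautomaton$; conversely, projecting any accepting run of $\locautomaton$ on its second component recovers the run required by $\aautomaton$-consistency.

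The bound $\card{Q_{\mathbb B}} \leq \card{\sattypes{\beta}} \times \card{\locations}$ is immediate. For polynomial-time decidability of $\delta_{\mathbb B}$, clauses \textbf{(a)} and \textbf{(b)} are syntactic tests of size $O(\beta^2)$, and \textbf{(c)} reduces to scanning $\delta$ and performing $\degree$ entailment tests $\acons_i \models \acons'_i$. Each such test is in polynomial time because $\acons_i \in \sattypes{\beta}$ is a type and therefore fixes the truth value of every atomic subformula of $\acons'_i$ occurring in a transition of $\aautomaton$: one simply evaluates $\acons'_i$ atom by atom under the canonical valuation induced by $\acons_i$, using the comparisons to $\adatum_1, \ldots, \adatum_{\alpha}$ that the type encodes.

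The main delicate point I anticipate is propagating just enough information between consecutive levels. Keeping the full current type (rather than only the parent's primed-variable part) in the state is what makes the local consistency check coincide cleanly with clauses \textbf{(a)}--\textbf{(b)} of the transition relation, and it avoids any blow-up of the location space beyond $\card{\sattypes{\beta}} \times \card{\locations}$.
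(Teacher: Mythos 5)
Your construction is correct and is essentially the paper's own: same location set $\sattypes{\beta}\times\locations$, same acceptance set, and the same transition relation (your per-child agreement condition between $\acons$ and each $\acons_i$ is equivalent to the paper's pair of conditions that the $\acons_i$'s agree on the unprimed variables and that $\acons_0$ agrees with $\acons$), with the entailment tests $\acons_i\models\acons_i'$ justified exactly as in the paper by the fact that a type fixes the truth value of every atomic constraint. The only cosmetic deviation is that the paper additionally restricts the type component of initial locations to types consistent with the arbitrary root anchor $\vect{0}$, which is immaterial for the stated claim and for the surrounding lemmas.
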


See also Lemma~\ref{lemma-bound-sattypes} for an upper bound on $\card{\sattypes{\beta}}$ that is
exponential in $\beta$ for  fixed $\adatum_1, \adatum_{\alpha}$. 

\begin{proof}
Let $\locautomaton$ be the B\"uchi tree automaton
\[
\locautomaton \egdef \triple{\locations',\aalphabet \times \sattypes{\beta},\degree}{\locations'_\init,\delta'}{F'},
\] where the components are defined as follows.
\begin{itemize}
\item $\locations' \egdef \sattypes{\beta} \times \locations$,
\item $\locations'_\init \egdef \set{\acons \in \sattypes{\beta}
  \mid \mbox{there exists} \ \vect{z} \in \Zed^{\beta} \ \mbox{such that} \ \Zed \models \acons(\vect{0}, \vect{z})} \times \locations_\init$,
\item $F' \egdef \sattypes{\beta} \times F$.
\item $\triple{\pair{\acons}{\alocation}}{\pair{\aletter}{\acons}}{
  \pair{\acons_0}{\alocation_0}, \ldots, \pair{\acons_{\degree-1}}{\alocation_{\degree-1}}} \in \delta'$ 
  $\equivdef$ there exists a transition  
  $$\triple{\alocation}{\aletter}{\pair{\acons_{0}'}{\alocation_0}, \ldots, \pair{\acons_{\degree-1}'}{\alocation_{\degree-1}}} \in \delta$$
  such that 
  \begin{itemize}
  \item for all $i \in \interval{0}{\degree-1}$, $\acons_i \models \acons_i'$ (\ptime check because $\acons_i
    \in \sattypes{\beta}$),
  \item $\acons_0, \ldots, \acons_{\degree-1}$ agree on $\avariable_1, \ldots, \avariable_{\beta}$ (same parent node),
  \item $\acons$ restricted to $\avariable_1', \ldots, \avariable_{\beta}'$ agrees
    with $\acons_0$ restricted to $\avariable_1, \ldots, \avariable_{\beta}$. 
  \end{itemize}
\end{itemize}
It is not hard to check that $\alang(\locautomaton)= \{\asymtree \mid \asymtree \text{ is an $\aautomaton$-consistent symbolic tree}\}$. 
The transition relation $\delta'$ can be decided in polynomial-time
in $\card{\aalphabet}+\card{\delta}+D+\beta+\maxconstraintsize{\aautomaton}$ (all the values in this sum are less than the size of $\aautomaton$): in order to check whether $\triple{\pair{\acons}{\alocation}}{\pair{\aletter}{\acons}}{
  \pair{\acons_0}{\alocation_0}, \ldots, \pair{\acons_{\degree-1}}{\alocation_{\degree-1}}} \in \delta'$, 
   we might go through all the transitions in $\delta$
   and verify the satisfaction of the
three conditions above. For the first condition, we need to check $\degree$ times whether $\acons_i \models \acons_i'$, which requires polynomial-time in
$\beta+\maxconstraintsize{\aautomaton}$. A similar time-complexity
is required to check the satisfaction of the two other conditions. 
Moreover,  
$\card{\locations'} =
\card{\sattypes{\beta}} \times \card{\locations}$ and $\card{\delta'}
\leq \card{\locations'}^{\degree+1} \times \card{\aalphabet} \times \card{\sattypes{\beta}}$.
\end{proof}

The result below is a variant of many similar results relating
symbolic models and concrete  models in logics for concrete domains,
see e.g.~\cite[Corollary 4.1]{Demri&DSouza07},~\cite[Lemma 3.4]{Gascon09},~\cite[Theorem 25]{CarapelleTurhan16}
and~\cite[Theorem 11]{Labai&Ortiz&Simkus20}. 
\begin{lem}
\label{lemma-satisfiable-symbolic-tree}
$\alang(\aautomaton) \neq \emptyset$ iff there is a \emph{satisfiable} symbolic  tree 
in $\alang(\locautomaton)$.
\end{lem}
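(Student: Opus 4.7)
The plan is to establish the two directions separately, with the forward direction being essentially a bookkeeping exercise and the backward direction requiring us to stitch together the two pieces of information we get from the hypothesis, namely a skeletal accepting run and a concrete witness tree.

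For the forward direction ($\Rightarrow$), I would pick any $\atree \in \alang(\aautomaton)$ and form its abstraction $\asymtree_{\atree}$. By construction $\asymtree_{\atree}$ is satisfiable (with witness $\atree$ itself). By Lemma~\ref{lemma-A-consistency}, $\asymtree_{\atree}$ is $\aautomaton$-consistent, and by Lemma~\ref{lemma-consistency-BTA}, $\asymtree_{\atree} \in \alang(\locautomaton)$. This direction requires no new ideas.

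For the backward direction ($\Leftarrow$), suppose $\asymtree \in \alang(\locautomaton)$ is satisfiable. From Lemma~\ref{lemma-consistency-BTA} we extract the fact that $\asymtree$ is $\aautomaton$-consistent, which gives us an initialized, accepting tree $\arun : \interval{0}{\degree-1}^* \to \delta$ satisfying condition (i) of the run definition, and such that for every node $\anode$ with $\asymtree(\anode \cdot i) = (\aletter_i, \acons_i)$ and $\arun(\anode) = (\alocation, \aletter, (\acons'_0, \alocation_0), \ldots, (\acons'_{\degree-1}, \alocation_{\degree-1}))$, we have $\acons_i \models \acons'_i$ for all $i$. From satisfiability of $\asymtree$ we obtain a concrete tree $\atree$ with $\asymtree_{\atree} = \asymtree$. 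I would then argue that $\arun$ is in fact a run of $\aautomaton$ on $\atree$ by verifying condition (ii): if $\atree(\anode) = (\aletter, \vect{z})$ and $\atree(\anode \cdot i) = (\aletter_i, \vect{z}_i)$, then by definition of $\asymtree_{\atree}$, the type $\acons_i$ is precisely the unique element of $\sattypes{\beta}$ for which $\Zed \models \acons_i(\vect{z}, \vect{z}_i)$; combining with $\acons_i \models \acons'_i$ gives $\Zed \models \acons'_i(\vect{z}, \vect{z}_i)$. The letter-components match because $\asymtree_{\atree}$ and $\asymtree$ agree on them. Since $\arun$ was already initialized and accepting, this yields $\atree \in \alang(\aautomaton)$.

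The only subtlety I anticipate is making sure condition (i) of the run definition (propagation of source locations from parent to child) is carried over cleanly from the $\aautomaton$-consistency witness $\arun$ to a genuine run on $\atree$, as it concerns $\arun$ only and not the data values; this is immediate from how $\aautomaton$-consistency was defined. No combinatorial work (e.g.\ no selection among multiple candidate types at a node) is needed, because by Lemma~\ref{lemma-types}(I) the abstraction of $\atree$ is uniquely determined, so the types appearing along $\asymtree_{\atree}$ automatically coincide with those of $\asymtree$. This makes the whole lemma a matter of assembling the definitions rather than a construction.
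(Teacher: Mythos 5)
Your proof is correct and follows essentially the same route as the paper's: both directions use the abstraction $\asymtree_{\atree}$ together with Lemmas~\ref{lemma-A-consistency} and~\ref{lemma-consistency-BTA} for the forward direction, and for the converse both extract the $\aautomaton$-consistency witness $\arun$ and verify condition (ii) by combining $\Zed \models \acons_i(\vect{z},\vect{z}_i)$ with the entailment $\acons_i \models \acons'_i$. Your explicit appeal to Lemma~\ref{lemma-types}(I) to justify that the types of $\asymtree$ coincide with those of $\asymtree_{\atree}$ is a minor clarification the paper leaves implicit, but the argument is the same.
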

\begin{proof}
``only if'':
  Suppose  $\alang(\aautomaton) \neq \emptyset$.  
  Then there exists some $\atree: \interval{0}{\degree-1}^* \to \aalphabet \times \Zed^{\beta}$
  in $\alang(\aautomaton)$. 
  The abstraction $\asymtree_{\atree}$ of $\atree$ is clearly satisfiable. 
  By Lemma~\ref{lemma-A-consistency}, 
  $\asymtree_{\atree}$ is $\aautomaton$-consistent. 
  By Lemma~\ref{lemma-consistency-BTA}, $\asymtree_{\atree}\in \alang(\locautomaton)$.

  ``if'': Suppose there exists some symbolic tree  $\asymtree \in \alang(\locautomaton)$
  such that $\asymtree$ is satisfiable. 
  Satisfiability of $\asymtree$ entails the existence of
  $\atree: \interval{0}{\degree-1}^* \rightarrow
  \aalphabet \times \Zed^{\beta}$ such that $\asymtree_\atree=\asymtree$, that is,  for all $\anode \cdot i \in \interval{0}{\degree-1}^*$ with
  $\atree(\anode) = \pair{\aletter}{\vect{z}}$,
  $\atree(\anode \cdot i) = \pair{\aletter_i}{\vect{z}_i}$
  and $\asymtree(\anode \cdot i) = \pair{\aletter_i}{\acons_i}$,
  we have  $\Zed \models \acons_i(\vect{z},\vect{z}_i)$.
  Moreover, if $\atree(\varepsilon) = \pair{\aletter}{\vect{z}}$,
  and $\asymtree(\varepsilon) = \pair{\aletter}{\acons}$, then 
  $\Zed \models \acons(\vect{0},\vect{z})$.
  By $\asymtree \in \alang(\locautomaton)$ and Lemma~\ref{lemma-consistency-BTA}, $\asymtree$ is $\aautomaton$-consistent. 
Hence there exists an initialized accepting tree $\arun: \interval{0}{\degree-1}^* \to\delta$ 
 that satisfies condition (i) of the definition of runs, and for all $\anode \in
\interval{0}{\degree-1}^*$
with $\asymtree(\anode) = \pair{\aletter}{\acons}$,
$\asymtree(\anode \cdot 0) = \pair{\aletter_0}{\acons_0}$, \ldots, $\asymtree(\anode \cdot (\degree-1)) =
\pair{\aletter_{\degree-1}}{\acons_{\degree-1}}$
and $\arun(\anode) = \triple{\alocation}{\aletter}{\pair{\acons_{0}'}{\alocation_0}, \ldots, \pair{\acons_{\degree-1}'}{\alocation_{\degree-1}}}$, we have $\acons_i \models \acons_i'$ 
for all $i \in \interval{0}{\degree-1}$ (by definition of $\delta'$ in $\locautomaton$). 
Let us briefly verify that $\arun$ is a run of $\aautomaton$ on $\atree$, and therefore $\alang(\aautomaton) \neq \emptyset$.
For this, it suffices to prove that condition (ii) of the definition of runs is satisfied. 
Let $\anode\in \interval{0}{\degree-1}^*$ with 
$\atree(\anode) = (\aletter,\vect{z})$ and $\atree(\anode \cdot i)=(\aletter_i,\vect{z}_i)$ for all $0\leq i<\degree$, 
and let  $\arun(\anode)= \triple{\alocation}{\aletter}{\pair{\acons_{0}'}{\alocation_0},
      \ldots, \pair{\acons_{\degree-1}'}{\alocation_{\degree-1}}}$. 
       By the transition relation of $\locautomaton$, for all $i \in \interval{0}{\degree-1}$, 
      $\acons_i \models \acons_i'$ with $\asymtree(\anode \cdot i) = \pair{\aletter_i}{\acons_i}$. 
      But, as proved above, we also have $\Zed \models \acons_i(\vect{z},\vect{z}_i)$ and
      therefore
      $\Zed \models \acons_i'(\anode \cdot i)(\vect{z},\vect{z}_i)$. 
\end{proof}

By virtue of Lemma \ref{lemma-satisfiable-symbolic-tree}, 
the nonemptiness problem for TCA can be reduced to deciding whether $\alang(\locautomaton)$ contains a
satisfiable symbolic tree $\asymtree$. 
However, not every 
symbolic tree is satisfiable, as the following example shows. 
\begin{exa} \label{example-symtree-not-sat}
Assume that every node along the rightmost branch in the symbolic tree $\asymtree_\atree$ in
Figure~\ref{figure-symbolic-tree} is labelled with $(\aletter,\acons_1)$. 
Then $\asymtree_\atree$ is not satisfiable: in order to satisfy the constraint $\avariable'_1<\avariable_1$,
the value of $\avariable_1$ must finally  become  smaller than $\adatum_1$, violating the constraint
$\adatum_1<\avariable_1$. 
\end{exa}
Thus, the most important property is to check whether $\alang(\locautomaton)$ contains some
\emph{satisfiable} symbolic tree. This is the subject of the next two subsections.

\subsection{Satisfiability for Regular Locally Consistent Symbolic Trees}
\label{section-introduction-starproperty}
In this subsection, 
we focus on deciding when $\alang(\locautomaton)$ contains a \emph{satisfiable} symbolic tree, 
while evaluating the computational complexity to check its existence.
For this, we define a property on symbolic trees, 
denoted by $(\newbigstar)$, 
such that if a symbolic tree $\asymtree$  satisfies $(\newbigstar)$ and is regular, 
then $\asymtree$ is satisfiable. 
We will prove that there exists a Rabin tree automaton that accepts all symbolic trees that satisfy $(\newbigstar)$. 
As a result, we will be able to reduce the nonemptiness problem to the nonemptiness problem for
the product automaton of this automaton and $\locautomaton$. 
The definition of $(\newbigstar)$ is based on an infinite tree-like graph inferred by a given 
symbolic tree $\asymtree$, denoted by $\newGt$.   
Similar symbolic structures are introduced in~\cite{Lutz01,Demri&DSouza07,Carapelle&Kartzow&Lohrey13,Labai&Ortiz&Simkus20}.

So let us start with defining $\newGt$.
Let $\asymtree:\interval{0}{\degree-1}^*\to \aalphabet\times\sattypes{\beta}$ be  a symbolic tree.
The graph $\newGt$ 
is equal to the structure
\[
\newGt = (\domnewGt,\step{=},\step{<},U_{<\adatum_1},
(U_{\adatum})_{\adatum\in [\adatum_1,\adatum_\alpha]},U_{>\adatum_\alpha}),
\]
where
\begin{itemize}
\item $\domnewGt=\interval{0}{\degree-1}^*\times \DVAR{\beta}{\adatum_1}{\adatum_{\alpha}}$ with
  $\DVAR{\beta}{\adatum_1}{\adatum_{\alpha}} \egdef \{\avariable_1,\dots,\avariable_\beta\}\cup\{\adatum_1,\adatum_\alpha\}$, 
\item $\step{=}$ and $\step{<}$ are two binary relations over $\domnewGt$, and
\item $\{U_{<\adatum_1},U_{\adatum_1},U_{\adatum_1+1},\dots,U_{\adatum_\alpha},U_{>\adatum_\alpha}\}$ is a partition of $\domnewGt$. 
\end{itemize}
Elements in $\{\avariable_1,\dots,\avariable_\beta\}\cup\{\adatum_1,\adatum_\alpha\}$ are denoted by $\advar, \advar_1, \advar_2, \ldots$ (variables or constants)
and $\domnewGtvar \egdef \interval{0}{\degree-1}^*\times \{\avariable_1,\dots,\avariable_\beta\}$. 

Two nodes $\anode, \anode' \in \interval{0}{\degree-1}^*$ in $\asymtree$ are \defstyle{neighbours}
$\equivdef$ either $\anode = \anode'$, or $\anode = \anode' \cdot j$ or $\anode' = \anode \cdot j$ for
some $j \in \interval{0}{\degree-1}$ (siblings are not neighbours). 
Two elements $\pair{\anode}{\advar}$ and $\pair{\anode'}{\advar'}$
in $\interval{0}{\degree-1}^* \times \DVAR{\beta}{\adatum_1}{\adatum_{\alpha}}$ are
\defstyle{neighbours} $\equivdef$ $\anode$ and $\anode'$ are neighbours.
By construction, the edges in $\newGt$ are possible only between  neighbour elements. 

The rationale behind the construction of $\newGt$ is to reflect the constraints between parent and children nodes as well as the constraints regarding constants, in such a way that, if
$\atree$ witnesses the satisfaction of $\asymtree$, 
then, e.g., $\atree(\anode)(\advar) < \atree(\anode')(\advar')$ if $(\anode,\advar)\step{<}(\anode',\advar')$, and 
$\atree(\anode)(\advar)=\adatum_1$ if $(\anode,\advar)\in U_{\adatum_1}$
(see also Lemma~\ref{lemma-correctness-newgt}).  
Here are all conditions for building $\newGt$. 
\begin{description}
\item[(VAR)] For all $\pair{\anode}{\avariable_i}, \pair{\anode'}{\avariable_{i'}}
      \in \domnewGtvar$, for all $\sim \in \set{<,=}$, 
      $\pair{\anode}{\avariable_i} \step{\sim} \pair{\anode'}{\avariable_{i'}}$ 
      iff one of the conditions below holds:
  \begin{itemize}
  \item either $\anode' = \anode \cdot j$ and $\avariable_i \sim \avariable_{i'}'$  in $\acons$ with $\asymtree(\anode') = \pair{\aletter}{\acons}$,
  \item or     $\anode = \anode'$ and $\avariable_i' \sim \avariable_{i'}'$ in $\acons$ with
  $\asymtree(\anode') = \pair{\aletter}{\acons}$, 
  \item or     $\anode = \anode' \cdot j$ and $\avariable_i' \sim \avariable_{i'}$ 
    in $\acons$ with $\asymtree(\anode) = \pair{\aletter}{\acons}$.
  \end{itemize}
\item[(P1)]  For all $\adatum \in \interval{\adatum_1}{\adatum_{\alpha}}$ and 
        $\pair{\anode}{\avariable_j} \in \domnewGtvar$,
  $\pair{\anode}{\avariable_j} \in U_{\adatum}$ 
  iff $\avariable_j' = \adatum$ in
  $\acons$ with $\asymtree(\anode) = \pair{\aletter}{\acons}$.
\item[(P2)]   For all $\pair{\anode}{\avariable_j} \in \domnewGtvar$,
  $\pair{\anode}{\avariable_j} \in U_{< \adatum_1}$ 
  iff
  $\avariable_j' < \adatum_1$ in
  $\acons$ with $\asymtree(\anode) = \pair{\aletter}{\acons}$.
\item[(P3)]   For all $\pair{\anode}{\avariable_j}
                 \in \domnewGtvar$,
                 $\pair{\anode}{\avariable_j} \in U_{> \adatum_{\alpha}}$
                  iff
  $\avariable_j' > \adatum_{\alpha}$ in $\acons$ with
  $\asymtree(\anode) = \pair{\aletter}{\acons}$.

\item[(P4)]    For all $\anode \in \interval{0}{\degree-1}^*$, $\pair{\anode}{\adatum_1} \in U_{\adatum_1}$ and 
  $\pair{\anode}{\adatum_{\alpha}} \in U_{\adatum_{\alpha}}$.
  
\item[(CONS)] This condition is about elements of $\domnewGt$ labelled by constants and how the
  edge labels reflect the relationships between the constants. Formally, 
  for all 
  $\pair{\anode}{\advar}, \pair{\anode'}{\advar'} \in
   (\domnewGt \setminus \domnewGtvar)$
  such that $\anode$ and $\anode'$ are neighbours, 
  for all $\adatum^{\dag}, \adatum^{\dag \dag}$ in `$< \adatum_1$', $\adatum_{1}$, \ldots,
  $\adatum_{\alpha}$, `$> \adatum_{\alpha}$' 
  such that $\pair{\anode}{\advar} \in U_{\adatum^{\dag}}$ and $\pair{\anode'}{\advar'} \in U_{\adatum^{\dag \dag}}$,
              for all $\sim \in \set{<,=}$, 
              there is an edge $\pair{\anode}{\advar} \step{\sim} \pair{\anode'}{\advar'}$ 
              iff
              \begin{itemize}
              \item either $\adatum^{\dag}, \adatum^{\dag \dag} \in \interval{\adatum_1}{\adatum_{\alpha}}$
                and $\adatum^{\dag} \sim \adatum^{\dag \dag}$,
              \item  or $\adatum^{\dag} =$ `$< \adatum_1$', $\adatum^{\dag \dag} \neq$ `$< \adatum_1$' and
                $\sim$ is equal to $<$,
              \item
                or $\adatum^{\dag} \neq$ `$> \adatum_{\alpha}$', $\adatum^{\dag \dag} =$ `$> \adatum_{\alpha}$'
                and $\sim$ is equal to $<$.
              \end{itemize}
              Here,  $\adatum^{\dag}$ and $\adatum^{\dag \dag}$ are subscripts that are not 
              associated to data values when it takes the values `$< \adatum_1$' or `$> \adatum_{\alpha}$'. 
\end{description}

Below, we also write $\pair{\anode}{\advar} \step{>} \pair{\anode'}{\advar'}$ instead
of $\pair{\anode'}{\advar'} \step{<} \pair{\anode}{\advar}$.
Observe that in the condition (CONS), it is also possible to remove the requirement that
$\anode$ and $\anode'$ are neighbours. This would still be fine but we added this condition
so that all over $\newGt$, an edge exists between two elements only if the underlying nodes are neighbours
(but other edges could be easily inferred). 
The forthcoming condition $(\newbigstar)$ shall be defined
on such labelled graphs, see also the symbolic trees accepted
by construction of $\starautomaton$ in Lemma~\ref{lemma-automaton-star}.
The construction of $\newGt$ is done from any symbolic tree $\asymtree$, even if it is not locally consistent;
local consistency is taken care of by $\locautomaton$.

\begin{exa}
In Figure~\ref{figure-graph}, we illustrate the definition of the graph ${G_{\asymtree_\atree}^{\mbox{\tiny{C}}}}$
for the   symbolic tree $\asymtree_\atree$ in Figure \ref{figure-symbolic-tree}. 
The edges labelled with $=$ or $<$ reflect the constraints (we omit edges if they can be inferred
from the other edges). For instance, 
$(1,\avariable_1)\step{<}(\varepsilon,\avariable_1)$ corresponds to the constraint
$\avariable'_1<\avariable_1$ (by application of (VAR), third item). 
Grey nodes are in $U_{\adatum_1}$, all other nodes are in $U_{>\adatum_1}$ (no nodes in $U_{<\adatum_1}$).

\begin{figure} 
\scalebox{0.9}{
  \begin{tikzpicture}[->,>=stealth',shorten >=1pt,auto,node distance=4cm,thick,node/.style={circle,draw,scale=0.9}, roundnode/.style={circle, black, draw=black},]
\tikzset{every state/.style={minimum size=0pt},
dotted_block/.style={draw=black!20!white, line width=1pt, dotted, inner sep=3mm, minimum width=5.4cm, rectangle, rounded corners, minimum height=8mm},
boxnode/.style={rectangle, draw=black,minimum width=10mm,rounded corners=0.2cm}
} 
;
\node[boxnode,fill=black!10] (root1) at (0,0) {$\pair{\varepsilon}{\adatum_1}$}; 
\node[boxnode] (root2) at (1.8,0) {$\pair{\varepsilon}{\avariable_1}$}; 
\node[boxnode] (root3) at (3.6,0) {$\pair{\varepsilon}{\avariable_2}$}; 
\node [dotted_block] at (1.8,0) {};

\node[boxnode,fill=black!10,xshift=-4cm,yshift=-1.5cm] (0_1) at (0,0) {$\pair{0}{\adatum_1}$}; 
\node[boxnode,fill=black!10,xshift=-4cm,yshift=-1.5cm] (0_2) at (1.8,0) {$\pair{0}{\avariable_1}$}; 
\node[boxnode,fill=black!10,xshift=-4cm,yshift=-1.5cm] (0_3) at (3.6,0) {$\pair{0}{\avariable_2}$}; 
\node [dotted_block,xshift=-4.4cm,yshift=-1.5cm] at (2.2,0) {};

\node[boxnode,fill=black!10,xshift=3cm,yshift=-1.5cm] (1_1) at (0,0) {$\pair{1}{\adatum_1}$}; 
\node[boxnode,xshift=3cm,yshift=-1.5cm] (1_2) at (1.8,0) {$\pair{1}{\avariable_1}$}; 
\node[boxnode,xshift=3cm,yshift=-1.5cm] (1_3) at (3.6,0) {$\pair{1}{\avariable_2}$}; 
\node [dotted_block,xshift=2.6cm,yshift=-1.5cm] at (2.2,0) {};

\node[boxnode,fill=black!10,xshift=-1cm,yshift=-3cm] (10_1) at (0,0) {$\pair{10}{\adatum_1}$}; 
\node[boxnode,fill=black!10,xshift=-1cm,yshift=-3cm] (10_2) at (1.8,0) {$\pair{10}{\avariable_1}$}; 
\node[boxnode,fill=black!10,xshift=-1cm,yshift=-3cm] (10_3) at (3.6,0) {$\pair{10}{\avariable_2}$}; 
\node [dotted_block,xshift=-1.4cm,yshift=-3cm] at (2.2,0) {};

\node[boxnode,fill=black!10,xshift=5cm,yshift=-3cm] (11_1) at (0,0) {$\pair{11}{\adatum_1}$}; 
\node[boxnode,xshift=5cm,yshift=-3cm] (11_2) at (1.8,0) {$\pair{11}{\avariable_1}$}; 
\node[boxnode,xshift=5cm,yshift=-3cm] (11_3) at (3.6,0) {$\pair{11}{\avariable_2}$}; 
\node [dotted_block,xshift=4.6cm,yshift=-3cm] at (2.2,0) {};

\node at (1, -3.8) {$\vdots$}; 
\node at (7, -3.8) {$\vdots$};

\path [->] (root1)  edge[bend left=40] node[above] {\scriptsize{$<$}}  (root2);
\path [->] (root2)  edge[bend left=40] node[above] {\scriptsize{$<$}}  (root3);

\path [-] (0_1)  edge[bend left=40] node[above] {\scriptsize{$=$}}  (0_2);
\path [-] (0_2)  edge[bend left=40] node[above] {\scriptsize{$=$}}  (0_3);

\path [-] (root1)  edge[bend right=10] node[above,sloped] {\scriptsize{$=$}}  (0_3);

\path [<-] (root2)  edge[bend right=10] node[above,sloped] {\scriptsize{$>$}}  (1_2);
\path [-] (root1)  edge[bend right=10] node[above,sloped] {\scriptsize{$=$}}  (1_1);
\path [-] (1_3)  edge[bend right=10] node[above,sloped] {\scriptsize{$=$}}  (root3);

\path [->] (1_1)  edge[bend right=40] node[below] {\scriptsize{$<$}}  (1_2);
\path [->] (1_2)  edge[bend right=40] node[below] {\scriptsize{$<$}}  (1_3);

\path [-] (10_3)  edge[bend right=10] node[above, sloped] {\scriptsize{$=$}}  (1_1);
\path [-] (11_3)  edge[bend right=10] node[above,sloped] {\scriptsize{$=$}}  (1_3);

\path [-] (10_1)  edge[bend left=40] node[above] {\scriptsize{$=$}}  (10_2);
\path [-] (10_2)  edge[bend left=40] node[above] {\scriptsize{$=$}}  (10_3);

\path [->] (11_1)  edge[bend right=40] node[below] {\scriptsize{$<$}}  (11_2);
\path [->] (11_2)  edge[bend right=40] node[below] {\scriptsize{$<$}}  (11_3);

\path [-] (1_1)  edge[bend right=10] node[below,sloped] {\scriptsize{$=$}}  (11_1);
\path [<-] (1_2)  edge[bend right=10] node[below,sloped] {\scriptsize{$>$}}  (11_2);

 	\end{tikzpicture} 
}
\caption{The labelled graph ${G_{\asymtree_\atree}^{\mbox{\tiny{C}}}}$ for the
  symbolic tree $\asymtree_\atree$ from Figure~\ref{figure-symbolic-tree}
  (page~\pageref{figure-symbolic-tree}).} 
\label{figure-graph}
\end{figure}
\end{exa}

The rationale behind the construction of $\newGt$ is best illustrated  with the next lemma. 
\begin{lem}\label{lemma-correctness-newgt}
  Let $\asymtree$ be a symbolic tree and $\atree:\interval{0}{\degree-1}^*\to\Zed^\beta$ be such that
 $\atree$ witnesses the satisfaction of $\asymtree$. 
Then, if $\pair{\anode}{\advar} \step{\sim}\pair{\anode'}{\advar'}$ is an edge in  $\newGt$ for some $\sim \in \set{<,=}$, then  $\atree(\anode)(\advar) \sim \atree(\anode')(\advar')$. 
\end{lem}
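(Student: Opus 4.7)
The plan is to proceed by a direct case analysis on how the edge $\pair{\anode}{\advar} \step{\sim}\pair{\anode'}{\advar'}$ is produced in $\newGt$. Inspecting the definitions, edges can arise only via condition (VAR) (in which case both endpoints lie in $\domnewGtvar$, i.e., $\advar,\advar'$ are variables $\avariable_i,\avariable_{i'}$) or via condition (CONS) (both endpoints lie in $\domnewGt \setminus \domnewGtvar$, i.e., $\advar,\advar' \in \{\adatum_1,\adatum_{\alpha}\}$). In the latter case I will adopt the natural convention $\atree(\anode)(\adatum) \egdef \adatum$ for constants. The key tool throughout is that, since $\atree$ witnesses the satisfaction of $\asymtree$, we have $\asymtree_{\atree} = \asymtree$; hence, at every node $\anodebis$ with parent $\anodebis'$ (or the fixed $\vect{0}$ if $\anodebis = \varepsilon$) and with $\asymtree(\anodebis) = \pair{\aletter}{\acons}$, the constraint $\Zed \models \acons(\atree(\anodebis'),\atree(\anodebis))$ holds, where unprimed variables of $\acons$ refer to $\atree(\anodebis')$ and primed ones to $\atree(\anodebis)$.

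For the (VAR) case, I would consider the three sub-cases offered by the condition. If $\anode' = \anode \cdot j$, the edge witnesses that the conjunct $\avariable_i \sim \avariable_{i'}'$ appears in $\acons$ with $\asymtree(\anode') = \pair{\aletter}{\acons}$; unfolding the meaning of the primed/unprimed convention at $\anode'$ immediately yields $\atree(\anode)(\avariable_i) \sim \atree(\anode')(\avariable_{i'})$. The sub-case $\anode = \anode'$ is analogous, using the conjunct $\avariable_i' \sim \avariable_{i'}'$, both of which refer to $\atree(\anode)$. The sub-case $\anode = \anode'\cdot j$ is the mirror image of the first, using the conjunct $\avariable_i' \sim \avariable_{i'}$ occurring in the type labelling $\anode$.

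For the (CONS) case, by (P4) we have $\pair{\anode}{\advar} \in U_{\advar}$ and $\pair{\anode'}{\advar'} \in U_{\advar'}$; that is, the relevant subscripts $\adatum^{\dag},\adatum^{\dag \dag}$ from condition (CONS) coincide with $\advar,\advar'$ themselves (both lie in $\interval{\adatum_1}{\adatum_{\alpha}}$). Condition (CONS) then tells us that the edge with label $\sim$ exists precisely when $\advar \sim \advar'$ holds among the integers. With the convention $\atree(\anode)(\advar) = \advar$ and $\atree(\anode')(\advar') = \advar'$, this gives $\atree(\anode)(\advar) \sim \atree(\anode')(\advar')$ as required.

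There is no genuine obstacle here: the lemma is essentially a bookkeeping verification that the construction of $\newGt$ is faithful to the intended semantics. The only subtlety worth flagging is handling the primed/unprimed convention correctly in each of the three sub-cases of (VAR), which is why the (VAR) case is the one I would write out in most detail, while (CONS) is handled in a single paragraph.
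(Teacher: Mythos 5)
Your case analysis is incomplete, and the omitted cases are exactly where the lemma has content. You assume that every edge of $\newGt$ joins either two variable elements (via (VAR)) or two constant elements (via (CONS)), so that no edge ever connects a variable element $\pair{\anode}{\avariable_i}$ to a constant element $\pair{\anode'}{\adatum_1}$ or $\pair{\anode'}{\adatum_{\alpha}}$. That is not how the graph works: condition (CONS) quantifies over the classes $U_{<\adatum_1},U_{\adatum_1},\dots,U_{\adatum_\alpha},U_{>\adatum_\alpha}$, which by (P1)--(P4) partition \emph{all} of $\domnewGt$, variable elements included, and the construction deliberately produces mixed edges such as $\pair{\varepsilon}{\adatum_1}\step{<}\pair{\varepsilon}{\avariable_1}$ (see Figure~\ref{figure-graph}). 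These mixed edges are the whole reason the constant elements were added: the violation of $(\newbigstar)$ in Example~\ref{example-bigstar} needs $p(i)\step{<}\rp(i)$ with $p$ starting at $\pair{\varepsilon}{\adatum_1}$ and $\rp$ at $\pair{\varepsilon}{\avariable_1}$. The clause ``$\pair{\anode}{\advar},\pair{\anode'}{\advar'}\in\domnewGt\setminus\domnewGtvar$'' in (CONS) is admittedly misleading, but taking it literally makes the rest of the construction collapse, so at the very least you should have flagged the tension instead of concluding that there is ``no genuine obstacle''.

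For these mixed edges the verification is not the one-line unfolding you give for (VAR) and for constant--constant pairs. Given, say, $\pair{\anode}{\avariable_i}\step{=}\pair{\anode'}{\adatum_1}$, one must argue: by (P4), $\pair{\anode'}{\adatum_1}\in U_{\adatum_1}$; by the ``iff'' in (CONS), an $=$-edge into $U_{\adatum_1}$ forces $\pair{\anode}{\avariable_i}\in U_{\adatum_1}$; by (P1), the type labelling $\anode$ contains the conjunct $\avariable_i'=\adatum_1$; and only then does the hypothesis that $\atree$ witnesses $\asymtree$ yield $\atree(\anode)(\avariable_i)=\adatum_1$. The $<$-cases additionally require a sub-case split on whether the variable endpoint lies in some $U_{\adatum}$ with $\adatum\in\interval{\adatum_1}{\adatum_\alpha}$, in $U_{<\adatum_1}$, or in $U_{>\adatum_\alpha}$, passing through (P1), (P2) or (P3) respectively. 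None of this chain appears in your proposal, so as written it does not establish the lemma for a substantial class of edges of $\newGt$.
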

By convention, $\atree(\anode)(\adatum_1) \egdef \adatum_1$ and
$\atree(\anode)(\adatum_{\alpha}) \egdef \adatum_{\alpha}$.
Moreover, whenever $\atree(\anode) = \tuple{z_1}{z_{\beta}}$, 
we write $\atree(\anode)(\avariable_i)$ to denote $z_i$. 
\begin{proof}
Suppose $\pair{\anode}{\advar} \step{=}\pair{\anode'}{\advar'}$ is an edge in  $\newGt$. 
Let $\asymtree(\anode)=(\cdot, \acons_\anode)$ and  $\asymtree(\anode')=(\cdot,\acons_{\anode'})$. 
We use $\acons'_\anode$ to denote the restriction of $\acons_\anode$ to $\{\avariable_1',\dots,\avariable_\beta'\}$; similarly for $\acons'_{\anode'}$. 
By (VAR) and (CONS), there are four possible cases. 
\begin{itemize}
\item Suppose $\advar=\avariable_i$, $\advar'=\avariable_{i'}$ for some $1\leq i,i'\leq \beta$. 
\begin{itemize}
\item Suppose $\anode'=\anode \cdot j$ for some $j\in\interval{0}{\degree-1}$, and
$\avariable_i = \avariable'_{i'}\in \acons_{\anode'}$. 
By $\Zed \models \acons_{\anode'}\pair{\atree(\anode)}{\atree(\anode')}$
($\atree$ witnesses the satisfaction of $\asymtree$) we can conclude
$\atree(\anode)(\avariable_i) = \atree(\anode')(\avariable_{i'})$. 
\item Suppose $\anode'=\anode$ and $\avariable'_i = \avariable'_{i'}\in \acons_{\anode'}$.
 By $\Zed \models \acons'_{\anode'}(\atree(\anode'))$
 ($\atree$ witnesses the satisfaction of $\asymtree$) we can conclude
 $\atree(\anode)(\avariable_i) = \atree(\anode')(\avariable_{i'})$.
\item Suppose $\anode=\anode' \cdot j$ for some $j\in\interval{0}{\degree-1}$, and $\avariable'_i = \avariable_{i'}\in \acons_{\anode}$. 
By $\Zed \models \acons_{\anode}\pair{\atree(\anode')}{\atree(\anode)}$
($\atree$ witnesses the satisfaction of $\asymtree$, and we omit this precision
in the sequel) we can
conclude $\atree(\anode)(\avariable_i) = \atree(\anode')(\avariable_{i'})$. 
\end{itemize}
\item Suppose $\advar=\avariable_i$ for some $1\leq i\leq \beta$ and $\advar'=\adatum_1$. 
By (P4), 
$(\anode',\adatum_1)\in U_{\adatum_1}$. 
By (CONS), $(\anode,\avariable_i)\in U_{\adatum_1}$. 
By (P1), $\avariable'_i=\adatum_1\in \acons_\anode$. 
By $\Zed \models \acons'_\anode(\atree(\anode))$ we can conclude that 
$\atree(\anode)(\avariable_i)=\adatum_1$. 
Hence $\atree(\anode)(\avariable_i)=\atree(\anode')(\adatum_1)$. 
\item  Suppose $\advar=\adatum_1$ and $\advar'=\avariable_i$ for some $1\leq i\leq \beta$. The proof is symmetric to the proof for the previous case. 
\item Suppose $\advar=\adatum_\alpha$ or $\advar'=\adatum_\alpha$. The proof is very similar to the proof for the previous two cases. 
\end{itemize}

Now suppose $\pair{\anode}{\advar} \step{<}\pair{\anode'}{\advar'}$ is an edge in $\newGt$.
\begin{itemize}
\item Suppose $\advar=\avariable_i$, $\advar'=\avariable_{i'}$ for some $1\leq i,i'\leq \beta$ by (VAR).  The proof is analogous to the proof for the corresponding case for $\pair{\anode}{\advar} \step{=}\pair{\anode'}{\advar'}$. 
\item Suppose $\advar'=\adatum_1$. Then $\advar=\avariable_i$ for some $1\leq i\leq \beta$ and $(\anode,\avariable_i)\in U_{< \adatum_1}$ by (CONS). Then $\avariable'_i<\adatum_1\in\acons_\anode$ by (P2).
By $\Zed \models \acons'_\anode( \atree(\anode))$ we can conclude
$\atree(\anode)(\avariable_i)<\adatum_1$, and hence 
$\atree(\anode)(\avariable_i)<\atree(\anode')(\adatum_1)$. 
\item Suppose $\advar=\adatum_1$. Then -- by (CONS) -- we have $(\anode',\advar')\in U_{\adatum^{\dag}}$ for some
  $\adatum^{\dag} \in \interval{\adatum_1+1}{\adatum_{\alpha}} \cup \set{\mbox{`$> \adatum_{\alpha}$'}}$.
  \begin{itemize}
  \item Suppose $\adatum^{\dag} \in \interval{\adatum_1+1}{\adatum_{\alpha}}$
        and $\advar'=\avariable_i$ for some $i\in\interval{1}{\beta}$. By (P1), we have   $\avariable'_i=\adatum^{\dag} \in \acons_{\anode'}$.
By $\Zed \models \acons'_{\anode'}(\atree(\anode))$ we can
conclude $\atree(\anode)(\avariable_i)=\adatum^{\dag}$. 
Hence $\atree(\anode)(\adatum_1)<\atree(\anode')(\avariable_i)$. 
\item Suppose $\adatum^{\dag}= \adatum_{\alpha}$ and $\advar'=\adatum_\alpha$. Then obviously $\atree(\anode)(\adatum_1)<\atree(\anode')(\adatum_\alpha)$. 
\item Suppose $\adatum^{\dag} =$ `$> \adatum_{\alpha}$'. Then $\advar'=\avariable_i$ for some $i\in\interval{1}{\beta}$, and
  $\avariable'_i>\adatum_\alpha\in \acons_{\anode'}$ by
(P3). 
By $\Zed \models\acons'_{\anode'}(\atree(\anode'))$ we can conclude that 
$\atree(\anode')(\avariable_i)>\adatum_\alpha$, so that 
$\atree(\anode)(\adatum_1)<\atree(\anode')(\avariable_i)$. 
\end{itemize}
\item Suppose $\advar=\adatum_\alpha$. Then $\advar'=\avariable_i$ for some $i\in\interval{1}{\beta}$ and $(\anode',\avariable_i)\in
  U_{> \adatum_\alpha}$. Then $\avariable'_i>\adatum_\alpha\in\acons'_{\anode'}$. By $\Zed \models\acons'_{\anode'}(\atree(\anode'))$,
we can conclude that 
$\atree(\anode')(\avariable_i)>\adatum_\alpha$. 
Hence $\atree(\anode)(\adatum_\alpha)<\atree(\anode')(\avariable_i)$.
\item Suppose $\advar'=\adatum_\alpha$. Then $(\anode,\advar)\in U_{\adatum^{\dag}}$ for some
      $\adatum^{\dag} \in \set{\mbox{`$< \adatum_{1}$'}} \cup \interval{\adatum_1}{\adatum_{\alpha}-1}$.
      The proof is very similar to the last but one case. \qedhere
\end{itemize}
\end{proof}

A map $p: \Nat \rightarrow \domnewGt$ is a \defstyle{path map}
in $\newGt$ 
iff for all $i \in \Nat$, either $p(i) \step{=} p(i+1)$ or $p(i) \step{<} p(i+1)$ in $\newGt$.
Similarly, $\rp: \Nat \rightarrow \domnewGt$ is a \defstyle{reverse path map}
in $\newGt$ 
iff for all $i \in \Nat$, either $\rp(i+1) \step{=} \rp(i)$ or $\rp(i+1) \step{<} \rp(i)$.
These definitions make sense because between two elements there is at most one labelled edge
(or maybe two but with the same equality sign). 
A  path map $p$ (resp. reverse path map $\rp$) is \defstyle{strict} 
iff
$\set{i \in \Nat \mid p(i) \step{<} p(i+1)}$ (resp. $\set{i \in \Nat \mid \rp(i+1) \step{<} \rp(i)}$) is infinite.  
An \defstyle{infinite branch} $\abranch$ is an element of $\interval{0}{\degree-1}^{\omega}$. We write
$\abranch[i,j]$ with $i \leq j$ to denote the subsequence $\abranch(i) \cdot \mathcal{B}(i+1) \cdots\abranch(j)$. 
Given $\pair{\anode}{\advar}\in \domnewGt$, a path map $p$ \defstyle{from $\pair{\anode}{\advar}$ along $\abranch$}
is such that $p(0) = \pair{\anode}{\advar}$ and for all $i \geq 0$,
$p(i)$ is of the form $\pair{\anode \cdot \abranch[0,i]}{\cdot}$.
A reverse path map $\rp$ from $\pair{\anode}{\advar}$ along $\abranch$ admits a similar definition.

We say that a
symbolic tree $\asymtree$ satisfies \textcolor{black}{$(\newbigstar)$} if in $\newGt$ there are {\em no} elements $\pair{\anode}{\advar}$, $\pair{\anode}{\advar'}$ 
  (same node $\anode$ from $\interval{0}{\degree-1}^*$) and
  no infinite branch $\abranch$ such that
  \begin{enumerate}
  \item there exists a path map $p$ from $\pair{\anode}{\advar}$ along $\abranch$,
  \item there exists a reverse path map $\rp$ from $\pair{\anode}{\advar'}$ along $\abranch$,
  \item $p$ or $\rp$ is strict, and 
  \item for all $i \in \Nat$, $p(i) \step{<} \rp(i)$. 
  \end{enumerate}
The following proposition states a key property: 
non-satisfaction of a regular locally consistent symbolic tree
can be witnessed along a \emph{single} branch by violation of $(\newbigstar)$, following the remarkable result established
in~\cite[Lemma 22]{Labai&Ortiz&Simkus20}.
The approach developed in Section~\ref{section-starproperty} takes advantage of the techniques
to prove~\cite[Lemma 22]{Labai&Ortiz&Simkus20}. 
\begin{prop} \label{proposition-star-oplus}
    For every regular locally consistent symbolic tree $\asymtree$,
    $\asymtree$ satisfies $(\newbigstar)$ iff $\asymtree$ is satisfiable.
\end{prop}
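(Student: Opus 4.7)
The plan is to prove the two directions separately, handling the easy implication first and devoting the bulk of the argument to the construction of a concrete witness for the converse.

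For the direction "satisfiability $\Rightarrow (\newbigstar)$", I would argue by contraposition. Suppose $\atree$ witnesses the satisfaction of $\asymtree$ and yet $(\newbigstar)$ fails, with witnesses $\pair{\anode}{\advar}, \pair{\anode}{\advar'}$, a branch $\abranch$, a path map $p$ and a reverse path map $\rp$. By Lemma~\ref{lemma-correctness-newgt}, the sequence $(\atree(\cdot)(p(i)))_{i \in \Nat}$ is non-decreasing (strictly increasing when $p$ is strict), the sequence $(\atree(\cdot)(\rp(i)))_{i \in \Nat}$ is non-increasing (strictly decreasing when $\rp$ is strict), and $\atree(\cdot)(p(i)) < \atree(\cdot)(\rp(i))$ for every $i$. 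In $\Zed$ a strictly increasing sequence bounded above by a fixed integer (or a strictly decreasing sequence bounded below) does not exist, so we reach a contradiction.

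For the hard direction, I plan to construct $\atree: \interval{0}{\degree-1}^* \to \aalphabet \times \Zed^\beta$ from $\asymtree$ as follows. First, define the equivalence $\sim$ on $\domnewGt$ as the reflexive symmetric transitive closure of $\step{=}$; by Lemma~\ref{lemma-correctness-newgt}, $\sim$-equivalent elements must share the same value in any witness. Then quotient $\newGt$ by $\sim$, inheriting a $\step{<}$-relation and the partition into classes $U_{<\adatum_1}, U_{\adatum_1}, \ldots, U_{\adatum_\alpha}, U_{>\adatum_\alpha}$; the quotient inherits well-definedness of the partition from local consistency and from Lemma~\ref{lemma-types}(III) together with the fact that condition (CONS) has merged all constant-labelled elements of the same value. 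Classes in $U_{\adatum}$ must receive value $\adatum$; the remaining freedom lies in classes living between two consecutive constants (or in the open intervals $(-\infty,\adatum_1)$ and $(\adatum_\alpha,+\infty)$). My aim is to define an integer $\avaluation([x])$ for every class by taking, roughly, its $\step{<}$-rank within its ``band'' and then reading off $\atree(\anode)(\avariable_j) \egdef \avaluation([\pair{\anode}{\avariable_j}])$.

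The main obstacle, and the core content of the proof, is to show that $(\newbigstar)$ together with regularity forces the $\step{<}$-chains inside every band to be of bounded length, so that the above ranks actually fit inside $(\adatum_j,\adatum_{j+1}) \cap \Zed$. Here I would follow the Labai--Ortiz--Simkus strategy~\cite{Labai&Ortiz&Simkus20,Labai21} of introducing an explicit taxonomy on paths of $\newGt$ according to how they enter, exit, and interact with elements of $U_{\adatum_1},\ldots,U_{\adatum_\alpha}$. The argument proceeds as follows: assume for contradiction that there is an unbounded family of $\step{<}$-chains of classes confined to a single band; using the finite branching $\degree$, the finiteness of $\sattypes{\beta}$, and the fact that $\asymtree$ has only finitely many distinct subtrees (regularity), a K\"onig-style compactness argument extracts from this family an \emph{infinite} sequence of classes witnessing, along some branch $\abranch$, exactly the forbidden configuration in the definition of $(\newbigstar)$ (a forward path map that is strict, together with a bounding reverse path map coming from the bracketing constant, or vice versa). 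The symmetric cases for $U_{<\adatum_1}$ and $U_{>\adatum_\alpha}$ are handled analogously, noting that there we only need well-foundedness on the appropriate side rather than bounded length. Once these finiteness statements are in hand, ranks are well defined, and a direct verification using Lemma~\ref{lemma-types}(I)--(II) shows that the resulting $\atree$ satisfies all the type constraints decorating $\asymtree$, so that $\atree \models \asymtree$.
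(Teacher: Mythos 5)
Your easy direction and your value-assignment idea are both sound and essentially the paper's: the witness is built by anchoring each element of $U_{<\adatum_1}$ (resp.\ $U_{>\adatum_\alpha}$) at $\adatum_1$ minus (resp.\ $\adatum_\alpha$ plus) the supremum of the number of strict edges on paths to (resp.\ from) the constant anchor, which is exactly the map $g$ in the paper's Lemma~\ref{lemma-characterisation-satisfiability-goplus}. Two small corrections there. First, the ``bands between consecutive constants'' you worry about do not exist in this setting: a type in $\sattypes{\beta}$ forces every variable to be either $<\adatum_1$, $>\adatum_\alpha$, or \emph{equal} to a specific $\adatum\in\interval{\adatum_1}{\adatum_\alpha}$, so all middle values are pinned and only the two unbounded ends carry any freedom. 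Second, ``well-foundedness on the appropriate side'' is not the right condition: to define an integer value you need $\slen{\anode,\avariable}<\omega$, i.e.\ a \emph{finite supremum} over all finite paths of their strict lengths, which is strictly stronger than the absence of an infinite descending chain and is precisely the characterisation the paper isolates.

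The genuine gap is in the step you compress into ``a K\"onig-style compactness argument extracts \ldots{} exactly the forbidden configuration.'' A violation of $(\newbigstar)$ is a very rigid object: a single infinite branch $\abranch$, a path map $p$ and a reverse path map $\rp$ that each advance \emph{one tree level per step along $\abranch$}, with $p(i)\step{<}\rp(i)$ at every $i$ and one of the two strict. What unbounded $\slen{\anode,\avariable}$ gives you is a family of finite paths of unbounded strict length that may ascend and descend arbitrarily, revisit nodes, and have no common branch; K\"onig's lemma applied to such a family does not produce two synchronised, level-by-level maps bounding each other pointwise. The missing content is a normalisation of paths: one must show that unboundedness survives the successive restrictions to \emph{direct}, then \emph{rooted}, then $\downtouparrow$-structured paths (a single descent followed by a single ascent), which is the paper's Lemma~\ref{lemma-equivalences-slen} and occupies several pages of careful surgery on paths (cutting detours costs at most a factor $\beta+2$ in strict length, and a typed Pigeonhole argument pushes the unboundedness into a child subtree or up to the root). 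Only once a long $\downtouparrow$-structured path is available does regularity enter: the Pigeonhole principle on the descending half (over subtree identity and the pair of terms carried down and back up) yields positions $K<K'$ that can be pumped into a periodic branch, the descending half becomes $p$, the ascending half becomes $\rp$, and Lemma~\ref{lemma-shortcircuit-equality-lessthan} supplies $p(i)\step{<}\rp(i)$ at every level. Without the normalisation lemma, your extraction step does not go through, and that lemma is where essentially all the work of the proposition lives.
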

A proof can be found in~Section~\ref{section-starproperty}.
A comparison between the condition $(\newbigstar)$ and the condition $(\bigstar)$ from~\cite{Labai&Ortiz&Simkus20,Labai21}
is postponed to Section~\ref{section-rw-comparison}.
We recall 
that there are \emph{nonregular} locally consistent symbolic trees $\asymtree$  such that $\newGt$ satisfies
$(\newbigstar)$ 
but $\asymtree$ is not satisfiable (see e.g. the proof of~\cite[Corollary 6.5]{Demri&DSouza07}, and~\cite{Labai&Ortiz&Simkus20});
indeed, satisfiability of symbolic trees is not an $\omega$-regular property.
Observe also that $\asymtree$ is satisfiable (not necessarily regular) entails that
$\newGt$ satisfies
$(\newbigstar)$ (by Lemma~\ref{lemma-correctness-newgt}). 

\begin{exa} 
\label{example-bigstar}
Consider the non-satisfiable symbolic tree $\asymtree_\atree$ from Example \ref{example-symtree-not-sat}, for which we depict ${G_{\asymtree_\atree}^{\mbox{\tiny{C}}}}$ in Figure \ref{figure-graph}. Note that $\asymtree_\atree$
does not satisfy $(\newbigstar)$: for the infinite branch $\abranch=1^\omega$, 
there exists a path map $p$ from $(\varepsilon,\adatum_1)$ along $\abranch$, 
there exists a strict reverse path map $\rp$ from $(\varepsilon, \avariable_1)$ along $\abranch$, 
and for all $i\in\Nat$ we have $p(i)\step{<}\rp(i)$. 
\end{exa}
The next result states that $(\newbigstar)$ is $\omega$-regular (Lemma~\ref{lemma-automaton-star}):
there is a Rabin tree automaton $\starautomaton$
that captures it, so that satisfiability of symbolic trees can be overapproximated advantageously, see
Lemma~\ref{lemma-intersection-reduction}.
Observe that local consistency could be easily defined in the conditions for  $\newGt$.
One reason for keeping
separate $\locautomaton$ dealing with local consistency  and the automaton for dealing with $\newGt$
(forthcoming $\starautomaton$)
is that for
certain concrete domains (including $\pair{\Rat}{<}$), local consistency is sufficient to guarantee the existence of a satisfiable
model, so that the construction of $\starautomaton$ is not even necessary. We hence prefer this modular approach.

\begin{lem}
\label{lemma-automaton-star}
There is a Rabin tree automaton $\starautomaton$ such that
$\alang(\starautomaton) = \{ \asymtree \mid \asymtree \text{ sat. } (\newbigstar)\}$, the number of Rabin pairs is bounded above  by $8(\beta+2)^2 + 3$, the number of locations is exponential in $\beta$, 
 the transition relation can be decided in polynomial-time in
\[
 \max (\lceil log(|\adatum_1|) \rceil ,
    \lceil log(|\adatum_{\alpha}|) \rceil) + \beta + \card{\aalphabet} + D .
\]
\end{lem}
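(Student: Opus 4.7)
The plan is to exploit the fact that $(\newbigstar)$ is a universal branch property: $\asymtree$ satisfies $(\newbigstar)$ iff no branch $\abranch$ of $\asymtree$ admits a quadruple $(\anode,(\advar,\advar'),p,\rp)$ meeting conditions~(1)--(4). Crucially, whether such a witness exists along $\abranch$ depends only on the sequence of labels $\asymtree(\varepsilon),\asymtree(\abranch(0)),\asymtree(\abranch(0)\abranch(1)),\dots$ read along that branch, because by (VAR), (P1)--(P4) and (CONS) every edge of $\newGt$ between two neighbouring elements $\pair{\anode}{\cdot},\pair{\anode'}{\cdot}$ is determined by the labels at $\anode$ and $\anode'$. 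Hence $(\newbigstar)$ is $\omega$-regular on each branch, and the strategy is to build a deterministic word automaton recognising the good branches and to run it, in parallel, on every branch of the input tree.

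First I would construct a nondeterministic B\"uchi word automaton $\ancautomaton^{\mathrm{bad}}$ over $\aalphabet\times\sattypes{\beta}$ accepting exactly the label sequences of branches admitting a bad witness. Its states are $\{q_0\}\cup(\DVAR{\beta}{\adatum_1}{\adatum_\alpha}^2\times\{0,1\})$: the state $q_0$ marks the prefix preceding the witness, while $(\advar,\advar',b)$ records the current variable/constant components of $p$ and $\rp$ at the node just read, together with a flag $b$ set to $1$ iff the last thread step was strict. From $q_0$ on $(\aletter,\acons)$ the automaton may either self-loop or guess to open the witness at the current node, moving to $(\advar,\advar',0)$ whenever $\acons$ induces the intra-node edge $\pair{\anode}{\advar}\step{<}\pair{\anode}{\advar'}$. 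From $(\advar,\advar',b)$ on the next label $(\aletter',\acons')$ it moves to any $(\advar_{\mathrm{new}},\advar'_{\mathrm{new}},b')$ for which $\acons'$ induces the three edges needed to extend the threads --- the parent-to-child edge for $p$, the child-to-parent edge for $\rp$, and the intra-node $<$ edge at the new node --- with $b'=1$ iff at least one of the two inter-level edges is strict. Taking the B\"uchi accepting set to be the states with $b=1$ then forces infinitely many strict steps and matches the definition of a bad witness. The automaton has $1+2(\beta+2)^2$ states and its transitions are decidable in polynomial time in $\beta$, $\card{\aalphabet}$, $D$ and $\lceil\log|\adatum_1|\rceil+\lceil\log|\adatum_\alpha|\rceil$ by direct inspection of the clauses defining $\newGt$.

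Second, I would determinise $\ancautomaton^{\mathrm{bad}}$ by Safra's procedure (or Piterman's improvement) and complement the acceptance condition, producing a deterministic Rabin word automaton $\ancautomaton^{\mathrm{good}}$ recognising the complement language, i.e.~the branch label sequences satisfying $(\newbigstar)$. Standard bounds yield a state space of size $2^{\mathcal{O}((\beta+2)^2\log(\beta+2))}$, exponential in $\beta$, and a Rabin condition whose number of pairs is linear in the size of $\ancautomaton^{\mathrm{bad}}$. A careful tally --- allotting a bounded number of pairs to each ordered pair $(\advar,\advar')\in\DVAR{\beta}{\adatum_1}{\adatum_\alpha}^2$ and adding a constant amount of bookkeeping introduced by the dualisation --- certifies the stated bound $8(\beta+2)^2+3$.

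Finally, $\starautomaton$ is obtained by lifting $\ancautomaton^{\mathrm{good}}$ to a Rabin tree automaton through the standard ``run-on-each-branch'' construction: it inherits the state space, initial state and Rabin pairs of $\ancautomaton^{\mathrm{good}}$, and at a node labelled $(\aletter,\acons)$ in state $\alocation$ it sends every child to the unique successor $\delta^{\mathrm{good}}(\alocation,(\aletter,\acons))$. Each infinite path of any run of $\starautomaton$ projects onto the unique run of $\ancautomaton^{\mathrm{good}}$ on the corresponding branch of $\asymtree$, so the tree Rabin condition holds on every path iff every branch satisfies $(\newbigstar)$, giving $\alang(\starautomaton)=\{\asymtree\mid \asymtree\text{ satisfies }(\newbigstar)\}$. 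Evaluating a tree transition reduces to one Safra-tree step composed with the polynomial-time computation of the NBW transitions induced by $\acons$, so it is polynomial in the stated parameters. The main obstacle in turning this sketch into a full proof is the careful bookkeeping inside the determinisation and complementation step required to certify the sharp bound $8(\beta+2)^2+3$ on the Rabin pairs; the exponential state bound, the language correctness and the polynomial-time decidability of the transitions then follow along standard lines.
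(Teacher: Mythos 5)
Your overall architecture is exactly the paper's: a nondeterministic B\"uchi word automaton for the branches violating $(\newbigstar)$, Safra determinisation, complementation, and a ``run the deterministic word automaton on every branch'' lift to trees. Your single-flag acceptance (accept iff infinitely often \emph{some} thread takes a strict step) is a sound simplification of the paper's fixed direction flag $d\in\{p,\rp\}$, since an infinite union of two sets of positions forces at least one of them to be infinite; it even shaves the state count of the word automaton roughly in half.

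There is, however, one genuine gap in the middle of your pipeline. Safra determinisation of the bad-branch automaton yields a deterministic \emph{Rabin} word automaton; negating its acceptance condition yields a deterministic \emph{Streett} automaton, not a Rabin one. Your sentence ``determinise \dots and complement the acceptance condition, producing a deterministic Rabin word automaton $\ancautomaton^{\mathrm{good}}$'' is therefore false as written, and the error matters: the lemma demands a \emph{Rabin} tree automaton (running a deterministic Streett word automaton on every branch would produce a Streett tree automaton, which is not what the downstream complexity analysis in Lemma~\ref{lemma-intersection-automaton} and Theorem~\ref{theorem-exptime-tca} consumes). The missing step is an explicit deterministic-Streett-to-deterministic-Rabin conversion, which the paper carries out via Safra's index-appearance-record construction (\cite[Lemma 1.2]{Safra89}): the state space is multiplied by $\permut{N}\times\interval{1}{N}^2$ for $N$ Streett pairs, and a dummy pair $\pair{\locations_S}{\locations_S}$ must be added to make the record well-defined --- this is precisely where the ``$+3$'' in the bound $8(\beta+2)^2+3$ comes from. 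Your phrase ``a constant amount of bookkeeping introduced by the dualisation'' gestures at this but neither performs the conversion nor accounts for its (factorial, still exponential in $\beta$) contribution to the state space and its effect on the number of Rabin pairs; until that step is made explicit, the claimed bounds on the Rabin pairs and the correctness of the final tree automaton are not established.
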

As a consequence, the transition relation for $\starautomaton$
has  $\card{\aalphabet \times \sattypes{\beta}} \times (2^{\mathcal{O}(P^{\star}(\beta))})^{(\degree+1)}$
transitions for some polynomial $P^{\star}$,
which is in $\card{\aalphabet} \times
\boundsattypes \times 2^{\mathcal{O}(P^{\star}(\beta) \times (\degree+1))}$.

\begin{proof} 
The proof of Lemma~\ref{lemma-automaton-star} is structured as follows.
\begin{enumerate}
\item  We construct a B\"uchi word automaton  $\ancautomaton_B$ accepting the complement of $(\newbigstar)$ for $D=1$. 
\item 
We determinize $\ancautomaton_B$ and obtain a deterministic Rabin word automaton $\ancautomaton_{B \to R}$ 
such that $\alang(\ancautomaton_B) = \alang(\ancautomaton_{B \to R})$ (using the classical determinisation construction from~\cite[Theorem 1.1]{Safra89}). 
\item From $\ancautomaton_{B \to R}$, we obtain a deterministic Street word automaton $\ancautomaton_S$ accepting the complement of
$\alang(\ancautomaton_{B \to R})$; it accepts words that satisfy $(\newbigstar)$ for $D=1$.
Here, we use the well-known fact that the negation of a Rabin acceptance condition is a Street acceptance condition.
\item From $\ancautomaton_S$, we construct a deterministic Rabin word automaton $\ancautomaton_{R}$ such that $\alang(\ancautomaton_S) = \alang(\ancautomaton_{R})$ (using~\cite[Lemma 1.2]{Safra89}).
  Observe that both $\ancautomaton_{B \to R}$ and $\ancautomaton_{R}$ are Rabin word automata
  but
  $\ancautomaton_{B \to R}$ handles the complement
  of the condition $(\newbigstar)$ whereas $\ancautomaton_{R}$ handles $(\newbigstar)$ itself. 
\item We construct a Rabin tree automaton $\starautomaton$, by "letting run  $\ancautomaton_{R}$" along
every branch of a run of $\starautomaton$, which is  possible thanks to the determinism of $\ancautomaton_{R}$. Since $(\newbigstar)$  states
a property on every branch of the trees, we are done.
\end{enumerate}
So let us explain each of these steps. 

\noindent(1) We construct  a B\"uchi word automaton accepting the complement of $(\newbigstar)$ for $D=1$, similarly to what is done in~\cite[Section 6]{Demri&DSouza07} and~\cite[Section 3.4]{Labai&Ortiz&Simkus20}.  
\begin{figure}
\scalebox{1}{
  \begin{tikzpicture}[->,>=stealth',shorten >=1pt,auto,node distance=4cm,thick,node/.style={circle,draw,scale=0.9}, roundnode/.style={circle, black, draw=black},]
\tikzset{every state/.style={minimum size=0pt},
dotted_block/.style={draw=black!20!white, line width=1pt, dotted, inner sep=3mm, minimum width=5.4cm, rectangle, rounded corners, minimum height=8mm},
boxnode/.style={rectangle, draw=black,minimum width=10mm,rounded corners=0.2cm}
} 
;
\node[boxnode] (10) at (0,0) {$\pair{0}{\adatum_1}$};
\node[boxnode] (x0) at (0,1.5) {$\pair{0}{\avariable_1}$}; 
\node[boxnode] (11) at (2,0) {$\pair{1}{\adatum_1}$};
\node[boxnode] (x1) at (2,1.5) {$\pair{1}{\avariable_1}$}; 
\node[boxnode] (12) at (4,0) {$\pair{2}{\adatum_1}$};
\node[boxnode] (x2) at (4,1.5) {$\pair{2}{\avariable_1}$}; 
\node[boxnode] (13) at (6,0) {$\pair{3}{\adatum_1}$};
\node[boxnode] (x3) at (6,1.5) {$\pair{3}{\avariable_1}$};
\node[boxnode] (14) at (8,0) {$\pair{4}{\adatum_1}$};
\node[boxnode] (x4) at (8,1.5) {$\pair{4}{\avariable_1}$};

\node at (9,0) {$\dots$}; 
\node at (9,1.5) {$\dots$}; 

\path [-] (10)  edge node[above] {\scriptsize{$=$}}  (11);
\path [-] (11)  edge node[above] {\scriptsize{$=$}}  (12);
\path [-] (12)  edge node[above] {\scriptsize{$=$}}  (13);
\path [-] (13)  edge node[above] {\scriptsize{$=$}}  (14);

\path [<-] (x0)  edge node[above] {\scriptsize{$>$}}  (x1);
\path [-] (x1)  edge node[above] {\scriptsize{$=$}}  (x2);
\path [<-] (x2)  edge node[above] {\scriptsize{$>$}}  (x3);
\path [<-] (x3)  edge node[above] {\scriptsize{$>$}}  (x4);

\path [->] (10)  edge node[right] {\scriptsize{$<$}}  (x0);
\path [->] (11)  edge node[right] {\scriptsize{$<$}}  (x1);
\path [->] (12)  edge node[right] {\scriptsize{$<$}}  (x2);
\path [->] (13)  edge node[right] {\scriptsize{$<$}}  (x3);
\path [->] (14)  edge node[right] {\scriptsize{$<$}}  (x4);

\node at (1,-.7) {$\underbrace{\hphantom{hahahahahahaha}}_{\Theta}$};  
\node at (5,-.7) {$\underbrace{\hphantom{hahahahahahaha}}_{\Theta}$}; 
\node at (3,-1) {$\underbrace{\hphantom{hahahahahahaha}}_{\Theta'}$};  
\node at (7,-1) {$\underbrace{\hphantom{hahahahahahaha}}_{\Theta}$}; 
\node at (4.5,-1.8) {$\alocation_\init \xrightarrow{\acons}(\adatum_1,\avariable_1,\rp,<)\xrightarrow{\acons'}(\adatum_1,\avariable_1,\rp,=)\xrightarrow{\acons}(\adatum_1,\avariable_1,\rp,<)\xrightarrow{\acons}(\adatum_1,\avariable_1,\rp,<)\dots$}; 
 	\end{tikzpicture} 
}
\caption{A symbolic word $\aword$ representing an infinite branch along which a path map from $(0,\adatum_1)$ and a strict reverse path map from $(0,\avariable_1)$  do not satisfy the condition $(\newbigstar)$; below, a run of $\ancautomaton_B$ on $\aword$.} 
\label{figure_A_B}
\end{figure}
Let us first explain the idea of the construction. 
If a word $\aword$ over $\aalphabet \times \sattypes{\beta}$ does not satisfy $(\newbigstar)$, 
then the graph $G_{\aword}^{\mbox{\tiny{C}}}$ contains a node $\anode$ such that, for some $\advar_1$ and $\advar_2$, there exists a path map from $(\anode, \advar_1)$ and  there exists a reverse path map from $(\anode,\advar_2)$ that do not satisfy the condition $(\newbigstar)$, 
cf. Figure \ref{figure_A_B}, where there exists a path map from $(0,\adatum_1)$ and a reverse path map from $(0,\avariable_1)$.   
The automaton nondeterministically guesses such nodes, and checks whether it initializes a violation of $(\newbigstar)$. For this, the automaton 
remembers in its locations 
\begin{itemize}
\item the current variable/constant $\advar_1$ of the path map $p$,
\item the current variable/constant $\advar_2$ of the reverse path map $\rp$,
\item whether the path map $p$ or the reverse path map $\rp$ is strict ($p$ or $\rp$), and
\item whether in the strict (reverse) path map, it has just seen $<$ or $=$. 
\end{itemize}
We call a $4$-tuple $(\advar_1,\advar_2, d, \mathop{\bowtie})$ a \defstyle{local thread}, 
if $\advar_1,\advar_2\in  \DVAR{\beta}{\adatum_1}{\adatum_{\alpha}}$, 
$d\in\{p,\rp\}$ and $\mathop{\bowtie}\in \set{<,=}$, with the intended meaning as explained above. 
The transition relation of the automaton is 
   defined to guarantee that a sequence of local threads forms an infinite branch along which a path map and a reverse path map do not
   satisfy the condition  $(\newbigstar)$. 
   Formally, let  $\ancautomaton_{B} =
   \triple{\locations_{B},\aalphabet \times \sattypes{\beta}}{\locations_{B,\init}}{\delta_{B},F_B}$ 
   be the B\"uchi word automaton defined as follows.   
   \begin{itemize}
   \item $\locations_B \egdef \{\alocation_\init\} \cup \, ( \DVAR{\beta}{\adatum_1}{\adatum_{\alpha}}^2 \times \{p,\rp\}\times \set{<,=})$;
     $\locations_{B,\init} = \{\alocation_\init\}$.
   \item $\delta_B$ is the union of the following sets. 
  \begin{itemize}
  \item  $\{\triple{\alocation_\init}{\pair{\aletter}{\acons}}{\alocation_{\init}} \mid \pair{\aletter}{\acons}\in\aalphabet \times \sattypes{\beta}\}$.
  \item  $\{\triple{\alocation_\init}{\pair{\aletter}{\acons}}{(\advar_1,\advar_2,d,\mathop{\bowtie})}\mid \pair{\aletter}{\acons}\in\aalphabet \times \sattypes{\beta} \text{ such that }
    \acons\models \advar_1' < \advar_2', d\in\{p,\rp\},  \mathop{\bowtie}\in\{<,=\}\}$ (initialization of a violating thread).

     \item $\{ \triple{(\advar_1,\advar_2,p,\mathop{\bowtie})}{\pair{\aletter}{\acons}}{(\advar_3,\advar_4,p,\mathop{\bowtie'})} \mid
       \pair{\aletter}{\acons}\in\aalphabet \times \sattypes{\beta}$ such that
       $\acons\models
     (\advar'_3 < \advar'_4) \, \wedge \, (\advar_1 \ \mathop{\bowtie'} \ \advar'_3) \,
     \wedge \, ((\advar'_4 = \advar_2) \vee (\advar'_4<\advar_2)), \mathop{\bowtie}, \mathop{\bowtie'} \in\{<,=\}\}$.

      \item $\{ \triple{(\advar_1,\advar_2,\rp,\mathop{\bowtie})}{\pair{\aletter}{\acons}}{(\advar_3,\advar_4,\rp,\mathop{\bowtie'})} \mid
      \pair{\aletter}{\acons}\in\aalphabet \times \sattypes{\beta}$ such that $\acons\models
      (\advar'_3 < \advar'_4) \, \wedge \, (\advar'_4 \ \mathop{\bowtie'} \ \advar_2) \,  \wedge \, ((\advar_1 = \advar'_3) \vee
      (\advar_1<\advar'_3)), \mathop{\bowtie}, \mathop{\bowtie'} \in\{<,=\}\}$.
      
  \end{itemize}
  \item $F_B \egdef \set{(\advar_1,\advar_2, d, <) \in \locations_B \mid \advar_1,\advar_2\in \DVAR{\beta}{\adatum_1}{\adatum_{\alpha}}, d\in\{p,\rp\}}$.
      \end{itemize}
It is not hard to prove that 
      for all
   $\aword: \Nat \to \aalphabet \times \sattypes{\beta}$,
   $\aword \in \alang(\ancautomaton_{B})$ iff $\aword$ does not satisfy the condition $(\newbigstar)$. 
   Note that the number of locations in $\ancautomaton_B$ is bounded above by $4(\beta+2)^2+1$.  

   \ 

\noindent(2) Using~\cite[Theorem 1.1]{Safra89}, from $\ancautomaton_B$ 
we obtain a deterministic Rabin word automaton 
$\ancautomaton_{B \to R}=\triple{\locations_{B \to R},\aalphabet \times \sattypes{\beta}}{\locations_{B \to R, \init}}{\delta_{B \to R},\rabinacc_{B \to R}}$ such that 
$\alang(\ancautomaton_{B \to R})=\alang(\ancautomaton_B)$. 
The cardinality of $\locations_{B \to R}$ is in  $2^{\mathcal{O}(\card{\locations_B} \ log (\card{\locations_B}))}$, i.e. exponential in $\beta$, 
and the number of acceptance pairs in $\rabinacc_{B \to R}$ is equal to $2 \times \card{\locations_B}$, i.e. equal to
$8(\beta+2)^2+2$ (see also Section~\ref{section-ctlstarz-determinisation-safra} where
the proof of~\cite[Theorem 1.1]{Safra89} is generalised to B\"uchi word constraint automata). 
Without any loss of generality, we can assume that
$\ancautomaton_{B \to R}$ is also complete. 
As
a consequence, for any $\aword \in (\aalphabet \times \sattypes{\beta})^{\omega}$,
there is a unique (not necessarily
accepting) run $\arun_{\aword}$ on $\aword$.

\ 

\noindent(3) 
We write $\ancautomaton_S=\triple{\locations_{S},\aalphabet \times \sattypes{\beta}}{\locations_{S,\init}}{\delta_{S},\rabinacc_S}$ to denote the Streett automaton accepting the complement language
of $\alang(\ancautomaton_{B \to R})$. All the components of $\ancautomaton_{S}$ are  those
from $\ancautomaton_{B \to R}$  but $\rabinacc_S$ in $\ancautomaton_{S}$ is interpreted as a Streett condition
(recall that the negation of a Rabin condition is a Streett condition). 
Consequently, the deterministic Streett word automaton $\ancautomaton_S$ is syntactically equal to
$\ancautomaton_{B \to R}$ and  we have
$\aword\in \alang(\ancautomaton_S)$ iff $\aword$ satisfies the condition $(\newbigstar)$.

\

\noindent(4) Using~\cite[Lemma 1.2]{Safra89}, from $\ancautomaton_{S}$ 
we define the \emph{deterministic} Rabin word automaton $\ancautomaton_{R} =
   \triple{\locations_{R},\aalphabet \times \sattypes{\beta}}{\locations_{R,\init}}{\delta_{R},\rabinacc_R}$ such that
   $\alang(\ancautomaton_{R}) = \alang(\ancautomaton_{S})$. 
   Before defining its components, we define a few notions.
   First of all, we assume that $\ancautomaton_{S}$ has an additional $(8(\beta+2)^2+3)$$^{\rm th}$ Streett pair, namely
   $\pair{\locations_{S}}{\locations_{S}}$, which is technically helpful to design $\ancautomaton_{R}$
   but does not change the language $\alang(\ancautomaton_{S})$.  
   We write $\permut{(8(\beta+2)^2+3)}$ to denote the set of permutations on $\interval{1}{8(\beta+2)^2+3}$, where 
   a permutation $\apermutation$ is a bijection $\apermutation: \interval{1}{8(\beta+2)^2+3} \to
   \interval{1}{8(\beta+2)^2+3}$. 
   It is well-known that
   $\card{\permut{(8(\beta+2)^2+3)}} = (8(\beta+2)^2+3)!$. 
   Let us define the maps
   $\amapbis_1, \amapbis_2: \locations_{S} \times \permut{(8(\beta+2)^2+3)}  \to \interval{1}{8(\beta+2)^2+3}$ and
   $\amapbis_3: \locations_{S} \times \permut{(8(\beta+2)^2+3)}  \to \permut{(8(\beta+2)^2+3)}$
   that are instrumental to define the forthcoming transition relation $\delta_R$.
   Below, $\alocation \in \locations_S$ and $\apermutation \in \permut{(8(\beta+2)^2+3)}$.
   \begin{itemize}

   \item We set 
     $
     \amapbis_1(\alocation, \apermutation) \egdef \min \set{i \in \interval{1}{8(\beta+2)^2+3} \mid \alocation \in
       U_{\apermutation(i)}}
     $ where
     $U_{\apermutation(i)}$ is from the set of pairs $\rabinacc_S$. A minimal value always exists thanks to the addition of
     the new Streett pair $\pair{\locations_S}{\locations_S}$.

   \item The value $\amapbis_2(\alocation, \apermutation)$ is defined 
     using the $L_i$ sets from the set of pairs $\rabinacc_S$:
\[
     \amapbis_2(\alocation, \apermutation) \egdef \min \set{i \in \interval{1}{8(\beta+2)^2+3} \mid \alocation \in
       L_{\apermutation(i)}}.
\]

   \item 
     The permutation $\amapbis_3(\alocation, \apermutation)$
     is obtained from $\apermutation$
     by \emph{moving} $\apermutation(\amapbis_1(\alocation, \apermutation))$ to the
     rightmost position, so that $\apermutation(\amapbis_1(\alocation, \apermutation))$ is at position $8(\beta+2)^2+3$ now.
     Formally
     \[
     \amapbis_3(\alocation, \apermutation)(i) \egdef \begin{cases}
\apermutation(i) & \text{if } 1\leq i < \amapbis_1(\alocation, \apermutation) \\
\apermutation(i+1) & \text{if } \amapbis_1(\alocation, \apermutation)\leq i\leq 8(\beta+2)^2+2 \\
\apermutation(\amapbis_1(\alocation, \apermutation)) & \text{if } i=8(\beta+2)^2 +3
     \end{cases}
     \]
   \end{itemize}
   By way of example, if $\apermutation(1) = 2$, $\apermutation(2) = 4$, $\apermutation(3) = 1$, $\apermutation(4) = 3$,
   $\cdots$
   and, $3$ is the minimal value $i$ such that $\alocation \in U_{\apermutation(i)}$ (so $\alocation \not \in U_2 \cup U_4$),
   then $\amapbis_3(\alocation, \apermutation)(1) = 2$, $\amapbis_3(\alocation, \apermutation)(2) = 4$,
   $\amapbis_3(\alocation, \apermutation)(3) = 3$ $\cdots$ and
   $\amapbis_3(\alocation, \apermutation)(8(\beta+2)^2+3) = 3$.
 
   Let us now define the components of $\ancautomaton_{R}$. 

 \begin{itemize}

 \item $\locations_{R} \egdef \locations_{S}  \times \permut{(8(\beta+2)^2+3)} \times \interval{1}{8(\beta+2)^2+3}^2$.
  
 \item The transition relation $\delta_R$ is defined as follows:
   $\triple{\alocation'}{\apermutation'}{e',f'} \in \delta_R(\triple{\alocation}{\apermutation}{e,f},
   \aletter)$ iff $\alocation' \in \delta_S(\alocation, \aletter)$, 
   $\apermutation' = \amapbis_3(\alocation',\apermutation)$, $e' = \amapbis_1(\alocation', \apermutation)$
   and $f' = \amapbis_2(\alocation', \apermutation)$.
   Since $\amapbis_1(\alocation', \apermutation)$, $\amapbis_2(\alocation', \apermutation)$ and
   $\amapbis_3(\alocation',\apermutation)$ can be computed in polynomial-time in $\beta$
   and $\rabinacc_S$ has $8 (\beta+2)^2+3$ pairs,
   we get that $\delta_R$ can be decided in polynomial-time in
   $\max (\lceil log(|\adatum_1|) \rceil ,
   \lceil log(|\adatum_{\alpha}|) \rceil) + \beta + \card{\aalphabet}$.
   Indeed, this amounts to determine the complexity of deciding $\delta_S$.
   However, this question amounts to determine the complexity of
   deciding $\delta_B$ and $\delta_{B \to R}$. Both relations can be decided in
   in polynomial-time in $\max (\lceil log(|\adatum_1|) \rceil ,
   \lceil log(|\adatum_{\alpha}|) \rceil) + \beta + \card{\aalphabet}$.
   
 \item $\locations_{R,\init}$ has a unique initial location
   $
   \triple{\alocation_\init}{id}{8(\beta+2)^2+3,8(\beta+2)^2+3}
   $, with $\alocation_\init$ being the only initial location in $\ancautomaton_S$ and $id$ being
   the identity permutation. 
   Since $\delta_S$ is a function and $\locations_{R,\init}$ is a singleton,
   we can conclude that $\ancautomaton_R$ is deterministic too. 
 \item $\rabinacc_R$ is made of Rabin pairs $\pair{L'_i}{U'_i}$ with $i \in \interval{1}{8(\beta+2)^2+3}$ such that
   \[
   L'_i \egdef \set{\triple{\alocation}{\apermutation}{e,f} \in \locations_R \mid
     e = i} \ \ \ \mbox{and} \ \ \ 
   U'_i \egdef \set{\triple{\alocation}{\apermutation}{e,f} \in \locations_R \mid
     f <  i}.
   \]
 \end{itemize}
We recall that the Rabin condition $\rabinacc_R$ can be read as follows:
     along any accepting run,  there is $i \in \interval{1}{8(\beta+2)^2+3}$ such that 
     some location in $L_i'$ occurs infinitely often and all the locations in $U_i'$ occurs finitely.

 By~\cite[Lemma 1.2]{Safra89}, we have $\alang(\ancautomaton_{R}) = \alang(\ancautomaton_{S})$.
 By way of example, let us briefly explain why $\alang(\ancautomaton_S)
 \subseteq \alang(\ancautomaton_R)$.
 Given an accepting run $\arun$ of $\ancautomaton_S$, we define $\aset_{\arun} \subseteq
 \interval{1}{8(\beta+2)^2+3}$ such that $k \in \aset_{\arun}$ iff
 the set $U_{\apermutation(k)}$ is visited infinitely often.
 Hence, for all $k \in (\interval{1}{8(\beta+2)^2+3} \setminus \aset_{\arun})$,
 $U_{\apermutation(k)}$ and $L_{\apermutation(k)}$ are visited finitely along
 $\arun$. Moreover, from some position $I \in \Nat$ in $\arun$, elements of
 $(\interval{1}{8(\beta+2)^2+3} \setminus \aset_{\arun})$
 always occupy the leftmost position in the permutation $\apermutation$
 and none of its values change its place.
 Let $j = 8(\beta+2)^2+3 - \card{\aset_{\arun}}$. From the position $I$,
 we have $f \geq j$ and one can show that $e = j$ infinitely
 often along $\arun$. Since the run visits $U_{\apermutation(j)}$
 infinitely often, whenever $U_{\apermutation(j)}$ is visited, $e$
 shall take the value $j$ (minimal value among $\aset_{\arun}$).
 Hence, the run $\arun'$ of $\ancautomaton_R$ obtained
 from $\arun$ by completing deterministically the three last components
 satisfies the Rabin pair $\pair{L'_j}{U'_j}$. Consequently,
 the word accepted by $\arun$ is also accepted by $\arun'$.
 
 \
 
\noindent(5) Finally, the last (easy) stage consists in building the Rabin tree automaton
   $$\starautomaton \egdef
\triple{\locations_{\newbigstar},\aalphabet \times \sattypes{\beta},\degree}{\locations_{\newbigstar,\init}}{\delta_{\newbigstar},
  \rabinacc_{\newbigstar}}
$$ as follows.
\begin{itemize}
\item $\locations_{\newbigstar} \egdef \locations_R$, $\locations_{\newbigstar, \init} \egdef  \locations_{R,\init}$ and
      $\rabinacc_{\newbigstar} \egdef \rabinacc_R$. 
\item For all  $\alocation, \alocation_0, \ldots, \alocation_{\degree-1} \in \locations_{\newbigstar}$ and
  $\pair{\aletter}{\acons} \in \aalphabet \times \sattypes{\beta}$, we have 
  \[
  \triple{\alocation}{\pair{\aletter}{\acons}}{ \alocation_0, \ldots, \alocation_{\degree-1}} \in \delta_{\newbigstar}
  \ \equivdef \ 
  \alocation_0 = \cdots = \alocation_{\degree-1} \  \mbox{and}\  \alocation_0 \in
  \delta_R(\alocation,\pair{\aletter}{\acons}).
  \]
  Since $\ancautomaton_R$ is deterministic, $\delta_R(\alocation,\pair{\aletter}{\acons})$
  contains at most one location. 
\end{itemize}
We have $\asymtree \in \alang(\starautomaton)$ iff
all the branches of $\asymtree$ are in $\alang(\ancautomaton_{R})$, which means precisely
that $\asymtree$ satisfies the condition $(\newbigstar)$. 
Further, $\starautomaton$ satisfies all the size conditions in Lemma~\ref{lemma-automaton-star}.
Indeed,
\begin{itemize}
\item $\rabinacc_{\newbigstar}$ has exactly $8(\beta+2)^2+3$ Rabin pairs,
\item the number of locations in $\locations_{\newbigstar}$ is in
  \[
  2^{\mathcal{O}((4(\beta+2)^2+1) \cdot log (4(\beta+2)^{2}+1))} \times  (8(\beta+2)^2+3)! \times (8(\beta+2)^2+3)^2,
  \]
    i.e. exponential in $\beta$,
  \item as explained earlier, $\delta_S$ can be decided in polynomial-time in
    $\max (\lceil log(|\adatum_1|) \rceil ,
    \lceil log(|\adatum_{\alpha}|) \rceil) + \beta$ and therefore $\delta_R$  can be decided in polynomial-time in
\[
 \max (\lceil log(|\adatum_1|) \rceil ,
    \lceil log(|\adatum_{\alpha}|) \rceil) + \beta + \card{\aalphabet} + D. \qedhere
\] 
\end{itemize}
\end{proof}

Summarizing the developments so far,  we can conclude this subsection as follows:  
\begin{lem}
\label{lemma-intersection-reduction}
$\alang(\aautomaton)\neq\emptyset$ iff $\alang(\locautomaton)  \cap \alang(\starautomaton) \neq \emptyset$.
\end{lem}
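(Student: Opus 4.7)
The plan is to prove the equivalence by combining Lemma~\ref{lemma-satisfiable-symbolic-tree} with Proposition~\ref{proposition-star-oplus}, using Lemma~\ref{lemma-correctness-newgt} for the easy direction and a standard regularity argument for $\omega$-regular tree languages for the other one.

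For the direction $\alang(\aautomaton)\neq\emptyset \Rightarrow \alang(\locautomaton)\cap\alang(\starautomaton)\neq\emptyset$, I would first invoke Lemma~\ref{lemma-satisfiable-symbolic-tree} to obtain a satisfiable symbolic tree $\asymtree\in\alang(\locautomaton)$, witnessed by some $\atree$. It then suffices to show $\asymtree\in\alang(\starautomaton)$, i.e. that $\asymtree$ satisfies $(\newbigstar)$. Suppose, for contradiction, that $(\newbigstar)$ fails, so that in $\newGt$ there exist elements $\pair{\anode}{\advar},\pair{\anode}{\advar'}$, an infinite branch $\abranch$, a path map $p$ from $\pair{\anode}{\advar}$ along $\abranch$, a reverse path map $\rp$ from $\pair{\anode}{\advar'}$ along $\abranch$, with one of $p,\rp$ strict and $p(i)\step{<}\rp(i)$ for all $i$. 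Applying Lemma~\ref{lemma-correctness-newgt} edge by edge, along $p$ the sequence $\atree(p(i))$ is non-decreasing in $\Zed$ and along $\rp$ the sequence $\atree(\rp(i))$ is non-increasing, while $\atree(p(i))<\atree(\rp(i))$ for every $i$. Strictness of $p$ (resp.\ $\rp$) forces $\atree(p(i))\to +\infty$ (resp.\ $\atree(\rp(i))\to -\infty$), contradicting the bound $\atree(p(i))<\atree(\rp(0))$ (resp.\ $\atree(p(0))<\atree(\rp(i))$).

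For the converse, assume $\alang(\locautomaton)\cap\alang(\starautomaton)\neq\emptyset$. Since $\locautomaton$ is a B\"uchi tree automaton and $\starautomaton$ is a Rabin tree automaton (both without constraints), the intersection $\alang(\locautomaton)\cap\alang(\starautomaton)$ is an $\omega$-regular tree language accepted by a Rabin tree automaton (obtained by a standard product construction, which is a special constraint-free instance of Lemma~\ref{lemma-intersection-rtca} with $n=2$). By the classical fact that every nonempty $\omega$-regular tree language contains a regular tree (e.g.\ the unravelling of a memoryless accepting strategy in the associated parity/Rabin game), I can pick a \emph{regular} symbolic tree $\asymtree\in\alang(\locautomaton)\cap\alang(\starautomaton)$. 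Membership in $\alang(\locautomaton)$ gives $\aautomaton$-consistency of $\asymtree$, which in particular entails local consistency, and membership in $\alang(\starautomaton)$ gives that $\asymtree$ satisfies $(\newbigstar)$. Proposition~\ref{proposition-star-oplus} then yields that $\asymtree$ is satisfiable, and Lemma~\ref{lemma-satisfiable-symbolic-tree} finally gives $\alang(\aautomaton)\neq\emptyset$.

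Almost every step is either a direct appeal to a previously established lemma or a one-line calculation; the only mildly subtle points are (i) checking that the easy direction of Proposition~\ref{proposition-star-oplus} goes through without the regularity hypothesis (which is why I prefer to argue it directly from Lemma~\ref{lemma-correctness-newgt}), and (ii) justifying that a satisfiable tree witness exists in the intersection by passing from an arbitrary accepted tree to a regular one. Neither is really hard, but (ii) is the conceptual crux: it is precisely because we replaced the semantic notion of satisfiability by the $\omega$-regular over-approximation $(\newbigstar)$ that we can afford to extract a regular witness and then close the loop via Proposition~\ref{proposition-star-oplus}.
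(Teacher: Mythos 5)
Your proof is correct and follows essentially the same route as the paper: Lemma~\ref{lemma-satisfiable-symbolic-tree} for the reduction to satisfiable symbolic trees, Lemma~\ref{lemma-correctness-newgt} to show that satisfiable symbolic trees satisfy $(\newbigstar)$ in the forward direction, and the extraction of a regular tree from the nonempty regular intersection combined with Proposition~\ref{proposition-star-oplus} for the converse. The only difference is that you spell out the monotone-sequence contradiction for the easy direction where the paper simply cites Lemma~\ref{lemma-correctness-newgt}; your observation that this sidesteps the regularity hypothesis of Proposition~\ref{proposition-star-oplus} matches the paper's own structuring of the argument.
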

\begin{proof}
One can show that the statements below are equivalent.
\begin{description}
\item[(I)] $\alang(\aautomaton) \neq \emptyset$.
\item[(II)] There is a symbolic
tree $\asymtree: \interval{0}{\degree-1}^* \rightarrow \aalphabet \times \sattypes{\beta}$
in $\alang(\locautomaton)$
  that is satisfiable.
\item[(III)]
      There is a symbolic tree
      $\asymtree: \interval{0}{\degree-1}^* \rightarrow \aalphabet \times \sattypes{\beta}$
      in $\alang(\locautomaton)  \cap \alang(\starautomaton)$.
\end{description}
The equivalence between (I) and (II) is by Lemma~\ref{lemma-satisfiable-symbolic-tree}.
That condition (II) implies (III) follows from the fact that for every satisfiable
 symbolic tree $\asymtree$ (not necessarily regular), $\asymtree$ satisfies the condition
$(\newbigstar)$ (Lemma~\ref{lemma-correctness-newgt}),
$\alang(\locautomaton)$
contains all the satisfiable symbolic  trees
(Lemma~\ref{lemma-A-consistency} and Lemma~\ref{lemma-consistency-BTA})
and $\alang(\starautomaton)$ is equal to the set of 
symbolic trees satisfying the condition $(\newbigstar)$.
Hence, $\asymtree \in \alang(\locautomaton)  \cap \alang(\starautomaton)$. 
That condition (III) implies (II) follows from the fact that if
$\alang(\locautomaton)  \cap \alang(\starautomaton)$
is non-empty, then $\alang(\locautomaton)  \cap \alang(\starautomaton)$
is regular and therefore contains
a regular $\aautomaton$-consistent symbolic tree $\asymtree$ (see e.g.~\cite{Rabin69}
and~\cite[Section 6.3]{Thomas90} for the existence of regular trees)
and by Proposition~\ref{proposition-star-oplus},  $\asymtree$ is satisfiable. 
By Lemma~\ref{lemma-satisfiable-symbolic-tree},
since $\asymtree$ is satisfiable and
$\asymtree \in \alang(\locautomaton)$, we get $\alang(\aautomaton)\neq\emptyset$.
\end{proof} 

\subsection{\exptime Upper Bound for TCAs}
\label{section-exptime-upper-bound}
Lemma~\ref{lemma-intersection-reduction} justifies why 
deciding the nonemptiness of $\alang(\locautomaton)  \cap \alang(\starautomaton)$ is crucial.
Fortunately, regular tree languages are closed under intersection. However, assuming that a  Rabin tree automaton
$\ancautomaton$ satisfies $\alang(\ancautomaton) = \alang(\locautomaton) \cap \alang(\starautomaton)$, we need to guarantee
that the construction of $\ancautomaton$ does not lead to any complexity blow up. This is the purpose of
Lemma~\ref{lemma-intersection-automaton} below.
An exponential blow-up may have drastic consequences on the forthcoming complexity analysis.
In the proof of Lemma~\ref{lemma-intersection-automaton}, we propose a construction 
that is not polynomial but it only performs an exponential blow-up on the number of locations, which shall be
fine for our purpose.

\begin{lem}
\label{lemma-intersection-automaton}
There is a Rabin tree automaton $\ancautomaton$ such that
\begin{description}
\item[(I)] $\alang(\ancautomaton) = \alang(\locautomaton) \cap \alang(\starautomaton)$,
\item[(II)] the number of Rabin pairs is polynomial in $\beta$, 
\item[(III)] the number of locations is in
  $
  (\adatum_{\alpha} - \adatum_1)^{2\beta}  \times \card{\locations} \times 2^{\mathcal{O}(P(\beta))}
  $ for some polynomial $P$,
\item[(IV)] the cardinality of the transition relation is in 
$((\adatum_{\alpha} - \adatum_1)^{2\beta} \times \card{\locations} \times 2^{\mathcal{O}(P'(\beta))})^{\degree +2}
  \times \card{\aalphabet}$ for some polynomial $P'$,
 \item[(V)] the transition relation can be decided in polynomial-time in 
    $\card{\delta}
    + \beta + \card{\aalphabet} + D + \maxconstraintsize{\aautomaton}
    $.
\end{description}
\end{lem}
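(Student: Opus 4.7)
The plan is to apply Lemma~\ref{lemma-intersection-rtca} to $\locautomaton$ and $\starautomaton$, which are both tree automata over the alphabet $\aalphabet \times \sattypes{\beta}$ and share the same branching degree $\degree$, but neither uses any data constraints (so the lemma applies in its constraint-free specialization alluded to in the paper just before its statement). To fit the hypotheses, I first re-interpret the B\"uchi tree automaton $\locautomaton$ with accepting set $F'$ as a Rabin tree automaton with the single Rabin pair $\pair{F'}{\emptyset}$, so that $N_1 = 1$; by Lemma~\ref{lemma-automaton-star}, $\starautomaton$ has $N_2 \leq 8(\beta+2)^2 + 3$ Rabin pairs. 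With $n = 2$ and $N = N_1 \cdot N_2 = 8(\beta+2)^2 + 3$, Lemma~\ref{lemma-intersection-rtca} immediately produces a Rabin tree automaton $\ancautomaton$ satisfying (I). Since $N$ is quadratic in $\beta$, (II) also follows.

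For (III), the location bound from Lemma~\ref{lemma-intersection-rtca} is $\card{\locations_{\locautomaton}} \cdot \card{\locations_{\starautomaton}} \cdot (2n)^N = \card{\locations_{\locautomaton}} \cdot \card{\locations_{\starautomaton}} \cdot 4^{8(\beta+2)^2 + 3}$. By Lemma~\ref{lemma-consistency-BTA}, $\card{\locations_{\locautomaton}} \leq \card{\sattypes{\beta}} \cdot \card{\locations}$, and Lemma~\ref{lemma-bound-sattypes} bounds $\card{\sattypes{\beta}}$ by $((\adatum_{\alpha} - \adatum_1) + 3)^{2\beta} \cdot 3^{2\beta^2}$; by Lemma~\ref{lemma-automaton-star}, $\card{\locations_{\starautomaton}}$ is exponential in $\beta$. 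Folding $3^{2\beta^2}$, $4^{8(\beta+2)^2+3}$, the $\starautomaton$ exponent, and the $3^{2\beta}$ slack into a single $2^{\mathcal{O}(P(\beta))}$ factor yields the desired $(\adatum_{\alpha} - \adatum_1)^{2\beta} \cdot \card{\locations} \cdot 2^{\mathcal{O}(P(\beta))}$ bound. For (IV), Lemma~\ref{lemma-intersection-rtca} gives $\card{\delta_{\ancautomaton}} \leq \card{\delta_{\locautomaton}} \cdot \card{\delta_{\starautomaton}}$, and each factor is bounded by (number of locations)$^{\degree+1} \cdot \card{\aalphabet \times \sattypes{\beta}}$; taking the product and inserting the bound from (III), while absorbing one extra $\sattypes{\beta}$ factor into the polynomial exponent, produces the claimed form $((\adatum_{\alpha} - \adatum_1)^{2\beta} \cdot \card{\locations} \cdot 2^{\mathcal{O}(P'(\beta))})^{\degree+2} \cdot \card{\aalphabet}$.

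For (V), the product construction builds a transition of $\ancautomaton$ out of one transition of $\locautomaton$ and one transition of $\starautomaton$, plus polynomial bookkeeping for the copy index that tracks the Rabin pairs of the target. By Lemma~\ref{lemma-consistency-BTA}, membership in the transition relation of $\locautomaton$ is decidable in polynomial-time in $\card{\aalphabet}+\card{\delta}+\degree+\beta+\maxconstraintsize{\aautomaton}$; by Lemma~\ref{lemma-automaton-star}, the same holds for $\starautomaton$ in terms of $\lceil \log |\adatum_1| \rceil$, $\lceil \log |\adatum_{\alpha}| \rceil$, $\beta$, $\card{\aalphabet}$, and $\degree$. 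Since $\lceil \log |\adatum_1| \rceil$ and $\lceil \log |\adatum_{\alpha}| \rceil$ are both at most $\maxconstraintsize{\aautomaton}$, this gives the claimed polynomial-time bound. The main obstacle is not conceptual but arithmetical: one must verify that the exponential factor $(2n)^N = 4^{\mathcal{O}(\beta^2)}$ introduced by the Rabin product remains subsumed in the $2^{\mathcal{O}(P(\beta))}$ term, which holds only because $N$ is itself polynomial in $\beta$ -- a property that hinges on intersecting exactly two automata and on the polynomial Rabin pair count of $\starautomaton$ established in Lemma~\ref{lemma-automaton-star}.
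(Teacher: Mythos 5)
Your proposal follows essentially the same route as the paper: the paper's own proof explicitly presents its intersection construction as the $n=2$, constraint-free special case of Lemma~\ref{lemma-intersection-rtca} and defers its correctness to that lemma, and its quantitative analysis of (I), (II), (III) and (V) matches yours (viewing $\locautomaton$ as a Rabin automaton with the single pair $\pair{F'}{\emptyset}$, then combining Lemmas~\ref{lemma-bound-sattypes}, \ref{lemma-consistency-BTA} and~\ref{lemma-automaton-star}). The one place where your derivation does not quite deliver the stated bound is (IV): starting from $\card{\delta_{\ancautomaton}}\leq\card{\delta_{\locautomaton}}\cdot\card{\delta_{\starautomaton}}$ and bounding each factor by $(\text{number of locations})^{\degree+1}\times\card{\aalphabet\times\sattypes{\beta}}$ leaves you with $\card{\aalphabet}^2\times\card{\sattypes{\beta}}^2$, and neither the extra $\card{\aalphabet}$ nor the extra $(\adatum_{\alpha}-\adatum_1)^{2\beta}$ hidden in the second copy of $\card{\sattypes{\beta}}$ can be absorbed into a factor of the form $\card{\locations}\times 2^{\mathcal{O}(P'(\beta))}$, so the claimed shape $(\cdots)^{\degree+2}\times\card{\aalphabet}$ is not reached. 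The paper avoids this by bounding the transition relation of the product automaton directly by $M^{\degree+1}\times\card{\aalphabet\times\sattypes{\beta}}$, where $M$ is the number of product locations already bounded in (III); this yields exactly one extra factor beyond the exponent $\degree+1$ and a single occurrence of $\card{\aalphabet}$. This is a one-line repair and does not affect the downstream \exptime analysis, but as written your step (IV) overshoots the stated bound.
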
 
\begin{proof} The proof is divided into two parts. In Part (I), we present a construction for
  the intersection of Rabin {\em tree} automata mainly based on ideas from the proof of~\cite[Theorem 1]{Boker18} on Rabin {\em word} automata
  but for trees (the developments in the proof of~\cite[Lemma 3.13]{Labai21} are not satisfactory, to our opinion).
  The construction is a particular case of the one
  for the proof of Lemma~\ref{lemma-intersection-rtca}, but here 
  with only two input Rabin tree automata and no constraints. 
In Part (II), we apply the general construction from Part (I) to $\locautomaton$ and $\starautomaton$ and perform
a quantitative analysis.

(I) For $i=1,2$, let $\ancautomaton_i =
\triple{\locations_i,\aalphabet,\degree}{\locations_{i,\init},\delta_i}{\rabinacc_i}$
with $\rabinacc_i = (L_i^j,U_i^j)_{j \in \interval{1}{N_i}}$, that is $N_i$ Rabin pairs, be a Rabin tree automaton.
Let us build a Rabin tree automaton
$\ancautomaton = \triple{\locations,\aalphabet,\degree}{\locations_{\init},\delta}{\rabinacc}$
such that $\alang(\ancautomaton) = \alang(\ancautomaton_1) \cap \alang(\ancautomaton_2)$.

\begin{itemize}

\item $\locations \egdef \locations_1 \times \locations_2 \times \interval{0}{3}^{\interval{1}{N_1} \times \interval{1}{N_2}}$.
  The elements in $\locations$ are of the form $\triple{\alocation_1}{\alocation_2}{\amap}$ with
  $\amap: \interval{1}{N_1} \times \interval{1}{N_2} \to \interval{0}{3}$. 

\item The tuple $\triple{\triple{\alocation_1}{\alocation_2}{\amap}}{\aletter}{
  \triple{\alocation_1^{0}}{\alocation_2^{0}}{\amap^{0}}, \ldots,\triple{\alocation_1^{\degree-1}}{\alocation_2^{\degree-1}}{\amap^{\degree-1}}}$
  belongs to $\delta$ iff the conditions below hold.
  \begin{enumerate}

   \item For $i=1,2$, we have $\triple{\alocation_i}{\aletter}{\alocation_i^{0}, \ldots,\alocation_i^{\degree-1}} \in \delta_i$.
    The two first components in elements from $\locations$ behave as in $\ancautomaton_1$ and $\ancautomaton_2$, respectively.

  \item For all $\pair{i}{j} \in \interval{1}{N_1} \times \interval{1}{N_2}$, the following conditions hold.

    \begin{enumerate}

    \item If $\amap(i,j)$ is odd, then for all $k \in \interval{0}{\degree-1}$, we have
      $\amap^{k}(i,j) = (\amap(i,j)+1)\!\mod 4$. Odd values in $\interval{0}{3}$ are unstable and are replaced
      at the next step by the successor value (modulo $4$).

    \item For all $k \in \interval{0}{\degree-1}$, if $\amap(i,j) = 0$ and $\alocation_1^k \in
      L_1^i$, then
      $\amap^{k}(i,j) = 1$. Hence, when the
      $\pair{i}{j}$$^{\rm th}$ component of $\amap$ is equal to $0$, it waits
      to visit a state in the  set $L_1^i$ to move to 1.

    \item For all $k \in \interval{0}{\degree-1}$, if $\amap(i,j) = 0$ and
      $\alocation_1^k \not \in L_1^i$, then
      $\amap^{k}(i,j) = 0$ (not yet the right moment to modify the
      $\pair{i}{j}$$^{\rm th}$ component).  

     \item For all $k \in \interval{0}{\degree-1}$, if $\amap(i,j) = 2$ and $\alocation_2^k \in L_2^j$, then
      $\amap^{k}(i,j) = 3$. Hence, when the $\pair{i}{j}$$^{\rm th}$ component of $\amap$ is equal to $2$, it waits
      to visit a state in the set $L_2^j$ to move to 3.

    \item For all $k \in \interval{0}{\degree-1}$, if $\amap(i,j) = 2$ and $\alocation_2^k \not \in L_2^j$, then
      $\amap^{k}(i,j) = 2$
      (not yet the right moment to modify the  $\pair{i}{j}$$^{\rm th}$ component).  

    \end{enumerate} 
    \end{enumerate}

    At this stage, it is worth noting that each $\amap^k$ for $k \in \interval{0}{\degree-1}$ takes a unique value, i.e.
    the update of the third component in $\locations$ is done deterministically.

    The transition relation $\delta$ can be decided in the sum of the time-complexity to decide
    $\delta_1$ and $\delta_2$ respectively, plus polynomial-time in $N_1 \times N_2$. 

  \item $\locations_{\init} \egdef \locations_{1,\init} \times \locations_{2,\init} \times \set{\amap_0}$, where $\amap_0$ is
    the unique map that takes always the value zero.

  \item The set of Rabin pairs in $\rabinacc$ contains exactly the pairs $\pair{L}{U}$ for which there is
    $\pair{i}{j} \in \interval{1}{N_1} \times \interval{1}{N_2}$ such that
    \[
    U \egdef \big(U^i_1 \times \locations_2 \cup \locations_1 \times U^j_2\big)  \times \interval{0}{3}^{\interval{1}{N_1} \times \interval{1}{N_2}} \ \ \ \ 
    L \egdef \locations_1 \times \locations_2 \times \set{\amap \mid \amap(i,j) = 1}
    \]
    Because the odd values are unstable, if a location in $L$ is visited infinitely along a branch of a run for $\ancautomaton$
    (and therefore a location in $L^i_1$ is visited infinitely often on the first component), then
    a location in $L^j_2$ is also visited infinitely often on the second component. Indeed, to revisit the value 1 on the $\pair{i}{j}$$^{\rm th}$ component
    one needs to visit first the value $3$, which witnesses that a location in $L^j_2$ has been found.

    If along a branch of a run for $\ancautomaton$ the triples in $U$ are visited finitely, then
    a location in $U^i_1$ is visited finitely on the first component and a location in $U^j_2$ is visited
    finitely on the second component.

    Consequently, $\rabinacc$ contains at most $N_1 \times N_2$ pairs.

\end{itemize}

\noindent 
We claim that $\alang(\ancautomaton) = \alang(\ancautomaton_1) \cap \alang(\ancautomaton_2)$. We omit the proof here
as the  proof of Lemma~\ref{lemma-intersection-rtca} generalises it.

\noindent 
(II) Let us analyse the size of the components in $\locautomaton$ and $\starautomaton$,
which provides bounds for $\ancautomaton$ such that $\alang(\ancautomaton) = \alang(\starautomaton) \cap \alang(\locautomaton)$
following the above construction.
The automaton $\locautomaton$ can be viewed
as a Rabin tree automaton with a single pair, typically $\pair{F'}{\emptyset}$,  where $F'$ is the
set of accepting states of the B\"uchi tree automaton $\locautomaton$. 

\begin{itemize}
\item $\locautomaton$ has a single Rabin pair,
  $\starautomaton$ has a number of Rabin pairs bounded by $8(\beta+2)^2+3$, so
  $\ancautomaton$ has a number of Rabin pairs bounded  by $8(\beta+2)^2+3$.
\item The locations in $\locautomaton$ are from $\sattypes{\beta} \times \locations$,
  the number of locations in $\starautomaton$ is in $\mathcal{O}(2^{P^{\dag}(\beta)})$
  for some polynomial $P^{\dag}$ 
  (see Lemma~\ref{lemma-automaton-star}).
  Therefore, based on the above construction for intersection, the number of locations in $\ancautomaton$ is in
  \[
  \card{\sattypes{\beta} \times \locations} \times 2^{\mathcal{O}(P^{\dag}(\beta))} \times 4^{8(\beta+2)^2+3},
  \]
  which is in $(\adatum_{\alpha} - \adatum_1)^{2\beta}  \times \card{\locations} \times 2^{\mathcal{O}(P(\beta))}$
  for some polynomial $P(\cdot)$. 
\item The transition relation for $\starautomaton$ can be decided in
      polynomial-time in $$\max (\lceil log(|\adatum_1|) \rceil ,
      \lceil log(|\adatum_{\alpha}|) \rceil) + \beta + \card{\aalphabet} + D $$ (by Lemma~\ref{lemma-automaton-star}),
      the transition relation for $\locautomaton$
      can be decided in polynomial-time in 
      $\card{\delta}
      + \beta + \card{\aalphabet} + D + \maxconstraintsize{\aautomaton}
      $.
      Note also that $\card{\delta^{\dag}}$ (where $\delta^{\dag}$ is the transition relation of $\ancautomaton$)
      is in
      \[
      (\card{\sattypes{\beta}} \times \card{\locations} \times 2^{\mathcal{O}(P(\beta))})^{\degree +1}
      \times \card{\aalphabet \times \sattypes{\beta}},
      \]
      since  the finite alphabet of $\ancautomaton$
      is $\aalphabet \times \sattypes{\beta}$.
        Consequently, $\card{\delta^{\dag}}$  is in
        \[
      \big( (\adatum_{\alpha} - \adatum_1)^{2\beta} \times \card{\locations} \times 2^{\mathcal{O}(P'(\beta))}\big)^{\degree +2}
      \times \card{\aalphabet},
      \]
      for some polynomial $P'$. 
    Moreover, the product of the number of Rabin pairs is polynomial in $\beta$.
    Therefore, the transition relation can be decided in polynomial-time in
    $\big(\card{\delta} 
    + \beta + \card{\aalphabet} + D +  \maxconstraintsize{\aautomaton}\big)
    $. \qedhere
\end{itemize}
\end{proof}
Nonemptiness of Rabin tree automata is polynomial in the cardinality of the transition relation
and exponential in the number of Rabin  pairs, see e.g.~\cite[Theorem 4.1]{Emerson&Jutla00}.
More precisely, it can be solved in time
$(\card{\delta} \times \gamma \times N)^{\mathcal{O}(N)}$ (by scrutiny of the proof of~\cite[Theorem 4.1]{Emerson&Jutla00}, page 144)
where $N$ is the number of Rabin pairs, $\delta$ is the transition relation and $\gamma$
is the time to decide $\delta$ (this depends on how the locations and the transitions are encoded).
For instance, in our case, $\gamma$ may depend on
  parameters related to $\aautomaton$ and in Lemma~\ref{lemma-exptime-tca} below, $\gamma$ takes
  the value
  $\card{\delta}  + \beta + \card{\aalphabet} + D + \maxconstraintsize{\aautomaton}$
  (by Lemma~\ref{lemma-intersection-automaton}).
  When a Rabin tree automaton is provided in extension, $\card{\delta} \times \gamma$ is polynomial in its size
  and usually $\gamma$ is omitted. 
    Hence the following result.
\begin{lem} \label{lemma-exptime-tca}
  The nonemptiness problem for TCA can be solved
  in time in
 \[
  R_1\big(\card{\locations} \times \card{\delta}
  \times \maxconstraintsize{\aautomaton} \times
  \card{\aalphabet} \times R_2(\beta) \big)^{\mathcal{O}(R_2(\beta) \times R_3(\degree))},
  \]
  for some polynomials $R_1$, $R_2$ and $R_3$.
\end{lem}
\begin{proof}(Sketch) In the above expression $(\card{\delta} \times \gamma \times N)^{\mathcal{O}(N)}$, let us
  see how this is instantiated for $\ancautomaton$ from Lemma~\ref{lemma-intersection-automaton}.
  \begin{itemize}
  \item $N$ is in $R'(\beta)$ for some polynomial $R'$ (Lemma~\ref{lemma-intersection-automaton}(II)).
  \item $\card{\delta}$ is in $((\adatum_{\alpha} - \adatum_1)^{2\beta} \times \card{\locations} \times 2^{\mathcal{O}(P'(\beta))})^{\degree +2}
    \times \card{\aalphabet}$ for some polynomial $P'$ (Lemma~\ref{lemma-intersection-automaton}(IV)).
  \item $\gamma$ is in $R''(\card{\delta}
    + \beta + \card{\aalphabet} + D + \maxconstraintsize{\aautomaton})$
    for some polynomial $R''$ (Lemma~\ref{lemma-intersection-automaton}(V)). 
  \end{itemize}
  This allows to get the bound from the statement. It is essential in the calculation  that
  the exponent is only polynomial in the size of the input constraint automaton $\aautomaton$, which is the case
  as the exponent is polynomial in $\beta+\degree$. 
\end{proof}

Assuming that the size of the
TCA $\aautomaton=(\locations,\aalphabet,\degree,\beta,\locations_\init,\delta,F)$,
written $\size{\aautomaton}$, is polynomial in
$
\card{\locations} + \card{\delta} + D + \beta + \maxconstraintsize{\aautomaton}
$
(which makes sense for a reasonably succinct encoding), from the computation of the bound
in Lemma~\ref{lemma-exptime-tca}, the nonemptiness of $\alang(\aautomaton)$ can be checked in
time $R(\size{\aautomaton})^{\mathcal{O}(R'(\beta+D))}$ for some polynomials $R$ and $R'$. 
The \exptime upper bound of the nonemptiness problem for TCA is now a
consequence of the above complexity expression
and using the fact
 that the transitions
in the product Rabin tree automaton between  $\locautomaton$ and $\starautomaton$ can be decided
in polynomial-time.
\begin{thm}
\label{theorem-exptime-tca}
Nonemptiness problem for tree constraint automata is \exptime-complete.
\end{thm}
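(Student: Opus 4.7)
The plan is to combine the two directions using material already developed in the excerpt, with essentially no new machinery needed beyond a careful size calculation.

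For the \exptime-hardness, I would invoke the reduction already sketched in this section from the acceptance problem for alternating Turing machines running in polynomial space, which is \exptime-complete by~\cite{Chandra&Kozen&Stockmeyer81}. Given such a machine $\ammachine$ and input $w$, a polynomial-size tape over alphabet $\aalphabet$ is encoded by a polynomial number $\beta$ of variables whose values range over $\interval{1}{\card{\aalphabet}}$, enforced by simple atomic constraints of the form $\avariable = \adatum$. The branching of the TCA mirrors existential/universal choices of $\ammachine$, while the transition relation enforces, for each possible head position, the update rules of $\ammachine$ via constraints comparing $\avariable_i$ and $\avariable_i'$ (preserving cells away from the head and rewriting at the head). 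Acceptance is expressed with a B\"uchi condition identifying halting configurations. Since a single TCA can represent Rabin acceptance with one pair $(F, \emptyset)$, the hardness carries over to Rabin TCA as well.

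For the \exptime-membership I would simply cite Lemma~\ref{lemma-exptime-tca}, which bounds the running time of the nonemptiness test by
\[
R_1\bigl(\card{\locations} \times \card{\delta} \times \maxconstraintsize{\aautomaton} \times \card{\aalphabet} \times R_2(\beta)\bigr)^{\mathcal{O}(R_2(\beta) \times R_3(\degree))}
\]
for fixed polynomials $R_1, R_2, R_3$. Under any reasonably succinct encoding, $\size{\aautomaton}$ is polynomial in $\card{\locations} + \card{\delta} + \degree + \beta + \maxconstraintsize{\aautomaton}$. Hence the base of the expression is polynomial in $\size{\aautomaton}$, and the exponent is polynomial in $\beta + \degree$, which is in turn bounded by $\size{\aautomaton}$. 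Consequently the expression is of the form $2^{\mathcal{O}(P(\size{\aautomaton}))}$ for some polynomial $P$, so the algorithm runs in \exptime.

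The difficulty of the upper bound has already been absorbed into the preceding lemmas, so there is no real obstacle left here: Lemma~\ref{lemma-satisfiable-symbolic-tree} reduces nonemptiness to the existence of a satisfiable symbolic tree in $\alang(\locautomaton)$; Proposition~\ref{proposition-star-oplus} together with Lemma~\ref{lemma-automaton-star} replaces the non-$\omega$-regular satisfiability condition by the $\omega$-regular condition $(\newbigstar)$ recognised by $\starautomaton$; Lemma~\ref{lemma-intersection-automaton} performs the intersection $\alang(\locautomaton) \cap \alang(\starautomaton)$ with only an exponential blow-up in the number of locations while keeping the number of Rabin pairs polynomial in $\beta$; and classical Rabin tree emptiness testing~\cite{Emerson&Jutla00} finishes the job. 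The only thing the proof of the theorem itself has to do is combine hardness with this time-bound computation, which I would present as a short paragraph.
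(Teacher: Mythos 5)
Your proposal is correct and follows essentially the same route as the paper: \exptime-hardness by the reduction from alternating polynomial-space Turing machines (the paper details this in Appendix~\ref{appendix-exptime-hardness}), and \exptime-membership by instantiating the time bound of Lemma~\ref{lemma-exptime-tca} and observing that the base is polynomial in $\size{\aautomaton}$ while the exponent is polynomial in $\beta+\degree$, hence bounded by a polynomial in $\size{\aautomaton}$. The supporting chain you cite (Lemma~\ref{lemma-satisfiable-symbolic-tree}, Proposition~\ref{proposition-star-oplus}, Lemma~\ref{lemma-automaton-star}, Lemma~\ref{lemma-intersection-automaton}, and the Rabin emptiness test of~\cite{Emerson&Jutla00}) is exactly the one the paper relies on.
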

We have seen that the \exptime-hardness holds as soon as $D = 2$. The case $D=1$ differs slightly. 

\begin{thm}
  \label{theorem-pspace-tca-d-equal-one}
  For the fixed degree $D = 1$, the nonemptiness problem for word constraint automata is \pspace-complete.
\end{thm}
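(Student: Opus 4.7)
\emph{Approach.} The proof of Theorem~\ref{theorem-pspace-tca-d-equal-one} splits into \pspace-hardness and \pspace-membership.

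\emph{\pspace-hardness.} I would adapt the \exptime-hardness reduction sketched before Theorem~\ref{theorem-exptime-tca} (whose details are in Appendix~\ref{appendix-exptime-hardness}), which encodes the acceptance problem for \emph{alternating} polynomial-space Turing machines, crucially using $D \geq 2$ to realise the universal/existential branchings through tree runs. Restricting the very same pattern to $D = 1$ now encodes the acceptance problem for \emph{deterministic} polynomial-space Turing machines: a polynomial-size tape is stored in polynomially many variables taking values in $\interval{1}{\card{\aalphabet}}$, and the TM transition function is mirrored by transitions of the word constraint automaton, one per possible head position. Since deterministic polynomial-space TM acceptance is \pspace-complete, \pspace-hardness follows with essentially no change to the encoding.

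\emph{\pspace-membership.} The plan is to follow the pipeline of Section~\ref{section-exptime-upper-bound} instantiated to $D=1$, but without ever materialising the product automaton. In the word setting, $\locautomaton$ becomes a B\"uchi word automaton and $\starautomaton$ coincides with the deterministic Rabin word automaton $\ancautomaton_R$ already built in the proof of Lemma~\ref{lemma-automaton-star} (the final lifting to a tree automaton is trivial when $D=1$). Both automata have only exponentially many states in $\beta$, yet each state admits a polynomial-size representation: an element of $\sattypes{\beta} \times \locations$ for $\locautomaton$, and a Safra tree over the auxiliary B\"uchi automaton $\ancautomaton_B$ (which has only $O(\beta^2)$ locations) together with polynomial-size permutation data for $\ancautomaton_R$. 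Moreover, the transitions of both automata can be decided in polynomial time from these succinct state descriptions, and the resulting Rabin product uses only $O(\beta^2)$ pairs.

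Nonemptiness of $\alang(\locautomaton) \cap \alang(\ancautomaton_R)$ then reduces to the standard problem: given a Rabin word automaton presented succinctly (states as polynomial-size labels, polynomial-time transition check, polynomially many Rabin pairs), decide whether a reachable accepting lasso exists. This is solvable in \pspace\ by the classical \textsc{NPSpace} procedure that nondeterministically guesses a Rabin pair $(L,U)$, a state $q \in L$, a finite path witnessing reachability of $q$ from an initial state, and a nontrivial cycle through $q$ avoiding $U$; every intermediate configuration is polynomial size and one appeals to Savitch's theorem. The main delicate point I expect is to keep polynomial-size state encodings throughout the construction; this holds here because Safra's determinisation applied to the small $\ancautomaton_B$ produces Safra trees of polynomial size in $\beta$, after which everything else reduces to routine \pspace\ reachability on an implicitly presented exponentially large graph.
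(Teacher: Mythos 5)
Your proposal is correct and follows essentially the same route as the paper: \pspace-hardness via deterministic polynomial-space Turing machines, and \pspace-membership by testing nonemptiness of $\alang(\locautomaton)\cap\alang(\ancautomaton_R)$ on the implicitly represented, exponentially large product, exploiting that states (Safra trees over the $O(\beta^2)$-location automaton $\ancautomaton_B$ plus permutation data) have polynomial-size encodings and polynomial-time-checkable transitions. The only cosmetic difference is that the paper first converts $\ancautomaton_R$ into an equivalent B\"uchi automaton (a polynomial blow-up, since there are only $8(\beta+2)^2+3$ Rabin pairs) and then runs B\"uchi emptiness in \pspace, whereas you keep the Rabin condition and guess a reachable $U$-avoiding lasso through $L$ directly; both are standard and interchangeable here.
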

\begin{proof}(Sketch) \pspace-hardness is obtained similarly to what is done
  in Appendix~\ref{appendix-exptime-hardness} 
  by reduction from the halting problem for deterministic
  Turing machines running in polynomial space.
  Actually, \pspace-hardness is  a corollary of the \exptime-hardness proof
  from Appendix~\ref{appendix-exptime-hardness}, since the construction also works
  for non-deterministic Turing machines and then the runs are sequences instead of trees. 
  Concerning the \pspace-membership, we need to check the nonemptiness of
  $\alang(\ancautomaton_R) \cap \alang(\locautomaton)$ with
  the Rabin word automaton $\ancautomaton_R$ from the proof of Lemma~\ref{lemma-automaton-star}
  and $\locautomaton$ is already a B\"uchi automaton.
  Not only $\ancautomaton_R$ can be transformed into an equivalent B\"uchi automaton $\ancautomaton$
  with a polynomial increase of the number of locations (because the number of Rabin pairs is bounded
  by $8(\beta+2)^2+3$), but
  the nonemptiness of the product B\"uchi automaton between $\ancautomaton$
  and $\locautomaton$ can be performed in \pspace. Indeed, the number
  of locations of the product is only exponential  in the size of
  the input word constraint automaton and the transition relation can be also decided in polynomial
  space.
\end{proof}

The Rabin word automaton $\ancautomaton_{R}$ captures therefore
the condition $C_{\Zed}$ from~\cite[Section 6]{Demri&DSouza07}  (see also
the condition $\mathcal{C}$ in~\cite[Definition 2]{Demri&Gascon08}) and can be turned
in polynomial-time into a nondeterministic B\"uchi automata, leading to the
\pspace upper bound for $\LTL(\Zed)$ (the linear-time temporal logic 
with constraints from the concrete domain $\Zed$,
see Section~\ref{section-ctlstar} and Section~\ref{section-ctlstarz-special-form}),
 proposing therefore an alternative proof
to~\cite[Theorem 1]{Demri&Gascon08} and~\cite[Theorem 16]{Segoufin&Torunczyk11}
for the concrete domain $\Zed$.

\subsection{Rabin Tree Constraint Automata}
\label{section-TCA-Rabin}
In this section, we show that the nonemptiness problem for Rabin tree constraint automata is also 
in \exptime. This will be key to characterize the complexity of $\satproblem{\CTLStar(\Zed)}$. 
Given an Rabin TCA $\aautomaton=(\locations,\aalphabet,\degree,\beta,\locations_\init,\delta,\rabinacc)$, 
the definition of symbolic trees respecting $\aautomaton$ is updated so that its uses
the acceptance condition $\rabinacc$.
From the Rabin TCA $\aautomaton$, we can define a Rabin tree automaton $\locautomaton'$ (instead
of a B\"uchi tree automaton with a B\"uchi TCA) such that the acceptance $\rabinacc'$ is equal
to
\[
\set{
  \pair{ \sattypes{\beta} \times L
  }{
    \sattypes{\beta} \times U
  }
  \mid
  \pair{L}{U} \in \rabinacc
  }.
\]
Similarly to Lemma~\ref{lemma-satisfiable-symbolic-tree}, one can show that 
  $\alang(\aautomaton) \neq \emptyset$ iff there is a symbolic tree $\asymtree \in \alang(\locautomaton')$
that is satisfiable.
Moreover, we can take advantage of $\starautomaton$ (the same as in the proof of Lemma~\ref{lemma-automaton-star})
so that $\alang(\aautomaton) \neq \emptyset$ iff $\alang(\locautomaton')  \cap \alang(\starautomaton)$
is non-empty (same arguments as for the proof of Lemma~\ref{lemma-intersection-reduction}).
It remains to determine how much it costs to test nonemptiness of
$\alang(\locautomaton')  \cap \alang(\starautomaton)$. We might expect
a complexity jump compared to the case with B\"uchi TCA (but this is not the case), 
because the nonemptiness problem for B\"uchi tree automata is in \ptime~\cite[Theorem 2.2]{Vardi&Wolper86}
whereas it is \np-complete for Rabin tree automata~\cite[Theorem 4.10]{Emerson&Jutla00}.
Here is the result that provides quantitative analysis about components of $\aautomaton$,
which is a variant of Lemma~\ref{lemma-intersection-automaton}, more particularly by
considering the value $\card{\rabinacc}$ in the analysis. For TCA with B\"uchi acceptance condition,
the value for $\card{\rabinacc}$ is equal to one.

\begin{lem}
\label{lemma-intersection-automaton-rtca}
There is a Rabin tree automaton $\ancautomaton$ such that
\begin{description}
\item[(I)] $\alang(\ancautomaton) = \alang(\locautomaton') \cap \alang(\starautomaton)$,
\item[(II)] the number of Rabin pairs is polynomial in $\beta + \card{\rabinacc}$,  where $\card{\rabinacc}$ is the number
    of Rabin pairs in $\aautomaton$, 
\item[(III)] the number of locations is in
  $
  (\adatum_{\alpha} - \adatum_1)^{2\beta}  \times \card{\locations} \times 2^{\mathcal{O}(P(\beta +\card{\rabinacc}))}
  $ for some polynomial $P$,
\item[(IV)] the cardinality of the transition relation is in 
  \[
  ((\adatum_{\alpha} - \adatum_1)^{2\beta} \times \card{\locations} \times 2^{\mathcal{O}(P'(\beta +\card{\rabinacc}))})^{\degree +2}
  \times \card{\aalphabet}
  \] for some polynomial $P'$,
 \item[(V)] the transition relation can be decided in polynomial-time in 
    $\card{\delta}
    + \beta + \card{\aalphabet} + D + \maxconstraintsize{\aautomaton}
    $.
\end{description}
\end{lem}
The proof of Lemma~\ref{lemma-intersection-automaton-rtca} is similar to the proof
of Lemma~\ref{lemma-intersection-automaton}. Moreover, as for Lemma~\ref{lemma-exptime-tca},
we can conclude that the nonemptiness problem for Rabin tree constraint automata  can be solved
in time in
\[
\hspace*{-0.15in} 
  R_1\big(\card{\locations} \times \card{\delta}
  \times \maxconstraintsize{\aautomaton} \times
  \card{\aalphabet} \times R_2(\beta + \card{\rabinacc}) \big)^{\mathcal{O}(R_2(\beta + \card{\rabinacc}) \times R_3(\degree))}
\]
  for some polynomials $R_1$, $R_2$ and $R_3$. 
Theorem~\ref{theorem-exptime-rtca} is one of the main results of the paper. 
  
\begin{thm}
\label{theorem-exptime-rtca}
  The nonemptiness problem for  Rabin tree constraint automata is \exptime-complete.
  \end{thm}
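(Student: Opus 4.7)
The \exptime-hardness half is essentially free: every TCA with accepting set $F$ is already a Rabin TCA with the singleton pair family $\{(F,\emptyset)\}$, so the \exptime-hardness established in Theorem~\ref{theorem-exptime-tca} (which itself comes from the reduction from alternating polynomial-space Turing machines sketched in Appendix~\ref{appendix-exptime-hardness}) transfers verbatim to Rabin TCA.

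For the upper bound I would follow the exact same pipeline as for TCA, changing only the acceptance mode. Given a Rabin TCA $\aautomaton = (\locations,\aalphabet,\degree,\beta,\locations_\init,\delta,\mathcal{F})$, first I would build, in direct analogy with Lemma~\ref{lemma-consistency-BTA}, a \emph{Rabin} tree automaton $\locautomaton'$ without constraints whose language is the set of $\aautomaton$-consistent symbolic trees. The only change is that the B\"uchi acceptance set $F$ is replaced by the Rabin condition $\mathcal{F}' = \{(\sattypes{\beta}\times L,\,\sattypes{\beta}\times U)\mid (L,U)\in\mathcal{F}\}$; the location set, initial locations and transition relation (together with their complexity of decision) are defined exactly as before, and the proof that $\asymtree\in\alang(\locautomaton')$ iff $\asymtree$ is $\aautomaton$-consistent goes through unchanged since it never uses the particular shape of the accepting condition.

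Next, the analogues of Lemma~\ref{lemma-satisfiable-symbolic-tree} and Lemma~\ref{lemma-intersection-reduction} carry over without modification: $\alang(\aautomaton)\neq\emptyset$ iff $\alang(\locautomaton')$ contains a satisfiable symbolic tree, iff $\alang(\locautomaton')\cap\alang(\starautomaton)\neq\emptyset$. Here $\starautomaton$ is the Rabin tree automaton from Lemma~\ref{lemma-automaton-star}, reused as is. The only nontrivial ingredient is then Lemma~\ref{lemma-intersection-automaton-rtca}, which, by applying the construction of Lemma~\ref{lemma-intersection-rtca} to the two Rabin tree automata $\locautomaton'$ and $\starautomaton$, provides a single Rabin tree automaton $\ancautomaton$ whose language equals $\alang(\locautomaton')\cap\alang(\starautomaton)$, and whose size parameters are controlled: the number of Rabin pairs is polynomial in $\beta + \card{\mathcal{F}}$, the number of locations is exponential in $\beta + \card{\mathcal{F}}$ (times $\card{\locations}$ and a factor from the constants), and membership in the transition relation can be decided in polynomial time in the parameters of $\aautomaton$.

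Finally I would plug these bounds into the standard complexity bound for Rabin tree automaton nonemptiness from~\cite[Theorem 4.1]{Emerson&Jutla00}, giving a running time of the shape $(\card{\delta_{\ancautomaton}}\cdot\gamma\cdot N)^{\mathcal{O}(N)}$ where $N$ is the number of Rabin pairs of $\ancautomaton$ and $\gamma$ the time to decide a transition. The expression displayed just before the theorem statement, namely $R_1(\card{\locations}\cdot\card{\delta}\cdot\maxconstraintsize{\aautomaton}\cdot\card{\aalphabet}\cdot R_2(\beta+\card{\mathcal{F}}))^{\mathcal{O}(R_2(\beta+\card{\mathcal{F}})\cdot R_3(\degree))}$, is then an immediate substitution. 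The key point, and the only place where one has to be careful, is that although Rabin-automaton nonemptiness is exponential in the number of pairs (unlike the polynomial case for B\"uchi), here that number is itself only polynomial in the input size of $\aautomaton$, so the exponent stays polynomial and the overall bound remains in \exptime. I do not expect any genuine obstacle here: once Lemma~\ref{lemma-intersection-automaton-rtca} is in hand, Theorem~\ref{theorem-exptime-rtca} is essentially a bookkeeping corollary of Theorem~\ref{theorem-exptime-tca}.
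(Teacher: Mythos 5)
Your proposal is correct and follows essentially the same route as the paper: Rabin-ize $\locautomaton$ into $\locautomaton'$ with pairs $(\sattypes{\beta}\times L,\sattypes{\beta}\times U)$, reuse $\starautomaton$ and the reduction to nonemptiness of the intersection, invoke Lemma~\ref{lemma-intersection-automaton-rtca} for the size bounds with $\beta+\card{\mathcal{F}}$ in the exponent, and conclude via the Emerson--Jutla bound; the hardness transfer via the single pair $(F,\emptyset)$ is also exactly the paper's argument.
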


\section{Tree Constraint Automata for $\CTL(\Zed)$}
\label{section-complexity-ctlz}

Below, we  harvest the first result from what is achieved in the previous section:
$\satproblem{\CTL(\Zed)}$ is in \exptime. 
We follow the automata-based approach and -- after proving a refined version of the tree model
property for $\CTL(\Zed)$ -- the key step is to translate $\CTL(\Zed)$ formulae into
equivalent TCA  (Theorem~\ref{theorem-ctlz-to-tca}).
As usual, the tree model property means that we can restrict ourselves to
tree Kripke structures to determine the satisfiability status of $\CTL(\Zed)$
formulae. 
In preparation, we first show how  $\CTL(\Zed)$ formulae can be put into simple form
(Proposition~\ref{proposition-simple-form}),  and that 
the tree model property for $\CTL(\Zed)$ can follow a strict discipline
(Proposition~\ref{proposition-tree-model-for-Z}). 

A $\CTL(\Zed)$ formula is in \defstyle{simple form}
iff it is in negation normal form 
and terms are restricted to
those in $\myterms{\leq 1}{\VAR}$. 
Preprocessing  $\CTL(\Zed)$ formulae to obtain simple formulae is computationally harmless,
but will simplify the translation of  $\CTL(\Zed)$ formulae to tree constraint automata. 

\begin{prop}
\label{proposition-simple-form}
  For every $\CTL(\Zed)$ formula $\aformula$, one can construct in polynomial-time
  in the size of $\aformula$ 
  a $\CTL(\Zed)$ formula $\aformula'$ in simple form such that
  $\aformula$ is satisfiable iff  $\aformula'$ is satisfiable. 
\end{prop}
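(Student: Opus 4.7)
The plan is to preprocess $\aformula$ in two successive stages: first put it into negation normal form, and then eliminate terms of depth $\geq 2$ by introducing fresh auxiliary variables.

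\textbf{NNF stage.} Using the standard $\CTL$ dualities---$\neg \existspath \mynext \aformulabis \equiv \forallpaths \mynext \neg \aformulabis$, $\neg \existspath (\aformulabis \until \aformulater) \equiv \forallpaths (\neg \aformulabis \release \neg \aformulater)$, the symmetric $\forallpaths$/$\release$ counterparts, De Morgan, and double-negation elimination---I would push negations inside until they sit only in front of subformulae of the form $\existspath \acons$ or $\forallpaths \acons$. There, $\neg \existspath \acons$ rewrites as $\forallpaths (\neg \acons)$ and $\neg \forallpaths \acons$ as $\existspath (\neg \acons)$; the inner $\neg \acons$ is again a constraint since constraints are closed under Boolean combinations. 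The output $\aformula_1$ has size linear in $|\aformula|$ and contains no negations outside constraints.

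\textbf{Term-simplification stage.} Let $k$ be the largest exponent appearing in a term $\mynext^j \avariable$ of $\aformula_1$. For every variable $\avariable$ occurring in $\aformula_1$ and each $j \in \interval{1}{k}$, I would introduce a fresh variable $\avariable^{(j)}$ (with the convention $\avariable^{(0)} \egdef \avariable$), intended to store, at a state $w_n$ of a path $w_0 w_1 \cdots$, the value $\avariable(w_{n-j})$ whenever $n \geq j$. This intention is enforced globally by a propagation conjunct
\[
\aformula_{\mathrm{prop}} \egdef \forallpaths \big(\bot \release \forallpaths \bigwedge_{\avariable,\, 0 \leq j < k} \avariable^{(j)} = \mynext \avariable^{(j+1)}\big),
\]
which uses only terms from $\myterms{\leq 1}{\VAR}$; note that $\forallpaths(\bot \release \cdot)$ is the grammar-compliant way to encode $\forallpaths \always$. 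Each atomic subformula $\existspath \acons$ of $\aformula_1$ is rewritten to $(\existspath \mynext)^k \existspath \acons^{*}$, and each $\forallpaths \acons$ to $(\forallpaths \mynext)^k \forallpaths \acons^{*}$, where $\acons^{*}$ is obtained from $\acons$ by replacing every term $\mynext^j \avariable$ ($0 \leq j \leq k$) with the depth-$0$ term $\avariable^{(k-j)}$. The resulting formula $\aformula'$, the conjunction of the rewritten formula and $\aformula_{\mathrm{prop}}$, is in simple form and has size $\mathcal{O}(k \cdot |\aformula_1|) = \mathcal{O}(|\aformula|^2)$, so the transformation is polynomial-time.

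\textbf{Correctness and obstacle.} The forward direction is the main obstacle, because $\aformula_{\mathrm{prop}}$ is not satisfiable on arbitrary Kripke structures with shared successors: two predecessors of a state $s$ with distinct $\avariable$-values would force two incompatible values of $\avariable^{(1)}(s)$. My plan is therefore to first invoke the standard $\CTL$ tree-model property and unfold any Kripke model of $\aformula$ into a tree, which still satisfies $\aformula$; on the tree, every non-root state has a unique history, so the valuation can be extended by $\avariable^{(j)}(w_n) \egdef \avariable(w_{n-j})$ whenever $n \geq j$, with an arbitrary choice of fresh-variable values at the root propagated forward to settle the remaining cases. A structural induction then shows that, at the state $w_k$ reached after $k$ nested next-steps, the propagation conjunct forces $\avariable^{(k-j)}(w_k) = \avariable(w_j)$, so $\acons^{*}(w_k)$ coincides with $\acons$ evaluated on the chosen length-$k$ prefix; combined with totality of the Kripke structure, this makes each rewritten block $(\existspath \mynext)^k \existspath \acons^{*}$ (resp. $(\forallpaths \mynext)^k \forallpaths \acons^{*}$) equivalent to $\existspath \acons$ (resp. $\forallpaths \acons$). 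The backward direction reuses the same calculation: projecting away the fresh variables, the propagation conjunct guarantees that each rewritten block still matches the original constraint on every path, so the projected model satisfies $\aformula$.
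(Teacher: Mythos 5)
Your proposal is correct and follows essentially the same route as the paper's proof: an unfolding argument establishing the tree model property to make the forward direction work, followed by the renaming technique with backward-looking fresh variables tied together by a $\forallpaths\always$ propagation conjunct, and shifted constraint blocks $(\existspath\mynext)^k$ / $(\forallpaths\mynext)^k$. The only (harmless) differences are that you make the negation-normal-form stage explicit and shift every constraint by the global maximal term depth $k$, whereas the paper shifts each constraint by its own forward degree $\fd{\acons}$.
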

The proof of Proposition~\ref{proposition-simple-form} can be found
in Appendix~\ref{appendix-proof-proposition-simple-form} and it is made of two standard arguments.
First, it establishes
a tree model property (in the standard way using unfoldings of Kripke models). 
Then it uses a  renaming technique (see e.g.~\cite{Scott62}) to flatten the constraints
that introduces additional variables for values $i$ steps ahead.

From now on, we assume that the $\CTL(\Zed)$ formulae are in simple form.
Given a $\CTL(\Zed)$ formula $\aformula$ in simple form, we write $\subf{\aformula}$ to denote the smallest set
such that
\begin{itemize}
\item $\aformula \in \subf{\aformula}$; $\subf{\aformula}$ is closed under subformulae,
\item for all $\pathquantifier \in \set{\existspath, \forallpaths}$ and
  $\mathsf{Op} \in \set{\until, \release}$,
  if $\pathquantifier \ \aformula_1 \ \mathsf{Op} \ \aformula_2 \in \subf{\aformula}$, then
  $\pathquantifier \mynext \ \pathquantifier \ \aformula_1 \ \mathsf{Op} \ \aformula_2 \in \subf{\aformula}$. 
  \end{itemize}
The cardinality of $\subf{\aformula}$ is at most twice the
number of subformulae of $\aformula$.
Given $\aset \subseteq \subf{\aformula}$, we say that $\aset$
is \defstyle{propositionally consistent}  iff the conditions below hold.
\begin{itemize}
\item If $\aformula_1 \vee \aformula_2 \in \aset$, then
      $\set{\aformula_1, \aformula_2} \cap \aset \neq \emptyset$. 
\item If $\aformula_1 \wedge \aformula_2 \in \aset$, then
  $\set{\aformula_1, \aformula_2} \subseteq \aset$.
\item If $\existspath \aformula_1 \until \aformula_2 \in \aset$, then
  $\aformula_2 \in \aset$ or
  $\set{\aformula_1, \existspath \mynext \existspath \aformula_1 \until \aformula_2} \subseteq \aset$.
  \item If $\forallpaths \aformula_1 \until \aformula_2 \in \aset$, then
  $\aformula_2 \in \aset$ or
    $\set{\aformula_1, \forallpaths \mynext \forallpaths \aformula_1 \until \aformula_2} \subseteq \aset$.
  \item If $\existspath \aformula_1 \release \aformula_2 \in \aset$, then
  $\aformula_2 \in \aset$ and
    $\set{\aformula_1, \existspath \mynext \existspath \aformula_1 \release \aformula_2} \cap \aset \neq \emptyset$.
    \item If $\forallpaths \aformula_1 \release \aformula_2 \in \aset$, then
  $\aformula_2 \in \aset$ and
    $\set{\aformula_1, \forallpaths \mynext \forallpaths \aformula_1 \release \aformula_2} \cap \aset \neq \emptyset$.
\end{itemize}

We write $\parsubf{\existspath \mynext}{\aformula}$ to denote the set of formulae in $\subf{\aformula}$
of the form $\existspath \mynext \aformulabis$. Similarly, we write
$\parsubf{\existspath \until}{\aformula}$ (resp. $\parsubf{\forallpaths \until}{\aformula}$)
to denote the set of formulae in $\subf{\aformula}$
of the form $\existspath \aformulabis_1 \until \aformulabis_2$ (resp.  $\forallpaths \aformulabis_1 \until \aformulabis_2$).
Finally, we  write  $\parsubf{\existspath}{\aformula}$ to denote the set of formulae of
the form $\existspath \ \acons$ in $\subf{\aformula}$.

Let $\aformula$ be a $\CTL(\Zed)$  formula in simple form built over the variables $\avariable_1, \ldots, \avariable_{\beta}$ for some $\beta \geq 1$,
and set $\degree = \card{\parsubf{\existspath \mynext}{\aformula}} + \card{\parsubf{\existspath}{\aformula}}$. 
A \defstyle{direction map $\iota$ for $\aformula$} is a bijection
\[
\iota: (\parsubf{\existspath \mynext}{\aformula} \cup \parsubf{\existspath}{\aformula})
\rightarrow \interval{1}{\degree}.
\]
We say that a tree model
$\atree: \interval{0}{\degree}^* \rightarrow \Zed^{\beta}$ of $\aformula$ (i.e. $\atree, \varepsilon \models \aformula$)
\defstyle{obeys a direction map $\iota$}
if for all nodes $\anode \in \interval{0}{\degree}^*$, the following three conditions hold. 
\begin{enumerate}
\item For every $\existspath\mynext \aformula_1\in \parsubf{\existspath \mynext}{\aformula}$, if $\atree,\anode\models\existspath\mynext\aformula_1$, then 
$\atree, \anode\cdot j \models \aformula_1$ with $j=\iota(\existspath\mynext\aformula_1)$. 
\item For every $\existspath \aformula_1\until \aformula_2\in\parsubf{\existspath \until}{\aformula}$, if
  $\atree,\anode\models\existspath \aformula_1\until\aformula_2$,
then there exists some $k\geq 0$ such that $\atree,\anode\cdot j^k\models\aformula_2$ and $\atree, \anode\cdot j^i\models\aformula_1$ for all $0\leq i<k$, where
 $j=\iota(\existspath\mynext\existspath\phi_1\until\phi_2)$. In other words, the path $\anode, \anode \cdot j, \anode\cdot j^2 \dots \anode\cdot j^k$
satisfies $\aformula_1\until \aformula_2$.   
\item For every $\existspath \ \acons \in \parsubf{\existspath}{\aformula}$, 
if $\atree,\anode\models\existspath \ \acons$, 
then $\Zed\models \acons(\atree(\anode),\atree(\anode \cdot j))$ with $\iota(\existspath \ \acons)=j$. 
\end{enumerate}
Again, here, $\Zed \models \acons(\vect{z},\vect{z'})$ with $\vect{z}, \vect{z'}$ is a shortcut for
$[\vec{\avariable} \leftarrow \vect{z}, \vec{\avariable'} \leftarrow \vect{z'}] \models
  \acons$ where $[\vec{\avariable} \leftarrow \vect{z}, \vec{\avariable'} \leftarrow \vect{z'}]$ is a valuation $\avaluation$ on the variables
  $\set{\avariable_j, \avariable_j' \mid j \in \interval{1}{\beta}}$ with 
  $\avaluation(\avariable_j) = \vect{z}(j)$ and $\avaluation(\avariable_j') = \vect{z'}(j)$ for all $j \in \interval{1}{\beta}$.
  A tree model $\atree$ obeying a direction map $\iota$ follows a discipline to verify
  the satisfaction of formulae involving existential quantifications over paths, namely dedicated directions
  are reserved for specific formulae. 

\begin{prop}
  \label{proposition-tree-model-for-Z}
  Let $\aformula$ be a $\CTL(\Zed)$-formula in simple form and
  $\iota$ be a direction map for $\aformula$. Then, 
  $\aformula$  is satisfiable iff 
  $\aformula$ has a tree model with branching width  equal to $\card{\parsubf{\existspath \mynext}{\aformula}} + \card{\parsubf{\existspath}{\aformula}} + 1$ and 
  that obeys $\iota$. 
\end{prop}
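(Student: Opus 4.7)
The direction $(\Leftarrow)$ is immediate from the semantics, so the content is in $(\Rightarrow)$. The plan is a standard tree-unfolding of a satisfying Kripke model, but with successors chosen \emph{by direction} so that every existentially witnessed formula is witnessed through the direction prescribed by $\iota$. Let $\aks=\triple{\worlds}{\arelation}{\avaluation}$ be total with $\aks,w_0\models\aformula$ (totality is WLOG); I will build a full $(\degree+1)$-ary tree $\atree:\interval{0}{\degree}^*\to\Zed^{\beta}$ and an auxiliary labelling $\amap:\interval{0}{\degree}^*\to\worlds$ with $\amap(\varepsilon)=w_0$, and set $\atree(\anode)\egdef\big(\avaluation(\amap(\anode),\avariable_1),\dots,\avaluation(\amap(\anode),\avariable_{\beta})\big)$.

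The definition of $\amap$ on the children of $\anode$, where $w=\amap(\anode)$, is driven by $\iota$. For each $\psi\in \parsubf{\existspath\mynext}{\aformula}\cup\parsubf{\existspath}{\aformula}$ such that $\aks,w\models\psi$, I place at direction $j=\iota(\psi)$ a carefully chosen $w'\in\arelation(w)$: if $\psi=\existspath\mynext\aformulabis$ (including the derived forms $\existspath\mynext\existspath\aformulabis_1\until\aformulabis_2$ and $\existspath\mynext\existspath\aformulabis_1\release\aformulabis_2$), pick any $\aks$-successor satisfying $\aformulabis$; if $\psi=\existspath\ \acons$, pick $w'\in\arelation(w)$ whose combined valuation with $w$ satisfies $\acons$ (possible because all terms lie in $\myterms{\leq 1}{\VAR}$). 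For the case where $\psi=\existspath\mynext\existspath\aformulabis_1\until\aformulabis_2$ I make a consistent choice: fix once and for all, at every $w\in\worlds$ and every such $\psi$ holding at $w$, a shortest $\aks$-path $w=v_0,v_1,\dots,v_k$ witnessing $\existspath\aformulabis_1\until\aformulabis_2$, and let $w'=v_1$. The crucial consequence is that $v_1$ either satisfies $\aformulabis_2$ or still satisfies $\existspath\aformulabis_1\until\aformulabis_2$ with a strictly shorter witness, so advancing always in direction $j$ reproduces the entire witness path as an actual path of $\atree$. All remaining directions (including $0$, and those $\iota(\psi)$ for formulae $\psi$ not satisfied at $w$) are filled with an arbitrary $\aks$-successor of $w$, which exists by totality.

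The main technical step is the structural induction proving, for every $\aformulater\in\subf{\aformula}$ and every $\anode$,
\[
\atree,\anode\models\aformulater \ \ \iff\ \ \aks,\amap(\anode)\models\aformulater.
\]
Boolean cases, atomic constraints and $\existspath\mynext$-formulae follow directly from the construction (atomic constraints use that terms are in $\myterms{\leq 1}{\VAR}$, so only parent/child valuations are involved, which agree with $\aks$ by construction). For $\forallpaths$-cases, every child $\amap(\anode\cdot j)$ is an $\aks$-successor of $\amap(\anode)$, so universal properties descend from $\aks$ to $\atree$. The $\existspath\until$ case is witnessed, as planned, by repeatedly descending in the fixed direction $\iota(\existspath\mynext\existspath\aformulabis_1\until\aformulabis_2)$, matching clause (2) of the definition of obeying $\iota$; the $\existspath\release$ case is similar.

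The main obstacle is precisely this last point: one must guarantee that the path-direction discipline demanded by $\iota$ can be maintained \emph{globally} and not just locally. The difficulty is that at a single node several $\existspath$-until formulae may be satisfied simultaneously, each needing its own witness path to remain on its own dedicated direction; and the witness chosen at $\anode$ for $\existspath\aformulabis_1\until\aformulabis_2$ must be compatible with what $\iota$ demands at $\anode\cdot j$. The fact that $\iota$ is a bijection resolves the first concern (distinct formulae get distinct directions), and the fixed-shortest-witness choice above resolves the second (at the child, the witness path for the same formula begins with $v_1$, so advancing in the same direction extends the realised witness by one step, and termination within $k$ steps is ensured).
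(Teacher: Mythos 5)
Your proof is correct and follows the same overall route as the paper's: unfold the total Kripke structure into a full $(\degree+1)$-ary tree via an auxiliary map into $\worlds$, choose the child in direction $\iota(\psi)$ to be a witness for $\psi$ whenever $\psi$ holds at the current world, fill the remaining directions arbitrarily using totality, and transfer satisfaction from the Kripke structure to the tree by structural induction. The one genuine difference is the $\existspath\until$ case: the paper commits the entire finite witness path in one step of the construction, defining the auxiliary map on $\anode\cdot j, \anode\cdot j^2, \dots, \anode\cdot j^k$ at once and then padding with direction $0$, whereas you make a purely local choice (the first step of a fixed shortest witness at each world) and prove that the direction-$j$ path eventually reaches $\aformulabis_2$ because the length of the shortest witness strictly decreases. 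Your ranking variant is self-contained and arguably cleaner, since it avoids reasoning about nodes whose image was pre-committed by an earlier batch; the price is only the (correct) observation that the realised path need not be the originally chosen one. Two small cautions. First, you state the induction hypothesis as an equivalence, but the right-to-left implication fails in general: the tree retains only $\degree+1$ selected successors per node, so a universal formula such as $\forallpaths\mynext(\avariable_1=0)$ can hold at $\anode$ in $\atree$ while failing at $\amap(\anode)$ in $\aks$. Only the left-to-right implication is needed, and since $\aformula$ is in negation normal form that single direction is closed under all your inductive cases, which is exactly what your case analysis establishes; the paper likewise proves only this direction. Second, for $\existspath\release$ the decreasing-rank argument does not terminate, but none is required: along direction $j$ either the release is discharged at some node satisfying $\aformulabis_1\wedge\aformulabis_2$, or the infinite $j$-path on which $\aformulabis_2$ holds everywhere is itself a witness, matching the paper's two subcases.
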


The proof of Proposition~\ref{proposition-tree-model-for-Z} can be
found in Appendix~\ref{appendix-proof-proposition-tree-model-for-Z}.
It amounts to structure the paths from a state satisfying $\aformula$ in a tree-like fashion
while obeying the direction map $\iota$. 
Let us now turn to the key step and explain how to obtain a TCA $\aautomatonbis_{\aformula}$ from a $\CTL(\Zed)$ formula $\aformula$ in simple form. 
Observe that we take very good care of the quantitative properties of $\aautomatonbis_{\aformula}$: 
in general, the  size of $\aautomatonbis_{\aformula}$ is exponential in the size of the input formula $\aformula$. 
In Section \ref{section-complexity-nonemptiness}, we proved that the nonemptiness problem for TCA is \exptime-complete. 
In order to get \exptime-completeness of  $\satproblem{\CTL(\Zed)}$ and therefore avoiding
the {\em double} exponential blow-up, we  
refine the analysis of the size of  $\aautomatonbis_{\aformula}$ by looking at the size
of its different components, revealing which components are responsible for the exponential blow-up.

\begin{thm}
\label{theorem-ctlz-to-tca}
  Let $\aformula$ be a $\CTL(\Zed)$ formula in simple form. There exists a TCA $\aautomatonbis_{\aformula}$
  such that $\aformula$ is satisfiable iff $\alang(\aautomatonbis_{\aformula}) \neq \emptyset$, and satisfying
  the properties below.
  \begin{description}
  \item[(I)] The degree $\degree$  and the number of variables $\beta$ is bounded above by $\size{\aformula}$.
  \item[(II)] The alphabet $\aalphabet$ is a singleton. 
  \item[(III)] The number of locations is bounded by $(\degree \times 2^{\size{\aformula}}) \times (\size{\aformula} + 1)$.
  \item[(IV)] The number of transitions is in $2^{\mathcal{O}(P(\size{\aformula}))}$ for some polynomial $P(\cdot)$.
  \item[(V)] The maximal size of a constraint in transitions is quadratic in $\size{\aformula}$.
  \end{description}
\end{thm}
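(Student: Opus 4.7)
The plan is to follow the standard Hintikka-style / Fischer-Ladner closure approach to building tree automata from $\CTL$ formulae, adapted to the constraint setting. By Proposition~\ref{proposition-tree-model-for-Z}, I may restrict attention to tree models of branching degree $D = \card{\parsubf{\existspath \mynext}{\aformula}} + \card{\parsubf{\existspath}{\aformula}} + 1$ that obey a fixed direction map $\iota$, which assigns to each formula of the form $\existspath \mynext \aformulabis$ or $\existspath \ \acons$ a dedicated child direction in $\interval{0}{D-2}$, leaving one ``free'' direction $D-1$. First I would construct a generalised tree constraint automaton (GTCA) $\aautomaton_{\aformula}$ of degree $D$ over the $\beta$ variables of $\aformula$, whose accepted trees are exactly those witnessing $\iota$-obedient models of $\aformula$, and then I would collapse the generalised B\"uchi condition into an ordinary B\"uchi condition using the construction of Section~\ref{section-automata}, at the cost of multiplying the location count by at most $\size{\aformula}+1$.

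The locations of $\aautomaton_{\aformula}$ are propositionally consistent subsets $\aset \subseteq \subf{\aformula}$ (paired with a light book-keeping component of size $D$ that records the parent's child index, which is convenient when verifying that the incoming constraint fires in the correct direction; this explains the factor $D$ in bound (III)). The alphabet is the singleton $\set{\ast}$ and the initial locations are the atoms containing $\aformula$. A transition
\[
(\aset, \ast, (\acons_0, \aset_0), \ldots, (\acons_{D-1}, \aset_{D-1}))
\]
is admitted iff: (i) for every $\existspath \mynext \aformulabis \in \aset$, $\aformulabis \in \aset_{\iota(\existspath \mynext \aformulabis)}$; (ii) for every $\forallpaths \mynext \aformulabis \in \aset$ and every $i$, $\aformulabis \in \aset_i$; (iii) for every $\existspath \ \acons' \in \aset$, $\acons_{\iota(\existspath \ \acons')} \models \acons'$; (iv) for every $\forallpaths \ \acons' \in \aset$ and every $i$, $\acons_i \models \acons'$; and (v) the $\acons_i$ agree on their unprimed variables, reflecting that they share a common parent valuation. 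The generalised B\"uchi component has one set $F_{\aformulater}$ per eventuality $\aformulater = \pathquantifier\, \aformulabis_1 \until \aformulabis_2 \in \subf{\aformula}$, namely $F_{\aformulater} = \set{\aset \mid \aformulater \notin \aset \text{ or } \aformulabis_2 \in \aset}$, which is the classical device forbidding any until from being deferred forever along a branch.

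For soundness, from a tree model $\atree$ of $\aformula$ obeying $\iota$ (produced by Proposition~\ref{proposition-tree-model-for-Z}) I would label each node $\anode$ by $\aset_{\anode} = \set{\aformulabis \in \subf{\aformula} \mid \atree, \anode \models \aformulabis}$; routine structural inspection shows that this labelling, together with the actual data values in $\atree$, yields an initialised accepting run. For completeness, given an accepting run $\arun$ on some $\atree : \interval{0}{D-1}^* \to \set{\ast} \times \Zed^{\beta}$, I would prove by induction on $\aformulabis \in \subf{\aformula}$ that for every node $\anode$ with $\aformulabis$ in the atom labelling $\arun(\anode)$, one has $\atree, \anode \models \aformulabis$; the atomic-constraint and next-step cases follow directly from the transition relation, the release cases follow by a direct coinductive argument on the unwinding built into propositional consistency, and the until cases are handled by the generalised B\"uchi condition. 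The size bounds are then obtained by accounting: $D, \beta \leq \size{\aformula}$ by construction; the alphabet is $\set{\ast}$; the propositionally consistent atoms number at most $2^{\size{\aformula}}$, which after generalised-to-ordinary B\"uchi conversion multiplies by $\card{\mathcal{F}}+1 \leq \size{\aformula}+1$, yielding (III); each transition carries a conjunction of at most $O(\size{\aformula})$ constraints drawn from $\aformula$, each of size $O(\size{\aformula})$, giving (V); and the number of transitions is bounded by the number of tuples $(\aset, (\acons_i, \aset_i)_{i \in \interval{0}{D-1}})$ of the right shape, which is at most $2^{\mathcal{O}(D \cdot \size{\aformula})} = 2^{\mathcal{O}(P(\size{\aformula}))}$ since $D \leq \size{\aformula}$, giving (IV).

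The main technical subtlety is the completeness direction for existential eventualities $\existspath \aformulabis_1 \until \aformulabis_2$: from the accepting run one must extract a concrete finite path witnessing the until. Fortunately the direction map $\iota$, together with the $\existspath \mynext \existspath$-unwinding encoded in the notion of propositional consistency, trivialises this step, because the eventuality is propagated along a single designated direction and the B\"uchi set $F_{\existspath \aformulabis_1 \until \aformulabis_2}$ must be visited infinitely often on that branch, which forces $\aformulabis_2$ to appear within finitely many steps. The secondary concern is to keep the size analysis tight enough that the exponent in (IV) is polynomial in $\size{\aformula}$ rather than, say, exponential in $D$; this is controlled by exploiting $D \leq \size{\aformula}$ and the observation that at every node only $O(\size{\aformula})$ constraints inherited from $\aformula$ need to be enforced.
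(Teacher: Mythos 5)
Your construction matches the paper's almost line for line — same degree, same direction map, same atoms $\pair{i}{\aset}$ with the incoming-direction component, same transition conditions, same generalised-to-ordinary B\"uchi conversion and the same size accounting — but there is one genuine gap, and it sits exactly where the paper spends the most care: the acceptance condition for \emph{existential} untils.

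You define, uniformly for $\aformulater = \pathquantifier\, \aformulabis_1 \until \aformulabis_2$, the set $F_{\aformulater} = \set{\aset \mid \aformulater \notin \aset \text{ or } \aformulabis_2 \in \aset}$, calling it ``the classical device forbidding any until from being deferred forever along a branch.'' That is correct for $\forallpaths \aformulabis_1 \until \aformulabis_2$, but for $\existspath \aformulabis_1 \until \aformulabis_2$ the eventuality only has to be fulfilled along the one branch designated by $\iota$; along every other branch the formula may legitimately sit in the atom of every node forever without $\aformulabis_2$ ever appearing. Your soundness argument labels each node with the \emph{maximal} atom $\aset_{\anode} = \set{\aformulabis \in \subf{\aformula} \mid \atree, \anode \models \aformulabis}$, and with that labelling your acceptance condition rejects genuine models. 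The paper's own example $\existspath \always \, \existspath \sometimes \, \existspath(\avariable = 0)$ (Figure~\ref{figure-ctl}) is a counterexample: along the leftmost branch, witnessing $\existspath\always$, every atom is forced to contain $\existspath\sometimes\,\existspath(\avariable=0)$ while no atom contains $\existspath(\avariable=0)$ (each fulfilment branches off to the right), so your $F$-set is never visited on that branch and the run is rejected, breaking ``$\aformula$ satisfiable $\Rightarrow \alang(\aautomatonbis_{\aformula}) \neq \emptyset$.'' Switching to a minimal labelling does not rescue this, since the $\existspath\always$ propagation forces the until into every atom on that branch anyway. The fix is the one the paper adopts: add the disjunct $i \neq \iota(\existspath\mynext\,\existspath\aformulabis_1\until\aformulabis_2)$ to $F_{\existspath\aformulabis_1\until\aformulabis_2}$, so that only the designated direction is obliged to discharge the eventuality. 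Your closing paragraph shows you understand that the fulfilment obligation lives on a single designated branch — you just did not encode that insight into the acceptance sets. Everything else (transitions, completeness direction, size bounds (I)--(V)) is sound as stated.
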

\begin{proof}
Let $\aformula$ be a $\CTL(\Zed)$ formula in simple form
built over $\avariable_1, \ldots, \avariable_{\beta},\mynext \avariable_1, \ldots, \mynext \avariable_{\beta}$.
Let $\degree = \card{\parsubf{\existspath \mynext}{\aformula}} + \card{\parsubf{\existspath}{\aformula}}$ and 
 $\iota:\parsubf{\existspath \mynext}{\aformula} \cup \parsubf{\existspath}{\aformula}\to \interval{1}{\degree}$ be a direction map. 
 We start building a \emph{generalised B\"uchi} TCA (see Section \ref{section-automata})  
 $\aautomaton_{\aformula} = \triple{\locations,\aalphabet,\degree+1,\beta}{\locations_\init,\delta}{\rabinacc}$ such that $\aformula$ is satisfiable iff $\alang(\aautomaton_{\aformula}) \neq \emptyset$.
The automaton $\aautomaton_{\aformula}$ accepts infinite trees of the form 
 $\atree: \interval{0}{\degree}^* \rightarrow \aalphabet \times \Zed^{\beta}$. 
Let us define $\aautomaton_{\aformula}$ formally.
\begin{itemize}
\item $\aalphabet \egdef \set{\arbitraryletter}$ ($\arbitraryletter$ is an arbitrary letter).
\item $\locations$ is the subset of $\interval{0}{\degree} \times \powerset{\subf{\aformula}}$
  such that $\pair{i}{\aset}$ belongs to $\locations$ only if $\aset$ is propositionally consistent.
  The first argument records the direction (from which the nodes are reached), as indicated by $\iota$.  
\item $\locations_\init \egdef \set{\pair{0}{\aset} \in \locations \mid \aformula \in \aset}$.
\item The transition relation $\delta$ is made of tuples of the form
  \[
  \triple{\pair{i}{\aset}}{\arbitraryletter}{\pair{\acons_0}{\pair{0}{\aset_0}},\ldots,\pair{\acons_\degree}{\pair{\degree}{\aset_\degree}}}
  \]
  verifying the conditions below.
\begin{enumerate}
  \item For all $\existspath \mynext \aformulabis \in \aset$, we have $\aformulabis \in \aset_{\iota(\existspath \mynext \aformulabis)}$.
  \item For all $\forallpaths \mynext \aformulabis \in \aset$ and $j \in \interval{0}{\degree}$,
        we have $\aformulabis \in \aset_{j}$.
  \item
  For all $j \in \interval{0}{\degree}$,
  if there is $\existspath  \ \acons \in \aset$ such that $\iota(\existspath  \ \acons) =j$, then
    \[
    \acons_j \egdef (\bigwedge_{\forallpaths \ \acons' \in \aset} \acons') \wedge \acons
    \ \ \ 
    \mbox{otherwise,}
    \ \ \ 
    \acons_j  \egdef \bigwedge_{\forallpaths \ \acons' \in \aset} \acons'.
    \]
    \end{enumerate}
 
\item $\rabinacc$ is made of two types of sets, those parameterised by some element in $\parsubf{\existspath \until}{\aformula}$
  and those parameterised by some element in $\parsubf{\forallpaths \until}{\aformula}$.
  Recall that $\rabinacc$ is a generalised B\"uchi condition made of subsets of $\locations$:
  for all branches of the accepted trees, 
  for all sets $F$ in $\rabinacc$,  some location in $F$ must occur infinitely often. 
  The subformulae whose
    outermost connective is either $\existspath \release$ or $\forallpaths \release$ do not impose additional acceptance conditions.
    For each $\existspath \aformulabis_1 \until \aformulabis_2 \in \parsubf{\existspath \until}{\aformula}$, the set
    $F_{\existspath \aformulabis_1 \until \aformulabis_2 }$ defined below belongs to $\rabinacc$:
    \[
    F_{\existspath \aformulabis_1 \until \aformulabis_2 } \egdef
    \set{\pair{i}{\aset} \in \locations \mid
      i \neq \iota(\existspath \mynext \existspath \aformulabis_1 \until \aformulabis_2 ) \
      \mbox{or} \
      \aformulabis_2 \in \aset \
      \mbox{or} \
      \existspath \aformulabis_1 \until \aformulabis_2 \not \in \aset
    }.
    \]
    Hence, if $ i = \iota(\existspath \mynext \existspath \aformulabis_1 \until \aformulabis_2 )$, then
    along a branch $\anode \cdot i^{\omega}$, the satisfaction of  $\aformulabis_1 \until \aformulabis_2$
    cannot be postponed forever. This is a
    standard encoding to translate \LTL formulae into
    B\"uchi automata, see e.g.~\cite{Vardi&Wolper94}. 
    Moreover, for each $\forallpaths \aformulabis_1 \until \aformulabis_2 \in \parsubf{\forallpaths \until}{\aformula}$, the set
    $F_{\forallpaths \aformulabis_1 \until \aformulabis_2 }$ belongs to $\rabinacc$:
    \[
    F_{\forallpaths \aformulabis_1 \until \aformulabis_2 } \egdef
    \set{\pair{i}{\aset} \in \locations \mid
      \aformulabis_2 \in \aset \
      \mbox{or} \
      \forallpaths \aformulabis_1 \until \aformulabis_2 \not \in \aset
    }.
    \]
    By contrast to the behaviours induced by the sets $F_{\existspath \aformulabis_1 \until \aformulabis_2 }$'s, on any branch from a node $\anode$ satisfying
    $\forallpaths \aformulabis_1 \until \aformulabis_2$,
    the satisfaction of  $\aformulabis_1 \until \aformulabis_2$
    cannot be postponed foreover.
\end{itemize}
The correctness of $\aautomaton_{\aformula}$ is stated below.
\begin{lem}
  \label{lemma-correctness-ctlz-aut-to-form}
$\aformula$ is satisfiable iff  $\alang(\aautomaton_{\aformula}) \neq \emptyset$.
\end{lem}
The proof of Lemma~\ref{lemma-correctness-ctlz-aut-to-form}
can be found in Appendix~\ref{appendix-proof-lemma-correctness-ctlz-aut-to-form}.
It follows a standard pattern but we need to handle the constraints on data values, 
and we take advantage of the direction map $\iota$ in order to guide the satisfaction
of the qualitative constraints related to the temporal connective $\until$. 

Finally, let $\aautomatonbis_\aformula$ be the TCA obtained from $\aautomaton_\aformula$ as explained in Section~\ref{section-automata},
and recall that the size of $\aautomatonbis_\aformula$ is polynomial in the size of $\aautomaton_\aformula$. It is now easy to check that all quantitative properties given in the claim of the theorem are satisfied.
\end{proof}

The construction of the generalised B\"uchi TCA $\aautomaton_{\aformula}$ is mainly inspired
from~\cite[page 702]{Vardi&Wilke08} for \CTL formulae, 
but a few essential differences need to be pointed out here.
Obviously, our construction handles constraints, which is expected since $\CTL(\Zed)$ extends \CTL by adding
constraints between data values.
More importantly,~\cite[Section 5.2]{Vardi&Wilke08} uses another tree automaton model and
$F_{\existspath \aformulabis_1 \until \aformulabis_2 }$
and $F_{\forallpaths \aformulabis_1 \until \aformulabis_2 }$ herein differ from~\cite[page 702]{Vardi&Wilke08}
by the  use of the direction map that we believe necessary here.

\begin{figure}
  \begin{center}
  \scalebox{1}{
  \begin{tikzpicture}[node distance=0.8cm,level 1/.style={sibling distance=4cm},
                                      level 2/.style={sibling distance=2.7cm},
                                      level 3/.style={sibling distance=1.5cm},
                                      level 4/.style={sibling distance=1.5cm},
                                      level distance=1cm]
  \node (t1) {\textcolor{red}{$\existspath \always  \ \existspath \sometimes \
              \existspath (\avariable = 0)$}}
  child {node {\textcolor{red}{$\existspath \sometimes \
              \existspath (\avariable = 0)$}}
       child {node {\textcolor{red}{$\existspath \sometimes \ 
              \existspath (\avariable = 0)$}}
              child { node {$\vdots$}}
              child {node {$\existspath \sometimes \ 
                     \existspath (\avariable = 0)$}
                    child { node {$\vdots$}}
                    child {node {$\existspath \sometimes \ 
                     \existspath (\avariable = 0)$}
                     child { node {$\vdots$}}
                           child {node {\fbox{$\existspath (\avariable = 0)$}}
                           child { node {$\vdots$}}
                           child { node {$\vdots$}}
                           }
                     }
                     }
             }
       child {node {$\existspath \sometimes \ 
              \existspath (\avariable = 0)$}
              child { node {$\vdots$}}
              child {node {\fbox{$\existspath (\avariable = 0)$}}
                    child { node {$\vdots$}}
                    child { node {$\vdots$}}
                    }
             }}
  child {node {\fbox{$\existspath (\avariable = 0)$}}
         child { node {$\vdots$}}
         child { node {$\vdots$}}
         }

  ;       
\end{tikzpicture}
}
\end{center}
\caption{Tree-like Kripke structure for $
\existspath \always  \ \existspath  \sometimes \ \existspath
(\avariable = 0)$ 
  }
\label{figure-ctl}
\end{figure}

\begin{exa}
Herein, we illustrate the use of direction maps.
First, observe that direction maps are helpful for the satisfaction of $\existspath \mynext$-formulae in the definition
of $\aautomaton_{\aformula}$ (see the proof of Theorem~\ref{theorem-ctlz-to-tca})
and each set $F_{\existspath \aformulabis_1 \until \aformulabis_2 }$  of accepting locations defined by
    $\set{\pair{i}{\aset} \in \locations \mid
      i \neq \iota(\existspath \mynext \existspath \aformulabis_1 \until \aformulabis_2 ) \
      \mbox{or} \
      \aformulabis_2 \in \aset \
      \mbox{or} \
      \existspath \aformulabis_1 \until \aformulabis_2 \not \in \aset
    }$ involves the direction map $\iota$.

Let us consider the formula $\aformula = \existspath \always  \ \existspath \sometimes \ \existspath (\avariable = 0)$
that states the existence of a path $\apath$ along which at every
position $i$ there is a path $\apath_i$ leading to a position $k_i$
such that $\avariable$ is equal to
zero. Note that $\existspath \sometimes \ \existspath (\avariable = 0)$ is a formula of the form
$\existspath \aformulabis_1 \until \aformulabis_2$ and therefore  would generate
a set $F_{\existspath \sometimes \ \existspath (\avariable = 0)}$ of accepting locations.
Let us explain why it is not adequate to define $F_{\existspath \sometimes \ \existspath (\avariable = 0)}$
by $\set{\pair{i}{\aset} \in \locations \mid
      \existspath (\avariable = 0) \in \aset \
      \mbox{or} \
      \existspath \sometimes \ \existspath (\avariable = 0) \not \in \aset
}$, i.e. without the use of the direction map $\iota$.

In Figure~\ref{figure-ctl}, we present a tree-like infinite Kripke structure
satisfying $\aformula$ on its root node.  Each node is labelled by a formula that holds
on it, witnessing the satisfaction of $\aformula$ at the root. Framed nodes are the only ones satisfying
the subformula $\existspath (\avariable = 0)$. 
For the satisfaction of $\aformula$ at the root, the above mentioned path
$\apath$ can be the leftmost branch, and each $\apath_i$ is the rightmost branch from
$\apath(i)$ with $k_i = i+1$. Observe that for all nodes along $\apath$ (in red in a coloured version of the document), 
the formula $\existspath (\avariable = 0)$ does not hold and the formula $\existspath \sometimes \ \existspath (\avariable = 0)$
holds. However, infinitely often on the path $\apath$, a node on $\apath$ is the leftmost child of its parent node
(and therefore it is not the rightmost child of its parent node). This is fine as soon as
the satisfaction of $\existspath \sometimes \ \existspath (\avariable = 0)$ is witnessed along
a branch (which is not the leftmost branch, actually the rightmost branch in this example). 

Hence, considering the structure in Figure~\ref{figure-ctl} as a run for $\aautomaton_{\aformula}$
in which instead of labelling the nodes by pairs $\pair{i}{\aset}$, we have just represented a distinguished
subformula, we can check that this leads to an accepting run assuming that
the direction map $\iota$ is such that $\iota(\existspath \mynext  \ \existspath \sometimes \ 
\existspath (\avariable = 0))$
 defined as the greatest direction, usually represented as the rightmost direction. 
Along the leftmost branch, no  element in $\set{\pair{i}{\aset} \in \locations \mid
      \existspath (\avariable = 0) \in \aset \
      \mbox{or} \
      \existspath \sometimes \ \existspath (\avariable = 0) \not \in \aset
}$ is visited infinitely often, whereas it does for  $F_{\existspath \sometimes \ \existspath (\avariable = 0)}$
involving the direction map $\iota$. Indeed, all the nodes along the leftmost branch have indices
different from $\iota(\existspath \mynext  \ \existspath \sometimes \ 
\existspath (\avariable = 0))$. 
\end{exa}
The following theorem is one of our main results. 
\begin{thm}
    \label{theorem-ctlz}
  The satisfiability problem for $\CTL(\Zed)$ is \exptime-complete.
\end{thm}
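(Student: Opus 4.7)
The plan is to combine the results established earlier in the paper into a direct reduction pipeline from $\CTL(\Zed)$ satisfiability to the nonemptiness problem for tree constraint automata. First, given a $\CTL(\Zed)$ formula $\aformula$, I would apply Proposition~\ref{proposition-simple-form} to obtain in polynomial time an equisatisfiable formula $\aformula'$ in simple form. Then I would invoke Theorem~\ref{theorem-ctlz-to-tca} to build the TCA $\aautomatonbis_{\aformula'}$ with $\aformula'$ satisfiable iff $\alang(\aautomatonbis_{\aformula'}) \neq \emptyset$. Finally, I would apply Theorem~\ref{theorem-exptime-tca} (nonemptiness for TCA is in \exptime) to decide $\alang(\aautomatonbis_{\aformula'}) \neq \emptyset$, thereby deciding the satisfiability of the original $\aformula$.

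The key delicate point is that the TCA $\aautomatonbis_{\aformula'}$ is itself of exponential size in $|\aformula|$, so a naive invocation of Theorem~\ref{theorem-exptime-tca} would yield a double-exponential bound. The main obstacle is therefore to ensure that the exponential blow-ups do not compose. For this I would use the finer complexity statement of Lemma~\ref{lemma-exptime-tca}: nonemptiness can be decided in time
\[
R_1\bigl(\card{\locations} \times \card{\delta} \times \maxconstraintsize{\aautomatonbis_{\aformula'}} \times \card{\aalphabet} \times R_2(\beta)\bigr)^{\mathcal{O}(R_2(\beta) \times R_3(\degree))}
\]
for polynomials $R_1, R_2, R_3$. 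According to Theorem~\ref{theorem-ctlz-to-tca}, $\beta$ and $\degree$ are bounded by $\size{\aformula}$ (hence polynomial in $|\aformula|$), $\aalphabet$ is a singleton, $\maxconstraintsize{\aautomatonbis_{\aformula'}}$ is polynomial in $\size{\aformula}$, and the number of locations/transitions is at most $2^{\mathcal{O}(P(\size{\aformula}))}$. The base of the exponentiation is therefore singly exponential in $|\aformula|$, and the exponent is only polynomial in $|\aformula|$. The total runtime is thus of the form $(2^{p(|\aformula|)})^{\mathcal{O}(q(|\aformula|))} = 2^{\mathcal{O}(p(|\aformula|) \cdot q(|\aformula|))}$, which remains singly exponential in $|\aformula|$. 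This yields the desired \exptime upper bound.

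For the lower bound, I would simply note that $\CTL$ (without constraints) is already known to be \exptime-hard for satisfiability~\cite{Emerson&Halpern86}, and $\CTL(\Zed)$ is a syntactic extension: any classical $\CTL$ formula with propositional variables can be encoded using atomic constraints of the form $\existspath(\avariable = 0)$ as sketched in Section~\ref{section-ctlstar}. Hence $\satproblem{\CTL(\Zed)}$ inherits \exptime-hardness, completing the proof of \exptime-completeness.
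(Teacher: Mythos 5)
Your proposal is correct and follows essentially the same route as the paper: reduce to simple form via Proposition~\ref{proposition-simple-form}, translate to a TCA via Theorem~\ref{theorem-ctlz-to-tca}, and then invoke the refined time bound of Lemma~\ref{lemma-exptime-tca} (rather than the bare statement of Theorem~\ref{theorem-exptime-tca}) so that the exponential size of the automaton enters only the base of the exponentiation while the exponent stays polynomial in $\beta+\degree \leq \size{\aformula}$. The hardness argument by inheritance from \CTL is also exactly the paper's.
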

\begin{proof}
\exptime-hardness is inherited from \CTL. 
For \exptime-membership, let $\aformulabis$ be a $\CTL(\Zed)$ formula. 
First, we construct in polynomial-time from $\aformulabis$ a formula $\aformula$ 
in \emph{simple form} such that $\aformulabis$ is satisfiable iff $\aformula$ is satisfiable (see Proposition~\ref{proposition-simple-form}). 
Second, using Theorem~\ref{theorem-ctlz-to-tca}, we construct from  $\aformula$ the TCA
  $\aautomatonbis_{\aformula}=(\locations,\aalphabet,\degree,\beta,\locations_\init,\delta,F)$ such that $\aformula$ is satisfiable iff $\alang(\aautomatonbis_{\aformula}) \neq \emptyset$ 
  and $\aautomatonbis_{\aformula}$ satisfying the following quantitative properties: 
  \begin{itemize} 
\item the degree $\degree$ and the number of variables $\beta$ are bounded by $\size{\aformula}$,
\item $\card{\locations}$ is bounded by $(\degree \times 2^{\size{\aformula}}) \times (\size{\aformula} + 1)$, 
\item$ \card{\delta}$ is in $2^{\mathcal{O}(P(\size{\aformula}))}$ for some polynomial $P(\cdot)$, 
\item $\card{\aalphabet}=1$, and 
      $\maxconstraintsize{\aautomatonbis_{\aformula}}$ is quadratic in $\size{\aformula}$. 
\end{itemize}
By Lemma~\ref{lemma-exptime-tca},  the nonemptiness problem for TCA can be solved
in time
\[
  R_1\big(\card{\locations} \times \card{\delta}
  \times \maxconstraintsize{\aautomatonbis_{\aformula}} \times
  \card{\aalphabet} \times R_2(\beta)\big)^{\mathcal{O}(R_2(\beta) \times R_3(\degree))}. 
  \]
Since the transition relations of the automata
$\ancautomaton_{\mbox{\tiny cons($\aautomatonbis_{\aformula}$)}}$
and $\starautomaton$ can be built
  in polynomial-time, we get that
  nonemptiness of $\alang(\aautomatonbis_{\aformula})$ can be solved in exponential-time.
\end{proof}

\paragraph{Results for other domains.}
Let $\Nat$ be the concrete domain $\pair{\Nat}{<,=,(=_{\adatum})_{\adatum \in \Nat}}$ for which we can
also
show that 
nonemptiness of TCA with constraints interpreted
on $\Nat$ has the same complexity as for TCA with constraints interpreted on $\Zed$.
Indeed, the nonemptiness problem for TCA on $\Nat$ can be easily reduced in polynomial-time
to the nonemptiness problem for TCA on $\Zed$.
Let $\CTL(\Nat)$ be the variant
of $\CTL(\Zed)$ with constraints interpreted on $\Nat$.
As a corollary, \satproblem{$\CTL(\Nat)$} is \exptime-complete.
With the concrete domain $\pair{\Rat}{<,=,(=_{\adatum})_{\adatum \in \Rat}}$,
all the trees in $\alang(\locautomaton)$ are satisfiable
(no need to intersect $\locautomaton$ with a hypothetical $\starautomaton$, 
see e.g.~\cite{Lutz01,Balbiani&Condotta02,Demri&DSouza07,Gascon09}), and therefore
$\satproblem{\CTL(\Rat)}$ is 
in \exptime too.
TCA can  be also used to show that
the concept satisfiability  w.r.t. general TBoxes for the
description logic $\dlogic$ is in \exptime~\cite{Labai&Ortiz&Simkus20,Labai21},
see the details in~\cite[Section 5.2]{Demri&Quaas23}. 

\section{Complexity of the Satisfiability Problem for the Logic $\CTLStar(\Zed)$}
\label{section-ctlstarz}
In this section, we show that $\satproblem{\CTLStar(\Zed)}$ can be solved in \twoexptime by using
Rabin TCA (see e.g. Section~\ref{section-TCA-Rabin}). 
We follow the automata-based approach for \CTLStar, see e.g.~\cite{Emerson&Sistla84,Emerson&Jutla00}. 
Besides checking that 
the essential steps for \CTLStar can be lifted to $\CTLStar(\Zed)$, we also prove that computationally
we are in a position to provide an optimal complexity upper bound.

\subsection{$\CTLStar(\Zed)$ formulae in special form}
\label{section-ctlstarz-special-form}
We start by establishing a special form for $\CTLStar(\Zed)$ formulae from which Rabin
TCA will be defined, following ideas from~\cite{Emerson&Sistla84} for \CTLStar. 
A  $\CTLStar(\Zed)$  state formula $\aformula$ is in \defstyle{special form} if
it has the form below
\begin{equation}
  \label{equation-special-form}
  \tag{SF} 
\existspath \ (\avariable_1 = 0) \ \wedge \ 
\big(
\bigwedge_{i \in \interval{1}{\degree-1}}
\forallpaths \always \existspath \ \apathformula_i
\big)
\ \wedge \
\big(
\bigwedge_{j \in \interval{1}{\degree'}}
\forallpaths \ \apathformula_j'
\big),
\end{equation}
where the $\apathformula_i$'s and the $\apathformula_j'$'s
are $\LTL(\Zed)$ formulae in simple form (see Section~\ref{section-introduction-temporal-logics}),
for some $\degree \geq 1$, $\degree' \geq 0$. 
We prove below that we can restrict ourselves to $\CTLStar(\Zed)$ state formulae in special form.
Indeed, formulae in simple form are in negation normal form and its terms are only from
$\myterms{\leq 1}{\VAR}$. Formulae in special form are not only in simple form but also the subformulae
are normalised as described in~(\ref{equation-special-form}). 
In order to restrict ourselves to
formulae in special form, we first adapt the proof of  Proposition~\ref{proposition-simple-form}
to $\CTLStar(\Zed)$, leading to Proposition~\ref{proposition-simple-form-ctlstarz}
and this is to handle only simple forms. Then, Proposition~\ref{proposition-ctlstarz-special-form}
is dedicated to formulae in special form. 
The proof of Proposition~\ref{proposition-simple-form-ctlstarz} below  
can be found
in Appendix~\ref{appendix-proof-proposition-simple-form-ctlstarz}.

\begin{prop}
\label{proposition-simple-form-ctlstarz}
  For every $\CTLStar(\Zed)$ state formula $\aformula$, one can construct in polynomial-time
  in the size of $\aformula$ 
  a $\CTLStar(\Zed)$ formula $\aformula'$ in simple form such that
  $\aformula$ is satisfiable iff  $\aformula'$ is satisfiable. 
\end{prop}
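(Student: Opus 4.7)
The plan is to follow the two-stage strategy of Proposition~\ref{proposition-simple-form}: first establish a tree model property for $\CTLStar(\Zed)$, then apply a Scott-style renaming so that atomic constraints use only terms in $\myterms{\leq 1}{\VAR}$. The tree model property will be obtained by routinely unravelling any Kripke model of $\aformula$ at a satisfying world into a total tree Kripke structure that still satisfies $\aformula$, since $\CTLStar(\Zed)$ is bisimulation-invariant in the appropriate sense. Conversion to negation normal form using $\forallpaths$, $\release$ and $\vee$ as primitive will proceed by the usual dualities; the only atomic-level subtlety is $\neg =_{\adatum}(\aterm)$, which I would rewrite as $(\aterm < \avariable_{\adatum}) \vee (\avariable_{\adatum} < \aterm)$ after introducing for each constant $\adatum$ appearing in $\aformula$ a fresh variable $\avariable_{\adatum}$ bound by a conjoined invariant $\forallpaths \always =_{\adatum}(\avariable_{\adatum})$. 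All these pieces live in $\myterms{\leq 1}{\VAR}$ and only cost a polynomial blow-up.

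The core step is to eliminate, for every variable $\avariable$, all terms $\mynext^i \avariable$ with $i \geq 2$. Let $d \leq \size{\aformula}$ be the maximal such depth appearing in $\aformula$. For each original $\avariable$ and each $j \in \interval{1}{d}$ I would introduce a fresh \emph{backward} variable $\avariable_{(-j)}$ together with the state-level invariant
\[
\mathrm{Inv}_{\avariable,j} \; \egdef \; \forallpaths \always \ (\mynext \avariable_{(-j)} = \avariable_{(-j+1)}),
\]
with the convention $\avariable_{(0)} \egdef \avariable$; each $\mathrm{Inv}_{\avariable,j}$ is already in simple form since $\mynext \avariable_{(-j)} \in \myterms{\leq 1}{\VAR}$. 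Every atomic constraint $\acons$ of maximal nesting depth $d_{\acons} \geq 2$ would then be rewritten as $\mynext^{d_{\acons}} \acons^{\sharp}$, where $\acons^{\sharp}$ is obtained by replacing each $\mynext^i \avariable$ (for $0 \leq i \leq d_{\acons}$) with $\avariable_{(-(d_{\acons}-i))}$; the resulting $\acons^{\sharp}$ uses only depth-$0$ terms, and $\mynext^{d_{\acons}}$ unfolds into $d_{\acons}$ nested $\mynext$ temporal operators. Writing $\aformula^{\sharp}$ for the result of performing this substitution on every constraint of $\aformula$, the final simple-form formula will be
\[
\aformula' \; \egdef \; \aformula^{\sharp} \wedge \bigwedge_{\avariable,\, j} \mathrm{Inv}_{\avariable,j} \wedge \bigwedge_{\adatum} \forallpaths \always =_{\adatum}(\avariable_{\adatum}),
\]
which is clearly polynomial in $\size{\aformula}$.

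For equi-satisfiability, a short induction on $j$ using the invariants will give $\avariable_{(-j)}(\apath(n+j)) = \avariable(\apath(n))$ for every path $\apath$ and every position $n$, from which I would derive that $\acons$ and $\mynext^{d_{\acons}} \acons^{\sharp}$ are equivalent along every path of any model satisfying all $\mathrm{Inv}_{\avariable,j}$. The $(\Rightarrow)$ direction then uses the tree model property: take a tree model of $\aformula$ and extend its valuation by setting $\avariable_{(-j)}$ at a node $\anode$ to the $\avariable$-value of the ancestor of $\anode$ at depth $j$ (and arbitrarily when that ancestor does not exist); the invariants hold by construction. The $(\Leftarrow)$ direction simply restricts any model of $\aformula'$ to the original variables. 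The main obstacle I anticipate is justifying why the invariant must be backward-pointing rather than forward-pointing: a forward version $\forallpaths \always (\avariable' = \mynext \avariable)$ would force $\avariable$ to agree on all siblings of every node and destroy branching, whereas the backward version remains consistent in any tree structure because every non-root node has a unique parent. This is the essential departure from the $\LTL(\Zed)$ flattening and the delicate point to argue carefully in the full proof.
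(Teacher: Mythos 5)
Your proposal is correct and follows essentially the same route as the paper's proof: unravelling to get the tree model property, then flattening deep terms $\mynext^i\avariable$ via backward history variables tied together by an $\forallpaths\always$ invariant and rewriting each constraint $\acons$ of depth $M$ as $\mynext^{M}$ applied to its depth-$0$ substitution instance. Your closing observation about why the invariant must point backward (a forward version would collapse siblings in a branching structure) is exactly the justification the paper gives; the extra machinery for $\neg=_{\adatum}(\aterm)$ is unnecessary, since constraints already admit negated atoms, but it is harmless.
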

Hence the size of $\aformula'$ is polynomial in the size of $\aformula$. 
Using this, we establish the property that allows us to restrict ourselves
to $\CTLStar(\Zed)$ state formulae in special form only.  
\begin{prop}
\label{proposition-ctlstarz-special-form}
For every $\CTLStar(\Zed)$ state formula $\aformula$,
one can construct in polynomial time in the size of $\aformula$ a $\CTLStar(\Zed)$ formula
$\aformula'$ in special form such that $\aformula$ is satisfiable
iff 
$\aformula'$ is satisfiable.
\end{prop}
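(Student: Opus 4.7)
The plan is to adapt the standard Emerson--Sistla reduction for $\CTLStar$ to the concrete-domain setting, exploiting the simulation of propositional atoms via $\existspath(\avariable=0)$-style constraints that was noted in Section~\ref{section-ctlstar}. First, by Proposition~\ref{proposition-simple-form-ctlstarz}, I may replace $\aformula$ by an equisatisfiable formula in simple form at polynomial overhead. Let $\mathrm{SSub}(\aformula)$ denote the set of state subformulae of $\aformula$ whose outermost connective is a path quantifier. For each $\aformulabis \in \mathrm{SSub}(\aformula) \cup \set{\aformula}$, introduce a fresh variable $\avariable_{\aformulabis}$, and define the translation $t(\cdot)$ that replaces every occurrence of such an $\aformulabis$ by the atomic constraint $\avariable_{\aformulabis}=0$ when sitting inside a path-formula context, and by the state formula $\existspath(\avariable_{\aformulabis}=0)$ at the state level. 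Since the value of $\avariable_{\aformulabis}$ at a world does not depend on any path, both embeddings behave as pure state labels, and each $t(\apathformula)$ is an $\LTL(\Zed)$ formula in simple form.

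Next I would assemble $\aformula'$ as the conjunction of the initial marker $\existspath(\avariable_{\aformula}=0)$, playing the role of the first conjunct of~\eqref{equation-special-form}, together with consistency clauses asserting that the equivalence $(\avariable_{\aformulabis}=0) \leftrightarrow \aformulabis$ holds at every reachable world, for each $\aformulabis \in \mathrm{SSub}(\aformula) \cup \set{\aformula}$. For $\aformulabis = \existspath \apathformula$, the equivalence splits into the implications $(\avariable_{\aformulabis}=0) \to \existspath t(\apathformula)$ and $(\avariable_{\aformulabis} \neq 0) \to \forallpaths \neg t(\apathformula)$. The first yields $\forallpaths \always (\avariable_{\aformulabis} \neq 0 \vee \existspath t(\apathformula))$, which by the equivalence $\forallpaths \always (\aformulater \vee \existspath \apathformulabis) \equiv \forallpaths \always \existspath (\aformulater \vee \apathformulabis)$, valid when $\aformulater$ depends only on the current world, fits the $\forallpaths \always \existspath \apathformula_i$ shell. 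The second yields $\forallpaths \always (\avariable_{\aformulabis}=0 \vee \forallpaths \neg t(\apathformula))$, which via $\forallpaths \always \forallpaths \apathformulabis \equiv \forallpaths \always \apathformulabis$ reduces to $\forallpaths \always (\avariable_{\aformulabis}=0 \vee \neg t(\apathformula))$, i.e.\ a $\forallpaths \apathformula_j'$ clause with $\apathformula_j' = \always (\avariable_{\aformulabis}=0 \vee \neg t(\apathformula))$. The case $\aformulabis = \forallpaths \apathformula$ is entirely symmetric, and the top-level equivalence $(\avariable_{\aformula}=0) \leftrightarrow t(\aformula)$ is a boolean combination of constraint atoms and thus enters directly as a pair of $\forallpaths \apathformula_j'$ clauses.

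To conclude, I would verify equisatisfiability and the polynomial size bound. For soundness, from a total Kripke structure satisfying $\aformula$ at some world I extend the valuation by setting $\avariable_{\aformulabis}(w)=0$ iff $w \models \aformulabis$, which makes all consistency clauses and the first conjunct hold. Completeness follows by induction on $\aformulabis$: at every world reachable from the evaluation point the marker $\avariable_{\aformulabis}=0$ coincides with the truth of $\aformulabis$, since the $\forallpaths \always$-wrapped consistency clauses fire at every reachable world. Each element of $\mathrm{SSub}(\aformula)$ contributes one new variable and a constant number of conjuncts whose size is linear in $\size{\aformulabis}$, so both the number of conjuncts $\degree+\degree'$ and $\size{\aformula'}$ are polynomial in $\size{\aformula}$, and the construction runs in polynomial time. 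The main obstacle is the careful bookkeeping that each consistency clause can be pushed into exactly one of the two shells of~\eqref{equation-special-form}; this rests on the two commutation equivalences above, which hold precisely because the markers $\avariable_{\aformulabis}=0$ depend only on the current world and therefore behave like propositional atoms when seen from within a path formula.
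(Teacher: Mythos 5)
Your proposal is correct and follows essentially the same route as the paper: rename each path-quantified state subformula by a fresh marker variable, tie the marker to the subformula with a $\forallpaths\always$-guarded equivalence, and split each equivalence into one $\forallpaths\always\existspath\,\apathformula_i$ conjunct and one $\forallpaths\,\apathformula_j'$ conjunct via exactly the commutation equivalences the paper lists (the paper just performs the renamings iteratively innermost-first rather than simultaneously). The only cosmetic point to add is that $\neg t(\apathformula)$ must be pushed into negation normal form so that the resulting clauses are $\LTL(\Zed)$ formulae in simple form, as the paper does with its $\overline{\neg\apathformula_i}$ notation.
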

So $\aformula'$ is also of polynomial size in the size of $\aformula$.
The proof of Proposition~\ref{proposition-ctlstarz-special-form}
can be found in Appendix~\ref{appendix-proof-proposition-ctlstarz-special-form}.
It is similar to the proof for \CTLStar except that we need to handle constraints.

Let us now state a tree model property of formulae in special form, with a strict discipline on the witness paths. 
Given a tree $\atree: \interval{0}{\degree-1} \to \Zed^{\beta}$ with $\atree
\models \forallpaths \always \existspath \ \apathformula$, we say that
$\atree$ \defstyle{satisfies $\forallpaths \always \existspath \ \apathformula$
  via the direction $i$},
 for some $i \in \interval{1}{\degree-1}$, iff 
 for all nodes $\anode\in\interval{0}{\degree-1}^*$, 
we have 
$\aword \models\apathformula$, where $\aword:\Nat\to\Zed^\beta$ is defined by
$\aword(0) \egdef \atree(\anode)$ and $\aword(j) \egdef
\atree(\anode \cdot i \cdot 0^{j-1})$ for all $j\geq 1$. 
Proposition~\ref{proposition-tmp-ctlstarz-direction} below
is a counterpart of~\cite[Theorem 3.2]{Emerson&Sistla84} but for
$\CTLStar(\Zed)$ instead of \CTLStar, see also the variant~\cite[Lemma 3.3]{Gascon09}. 

\begin{prop}
\label{proposition-tmp-ctlstarz-direction}
Let $\aformula$ be a $\CTLStar(\Zed)$ state formula in special form
(with the notations of equation~\eqref{equation-special-form} on
page~\pageref{equation-special-form}) built over
the variables $\avariable_1,\dots,\avariable_\beta$.
  The formula $\aformula$ is satisfiable iff there is a tree $\atree:\interval{0}{\degree-1}
  \rightarrow \Zed^{\beta}$ such that $\atree, \varepsilon \models \aformula$
  and for each $i \in \interval{1}{\degree-1}$, $\atree$
  satisfies $\forallpaths \always \existspath \ \apathformula_i$
  via the direction $i$, that is, if $\atree, \anode \models \existspath \ \apathformula_i$, then $ \apathformula_i$
  is satisfied on the path  $\anode \cdot i \cdot 0^\omega$. 
\end{prop}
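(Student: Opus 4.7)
The ($\Leftarrow$) direction is immediate since the described $\atree$ is itself a total tree Kripke structure satisfying $\aformula$ at the root. For ($\Rightarrow$), I assume $\aks, w_0 \models \aformula$ for some total Kripke structure $\aks = \triple{\worlds}{\arelation}{\avaluation}$. The conjunct $\existspath(\avariable_1 = 0)$ forces $\avaluation(w_0, \avariable_1) = 0$, since $\avariable_1 = 0$ viewed as a path formula only constrains position $0$. The plan is to define a world-labelling $w : \interval{0}{\degree-1}^* \to \worlds$ with the invariants (i) $w(\varepsilon) = w_0$; (ii) $(w(\anode), w(\anode \cdot k)) \in \arelation$ for every $\anode$ and $k \in \interval{0}{\degree-1}$; and (iii) for every $\anode$ and every $i \in \interval{1}{\degree-1}$, the sequence $w(\anode), w(\anode \cdot i), w(\anode \cdot i \cdot 0), w(\anode \cdot i \cdot 0^2), \ldots$ is a path of $\aks$ that satisfies $\apathformula_i$. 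The desired tree is then $\atree(\anode) \egdef (\avaluation(w(\anode), \avariable_1), \ldots, \avaluation(w(\anode), \avariable_{\beta}))$.

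The key observation is that every node $\anode \neq \varepsilon$ decomposes uniquely either as $\anode = 0^k$ or as $\anode = \anode_1 \cdot j \cdot 0^k$ with $j \in \interval{1}{\degree-1}$ and $k \geq 0$, so the nodes partition into the root, the $0$-spine, and descendants starting with a non-zero direction. I build $w$ by recursion on the length of $\anode$: set $w(\varepsilon) = w_0$; for the $0$-spine, use totality of $\aks$ and put $w(0^{k+1})$ to be an arbitrary successor of $w(0^k)$; for each already-defined $w(\anode_1)$ and each $j \in \interval{1}{\degree-1}$, invoke the conjunct $\forallpaths \always \existspath \apathformula_j$ together with the reachability of $w(\anode_1)$ from $w_0$ in $\aks$ (which holds by invariant (ii)) to obtain a witness path $\apath^{j,\anode_1} = (u_0, u_1, u_2, \ldots)$ with $u_0 = w(\anode_1)$ and $\aks, \apath^{j,\anode_1} \models \apathformula_j$, and declare $w(\anode_1 \cdot j \cdot 0^k) \egdef u_{k+1}$ for all $k \geq 0$. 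The partition structure ensures no node is assigned twice, so $w$ is well-defined, and invariant (ii) is maintained because each $u_{k+1}$ is an $\arelation$-successor of $u_k$.

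It then remains to verify that $\atree, \varepsilon \models \aformula$ and that $\atree$ satisfies each $\forallpaths \always \existspath \apathformula_i$ via the direction $i$. The latter follows directly from invariant (iii), since the valuations along the $\atree$-path $\anode, \anode \cdot i, \anode \cdot i \cdot 0, \ldots$ coincide by construction with those along $\apath^{i,\anode}$ in $\aks$, and satisfaction of $\LTL(\Zed)$ path formulae depends only on these valuations. This in particular yields $\atree, \anode \models \existspath \apathformula_i$ for every $\anode$, hence $\atree, \varepsilon \models \forallpaths \always \existspath \apathformula_i$. The conjunct $\existspath(\avariable_1 = 0)$ holds because $\avaluation(w_0, \avariable_1) = 0$ transfers to $\atree(\varepsilon)$. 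Finally, any infinite $\atree$-path from $\varepsilon$ maps through $w$ to an infinite $\aks$-path from $w_0$ (by invariant (ii)), so $\aks, w_0 \models \forallpaths \apathformula_j'$ entails $\atree, \apath \models \apathformula_j'$, settling the $\forallpaths \apathformula_j'$ conjuncts. The main obstacle is the bookkeeping for the inductive definition of $w$: one must argue carefully that the three regions (root, $0$-spine, non-zero-initial suffixes) partition $\interval{0}{\degree-1}^*$ so that values scheduled by different witness-path choices never collide, and that the reachability requirement supporting invariant (iii) is preserved at each step.
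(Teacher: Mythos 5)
Your proposal is correct and follows essentially the same route as the paper's proof: the paper also builds an auxiliary map $g:\interval{0}{\degree-1}^*\to\worlds$ from a satisfying Kripke structure, assigns the $0$-spine using totality, assigns each block $\anode\cdot j\cdot 0^{k}$ ($j\neq 0$) from a witness path for $\existspath\,\apathformula_j$ at the reachable world $g(\anode)$, and then transfers the $\forallpaths\,\apathformula'_j$ conjuncts by observing that every branch of the tree projects to a path of the original structure. Your explicit partition of nodes into the root, the $0$-spine, and suffixes $\anode_1\cdot j\cdot 0^{k}$ is just a cleaner presentation of the paper's lexicographic bookkeeping.
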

\begin{proof} Assume that $\aformula$ has the form below:
  $$
\existspath \ (\avariable_1 = 0) \ \wedge \ 
\big(
\bigwedge_{i \in \interval{1}{\degree-1}}
\forallpaths \always \existspath \ \apathformula_i
\big)
\ \wedge \
\big(
\bigwedge_{j \in \interval{1}{\degree'}}
\forallpaths \ \apathformula_j'
\big).
$$

The direction from right to left is trivial. 
So let us prove the direction from left to right: suppose that
$\aformula$ is satisfiable.
Let $\kripke=\triple{\worlds}{\arelation}{\avaluation}$ be a  Kripke structure,
$\aworld_\init\in\worlds$ be a world in $\kripke$ such that $\kripke,\aworld_\init\models\aformula$.
Since $\aformula$ contains only the variables in $\avariable_1,\dots,\avariable_\beta$, the map
$\avaluation$ can be restricted to the variables among $\avariable_1,\dots,\avariable_\beta$.
Furthermore,  below, we can represent $\avaluation$ as a map $\worlds \to \Zed^{\beta}$ such that
$\avaluation(\aworld)(i)$ for some $i \in \interval{1}{\beta}$ is understood as
the value of the variable $\avariable_i$ on $\aworld$. 
We construct a tree $\atree: \interval{0}{\degree-1}^* \to \Zed^\beta$ such that
$\atree \models \aformula$
  and $\atree$ satisfies $\forallpaths \always \existspath \ \apathformula_i$
  via $i$, for each $i \in \interval{1}{\degree-1}$. 

We introduce an auxiliary map $g:\interval{0}{\degree-1}^*\to\worlds$ such that 
$g(\varepsilon)\egdef\aworld_\init$, 
$\atree(\varepsilon)\egdef \avaluation(g(\varepsilon))$. 
More generally, we require that for all nodes $\anode\in\interval{0}{\degree-1}^*$,
we have $\atree(\anode)=\avaluation(g(\anode))$, 
and if $\anode=j_1\cdot  \dots \cdot j_k$,
then 
there exists a finite path $g(\varepsilon) g(j_1) g(j_1j_2) \dots g(\anode)$ in $\kripke$. 
Note that this is satisfied by $\varepsilon$, too.
The definition of $g$ is performed by picking the smallest node $\anode \cdot j\in \interval{0}{\degree-1}^*$ with respect to the lexicographical ordering such that $g(\anode)$ is defined and $g(\anode\cdot j)$ is undefined. 
So let $\anode \cdot j$ be the smallest node $\anode \cdot j\in \interval{0}{\degree-1}^*$ such that $g(\anode)$ is defined, 
$g(\anode\cdot j)$ is undefined, 
and 
if $\anode=j_1\cdot  \dots \cdot j_k$,
then 
there exists a finite path $g(\varepsilon) g(j_1) g(j_1j_2) \dots g(\anode)$ in $\kripke$.

If $j = 0$, then since $\kripke$ is total, there is an infinite path
$\apath = \aworld_0 \aworld_1 \aworld_2 \cdots$
  starting from  $g(\anode)$. For all $k \geq 1$, we set $g(\anode \cdot 0^k) \egdef \aworld_k$
  and $\atree(\anode \cdot 0^k) \egdef \avaluation(\aworld_k)$.

Otherwise ($j \neq 0$), since $g(\anode)$ is a world on a path starting from $\aworld_\init$, 
we obtain $\kripke, g(\anode) \models \existspath \apathformula_j$ by assumption.  
So there exists some infinite path $\apath=\aworld_0\aworld_1\aworld_2\dots$ starting from $g(\anode)$ such that $\kripke,\pi\models\apathformula_j$. 
Define $g(\anode \cdot j \cdot 0^k)\egdef \aworld_{k+1}$, and 
$\atree(\anode\cdot j\cdot 0^k) \egdef \avaluation(\aworld_{k+1})$ for all $k\geq 0$. Note that this implies that $\atree(\anode)\atree(\anode\cdot j) \atree(\anode\cdot j \cdot 0) \atree(\anode\cdot j\cdot 0^2)\dots $
satisfies $\apathformula_j$ and therefore $\atree, \anode \models \existspath \ \apathformula_j$.
By construction, we get $\atree \models \forallpaths \always \existspath \ \apathformula_j$
for all $j \in \interval{1}{\degree-1}$. 
Moreover, by construction of $\atree$,
for all infinite paths $j_1 j_2 \cdots \in \interval{0}{\degree-1}^{\omega}$,
$g(\varepsilon) g(j_1) g(j_1 j_2) \cdots$ is an infinite path from $g(\varepsilon)$,
consequently $\atree \models \existspath \ (\avariable_1 = 0) \wedge
\bigwedge_{j \in \interval{1}{\degree'}}
\forallpaths \ \apathformula_j'$ too.
Hence, $\atree \models \aformula$
  and $\atree$ satisfies $\forallpaths \always \existspath \ \apathformula_i$
  via the direction $i$, for each $i \in \interval{1}{\degree-1}$. 
\end{proof}
Proposition~\ref{proposition-tmp-ctlstarz-direction} justifies our restriction to infinite
trees and to TCA in the rest of this section.  
Next, we will show how to translate  $\CTLStar(\Zed)$ formulae in special form as in (SF) into TCA accepting their corresponding infinite tree models. 
We will start by giving the construction of \emph{word} constraint automata for the underlying $\LTL(\Zed)$ formulas.

\subsection{Constructing Automata for $\LTL(\Zed)$ Formulas}
\label{section-automata-from-ltl-formulas}
In this section, we 
translate  $\LTL(\Zed)$ formulae in  simple form into equivalent \emph{word} constraint  automata
(TCA of degree $1$). 
Adapting the standard automata-based approach for \LTL~\cite{Vardi&Wolper94}, we can show the following proposition. 
 
\begin{prop}
  \label{proposition-ltlz}
  Let $\apathformula$ be an $\LTL(\Zed)$ formula in simple form.
  There is a word constraint automaton $\aautomaton_\apathformula$ such that
  $\alang(\aautomaton_\apathformula) = 
  \set{\aword: \Nat \to \Zed^{\beta} \mid \aword \models \apathformula}
  $,
  and  the following conditions hold.
  \begin{description}
  \item[(I)] The number of locations in $\aautomaton_\apathformula$ is bounded by $\size{\apathformula} \times 2^{2 \times
    \size{\apathformula}}$. 
  \item[(II)] The cardinality of $\delta$ in $\aautomaton_\apathformula$ is in
    $2^{\mathcal{O}(P(\size{\apathformula}))}$ for some polynomial $P(\cdot)$.
  \item[(III)] The maximal size of a constraint in $\aautomaton_\apathformula$ is quadratic
    in $\size{\apathformula}$.
  \end{description}
\end{prop}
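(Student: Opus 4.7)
The plan is to adapt the classical Vardi--Wolper construction from \LTL to $\LTL(\Zed)$, where the essential difference is that atomic constraints, rather than atomic propositions, label the transitions of the automaton. Since $\apathformula$ is in simple form, every term occurring in $\apathformula$ is from $\myterms{\leq 1}{\VAR}$, so every atomic constraint naturally expresses a relationship between a current valuation and its successor valuation, which is exactly what the constraint label of a transition can carry.

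I would first define $\subf{\apathformula}$ to be closed under the standard subformula operation, plus closed under adding $\mynext(\apathformulabis_1 \until \apathformulabis_2)$ and $\mynext(\apathformulabis_1 \release \apathformulabis_2)$ whenever $\apathformulabis_1 \until \apathformulabis_2$ or $\apathformulabis_1 \release \apathformulabis_2$ belongs to it (giving $\card{\subf{\apathformula}} \leq 2\size{\apathformula}$). I would then call $\aset \subseteq \subf{\apathformula}$ \emph{propositionally consistent} using the usual Hintikka-style closure conditions (closure under $\wedge$, $\vee$, plus fixpoint unfoldings $\apathformulabis_1 \until \apathformulabis_2 \leadsto \apathformulabis_2 \vee (\apathformulabis_1 \wedge \mynext(\apathformulabis_1 \until \apathformulabis_2))$ and symmetrically for $\release$). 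I then construct a \emph{generalised} word constraint automaton $\aautomaton_{\apathformula}^{\mathit{gen}} = \triple{\locations',\set{\arbitraryletter},1,\beta}{\locations'_{\init},\delta'}{\mathcal{F}'}$ whose locations $\locations'$ are the propositionally consistent subsets of $\subf{\apathformula}$; $\locations'_{\init} \egdef \set{\aset \in \locations' \mid \apathformula \in \aset}$; and whose transitions have the form $(\aset, \arbitraryletter, (\acons_{\aset}, \aset'))$, where $\acons_{\aset}$ is the conjunction of all atomic constraints in $\aset$ (after rewriting any $\mynext\avariable$ as $\avariable'$), and $\aset'$ contains precisely $\aformulabis$ for every $\mynext \aformulabis \in \aset$, together with the propositional consistency closure. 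The generalised B\"uchi condition contains one set $F_{\apathformulabis_1 \until \apathformulabis_2}$ per until subformula, defined as $\set{\aset \mid \apathformulabis_2 \in \aset \, \text{or} \, \apathformulabis_1 \until \apathformulabis_2 \notin \aset}$, to forbid postponing until obligations indefinitely.

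Correctness (for every $\aword : \Nat \to \Zed^{\beta}$, $\aword \in \alang(\aautomaton_{\apathformula}^{\mathit{gen}})$ iff $\aword \models \apathformula$) would be established by standard induction on formula structure: from an accepting run we extract, for each position, the Hintikka-like labelling, then use a labelled inductive argument that every $\aformulabis \in \aset_n$ is satisfied by $\aword[n,+\infty)$; the constraint labels $\acons_{\aset}$ ensure that the atomic constraints in $\aset_n$ are satisfied on the pair $\pair{\aword(n)}{\aword(n+1)}$, and the generalised B\"uchi condition takes care of until eventualities. The converse is again routine: given a model, define $\aset_n \egdef \set{\aformulabis \in \subf{\apathformula} \mid \aword[n,+\infty) \models \aformulabis}$ and verify it yields an accepting run. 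Finally I apply the quadratic conversion from generalised TCA to TCA described at the end of Section~\ref{section-automata}.

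For the size analysis: $\card{\locations'} \leq 2^{\card{\subf{\apathformula}}} \leq 2^{2\size{\apathformula}}$, and the subsequent generalised-to-standard B\"uchi conversion multiplies this by at most $\card{\mathcal{F}'}+1 \leq \size{\apathformula}+1$, yielding bound (I); the transition relation has at most $\card{\locations'}^{2}$ tuples, which lies in $2^{\mathcal{O}(\size{\apathformula})}$, giving (II); and each constraint $\acons_{\aset}$ is a conjunction of at most $\size{\apathformula}$ atomic constraints each of size at most $\size{\apathformula}$, yielding (III). I do not expect a real obstacle here since the atomic constraints in simple form already live at the level of $\mathtt{T}^{\leq 1}(\VAR)$ and thus map verbatim onto the transition labels of a word constraint automaton; the only mildly delicate point is bookkeeping the interaction between the fixpoint unfolding of $\until$/$\release$ and the need for $\mynext$-prefixed formulae to remain inside $\subf{\apathformula}$, which is why I close $\subf{\apathformula}$ under the two extra $\mynext$-formulae at the outset.
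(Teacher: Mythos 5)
Your proposal is correct and follows essentially the same route as the paper's own proof in the appendix: the standard Vardi--Wolper closure/Hintikka-set construction, with transitions labelled by the conjunction of the atomic constraints in the source set, a generalised B\"uchi condition per $\until$-subformula, and the quadratic generalised-to-standard conversion to obtain the bounds (I)--(III). No meaningful difference to report.
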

In the proof of Proposition~\ref{proposition-ltlz} (see Appendix~\ref{appendix-proof-proposition-ltlz}), the construction
of $\aautomaton_\apathformula$ is similar to the construction from $\CTL(\Zed)$ formulae
by imposing $\degree =1$ and by disqualifying the notion of direction map because there is a
single direction.
In Proposition~\ref{proposition-ltlz} and later in several places, we consider that the
constraint automata accept trees of the form $\interval{0}{\degree-1}^* \to \Zed^{\beta}$ for some
$\degree, \beta \geq 1$ (i.e., by dropping the finite alphabet), assuming implicitly a singleton
alphabet. 
We remark that the automaton $\aautomaton_\apathformula$ may not be deterministic; 
for instance, for  $\apathformula$ equal to $\sometimes \always  \ (\avariable=\avariable')$ there
does not exist any \emph{deterministic} word constraint automaton with B\"uchi acceptance condition such that $\aword \models\apathformula$
iff $\aword \in \alang(\aautomaton_\apathformula)$. 
However, for handling $\CTLStar(\Zed)$ formulae of the form $\forallpaths \  \apathformula$, 
we need the automaton accepting the word models of $\apathformula$ to be deterministic. 
The forthcoming Theorem~\ref{theorem-size-safra-automaton} dedicated to the  determinisation of word constraint automata  (in Section \ref{section-ctlstarz-determinisation-safra}) 
and Proposition~\ref{proposition-ltlz} imply the following result. 
\begin{cor}
  \label{corollary-ltlz-rabin}
  Let $\apathformula$ be an $\LTL(\Zed)$ formula in simple form
  built over the variables
  $\avariable_1, \ldots, \avariable_{\beta}$ and the constants $\adatum_1, \ldots, \adatum_{\alpha}$.
  There exists a deterministic
  Rabin word constraint automaton $\aautomaton_\apathformula$  such that
  $\alang(\aautomaton_\apathformula) = 
  \set{\aword: \Nat \to \Zed^{\beta} \mid \aword \models \apathformula}
  $,
  and the following conditions hold.
  \begin{description} 
  \item[(I)] The number of locations in $\aautomaton_\apathformula$ is bounded by
    $2^{2^{\mathcal{O}(P^{\dag}(\size{\apathformula}))}}$ for some polynomial $P^{\dag}(\cdot)$.
  \item[(II)] The number of Rabin pairs is bounded by $2 \times
     \size{\apathformula} \times 2^{2 \times \size{\apathformula}}$.
   \item[(III)] The cardinality of $\delta$ in $\aautomaton_\apathformula$ is bounded
     by $\card{\sattypes{\beta}} \times 2^{2^{\mathcal{O}(P^{\dag}(\size{\apathformula}))+1}}$.
   \item[(IV)] $\maxconstraintsize{\aautomaton_\apathformula}$ 
     is cubic in $\beta + \max (\lceil log(|\adatum_1|) \rceil ,
    \lceil log(|\adatum_{\alpha}|) \rceil)$, i.e. polynomial in $\size{\apathformula}$. 
  \end{description}
\end{cor}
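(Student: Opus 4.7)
The plan is to combine Proposition~\ref{proposition-ltlz} with the determinisation result for B\"uchi word constraint automata stated as Theorem~\ref{theorem-size-safra-automaton} (to be established in Section~\ref{section-ctlstarz-determinisation-safra}), which is the constraint-aware adaptation of Safra's classical construction from~\cite[Theorem~1.1]{Safra89}. Concretely, first I would apply Proposition~\ref{proposition-ltlz} to build a (generally nondeterministic) B\"uchi word constraint automaton $\aautomaton'_\apathformula$ with $\alang(\aautomaton'_\apathformula) = \{\aword : \Nat \to \Zed^{\beta} \mid \aword \models \apathformula\}$, whose number of locations $N$ is bounded by $\size{\apathformula} \times 2^{2\size{\apathformula}}$, whose transition relation has cardinality in $2^{\mathcal{O}(P(\size{\apathformula}))}$, and whose constraints have size quadratic in $\size{\apathformula}$.

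Then I would apply Theorem~\ref{theorem-size-safra-automaton} to $\aautomaton'_\apathformula$ to obtain a deterministic Rabin word constraint automaton $\aautomaton_\apathformula$ accepting the same language. By the standard Safra analysis (preserved by the constraint-aware version) the number of locations of $\aautomaton_\apathformula$ is in $2^{\mathcal{O}(N \log N)}$, which rewrites as $2^{2^{\mathcal{O}(P^{\dag}(\size{\apathformula}))}}$ for a suitable polynomial $P^{\dag}$, and the number of Rabin pairs is bounded by $2N \leq 2 \times \size{\apathformula} \times 2^{2\size{\apathformula}}$, as required. Determinism plus the fact that the "letters" read by a word constraint automaton can without loss of generality be taken from $\sattypes{\beta}$ (Lemma~\ref{lemma-types}(I)--(III) ensures that each pair of consecutive concrete valuations is consistent with a unique type) yields $\card{\delta} \leq \card{\sattypes{\beta}} \times (\text{number of locations})$, which is within $\card{\sattypes{\beta}} \times 2^{2^{\mathcal{O}(P^{\dag}(\size{\apathformula}))+1}}$.

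For the constraint size: Safra's construction manipulates only the location component (labelled trees of macrostates), so the constraints that appear in the transition relation of $\aautomaton_\apathformula$ are, up to conjunction with type constraints in $\sattypes{\beta}$, those of $\aautomaton'_\apathformula$. Hence $\maxconstraintsize{\aautomaton_\apathformula}$ inherits the quadratic bound from Proposition~\ref{proposition-ltlz}~(III) combined with the size of a single type in $\sattypes{\beta}$, which is cubic in $\beta + \max(\lceil \log(|\adatum_1|) \rceil, \lceil \log(|\adatum_{\alpha}|) \rceil)$ and therefore polynomial in $\size{\apathformula}$.

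The only nonroutine ingredient is Theorem~\ref{theorem-size-safra-automaton}; I do not expect the assembly into Corollary~\ref{corollary-ltlz-rabin} itself to present difficulties, since it is a bookkeeping composition of Proposition~\ref{proposition-ltlz} and that determinisation theorem, followed by plugging the bounds into one another. The genuine obstacle lies in Theorem~\ref{theorem-size-safra-automaton}: one must check that Safra trees, acceptance pairs, and transitions carry over verbatim when letters are replaced by type-labelled constraints on consecutive valuations, and that the minimum-index permutation mechanism from the proof of Lemma~\ref{lemma-automaton-star} adapts cleanly to B\"uchi macrostates (as opposed to Streett pairs).
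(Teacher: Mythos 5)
Your proposal is correct and follows exactly the paper's route: the corollary is obtained by composing Proposition~\ref{proposition-ltlz} with the determinisation Theorem~\ref{theorem-size-safra-automaton} and reading off the bounds, which is all the paper does. One minor imprecision worth noting: by Theorem~\ref{theorem-size-safra-automaton}(II) the constraints of the determinised automaton are precisely the types in $\sattypes{\beta}$ (the original constraints $\acons'$ only appear via entailment checks $\acons \models \acons'$ inside the powerset step), rather than conjunctions of the original constraints with types, but this does not affect the cubic bound in (IV).
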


\subsection{Constructing Automata for $\CTLStar(\Zed)$ Formulas}
\label{section-final-steps}
We are now ready to give the construction of the Rabin TCA accepting precisely the
tree models of a $\CTLStar(\Zed)$ formula in special form. 
For the rest of this section, suppose $\aformula$ is a $\CTLStar(\Zed)$ formula over the variables $\avariable_1,\dots,\avariable_\beta$ and of the form 
$$
\existspath \ (\avariable_1 = 0) \ \wedge \ 
\big(
\bigwedge_{i \in \interval{1}{\degree-1}}
\forallpaths \always \existspath \ \apathformula_i
\big)
\ \wedge \
\big(
\bigwedge_{j \in \interval{1}{\degree'}}
\forallpaths \ \apathformula_j'
\big), 
$$
where the $\apathformula_i$'s and the $\apathformula_j'$'s
are $\LTL(\Zed)$ formulae in simple form. 
Our approach is modular, that is, we construct for each conjunct a corresponding (Rabin) TCA of degree $\degree$
with $\beta$ variables. 
Calling Lemma \ref{lemma-intersection-rtca} then yields the desired Rabin TCA.  
 Let us remark that the alphabet $\aalphabet$ does not play any role here, and we set $\aalphabet$ to the dummy singleton alphabet $\{\arbitraryletter\}$ for the rest of this section.
 
\paragraph{Handling $\existspath (\avariable_1 = 0)$.}
Define the TCA $\aautomaton_0=(\locations,\aalphabet,\degree,\beta,\locations_{\init},\delta,F)$  where
\begin{itemize}
\item $\locations=\{\alocation,\alocation'\}, \locations_{\init}=\{\alocation_0\}, F=\{\alocation'\}$, 
\item $\delta=\{(\alocation, \arbitraryletter, (\avariable_1=0,\alocation'), (\top,\alocation'),\dots,(\top,\alocation')),(\alocation',\arbitraryletter, (\top,\alocation'), \dots,\dots,(\top,\alocation'))\}$.
\end{itemize} 
Clearly, 
 $\alang(\aautomaton_0)=\{\atree:\interval{0}{\degree-1}^*\to\Zed^\beta \mid \atree\models\existspath \ (\avariable_1=0) \}$. 

\paragraph{Handling $\forallpaths \always \existspath \ \apathformula_i$.}
Let us construct, for every $i\in\interval{0}{\degree-1}$, 
a (B\"uchi) TCA $\aautomaton_{i}$ of degree $\degree$ such that 
$\alang(\aautomaton_{i})=\set{\atree: \interval{0}{\degree-1}^* \to \Zed^{\beta} \mid
 \atree\models\forallpaths \always \existspath \ \apathformula_i \mbox{ and $\atree$ satisfies $\forallpaths \always \existspath \ \apathformula_i$ via direction $i$}}$. 
The idea is to construct $\aautomaton_i$ so that it starts off the word constraint  automaton
dedicated to the $\LTL(\Zed)$ formula $\apathformula_i$
at each node $\anode$ of the tree and runs it down the designated path $\anode \cdot i \cdot 0^\omega$ to check whether $\apathformula_i$ actually holds along this path. 
Let $\aautomaton= \triple{\locations}{\aalphabet, \beta}{
  \locations_{\init}, \delta,F}$ be the B\"uchi word constraint automaton 
  such that
  $\alang(\aautomaton)=  
  \set{\aword: \Nat \to \Zed^{\beta} \mid \aword \models \apathformula_i}
  $ (see Proposition~\ref{proposition-ltlz}). 
  Let us define 
  $\aautomaton_i=\triple{\locations'}{\aalphabet,\degree}{\beta, \locations'_\init,\delta',
    F'}$, where 
 \begin{itemize}
 \item $\locations' \egdef \interval{0}{\degree-1} \times (\locations\cup\{\bot\})$,
   where $\bot\not\in \locations$, 
 \item $\locations'_\init \egdef \set{ \pair{0}{\bot} }$,
       $F' \egdef \set{\pair{j}{\alocation} \mid j \neq 0 \mbox{ or } \alocation \in F}$,
 \item The transition relation $\delta'$ is made of tuples of the form
   \[
   ((j,\alocation), \arbitraryletter, (\acons_0,(0, \alocation_0)), \dots, (\acons_{\degree-1},(\degree-1,
   \alocation_{\degree-1})))
   \]
   verifying the conditions below. 
   \begin{enumerate}
   \item $(\alocation_\init,\arbitraryletter,\acons_{i},\alocation_{i})\in\delta$ 
     for some $\alocation_\init \in \locations_\init$
     (starting off  $\aautomaton$ in the $i$$^{\rm th}$ child).
   \item $\alocation_0 = \bot$ and $\acons_0 = \top$ if $\alocation = \bot$
     and 
     $(\alocation,\arbitraryletter,\acons_{0},\alocation_{0})\in \delta$ if $\alocation \in \locations$
     (continuing a run from $\alocation$ of $\aautomaton$ in the $0$$^{\rm th}$ child).

  \item $\alocation_j = \bot$ and $\acons_j= \top$ for all $j \in (\interval{0}{\degree-1} \setminus
    \set{0,i})$.
  \end{enumerate}
 \end{itemize}
 
 \begin{lem}
 \label{lemma-automaton-AGEPhi}
 $\alang(\aautomaton_i) =
 \set{\atree: \interval{0}{\degree-1}^* \to \Zed^{\beta} \mid \atree \models
   \forallpaths \always \existspath \ \apathformula_i \
  \mbox{and $\atree$ satisfies $\forallpaths \always \existspath \ \apathformula_i$ via $i$}},
  $ and 
   \begin{description} 
  \item[(I)] the number of locations is bounded exponential in $\size{\aformula}$, and
  \item[(II)] $\maxconstraintsize{\aautomaton_i}$ is bounded polynomial in $\size{\aformula}$. 
  \end{description}
\end{lem}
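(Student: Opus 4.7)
The plan is to prove the two inclusions on the language and then read off the quantitative bounds from the construction. The key invariant guiding both directions is that the second component of a location of $\aautomaton_i$ records the state of a simulation of the word automaton $\aautomaton$ running along a path of the form $\anode \cdot i \cdot 0^\omega$ from some ancestor $\anode$, while $\bot$ signals that no such simulation is currently active on the branch in question.

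For the inclusion from right to left, I would start from a tree $\atree$ in the right-hand side and, for every node $\anode$, pick an accepting run $\rho_\anode$ of $\aautomaton$ on the word $\aword_\anode$ defined by $\aword_\anode(0) \egdef \atree(\anode)$ and $\aword_\anode(k) \egdef \atree(\anode \cdot i \cdot 0^{k-1})$ for $k \geq 1$; such runs exist by the hypothesis on $\atree$ together with the fact that $\alang(\aautomaton)$ is the word language of $\apathformula_i$. I would then stitch the local runs into one global run $\arun$ of $\aautomaton_i$: a node $\anodebis$ admitting the (essentially unique) decomposition $\anodebis = \anode \cdot i \cdot 0^k$ receives as second component the $(k{+}1)$-st state visited by $\rho_\anode$, and all other nodes receive $\bot$. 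The three transition clauses defining $\delta'$ guarantee that this assignment is a valid run, whose constraint obligations are precisely those already verified by the $\rho_\anode$'s.

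For the converse, given an initialised accepting run $\arun$ of $\aautomaton_i$ on $\atree$, I would, for each node $\anode$, read off from the branch through $\anode \cdot i \cdot 0^\omega$ the sequence of $\aautomaton$-locations sitting in the second components of $\arun$; by the first two clauses of $\delta'$ this sequence is a run of $\aautomaton$ on $\aword_\anode$ whose constraint obligations are inherited from those of $\arun$. Since the first component along this branch equals $0$ from position $|\anode|+2$ onwards, B\"uchi acceptance of $\arun$ through $F'$ forces $F$ to be visited infinitely often in the second component, so $\aword_\anode \models \apathformula_i$, witnessing $\atree, \anode \models \existspath \apathformula_i$ along the direction $i$. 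Doing this for every $\anode$ gives $\atree \models \forallpaths \always \existspath \apathformula_i$ via direction $i$.

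The quantitative bounds then follow by inspection. By construction $\card{\locations'} = \degree \times (\card{\locations} + 1)$, which together with Proposition~\ref{proposition-ltlz}(I) and the inequalities $\degree, \size{\apathformula_i} \leq \size{\aformula}$ yields (I). The constraints appearing in $\delta'$ are either $\top$ or constraints already present in $\aautomaton$, so $\maxconstraintsize{\aautomaton_i} \leq \maxconstraintsize{\aautomaton}$, which is polynomial in $\size{\aformula}$ by Proposition~\ref{proposition-ltlz}(III), giving (II). The main delicate point I foresee is the acceptance bookkeeping for branches of a candidate run that neither settle into a designated $i \cdot 0^\omega$ tail nor take infinitely many non-zero directions; making sure that the $\bot$ assignment and the definition of $F'$ together correctly classify such branches as accepting is where the argument will require genuine care.
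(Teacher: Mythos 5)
Your proposal follows the same route as the paper's own proof: extract, from an accepting run of $\aautomaton_i$, an accepting run of $\aautomaton$ along each designated branch $\anode\cdot i\cdot 0^\omega$ for the left-to-right inclusion; assemble the local runs $\rho_\anode$ into a global run for the converse; and read the bounds (I)--(II) off the definitions of $\locations'$ and $\delta'$ together with Proposition~\ref{proposition-ltlz}. That part is fine (your citation of Proposition~\ref{proposition-ltlz}(III) for the constraint size is even the right one).

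The point you defer at the end, however, is not a mere matter of care -- it is a genuine gap, and with $F'$ as defined it cannot be closed. Classify the infinite branches of a run: either a branch takes a non-zero direction infinitely often (then it hits $F'$ infinitely often via the clause $j\neq 0$), or it has a tail $0^\omega$. In the latter case, if the last non-zero direction is $i$, the second components along the tail are the states of an embedded run of $\aautomaton$ and acceptance correctly reduces to that run visiting $F$ infinitely often. But if the branch is $0^\omega$ from the root, or its last non-zero direction is some $j\in\interval{0}{\degree-1}\setminus\set{0,i}$, then clauses (2) and (3) force every location from some point on to be $\pair{0}{\bot}$, and $\pair{0}{\bot}\notin F'$ since $\bot\notin F$. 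So $\inf(\arun,\apath)=\set{\pair{0}{\bot}}$ misses $F'$, \emph{no} run is accepting, and $\alang(\aautomaton_i)=\emptyset$. The repair is to enlarge the acceptance set to $F'=\set{\pair{j}{\alocation}\mid j\neq 0 \text{ or } \alocation\in F\cup\set{\bot}}$; with that change your three-way case analysis on branches goes through and both inclusions hold. You should state this explicitly rather than leave it as a foreseen difficulty: it is the only non-routine step, and the paper's own argument dismisses these branches as ``non-critical'' without checking them. Two cosmetic remarks on your assembly step: the node $\anode\cdot i\cdot 0^k$ corresponds to position $k+1$ of $\aword_\anode$, so it should carry the state reached by $\rho_\anode$ after $k+1$ transitions, and its first component is $i$ when $k=0$ and $0$ when $k\geq 1$, as forced by $\delta'$.
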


The proof of Lemma~\ref{lemma-automaton-AGEPhi} can be found in
Appendix~\ref{appendix-proof-lemma-automaton-AGEPhi}.
Recall that a B\"uchi TCA can be seen as a Rabin TCA with a single Rabin pair.

\paragraph{Handling $\forallpaths \ \apathformula_j'$.}
Let us construct, for every $j\in\interval{0}{\degree'}$, 
a Rabin TCA $\aautomaton'_{j}$ of degree $\degree$ such that 
$\alang(\aautomaton'_{j})=\set{\atree: \interval{0}{\degree-1}^* \to \Zed^{\beta} \mid
 \atree\models\forallpaths \ \apathformula'_j}$, and the number of Rabin acceptance pairs is exponential in $\size{\apathformula'_j}$.   
Let $\apathformula'_j$ be an $\LTL(\Zed)$ formula in simple form built over the variables
$\avariable_1,\dots,\avariable_\beta$ and the constants
$\adatum_1, \ldots, \adatum_{\alpha}$.
By Corollary~\ref{corollary-ltlz-rabin}, there exists a   
deterministic Rabin word constraint automaton
$\aautomaton= \triple{\locations}{\aalphabet, \beta}{
  \locations_{\init}, \delta,\rabinacc}$
  such that
  $\alang(\aautomaton) = 
  \set{\aword: \Nat \to \Zed^{\beta} \mid \aword \models \apathformula_j'}
  $.
  Define the Rabin 
  TCA 
  $\aautomaton'_j=\triple{\locations'}{\aalphabet,\degree}{\beta, \locations'_\init,\delta',
    \mathcal{F'}}$, where
 \begin{itemize}
 \item $\locations' \egdef \locations$; $\locations'_\init \egdef \locations_\init$;
       $\mathcal{F'} \egdef \rabinacc$. 
 \item
   $\delta'$ is made of tuples of the form
   $(\alocation,\aletter, \pair{\acons_0}{\alocation_0},\dots,
   \pair{\acons_{\degree-1}}{\alocation_{\degree-1}})
   $,
 where $(\alocation,\aletter,\acons_i,\alocation_i)\in\delta$ for all $0\leq i<\degree$.
 \end{itemize}

 \begin{lem}
 \label{lemma-specific-formula-APhi}
   $\alang(\aautomaton'_j) = \set{\atree: \interval{0}{\degree-1}^* \to \Zed^{\beta} \mid \atree \models \forallpaths \ \apathformula'_j}$, and   
   \begin{description} 
  \item[(I)] the number of locations is double exponential in $\size{\aformula}$, 
  \item[(II)] the number of Rabin acceptance pairs is exponential in $\size{\aformula}$, 
  \item[(III)] $\maxconstraintsize{\aautomaton'_j}$ is bounded polynomial in $\size{\aformula}$. 
  \end{description}
 \end{lem}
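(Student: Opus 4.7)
The plan is to verify that $\aautomaton'_j$ faithfully implements the intended semantics: since $\aautomaton$ is deterministic, running it lockstep along every branch of a $\degree$-ary tree can be encoded in the tree transition relation by reusing the same transition of $\aautomaton$ for each child direction. I would first prove the language equivalence, treating the two inclusions separately, and then read off the quantitative bounds directly from Corollary~\ref{corollary-ltlz-rabin}.

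For the inclusion $\alang(\aautomaton'_j) \subseteq \set{\atree \mid \atree \models \forallpaths \apathformula'_j}$, let $\arun: \interval{0}{\degree-1}^* \to \delta'$ be an initialized accepting run on some $\atree$. Fix any infinite path $\apath = j_1 j_2 \cdots$ starting at $\varepsilon$, and consider the projection $\arun_{\apath}(k) = (\alocation^{(k)}, \arbitraryletter, \acons^{(k)}, \alocation^{(k+1)})$ extracted from $\arun(\varepsilon \cdot j_1 \cdots j_k)$ by keeping only the $j_{k+1}$-th child component. By the shape of $\delta'$, each $\arun_{\apath}(k)$ is a transition of $\delta$ and the source locations are chained correctly, so $\arun_{\apath}$ is an initialized run of $\aautomaton$ on the word $\aword_{\apath}(k) \egdef \atree(\varepsilon \cdot j_1 \cdots j_k)$. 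Because $\mathcal{F'} = \mathcal{F}$ and the set of source locations visited along $\apath$ in $\arun$ is precisely the set of locations visited in $\arun_{\apath}$, the Rabin condition inherited from $\aautomaton$ is satisfied on $\apath$; hence $\aword_{\apath} \in \alang(\aautomaton)$, i.e.\ $\aword_{\apath} \models \apathformula'_j$. As this holds for every $\apath$, we obtain $\atree \models \forallpaths \apathformula'_j$.

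For the converse, suppose $\atree \models \forallpaths \apathformula'_j$. Since $\aautomaton$ is deterministic (and without loss of generality complete), for every node $\anode$ at which a location $\alocation \in \locations$ has already been assigned, and every child value $\vect{z}_i$, there is a unique transition $(\alocation, \arbitraryletter, \acons_i, \alocation_i) \in \delta$ with $\Zed \models \acons_i(\atree(\anode), \vect{z}_i)$. I would thus define $\arun$ inductively: at the root, pick the unique initial location of $\aautomaton$, and at every subsequent node assemble the tuple $\pair{\acons_0}{\alocation_0}, \ldots, \pair{\acons_{\degree-1}}{\alocation_{\degree-1}}$ given by these uniquely determined transitions. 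This tuple belongs to $\delta'$ by construction, and condition (ii) of the definition of runs holds by choice of each $\acons_i$. For every infinite path $\apath$, the projection $\arun_{\apath}$ is then the unique run of $\aautomaton$ on $\aword_{\apath}$; by hypothesis $\aword_{\apath} \models \apathformula'_j$, hence by determinism this unique run must be accepting, so the path satisfies the Rabin condition in $\arun$ as well.

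The quantitative bounds follow directly from Corollary~\ref{corollary-ltlz-rabin} applied to $\apathformula'_j$ (whose size is at most $\size{\aformula}$). Indeed $\card{\locations'} = \card{\locations}$ is bounded by $2^{2^{\mathcal{O}(P^{\dag}(\size{\apathformula'_j}))}}$, giving (I); $\card{\mathcal{F'}} = \card{\mathcal{F}}$ is bounded by $2 \times \size{\apathformula'_j} \times 2^{2 \times \size{\apathformula'_j}}$, giving (II); and $\maxconstraintsize{\aautomaton'_j} = \maxconstraintsize{\aautomaton}$ is polynomial in $\size{\apathformula'_j}$, giving (III). The only subtle point in the whole argument is the use of determinism to ensure that the $\degree$ constraints picked at each node are mutually compatible (i.e.\ simultaneously witnessable by the actual children values $\vect{z}_0, \ldots, \vect{z}_{\degree-1}$ of $\atree$); this is automatic here because each $\acons_i$ is chosen from $\vect{z}_i$ itself, so no coordination between siblings is required. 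I expect no further obstacle.
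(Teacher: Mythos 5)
Your proof is correct and follows exactly the route the paper intends: the paper only remarks that the language equality is "an easy verification thanks to the determinism of $\aautomaton$" and that (I)--(III) follow from Corollary~\ref{corollary-ltlz-rabin}(I), (II) and (IV), and your two inclusions (projecting a tree run onto each branch, and conversely using determinism plus completeness of the Safra automaton to assemble the unique branch-wise runs into a tree run) are precisely the verification being alluded to. Your closing observation about why no coordination between sibling constraints is needed is the right way to see why determinism is essential here, matching the paper's own discussion following the lemma.
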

 The proof is by an easy verification thanks to the determinism of
 $\aautomaton$, which is essential here. Typically, Proposition~\ref{proposition-ltlz}
 is not sufficient because determinism of the word automaton is required.
 For example, assuming that all the branches of $\atree: \interval{0}{\degree-1}^* \to \Zed^{\beta}$
 satisfy the formula $\apathformula'_j$, if $\alang(\aautomaton)$ is equal to all words satisfying
 $\apathformula'_j$ and $\aautomaton$ were nondeterministic, then we cannot guarantee that each nondeterministic step
 in  $\aautomaton$
 can lead to acceptance for all possible future branches.
 As a matter of fact, the determinisation 
 construction presented in Section~\ref{section-ctlstarz-determinisation-safra}
is key to design Rabin TCA for formulae of the form $\forallpaths \ \apathformula'_j$, which is why we put so much efforts in developing the corresponding material. 
Moreover, Lemma~\ref{lemma-specific-formula-APhi}(I) is a direct consequence
of Corollary~\ref{corollary-ltlz-rabin}(I),
Lemma~\ref{lemma-specific-formula-APhi}(II) is a direct consequence
of Corollary~\ref{corollary-ltlz-rabin}(II)
and Lemma~\ref{lemma-specific-formula-APhi}(III) is a consequence
of Corollary~\ref{corollary-ltlz-rabin}(IV).
Indeed, 
 $\maxconstraintsize{\aautomaton}$ is polynomial in
  $\size{\apathformula_j'}$ (and therefore
  polynomial in $\size{\aformula}$), 
  the constraints in $\aautomaton_j'$ are those from $\aautomaton$
  and therefore $\maxconstraintsize{\aautomaton_j'}$ is polynomial in
  $\size{\aformula}$.

 \subsection{The final step}
 We are now ready to give the final Rabin TCA $\aautomaton$ such that $\alang(\aautomaton)$ is precisely the set of tree models of $\aformula$.  
 By Lemma~\ref{lemma-intersection-rtca}, 
 there exists a Rabin TCA 
 $\aautomaton = \triple{\locations,\aalphabet,\degree,\beta}{\locations_{\init},\delta}{\rabinacc}$ such that
 \[
  \alang(\aautomaton) = \alang(\aautomaton_0)
  \bigcap_{i \in \interval{1}{\degree-1}} \alang(\aautomaton_i)
  \bigcap_{j \in \interval{1}{\degree'}} \alang(\aautomaton_j'),
  \] 
In Section~\ref{section-TCA-Rabin}, we have seen that nonemptiness of the language $\alang(\aautomaton)$ for $\aautomaton$ 
can be solved in time in 
\[
\hspace*{-0.2in} 
  R_1\big(\card{\locations} \times \card{\delta}
  \times \maxconstraintsize{\aautomaton} \times
  \card{\aalphabet} \times R_2(\beta + \card{\rabinacc}))^{\mathcal{O}(R_2(\beta + \card{\rabinacc}) \times R_3(\degree))}
 \]
  Let us thus evaluate the size of the respective components for $\aautomaton$. 
  First of all, observe that the involved B\"uchi TCA can be seen as Rabin TCA with a single Rabin acceptance pair. 
   Further, $\beta, \degree+\degree' \leq \size{\aformula}$, and $\aalphabet$ is a singleton.

  \begin{itemize}
  \item By Lemmas~\ref{lemma-specific-formula-APhi} and~\ref{lemma-intersection-rtca} 
    and $\degree \leq \size{\aformula}$,
    the number of Rabin pairs in $\aautomaton$ is exponential in $\size{\aformula}$. 
  \item By Lemmas~\ref{lemma-automaton-AGEPhi} and ~\ref{lemma-specific-formula-APhi}, the number of locations in each involved automaton is at most double-exponential in $\size{\aformula}$; and 
    the number of Rabin pairs is only exponential in $\size{\aformula}$. By Lemma~\ref{lemma-intersection-rtca} and $\degree+\degree' \leq \size{\aformula}$,
    the number of locations in $\aautomaton$ is double-exponential in $\size{\aformula}$.
    A similar analysis can be performed for the number of transitions, leading to
    a double-exponential number of transitions.

  \item The maximal size of a constraint appearing in a transition from
    any involved automaton 
    is polynomial
    in $\size{\aformula}$. The maximal size of a constraint in the product
    automaton $\aautomaton$ is therefore polynomial in $\aformula$ too ($\degree+\degree'
    \leq \size{\aformula}$). 
  \end{itemize}
  Putting all results together, the nonemptiness of $\alang(\aautomaton)$ can be
  checked in double-exponential time in $\size{\aformula}$, leading to
  Theorem~\ref{theorem-ctlstarz} below. 
  It answers open questions
  from~\cite{Bozzelli&Gascon06,Carapelle&Kartzow&Lohrey16,CarapelleTurhan16,Labai&Ortiz&Simkus20}, 
  and it is the main result of the paper.

\begin{thm}
  \label{theorem-ctlstarz}
 $\satproblem{\CTLStar(\Zed)}$ is \twoexptime-complete.
\end{thm}

\twoexptime-hardness is inherited from \satproblem{\CTLStar}~\cite[Theorem 5.2]{Vardi&Stockmeyer85}. 
As a corollary, $\satproblem{\CTLStar(\Nat)}$ is also \twoexptime-complete.
Indeed, $\satproblem{\CTLStar(\Nat)}$ can be easily reduced to $\satproblem{\CTLStar(\Zed)}$
in polynomial-time (for instance, one could introduce a fresh variable $\avariable$
enforced to be equal to zero --use of $\forallpaths \always (\avariable = 0)$-- and further enforce that
all the original variables $\avariable_i$ are such that $\avariable_i \geq \avariable$). 
Furthermore, assuming that $\prefix$ is the prefix relation on the set of finite strings $\set{0,1}^*$,
we can use the reduction  from~\cite[Section 4.2]{Demri&Deters16} to get the following result.

\begin{cor} \label{corollary-strings-with-prefix}
  $\satproblem{\CTLStar(\set{0,1}^*, \prefix)}$ is \twoexptime-complete.
\end{cor} 

Actually, the \twoexptime-membership also holds for $\satproblem{\CTLStar(\adatadomain^*, \prefix)}$,
where $\adatadomain$ is an infinite domain, see e.g.~\cite[Section 4.2]{Demri&Deters16}.

As observed earlier, when the concrete domain is $\pair{\Rat}{<,=,(=_{\adatum})_{\adatum \in \Rat}}$,
all the trees in $\alang(\locautomaton)$ are satisfiable
(no need to intersect $\locautomaton$ with a hypothetical $\starautomaton$), and therefore 
$\satproblem{\CTLStar(\Rat)}$ is also in \twoexptime, which is a  result already known from~\cite[Theorem 4.3]{Gascon09}.

\section{Determinisation of B\"uchi Word Constraint Automata}
\label{section-ctlstarz-determinisation-safra}
In this section, 
we present a construction to transform nondeterministic  B\"uchi word constraint
automata to equivalent deterministic Rabin word constraint automata (Theorem \ref{theorem-size-safra-automaton} below, used to establish Corollary~\ref{corollary-ltlz-rabin} in Section~\ref{section-final-steps}). 
The construction is based on Safra's well known construction for
B\"uchi automata~\cite[Theorem 1.1]{Safra89}. 
In our adaptation, we give 
a special attention to the cardinality of the
transition relation and to the size of the constraints in transitions, as these two
parameters are, {\em a priori}, unbounded in constraint automata, but they are essential
to perform the  complexity analysis in Section~\ref{section-final-steps}.

A constraint word automaton $\aautomaton$ is \defstyle{deterministic} whenever
for all locations $\alocation$, letters $\aletter$ and pairs of valuations
$(\vect{z},\vect{z}')\in\Zed^{2\beta}$, there exists in $\aautomaton$ \emph{at most} one transition
$(\alocation,\aletter,\acons,\alocation')$ such that $\Zed\models\acons(\vect{z},\vect{z}')$.

We start with defining the key notion, namely \defstyle{Safra trees}. 
A \defstyle{Safra tree over $\locations$} is a finite tree $\safra$, satisfying the following conditions.
Note that we use the symbol '$\safra$' for trees because later on, such trees shall be locations (a.k.a. \underline{s}tates) of the forthcoming
(deterministic) Rabin word constraint automaton. 
\begin{enumerate}
\item \label{st1}
  $\safra$ is ordered, that is, if a node in the tree has children nodes, then there is a first child node, a second child node etc. 
  In other words, given two sibling nodes, it is uniquely determined which of the two nodes is younger than
  the other (the rightmost sibling is understood as the youngest one). 

\item \label{st2} Every node in $\safra$ has a unique \defstyle{name} from
  the interval $\interval{1}{2 \cdot \card{\locations}}$,
      no two nodes have the same name. 

    \item \label{st3}
      Every node has a \defstyle{label} from $\powerset{\locations} \setminus \set{\emptyset}$. 
We use $\saflab(\safra,J) \subseteq \locations$ to denote the set of labels of a node with name $J$ in $\safra$. 

\item \label{st4}
  The label of a node is a proper superset of the union of the labels of its children nodes.

\item \label{st5}
  Two nodes with the same parent node have disjoint labels.

\item \label{st6}
  Every node is either \defstyle{marked} or \defstyle{unmarked}. 
\end{enumerate}

The proof of the following lemma can be found in
Appendix~\ref{appendix-proof-lemma-safra-node-number}. 
\begin{lem}
\label{lemma-safra-node-number}
A Safra tree over  $\locations$ has at most $\card{\locations}$ nodes. 
\end{lem}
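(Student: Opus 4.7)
The plan is to prove the bound by a straightforward induction on the tree structure, exploiting the combined strength of conditions (\ref{st3})--(\ref{st5}). The key invariant I will establish is that for every node $\anode$ of a Safra tree $\safra$, the cardinality of its label $\saflab(\safra,\anode)$ is at least the number of nodes in the subtree rooted at $\anode$. Since the label of the root is a subset of $\locations$, this invariant, applied at the root, immediately yields the desired bound $\card{\locations}$ on the total number of nodes in $\safra$.

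To prove the invariant, I would proceed by induction on the height of the subtree rooted at $\anode$. For the base case, if $\anode$ is a leaf, then by condition (\ref{st3}) its label is non-empty, so $|\saflab(\safra,\anode)| \geq 1$, which matches the single node of the subtree. For the induction step, suppose $\anode$ has children $\anode_1,\dots,\anode_k$. By condition (\ref{st5}), the labels $\saflab(\safra,\anode_1),\dots,\saflab(\safra,\anode_k)$ are pairwise disjoint subsets of $\locations$, so the size of their union is $\sum_{i=1}^{k} |\saflab(\safra,\anode_i)|$. Condition (\ref{st4}) then forces
\[
|\saflab(\safra,\anode)| \;\geq\; 1 + \sum_{i=1}^{k} |\saflab(\safra,\anode_i)|,
\]
because the label of $\anode$ properly contains this union. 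Applying the induction hypothesis to each child $\anode_i$ and summing, I obtain that $|\saflab(\safra,\anode)|$ is at least $1$ plus the sizes of the subtrees rooted at each child, which equals the size of the subtree rooted at $\anode$.

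I do not expect any genuine obstacle here: conditions (\ref{st1}), (\ref{st2}), and (\ref{st6}) play no role in the argument, and the substantive content is only the strict-superset condition (\ref{st4}) combined with the disjointness condition (\ref{st5}). The only mild subtlety is to make sure that the ``$+1$'' arising from proper containment in (\ref{st4}) is correctly accounted for to absorb the node $\anode$ itself in the count, which is why I phrased the invariant in terms of the whole subtree (including $\anode$) rather than just its descendants.
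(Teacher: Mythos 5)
Your proof is correct and follows essentially the same route as the paper's: both argue by induction on height, using the disjointness of sibling labels (condition (\ref{st5})) and the proper-superset condition (\ref{st4})) to show that the label of a node has at least as many elements as its subtree has nodes, and then apply this at the root. The paper phrases the induction hypothesis as ``the subtree rooted at a child with label $\locations_i$ is a Safra tree over $\locations_i$ and hence has at most $\card{\locations_i}$ nodes,'' which is just your invariant in different clothing.
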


Let us prove that for every (nondeterministic) B\"uchi  word constraint  automaton one can construct a deterministic  Rabin word constraint  automaton
accepting the same language. 
The following result generalizes~\cite[Theorem 1.1]{Safra89} to constraint automata. 
\begin{thm}
\label{theorem-size-safra-automaton}
Let $\aautomaton=(\locations, \aalphabet,\beta, \locations_\init, \delta, F)$
be a B\"uchi word constraint automaton
  involving the constants
  $\adatum_1, \ldots, \adatum_{\alpha}$.
  There is a deterministic Rabin word constraint automaton
  $\aautomaton'=(\locations', \aalphabet, \beta, \locations'_\init, \delta', \rabinacc)$
  such that $\alang(\aautomaton) = \alang(\aautomaton')$ verifying the following quantitative
  properties.
  \begin{description}
  \item[(I)] $\card{\locations'}$ is exponential in $\card{\locations}$
    and the number of Rabin pairs in $\aautomaton'$ is bounded
    by $2 \cdot \card{\locations}$ (same bounds as in~\cite[Theorem 1.1]{Safra89}).
  \item[(II)] The constraints in the transitions are from $\sattypes{\beta}$, are of
    size cubic in $\beta + \max (\lceil log(|\adatum_1|) \rceil ,
    \lceil log(|\adatum_{\alpha}|) \rceil)$ and
    $
    \card{\delta'} \leq \card{\locations'}^2 \times \card{\aalphabet} \times
    \boundsattypes 
    $.
  \end{description}
\end{thm}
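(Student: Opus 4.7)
The plan is to lift Safra's classical construction \cite[Theorem 1.1]{Safra89} to the constraint setting by viewing transitions as indexed by pairs $(\aletter, \acons) \in \aalphabet \times \sattypes{\beta}$ instead of plain letters. The enabling observation is Lemma~\ref{lemma-types}: for every valuation pair $(\vect{z}, \vect{z}')$ there is a \emph{unique} type $\acons \in \sattypes{\beta}$ with $\Zed \models \acons(\vect{z}, \vect{z}')$, and every transition constraint of $\aautomaton$ is equivalent to a disjunction of types. Consequently, a run of $\aautomaton$ on a concrete word can be tracked equivalently via the induced sequence of types, and a deterministic automaton reading $(\aletter, \acons)$-pairs will yield determinism with respect to concrete input by type-uniqueness.

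Following this idea, I would first define an auxiliary subset relation $\delta_{\aletter,\acons}(\alocation) \egdef \{\alocation' \mid \exists (\alocation, \aletter, \acons', \alocation') \in \delta,\ \acons \models \acons'\}$. Then I would instantiate Safra's construction verbatim, with locations being Safra trees over $\locations$ (conditions (1)--(6)), the initial location being a single unmarked node named $1$ with label $\locations_\init$, and Rabin pairs $(L_i, U_i)_{i \in \interval{1}{2\cdot\card{\locations}}}$ where $L_i$ (resp.\ $U_i$) is the set of Safra trees in which name $i$ appears marked (resp.\ does not appear). For the transition function, given a Safra tree $\safra$ and input $(\aletter, \acons)$, I would apply the usual six-step update: (i) unmark all nodes; (ii) for each node with label $S$ such that $S \cap F \neq \emptyset$, create a new youngest child with a fresh name from $\interval{1}{2\cdot\card{\locations}}$ and label $S \cap F$; (iii) replace each label $S$ by $\bigcup_{\alocation \in S} \delta_{\aletter,\acons}(\alocation)$; (iv) horizontal merge (keep each state only in the oldest sibling containing it); (v) delete nodes with empty labels; (vi) mark every node whose label equals the union of its children's labels and remove its descendants.

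For correctness, I would argue that Safra's original proof transfers essentially unchanged: a concrete word $\aword$ induces via Lemma~\ref{lemma-types}(I) a unique sequence of types, hence a unique run of $\aautomaton'$, and the standard thread-tracking correspondence between accepting runs in $\aautomaton$ and Rabin-accepting runs in $\aautomaton'$ goes through because step (iii) faithfully reflects successor locations in $\aautomaton$ that are compatible with the concrete valuations (by Lemma~\ref{lemma-types}(II) and (III), no spurious transitions are introduced and none are lost). I expect the main subtlety to be the modular verification that the invariants underlying Safra's acceptance argument -- namely, that markings witness recurrent visits to $F$ along some surviving thread -- are preserved step by step under the refined transition, but the argument is a routine transliteration once types are treated as part of the alphabet.

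Finally, the quantitative bounds follow by straightforward counting. For (I), Lemma~\ref{lemma-safra-node-number} gives at most $\card{\locations}$ nodes per Safra tree, and the remaining combinatorial data (names, labels, marks, sibling order) yields $\card{\locations'} \in 2^{\mathcal{O}(\card{\locations} \cdot \log \card{\locations})}$, while the number of Rabin pairs is exactly $2\cdot\card{\locations}$ by construction. For (II), every constraint appearing in a transition of $\aautomaton'$ is a type from $\sattypes{\beta}$: each such type consists of $2\beta$ conjuncts of the form $\avariable_i^{\mathrm{CST}}$ (each of size $\mathcal{O}(\beta + \max(\lceil \log|\adatum_1|\rceil, \lceil \log|\adatum_\alpha|\rceil))$) together with $\binom{2\beta}{2}$ pairwise comparisons, giving a total size cubic in $\beta + \max(\lceil \log|\adatum_1|\rceil, \lceil \log|\adatum_\alpha|\rceil)$. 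Determinism of $\aautomaton'$ implies $\card{\delta'} \leq \card{\locations'} \times \card{\aalphabet} \times \card{\sattypes{\beta}} \leq \card{\locations'}^2 \times \card{\aalphabet} \times \boundsattypes$, using Lemma~\ref{lemma-bound-sattypes}.
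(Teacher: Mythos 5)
Your proposal is correct and follows essentially the same route as the paper: Safra's construction run over the product alphabet $\aalphabet \times \sattypes{\beta}$, with the powerset step filtered by entailment $\acons \models \acons'$ and determinism on concrete words secured by the uniqueness of types (Lemma~\ref{lemma-types}(I)), exactly as in the paper's transition relation $\delta'$ and its correctness argument. The only details the paper adds that you elide are that the empty-label deletion step keeps the root node, and that constraints of the form $\avariable < \adatum_1$ and $\avariable > \adatum_{\alpha}$ are not literally in $\treeconstraints{\beta}$ and are simulated with two auxiliary variables; neither affects the substance of the argument.
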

\begin{proof}We define the det. Rabin word constraint automaton
$\aautomaton'= (\locations', \aalphabet,\beta, \locations'_\init, \delta', \mathcal{F'})$ as follows.
Essential differences with the construction to prove~\cite[Theorem 1.1]{Safra89} can be found in the definition
of $\delta'$ below, condition (3), as constraints need to be taken into account. 
\begin{itemize}
\item $\locations'$ is the set of all Safra trees over $\locations$.
\item $\locations'_\init \egdef \{\safra_{\locations_\init}\}$, where
  $\safra_{\locations_\init}$ is the Safra tree with a single unmarked node $\anode$ with name $1$ and
  $\saflab(\safra_{\locations_\init},1) \egdef \locations_\init$.
 \item The finite transition relation  $\delta'\subseteq (\locations'\times \aalphabet
   \times\treeconstraints{\beta} \times \locations')$ is defined as follows. 
  Recall that $\sattypes{\beta}$ denotes the set of all satisfiable complete constraints
  over $\avariable_1,\dots,\avariable_\beta,\avariable'_1,\dots,\avariable'_\beta$. 
The transition relation $\delta'$ is defined over this strict subset of $\treeconstraints{\beta}$. 
  Let $\safra$ be a Safra tree, $\aletter\in \aalphabet$, and
  $\acons \in \sattypes{\beta}$. 
We set $(\safra, \aletter, \acons, \safra')\in \delta'$, where
$\safra'$ is obtained from $\safra$, $\aletter$ and $\acons$ by applying the following steps.
\begin{enumerate}
\item Unmark all nodes in $\safra$. Let us use $\safra^{(1)}$ to denote the resulting Safra tree.
\item For every node in $\safra^{(1)}$ with label $P\subseteq \locations$ such that $P\cap F \neq\emptyset$,
  create a new youngest child node with label $P\cap F$. 
  The name of this node is the smallest number in $\interval{1}{2 \cdot \card{\locations}}$
  that is not assigned to any of the other nodes in $\safra^{(1)}$ yet.  
Let us use $\safra^{(2)}$ to denote the resulting Safra tree.
\item Apply the powerset construction to every node in $\safra^{(2)}$,  that is, for every
  node with label $P\subseteq \locations$, replace $P$ by 
  \[\bigcup_{\alocation\in P} \{\alocation'\in \locations \mid \ \mbox{there exists} \
  (\alocation,\aletter,\acons',\alocation')\in
  \delta \text{ such that } \acons \models \acons' \}.
  \]
  Let us use $\safra^{(3)}$ to denote the resulting tree. Note that $\acons \models \acons'$
  can be checked in polynomial-time because $\acons \in \sattypes{\beta}$. 
The tree $\safra^{(3)}$ may not satisfy the condition (5) of Safra trees, but this is only provisionally. 
\item (Horizontal Merge) For every two nodes in $\safra^{(3)}$ with the same parent node and such that $\alocation\in \locations$ is contained in the labels of both nodes, remove $\alocation$ from the labels of the younger node and
  {\em all its descendants}.  
Let us use $\safra^{(4)}$ to denote the resulting Safra tree.
\item Remove all nodes with empty label, {\em except the root node}, yielding  $\safra^{(5)}$.
\item (Vertical Merge) For every node in $\safra^{(5)}$ whose label equals the union of the labels of its children nodes (if there are any),
  remove {\em all descendants of this node} and mark it. 
The resulting Safra tree is $\safra'$. 
\end{enumerate}
\item The set of Rabin pairs $\mathcal{F'}$ is equal to
      $\mathcal{F'}=\{(L_1,U_1), \dots, (L_{2 \cdot \card{\locations}},U_{2 \cdot \card{\locations}})\}$,
  where for all $1\leq J \leq 2 \cdot \card{\locations}$,
  \begin{itemize}
  \item $L_J = \{ \safra\in \locations' \mid \safra \text{ contains a node with name } J  \text{ marked}\}$, 
  \item $U_J = \{\safra\in \locations' \mid \safra \text{ does not contain a node with name } J\}$.
  \end{itemize}
\end{itemize}

It is worth observing that there is a slight abuse of notation here: $\sattypes{\beta}$ is not strictly
a subset of $\treeconstraints{\beta}$ because of the constraints of the form
$\avariable < \adatum_1$ and $\avariable > \adatum_{\alpha}$. However, this can be easily simulated
by adding two new variables $\avariable_{\adatum_1}$ and $\avariable_{\adatum_{\alpha}}$ and to perform
the following changes in the automata:
replace $\avariable < \adatum_1$ by $\avariable < \avariable_{\adatum_1}$,
replace  $\avariable > \adatum_{\alpha}$ by  $\avariable > \avariable_{\adatum_{\alpha}}$
and add to every constraint the conjunct $\avariable_{\adatum_1} = \adatum_1 \wedge
\avariable_{\adatum_{\alpha}} = \adatum_{\alpha}$. This only adds a constant to the maximal constraint
size as well as two to the number of variables. Therefore, this is harmless for all the
complexity results established in the paper. For the sake of readability, we keep below $\beta$
instead of $\beta+2$ for the number of variables, but the reader should keep in mind that
$\sattypes{\beta}$ are  constraints in $\aautomaton'$ at the cost of adding
two auxiliary variables and a constant-size constraint to every constraint.

Let us prove the correctness of the construction of $\aautomaton'$. 
For proving $\alang(\aautomaton)\subseteq \alang(\aautomaton')$, 
let $\aword =(\aletter_1,\vect{z}_1)(\aletter_2,\vect{z}_2)(\aletter_2,\vect{z}_2)\dots$ be an infinite word over $\aalphabet \times \Zed^\beta$ such that $\aword \in \alang(\aautomaton)$. 
Then there is some initialized B\"uchi accepting run 
$\arun:\Nat\to\transitions$ of $\aautomaton $ on $\aword$, say, of the form
\[
(\alocation_{1}, \aletter_1,\acons_1',\alocation_{2})(\alocation_{2}, \aletter_2,\acons_2',
\alocation_{3})(\alocation_{3}, \aletter_3,\acons_3',\alocation_{4})\dots
\]
satisfying
 $\alocation_1\in \locations_\init$, 
 $\Zed \models\acons'_i(\vect{z}_i,\vect{z}_{i+1})$, 
 and there exists some $\alocation\in F$ that appears infinitely often in $\arun$.
 Let us fix such an accepting location and denote it by $\alocation_\acc$. 
We are going to construct an initialized Rabin accepting run of $\aautomaton'$ on $\aword$. 
Set $\safra_1$ to be the Safra tree $\safra_{\locations_\init}$, that is, the Safra tree with a single node
with name $1$ and $\saflab(\safra_{\locations_\init},1)=\locations_\init$ -- the only initial location of $\aautomaton'$. 
We prove that for all $i\geq 1$, 
if $\safra_{i}$ has root node with name $1$ and $\alocation_{i}\in \saflab(\safra_{i},1)$, then 
there exists a unique constraint $\acons_i \in \sattypes{\beta}$ and a unique Safra tree
$\safra_{i+1}$  such that 
\begin{itemize}
\item $\Zed\models \acons_i(\vect{z}_{i},\vect{z}_{i+1})$,
\item $(\safra_{i}, \aletter_i, \acons_i,\safra_{i+1})\in\delta'$,
\item $\safra_{i+1}$ has root node with name $1$ and $\alocation_{i+1}\in \saflab(\safra_{i+1},1)$. 
\end{itemize}
So let $i\geq 1$. 
Let $\acons_i$ be the unique 
constraint in $\sattypes{\beta}$ such that 
$\Zed \models \acons_i(\vect{z}_{i},\vect{z}_{i+1})$ (unicity and existence guaranteed
by Lemma~\ref{lemma-types}). 
  By definition of $\delta'$, there exists a unique Safra tree
  $\safra_{i+1}$ such that $(\safra_{i},\aletter_i, \acons_i,\safra_{i+1})\in\delta'$. 
For proving the third condition, 
recall how $\safra_{i+1}$ is obtained from $\safra_{i}$: 
\begin{itemize}
\item By assumption, $\safra_{i}$ has root node with name $1$ and $\alocation_{i}\in \saflab(\safra_{i},1)$. 
\item By definition, $\alocation_{i}\in \saflab(\safra_i^{(2)},1)$.
\item Recall that $(\alocation_{i},\aletter_i,\acons_i',\alocation_{i+1})\in \delta$ and
$\Zed\models\acons_i'(\vect{z}_{i},\vect{z}_{i+1})$. 
Hence $\acons_i \models \acons_i'$ (by Lemma~\ref{lemma-types}) 
and $\alocation_{i+1}\in \set{\alocation'\in \locations \mid \text{there exists} \ (\alocation_{i}, \aletter_i,\acons', \alocation')\in
\delta, \ \acons_i \models \acons'}$. Then also $\alocation_{i+1}\in \saflab(\safra_i^{(3)}, 1)$. 
\item $\alocation_{i+1}\in \saflab(\safra_i^{(4)}, 1)$ because the root node $1$ has
no siblings. 
\item $\alocation_{i+1}\in \saflab(\safra_i^{(5)}, 1)$ as the label set of the root node is
nonempty. 
\item Finally  $\alocation_{i+1}\in \saflab(\safra_{i+1}, 1)$ because the root node cannot be removed. 
\end{itemize}
Since $\safra_1$'s root node has the name $1$ and $q_1\in\saflab(\safra_1,1)$, 
we just have proved that 
for all $i\geq 1$, 
$\safra_i$'s  root node has name $1$, $q_i\in \saflab(\safra_i,1)$, and there exist a unique constraint $\acons_i$ and
a unique Safra tree $\safra_{i+1}$ such that $(\safra_i,a_i,\acons_i,\safra_{i+1})\in\delta'$ and $\Zed\models\acons_i(\vect{z}_i,\vect{z}_{i+1})$.  
Hence $\arun'$  of the form 
$$(\safra_{1}, \aletter_1, \acons_1,\safra_{2})(\safra_{2}, \aletter_2, \acons_2,\safra_{3})(\safra_{3}, \aletter_3, \acons_3,\safra_{4})\dots$$
 initialized (unique) run 
of
$\aautomaton'$ on $\aword$.  
For proving that $\arun'$ is Rabin accepting, 
we use the following lemma, whose proof can be found in Appendix \ref{appendix-proof-lemma_safra_dir_one_rabin}.
\begin{lem}
\label{lemma_safra_dir_one_rabin}
For every position $i\geq 1$ and every name $1\leq J_i\leq 2 \cdot \card{\locations}$, 
if $\safra_k$ contains a node with name $J_i$ and
$\alocation_k\in \saflab(\safra_k, J_i)$ for all $k\geq i$, 
then 
\begin{enumerate}
\item there exist infinitely many $k\geq i$ such that the node with name $J_i$ is marked in $\safra_k$; or
\item there exists some position $i'\geq i$ and some name $1\leq J_{i'} \leq 2 \cdot \card{\locations}$
  with $J_i\neq J_{i'}$ such that for all $k\geq i'$, the node with name $J_i$ has a child node with name $J_{i'}$ and $\alocation_k\in \saflab(\safra_k, J_{i'})$. 
\end{enumerate}
\end{lem}
Finally, recall that for every $k\geq 0$, the Safra tree $\safra_k$ contains a node with name $1$ and $q_k\in \saflab(\safra_k, 1)$. 
Set $i=1$ and $J=1$. 
We distinguish the following two cases. 
\begin{description}
\item[Case 1] There exist infinitely many $k\geq i$ such that $\safra_k$ contains node with name $J$ marked, that is, $\safra_k\in L_J$. 
This implies that for only finitely many $k\geq 0$ the Safra tree $\safra_k$ does not contain the node with name $J$, that
is, $\safra_k\in U_J$ (indeed, by assumption, for all $k \geq 1$, there is in $\safra_k$ a node with name $J$). 
Hence, the run $\rho'$ is Rabin accepting. 
\item[Case 2] Otherwise, by Lemma \ref{lemma_safra_dir_one_rabin}, 
  there exists some $i'\geq i$ and some $1\leq J'\leq 2 \cdot \card{\locations}$
  with $J\neq J'$ such that for all $k\geq i'$, the node with name $J$ has a child with name $J'$, and $q_k\in \saflab(\safra_k,J')$. We can now repeat the same case distinction, this time for $i=i'$ and $J=J'$. 
\end{description}
Note that after at most $\card{\locations}$ steps, {\bf Case 1} must necessarily be true, as by
Lemma~\ref{lemma-safra-node-number}, every Safra tree has at most $\card{\locations}$ nodes.

For proving $\alang(\aautomaton')\subseteq \alang(\aautomaton)$,
 let $\aword =(\aletter_1,\vect{z}_1)(\aletter_2,\vect{z}_2)(\aletter_2,\vect{z}_2)\dots$ be an
  infinite word over $\aalphabet\times \Zed^\beta$ such that $\aword\in \alang(\aautomaton')$. 
Then there exists some (unique) initialized Rabin accepting run $\arun':\Nat\to\delta'$ of
$\aautomaton'$ on $\aword$, say 
$$(\safra_1, \aletter_1, \acons_1,\safra_2) (\safra_2,\aletter_2,\acons_2,\safra_3) (\safra_3,\aletter_3,\acons,\safra_4) \dots.$$ 
That is, 
$\safra_1=\safra_{\locations_\init}$, 
$\Zed\models\acons_i(\vect{z}_i,\vect{z}_{i+1})$ for all $i\geq 1$, and 
there exists some $1\leq J\leq 2 \cdot \card{\locations}$ such that
$\safra_i\in L_J$ for infinitely many $i\geq 0$ and $\safra_i\in U_J$ for only finitely many $i\geq 0$. 
Fix such a name $1\leq J\leq 2 \cdot \card{\locations}$. 
Then, there must exist some position $i\geq 0$ such that $\safra_k$ contains a
node with name $J$, for all $k\geq i$. 
We let $i_0$ be the minimal such position, and let $i_0< i_1 <i_2 <i_3 \dots$ be the infinitely many positions greater than $i_0$ such that the node with name $J$ is marked in $\safra_{i_k}$, for all $k\geq 1$. 
 
Let us define, for every  $j\geq 1$ and for every $\alocation\in\saflab(\safra_{i_j}, J)$, the set 
$\textup{Acc}(\alocation,j)$  by 
\begin{align*}
\textup{Acc}(\alocation,j) \egdef \{\alocation'\in F \mid \text{there exists} \  i_{j-1} \leq k < i_j  \text{ such that } \alocation'\in \saflab(\safra_k,J)  \\
 \text{ and }  (\alocation_k,\aletter_k,\acons'_k,\alocation_{k+1})\dots(\alocation_{i_j-1},\aletter_{i_j-1},\acons'_{i_j-1},\alocation_{i_j})  \\
\ \text{with} \ \alocation_k=\alocation', \alocation_{i_j}=\alocation \  \text{is a finite run of } \aautomaton \ \text{on } (\aletter_{k},\vect{z}_k)\dots(\aletter_{i_j},\vect{z}_{i_j})  \}; 
\end{align*}
and for every $j\geq 2$, every $i_{j-1} \leq k < i_j $ and every $\alocation \in \saflab(\safra_{k},J)\cap F$, 
we define  
\begin{align*}
  \text{Pre}(\alocation,j,k)\egdef
  \{\alocation'\in \saflab(\safra_{i_{j-1}},J) \mid  \text{there exists some finite run  } 
  (\alocation_{i_{j-1}},\aletter_{i_{j-1}},\acons'_{i_{j-1}},\alocation_{i_{j-1}+1}) \\
  \dots(\alocation_{k-1},\aletter_{k-1},\acons_{k-1},\alocation_k) 
  \text{ with } \alocation' = \alocation_{i_{j-1}}, \alocation=\alocation_k  \text{ of } \aautomaton \text{ on } \\
  (a_{i_{j-1}},\vect{z}_{i_{j-1}})\dots (\aletter_{k},\vect{z}_{k}) \}.
\end{align*} 
In Appendix~\ref{appendix-proof-lemma-safra-correctness}, we prove that these sets $\textup{Acc}(\alocation,j)$
and $\text{Pre}(\alocation,j,k)$ are nonempty. 
Next we define an infinite tree from which we will derive an initialized  B\"uchi accepting run of $\aautomaton$ on $\aword$. 
Each node in this tree has a unique identifier in 
$\{\locations_\init\} \, \cup  \bigcup_{j\geq 1} I_{\acc,j} \cup \bigcup_{j\geq 1} I_{\text{mrkd},j}$, where
\begin{itemize}
\item $I_{\acc,j} \egdef \{(\alocation, j) \mid \alocation\in\textup{Acc}(\alocation',j) \text{ for some } \alocation'\in \saflab(\safra_{i_j}, J)\}$,
\item $I_{\text{mrkd},j}\egdef \{(\alocation,\alocation',j) \mid \alocation\in\saflab(\safra_{i_j},J), \alocation'\in\textup{Acc}(\alocation,j)\}$.
\end{itemize}
The root node of the tree has identifier $\locations_\init$, 
and the root node is parent of a  node $\anode$ iff the identifier of the $\anode$ is $(\alocation_1,1)$ for some $(\alocation_1,1)\in I_{\acc,1}$. 
For all $j\geq 1$, 
node with identifier $(\alocation_j,j)\in I_{\acc,j}$ is parent of every node $\anode$ iff the  identifier of $\anode$ is 
$(\alocation'_j,\alocation_j,j)$ for some $(\alocation'_j,\alocation_j,j)\in I_{\text{mrkd},j}$. 
Finally, 
for all $j\geq 1$, 
a node with identifier $(\alocation'_j,\alocation_j,j)\in I_{\text{mrkd},j}$ is parent of a node $\anode$ iff the identifier of $\anode$ is $(\alocation_{j+1},j+1)$ for some $(\alocation_{j+1},j+1) \in I_{\acc,j+1}$ such that
$\alocation'_j\in \text{Pre}(\alocation_{j+1},j+1,k)$ for some $i_{j}\leq k<i_{j+1}$.
Since the sets $\textup{Acc}(\alocation,j)$ and $\text{Pre}(\alocation,j,k)$
are never empty, 
every node has at least one child, and every node has at most $\card{\locations}^2$ child nodes. 
In particular, this infinite tree is finite branching. 
By K\"onig's Lemma, there must be some infinite path from the root node of the tree, let us say, of the form $\anode_0, \anode_1, \anode_2\dots$ 
By construction, we have
\begin{itemize}
\item $\anode_0 = \locations_\init$, 
\item $\anode_1 = (p_1, 1)$ for some $p_1\in \saflab(\safra_{k_1},J)\cap F$, $i_0\leq k_1<i_1$, 
\item $\anode_2 = (p_2,p_1, 1)$ for some $p_2\in \saflab(\safra_{i_1}, J)$ such that $p_1\in \textup{Acc}(p_2,1)$,
\item $\anode_3 = (p_3, 2)$ for some $p_3\in\saflab(\safra_{k_2}, J)\cap F$, $i_1\leq k_2<i_2$, such that $p_2\in \textup{Pre}(p_3,2,k_2)$, 
\item $\anode_4 = (p_4,p_3,2)$ for some $p_4\in \saflab(\safra_{i_2}, J)$ such that $p_3\in \textup{Acc}(p_4,2)$,
\item etc. 
\end{itemize}
Let us argue that this yields an initialized B\"uchi-accepting run of $\aautomaton$ on $\aword$. 
One can show that
there exists some initialized finite run from $\alocation_\init\in\locations_\init$  to $p_1$ of $\aautomaton$ on
$(\aletter_1,\vect{z}_1) \dots (\aletter_{k_1},\vect{z}_{k_1})$ (as a consequence of
Lemma~\ref{lemma_safra_powerset_path} in Appendix \ref{appendix-proof-lemma-safra-correctness}). 
From $p_1\in \textup{Acc}(p_2,1)$, we obtain  a run from $p_1$ to $p_2$ of $\aautomaton$ on $(\aletter_{k_1},\vect{z}_{k_1})\dots(\aletter_{i_1},\vect{z}_{i_1})$, etc. 
Putting the pieces together, we obtain an infinite initialized run of $\aautomaton$ on $\aword$. 
Since $p_1, p_3, \dots$ are in $F$, the run is B\"uchi accepting. 

We recall that
$\card{\sattypes{\beta}} \leq \boundsattypes$. 
To define $\aautomaton'$, in the constraints,
we need to know how the variables are compared
to the constants, but not necessarily to determine the equality with a value when
the variable is strictly between $\adatum_i$ and $\adatum_{i+1}$ for some $i$. 
So the above bound can be certainly improved but it is handful to use
the set $\sattypes{\beta}$ already defined in this document. 
\end{proof}

As observed by one anonymous reviewer, an alternative approach for determinisation consists in
viewing word constraint automata as B\"uchi automata (or Rabin automata) over types (made of constraints)
and then determinize it using Safra's construction as a blackbox.
Most probably, this approach would work  by introducing
the alphabet of satisfiable types; the semantics of Rabin word automata
would then take care of which infinite sequences of satisfiable types
is satisfiable. Instead, in this section, we worked directly with constraint word
automata and we lifted arguments from Safra's thesis, apart from our
handling of constraints.
On the downside, this may not be optimal size-wise, but 
we prefer to follow our approach to work directly with constraint automata
and to control most of the technical developments involved in the work.

\section{Proving the correctness of the condition $(\texorpdfstring{\newbigstar}{*\textasciicircum C})$}
\label{section-starproperty}
This section is dedicated to the proof of Proposition~\ref{proposition-star-oplus} (see Section~\ref{section-introduction-starproperty}): for every regular locally consistent symbolic tree $\asymtree$, 
    $\asymtree$ satisfies $(\newbigstar)$ iff $\asymtree$ is satisfiable.
This is all the more important as the condition $(\newbigstar)$ is central in our paper.
Below, we assume that $\asymtree: \interval{0}{\degree-1}^* \to \aalphabet \times \sattypes{\beta}$
is a locally consistent symbolic tree. 

Given a finite path $\apath = \pair{\anode_0}{\advar_0} \step{\sim_1} \pair{\anode_1}{\advar_1}
\cdots \step{\sim_n} \pair{\anode_n}{\advar_n}$ in $\newGt$,
its \defstyle{strict length}, written $\slen{\apath}$, is the number of edges labelled by `$<$' in $\apath$, i.e.
$\card{\set{i \in \interval{1}{n} \mid \ \sim_i \ \mbox{equal to} \ <}}$.
Given two nodes $\pair{\anode}{\advar}$ and $\pair{\anode'}{\advar'}$, the \defstyle{strict length}
from $\pair{\anode}{\advar}$ to $\pair{\anode'}{\advar'}$, written
$\slen{\pair{\anode}{\advar},\pair{\anode'}{\advar'}}$, is the supremum of
all the strict lengths of paths from $\pair{\anode}{\advar}$ to $\pair{\anode'}{\advar'}$. 
Though the strict length of any finite path is always finite, $\slen{\pair{\anode}{\advar},\pair{\anode'}{\advar'}}$
may be infinite.
Given $\pair{\anode}{\avariable} \in U_{< \adatum_1}$, its \defstyle{strict length}, written $\slen{\anode,\avariable}$,
is defined as $\slen{\pair{\anode}{\avariable},\pair{\anode}{\adatum_1}}$.
Given $\pair{\anode}{\avariable} \in U_{> \adatum_{\alpha}}$, its \defstyle{strict length}, written
$\slen{\anode,\avariable}$,
is defined as $\slen{\pair{\anode}{\adatum_{\alpha}},\pair{\anode}{\avariable}}$.

\subsection{A simple characterisation for satisfiability}
\label{section-starproperty-characterisation}
First, we establish a few auxiliary results about $\newGt$ that are helpful
in the sequel and that take advantage of the
fact that every type in $\sattypes{\beta}$ is satisfiable.

\begin{lem} \label{lemma-shortcircuit-equality-lessthan}
  Let $\apath = \pair{\anode_0}{\advar_0} \step{\sim_1} \cdots \step{\sim_n} \pair{\anode_n}{\advar_n}$ be a path
  in $\newGt$ such that $\anode_0$ and $\anode_n$ are neighbours.
  If $\sim_1 = \cdots = \sim_n$ is equal to `$=$',
  then
  $\pair{\anode_0}{\advar_0} \step{=} \pair{\anode_n}{\advar_n}$,
  otherwise $\pair{\anode_0}{\advar_0} \step{<} \pair{\anode_n}{\advar_n}$.
\end{lem}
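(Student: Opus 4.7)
The plan is to reduce the purely graph-theoretic claim to an arithmetic statement over $\Zed$ by concretely realising the relevant finite fragment of $\asymtree$. First I would let $N \subseteq \interval{0}{\degree-1}^*$ be the finite set of underlying tree nodes visited by $\apath$; in particular $\anode_0,\anode_n \in N$ and $N$ induces a finite subtree of $\interval{0}{\degree-1}^*$. Using local consistency of $\asymtree$ together with the satisfiability of every type in $\sattypes{\beta}$ and the unboundedness of $\Zed$ above $\adatum_\alpha$ and below $\adatum_1$, I would build by top-down induction along this induced subtree a valuation $\avaluation: N \times \DVAR{\beta}{\adatum_1}{\adatum_\alpha} \to \Zed$ such that $\avaluation(\anode,\adatum)=\adatum$ for constants and such that for every $\anode \in N$ the type $\acons_\anode$ is satisfied when each $\avariable_i$ is interpreted as $\avaluation(\anode',\avariable_i)$ for the parent $\anode'$ of $\anode$ in $N$ (or arbitrary integers if $\anode$ is the root of $N$) and each $\avariable_i'$ as $\avaluation(\anode,\avariable_i)$. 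The inductive step is enabled by local consistency: it guarantees that the primed values already fixed by the parent's type form a valid unprimed part for the child's type, so by satisfiability of the child's type the primed part at the child can be completed.

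Once $\avaluation$ is in place, a direct case analysis on the defining clauses (VAR), (P1)--(P4), and (CONS) -- essentially mirroring the proof of Lemma~\ref{lemma-correctness-newgt} -- shows that every edge $\pair{\anode_i}{\advar_i} \step{\sim_{i+1}} \pair{\anode_{i+1}}{\advar_{i+1}}$ of $\apath$ implies $\avaluation(\anode_i,\advar_i) \mathbin{\sim_{i+1}} \avaluation(\anode_{i+1},\advar_{i+1})$. Composing along $\apath$ then yields $\avaluation(\anode_0,\advar_0) = \avaluation(\anode_n,\advar_n)$ when all labels are `$=$', and $\avaluation(\anode_0,\advar_0) < \avaluation(\anode_n,\advar_n)$ as soon as at least one label is `$<$'.

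Finally, to produce the direct edge I would split on the three possible neighbour configurations for $\anode_0,\anode_n$ (equality, or one a child of the other) and on whether $\advar_0,\advar_n$ are variables or constants from $\set{\adatum_1,\adatum_\alpha}$. In every subcase the concrete relation $\sim$ observed between $\avaluation(\anode_0,\advar_0)$ and $\avaluation(\anode_n,\advar_n)$ is reflected either as an explicit conjunct in the appropriate type (recall that types in $\sattypes{\beta}$ are complete, so every pair of terms is compared) or as a comparison of constant regions, thereby furnishing a direct edge $\pair{\anode_0}{\advar_0} \step{\sim} \pair{\anode_n}{\advar_n}$ via (VAR), (P1)--(P4), or (CONS). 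The delicate point is the inductive construction of $\avaluation$: the parent's assignment fixes the unprimed skeleton of the child and one must exhibit primed integer values compatible with it; this always succeeds because inside $\interval{\adatum_1}{\adatum_\alpha}$ the type pins values to specific integers while outside this interval $\Zed$ is unbounded, leaving ample room to realise any prescribed relative order and any edge to a constant-region element.
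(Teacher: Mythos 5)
Your overall architecture (realise the fragment concretely, transport the relation along the path arithmetically, then reflect the resulting relation back into an edge of $\newGt$ using completeness of the types and the region sets $U_{\adatum}$, $U_{<\adatum_1}$, $U_{>\adatum_\alpha}$) would be sound \emph{if} the first step worked, and the reflection step in your last paragraph is correct. The genuine gap is exactly at the point you flag as delicate: the top-down greedy construction of $\avaluation$ fails over $\Zed$, and the justification you give for it (unboundedness of $\Zed$ outside $\interval{\adatum_1}{\adatum_\alpha}$) does not address the real obstacle, which is discreteness. Concretely, take $\beta=2$ and a child type containing $\avariable_1<\avariable_1'<\avariable_2$ with $\avariable_1,\avariable_2,\avariable_1'$ all constrained to lie below $\adatum_1$. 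This type is satisfiable and its unprimed restriction ($\avariable_1<\avariable_2<\adatum_1$) can agree with the parent's primed restriction, yet if the parent's assignment happened to choose $\avariable_1\mapsto \adatum_1-2$ and $\avariable_2\mapsto \adatum_1-1$, no integer value for $\avariable_1'$ exists. So satisfiability of a complete type does \emph{not} imply that every valuation of its unprimed half extends; that implication holds over $\Rat$ (density), which is precisely why local consistency suffices for $\pair{\Rat}{<}$ but not for $\Zed$ --- the phenomenon that forces the whole $(\newbigstar)$ machinery.

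The needed statement ``every finite locally consistent fragment is realisable'' is in fact true, but any correct proof of it must first rule out cycles containing a strict edge and then assign values via strict lengths (as in the proof of Lemma~\ref{lemma-characterisation-satisfiability-goplus}); acyclicity of strict edges is Lemma~\ref{lemma-nocycle}, which in the paper is a \emph{corollary} of the very lemma you are proving, so this repair is essentially circular. The paper instead argues purely symbolically and locally: it first shows that any path of length $\geq 2$ between neighbouring endpoints passes through an intermediate node $\anode_h$ that is pairwise neighbour to both endpoints, then composes two edges between pairwise-neighbour elements inside a single satisfiable complete type (where transitivity is immediate), and concludes by induction on the length of the path. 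You should either adopt that local composition argument or supply an honest, non-greedy proof of finite realisability that does not presuppose the lemma.
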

See the proof in Appendix~\ref{appendix-proof-lemma-shortcircuit-equality-lessthan}. 
As a corollary, we get the following lemma. 
\begin{lem} \label{lemma-nocycle}
Let $\apath = \pair{\anode_0}{\advar_0} \step{\sim_1} \cdots \step{\sim_n} \pair{\anode_n}{\advar_n}$ be a path
  in $\newGt$ such that $\pair{\anode_0}{\advar_0} = \pair{\anode_n}{\advar_n}$.
  All $\sim_i$'s are equal to $=$. 
\end{lem}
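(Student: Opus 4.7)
The plan is to derive Lemma~\ref{lemma-nocycle} as an almost immediate corollary of Lemma~\ref{lemma-shortcircuit-equality-lessthan}. The key observation is that the hypothesis of Lemma~\ref{lemma-shortcircuit-equality-lessthan}---that $\anode_0$ and $\anode_n$ are neighbours---is satisfied for free by any cycle: since $\pair{\anode_0}{\advar_0} = \pair{\anode_n}{\advar_n}$ we have in particular $\anode_0 = \anode_n$, and every tree position is trivially a neighbour of itself. Thus I would apply Lemma~\ref{lemma-shortcircuit-equality-lessthan} directly to the whole cyclic path without any preparation.

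The dichotomy provided by that lemma then splits into two cases: either all the $\sim_i$ are equal to `$=$' (which is precisely the statement to be proved), or we are forced to conclude the existence of an edge $\pair{\anode_0}{\advar_0} \step{<} \pair{\anode_0}{\advar_0}$, i.e.\ a strict self-loop in $\newGt$. The remaining task is to exclude the second alternative, and this is a routine syntactic inspection of the edge-generating rules. When $\advar_0 = \avariable_i$ is a variable, rule (VAR) with $\anode = \anode'$ and the same variable on both sides would demand that the type $\acons$ with $\asymtree(\anode_0) = \pair{\aletter}{\acons}$ entails $\avariable_i' < \avariable_i'$; but a type is a satisfiable conjunction of atomic constraints whose inequality conjuncts relate only \emph{distinct} terms, so no such entailment is possible. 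When $\advar_0 \in \{\adatum_1, \adatum_\alpha\}$ is a constant, (P4) places $\pair{\anode_0}{\advar_0}$ in the class $U_{\advar_0}$, and rule (CONS) applied to $\pair{\anode_0}{\advar_0}$ and itself demands $\advar_0 < \advar_0$, which is patently false.

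The main (and essentially only) conceptual obstacle is spotting that Lemma~\ref{lemma-shortcircuit-equality-lessthan} already closes the cycle under ``concatenation'' without any path-contraction induction on the length $n$; once this is seen, everything else is a direct check on the five clauses (VAR), (P1)--(P4), (CONS) defining $\newGt$. In particular, no construction of concrete valuations is needed---the argument lives entirely at the combinatorial level of the symbolic graph, which is precisely why it works uniformly even when $\asymtree$ is not (yet known to be) satisfiable.
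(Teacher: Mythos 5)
Your proof is correct and follows essentially the same route as the paper, which states this lemma as an immediate corollary of Lemma~\ref{lemma-shortcircuit-equality-lessthan} without further elaboration. Your filled-in details — that a node is its own neighbour, and that no strict self-loop $\pair{\anode}{\advar} \step{<} \pair{\anode}{\advar}$ can be generated by (VAR) or (CONS) because types are satisfiable conjunctions relating distinct terms and no constant satisfies $\adatum < \adatum$ — are exactly the intended justification.
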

Below, we provide a simple characterisation for a locally consistent symbolic tree to be satisfiable. Note that
this is different from~\cite[Lemma 5.18]{Labai21} because we have \emph{constant elements}
of the form $\pair{\anode}{\adatum_1}$ and $\pair{\anode}{\adatum_{\alpha}}$ in $\newGt$. 
\begin{lem} \label{lemma-characterisation-satisfiability-goplus}
  Let $\asymtree: \interval{0}{\degree-1}^* \to \aalphabet \times \sattypes{\beta}$ be a locally consistent symbolic  tree. The statements below are
  equivalent.
  \begin{description}
  \item[(I)] $\asymtree$ is satisfiable. 
  \item[(II)] For all $\pair{\anode}{\avariable}$ in $U_{< \adatum_1} \cup U_{> \adatum_{\alpha}}$ in $\newGt$,
              $\slen{\anode,\avariable} < \omega$. 
  \end{description}
\end{lem}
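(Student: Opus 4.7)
The plan is to prove both directions of the equivalence, with the forward direction being quick and the reverse direction requiring an explicit construction of a witnessing tree.

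For (I) $\Rightarrow$ (II), suppose $\atree$ witnesses satisfaction of $\asymtree$, and let $\pair{\anode}{\avariable} \in U_{<\adatum_1}$. By (P2) and the definition of witnessing, $\atree(\anode)(\avariable) < \adatum_1$. Any finite path in $\newGt$ from $\pair{\anode}{\avariable}$ to $\pair{\anode}{\adatum_1}$ of strict length $k$ translates, via Lemma~\ref{lemma-correctness-newgt}, into a chain of integer values related by $\leq$ with $k$ strict inequalities. This forces $k \leq \adatum_1 - \atree(\anode)(\avariable) < \omega$, so $\slen{\anode,\avariable} < \omega$. The argument for $\pair{\anode}{\avariable} \in U_{>\adatum_\alpha}$ is symmetric.

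For (II) $\Rightarrow$ (I), the plan is to \emph{define} a witnessing $\atree$ directly from strict lengths. Set
\[
\atree(\anode)(\avariable) \egdef
\begin{cases}
\adatum & \text{if } \pair{\anode}{\avariable} \in U_{\adatum} \text{ with } \adatum \in \interval{\adatum_1}{\adatum_\alpha}, \\
\adatum_1 - \slen{\anode,\avariable} & \text{if } \pair{\anode}{\avariable} \in U_{<\adatum_1}, \\
\adatum_\alpha + \slen{\anode,\avariable} & \text{if } \pair{\anode}{\avariable} \in U_{>\adatum_\alpha}.
\end{cases}
\]
By assumption (II) all values are finite integers, and by definition they lie in the correct half-line relative to $\adatum_1,\adatum_\alpha$. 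It remains to verify that, for each node $\anode$ with $\asymtree(\anode) = \pair{\cdot}{\acons}$, the type $\acons$ is satisfied by $\atree$ at $\anode$ and its parent (and for $\varepsilon$ via the convention $\vect{0}$, which reduces to the same kind of check once one observes that the assignment forces compatible class memberships).

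The key reduction, and the main obstacle, is the following edge-preservation claim: for every edge $\pair{\anode}{\advar} \step{\sim} \pair{\anode'}{\advar'}$ in $\newGt$ with $\sim\, \in \set{<,=}$, we have $\atree(\anode)(\advar) \sim \atree(\anode')(\advar')$ (extending $\atree$ by $\atree(\anode)(\adatum_1)\egdef\adatum_1$ and $\atree(\anode)(\adatum_\alpha)\egdef\adatum_\alpha$). Once this is established, every atomic conjunct of $\acons$ --- the constant comparisons encoded by the partition $(U_\cdot)$ and the term-term comparisons encoded by $\step{=},\step{<}$ via (VAR) --- is verified, and since $\acons \in \sattypes{\beta}$ is a complete description, $\acons$ holds. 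The claim itself is proved by inspecting the partition classes of the two endpoints. For equality edges, satisfiability of the relevant types in $\sattypes{\beta}$ forces both endpoints into the same class $U_\cdot$ (otherwise the type would be unsatisfiable, contradicting membership in $\sattypes{\beta}$); for the nonconstant classes, one then shows that concatenating the equality edge with any path to $\pair{\cdot}{\adatum_1}$ (or from $\pair{\cdot}{\adatum_\alpha}$) yields a path of identical strict length in the other direction, so the two strict lengths coincide. For strict edges, a case analysis on the classes of the endpoints, combined with the observation that prepending or appending a $\step{<}$-edge to a path of strict length $k$ yields a path of strict length $k+1$, gives $\atree(\anode)(\advar) < \atree(\anode')(\advar')$. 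Local consistency of $\asymtree$ is used throughout to glue paths across parent-child boundaries, so that the strict-length functions defined locally at each node extend coherently across the tree.
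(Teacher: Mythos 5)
Your proposal is correct and follows essentially the same route as the paper: the forward direction is the same counting argument via Lemma~\ref{lemma-correctness-newgt}, and the reverse direction uses the identical assignment $\adatum$, $\adatum_1-\slen{\anode,\avariable}$, $\adatum_\alpha+\slen{\anode,\avariable}$ according to the partition class. The only difference is cosmetic: the paper verifies the types by a case analysis on their atomic conjuncts, whereas you package the same case analysis as a single edge-preservation claim (the converse of Lemma~\ref{lemma-correctness-newgt}) and then observe that completeness of types in $\sattypes{\beta}$ reduces type satisfaction to edges plus class memberships.
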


The proof can be found in Appendix~\ref{appendix-proof-lemma-characterisation-satisfiability-goplus}. 
We provide an elementary self-contained proof that is similar to the proof of~\cite[Lemma 7.1]{Demri&DSouza07}. 
We remark that our proof does not use~\cite[Lemma 34]{Carapelle&Kartzow&Lohrey16} as was done in the proof of~\cite[Lemma 5.18]{Labai21}. Observe that $G_{\asymtree}$ from~\cite{Labai21} does not contain constant nodes, which is problematic to apply~\cite[Lemma 34]{Carapelle&Kartzow&Lohrey16}.
Our proof of Lemma~\ref{lemma-characterisation-satisfiability-goplus} is direct, although it would be possible to use~\cite[Lemma 34]{Carapelle&Kartzow&Lohrey16} on our labelled graph $\newGt$.

\subsection{Final steps in the proof of Proposition~\ref{proposition-star-oplus}}
\label{section-starproperty-finalsteps} 
When $\asymtree$ is regular and $\newGt$ has an element with an infinite
strict length, the proof of  Proposition~\ref{proposition-star-oplus} firstly consists in
showing that the existence of paths with infinitely increasing strict lengths can be further constrained
so that the paths are without detours and with a strict discipline to visit the tree structure
underlying $\newGt$. That is why, below, we introduce several restrictions on paths
followed by forthcoming Lemma~\ref{lemma-equivalences-slen} stating the main properties
we aim for. Actually, this approach is borrowed from the proof sketch for~\cite[Lemma 22]{Labai&Ortiz&Simkus20} as well as
from the detailed developments in Labai's PhD thesis~\cite[Section 5.2]{Labai21}.
Hence, the main ideas in the developments
below are due to~\cite{Labai&Ortiz&Simkus20,Labai21}, sometimes adapted and completed to meet our needs (for instance,
we introduce a simple and explicit taxonomy on paths). 

A path $\apath = \pair{\anode_0}{\advar_0} \step{\sim_1} \cdots \step{\sim_n} \pair{\anode_n}{\advar_n}$ is \defstyle{direct}
iff (1)--(3) below hold: 
\begin{enumerate}
\item for all $i \in \interval{1}{n-1}$, $\anode_i \neq \anode_{i+1}$
      (if $n = 1$, then we authorise $\anode_0 = \anode_1$),
\item for all $j \neq i$, $\pair{\anode_i}{\advar_i} \neq
  \pair{\anode_j}{\advar_j}$ (no element is visited twice, see Lemma~\ref{lemma-nocycle}),
\item for all $i < j$, if $\anode_i = \anode_j$ and $n > 1$, then
  $<$ belongs to $\set{\sim_{i+1}, \ldots, \sim_j}$
  (revisiting a node implies some progress in the strict length).
\end{enumerate}
We write $\dslen{\pair{\anode}{\advar},\pair{\anode'}{\advar'}}$ to denote the strict length between
$\pair{\anode}{\advar}$ and $\pair{\anode'}{\advar'}$ based on {\em direct} paths only. Obviously,
$$
\dslen{\pair{\anode}{\advar},\pair{\anode'}{\advar'}} \leq \slen{\pair{\anode}{\advar},\pair{\anode'}{\advar'}} .
$$
Similarly, a path $\apath = \pair{\anode_0}{\advar_0} \step{\sim_1} \cdots \step{\sim_n} \pair{\anode_n}{\advar_n}$
is \defstyle{rooted} iff all the $\anode_i$'s are either descendants of the initial node $\anode_0$,
or equal to it (equivalently, the unique parent node of $\anode_0$, if any, is never visited).
We write $\rslen{\pair{\anode}{\advar},\pair{\anode}{\advar'}}$ (same node $\anode$ for both elements) to denote
the strict length between $\pair{\anode}{\advar}$ and $\pair{\anode}{\advar'}$
based on  {\em direct and rooted} paths only. 
This definition extends to $\rslen{\anode, \advar}$ with $\pair{\anode}{\advar} \in
(U_{< \adatum_1} \cup U_{> \adatum_{\alpha}})$. Obviously,
$$
\rslen{\pair{\anode}{\advar},\pair{\anode}{\advar'}} \leq
\slen{\pair{\anode}{\advar},\pair{\anode}{\advar'}}, \ \ \ \ \ 
\rslen{\anode, \advar} \leq \slen{\anode, \advar} .
$$
The last restriction we consider on direct and rooted paths is to be made of a unique descending part
followed by a unique ascending part. The terms `descending' and `ascending' refer to the underlying tree
structure for $\interval{0}{\degree-1}^*$.
A path $\apath = \pair{\anode_0}{\advar_0} \step{\sim_1} \cdots \step{\sim_n} \pair{\anode_n}{\advar_n}$
is \defstyle{$\downtouparrow$-structured} iff there is $i \in \interval{0}{n}$ such that
$\anode_0 \prefix \anode_1 \prefix \cdots \prefix \anode_i$ and
$\anode_n \prefix \anode_{n-1} \prefix \cdots \prefix \anode_{i}$, where $\prefix$ denotes
the (strict) prefix relation (in this context, we get the parent-child relation in the tree $\interval{0}{\degree-1}^*$).
We write $\uslen{\pair{\anode}{\advar},\pair{\anode}{\advar'}}$ (same node $\anode$ for both elements) to denote
the strict length between $\pair{\anode}{\advar}$ and $\pair{\anode}{\advar'}$ based on  {\em direct, rooted
and $\downtouparrow$-structured} paths only.
This definition extends to $\uslen{\anode, \advar}$ with $\pair{\anode}{\advar} \in
(U_{< \adatum_1} \cup U_{> \adatum_{\alpha}})$. Again,
$\uslen{\pair{\anode}{\advar},\pair{\anode}{\advar'}} \leq \slen{\pair{\anode}{\advar},\pair{\anode}{\advar'}}$
and
$\uslen{\anode, \advar} \leq \slen{\anode, \advar}$.
In Figure~\ref{figure-Gct}, we illustrate  the different kinds of paths on a subgraph of $\newGt$
with $\beta =2$, $\alpha = 1$ and $\adatum_1 = 7$.
\begin{figure}
\begin{center}
\scalebox{.95}{
\begin{tikzpicture}[node distance=2cm,scale=.8]
\tikzset{every state/.style={minimum size=0pt},
dotted_block/.style={draw=black!20!white, line width=1pt, dotted, inner sep=3mm, minimum width=4.3cm, rectangle, rounded corners, minimum height=8mm},       
boxnode/.style={rectangle, draw=black,minimum width=10mm,rounded corners=0.2cm}};

\node[boxnode] (ep1) at (0,0) {$\pair{\varepsilon}{\avariable_1}$}; 
\node[boxnode] (ep2) at (1.8,0) {$\pair{\varepsilon}{\avariable_2}$}; 
\node[boxnode] (ep3) at (3.5,0) {$\pair{\varepsilon}{7}$};

\node [dotted_block] at (1.7,0) {};
\node [dotted_block,minimum width=4.3cm] at (-3.5,-1.5) {};
\node [dotted_block,minimum width=4.3cm] at (2.3,-1.5) {};
\node [dotted_block,minimum width=4.3cm] at (8.3,-1.5) {};
\node [dotted_block,minimum width=5.2cm] at (2.9,-3) {};
\node [dotted_block,minimum width=5.2cm] at (-3.7,-3) {};
\node [dotted_block,minimum width=5.2cm] at (9.5,-3) {};

\node[boxnode] (01) at (-5.2,-1.5) {$\pair{0}{\avariable_1}$};
\node[boxnode] (02) at (-3.4,-1.5) {$\pair{0}{\avariable_2}$};
\node[boxnode] (03) at (-1.7,-1.5) {$\pair{0}{7}$};

\node[boxnode] (11) at (0.6,-1.5) {$\pair{1}{\avariable_1}$};
\node[boxnode] (12) at (2.4,-1.5) {$\pair{1}{\avariable_2}$};
\node[boxnode] (13) at (4.1,-1.5) {$\pair{1}{7}$};

\node[boxnode] (21) at (6.6,-1.5) {$\pair{2}{\avariable_1}$};
\node[boxnode] (22) at (8.4,-1.5) {$\pair{2}{\avariable_2}$};
\node[boxnode] (23) at (10.1,-1.5) {$\pair{2}{7}$};

\node[boxnode] (011) at (-5.8,-3) {$\pair{0\!\cdot\!1}{\avariable_1}$};
\node[boxnode] (012) at (-3.6,-3) {$\pair{0\!\cdot\!1}{\avariable_2}$};
\node[boxnode] (013) at (-1.5,-3) {$\pair{0\!\cdot\!1}{7}$};

\node[boxnode] (111) at (0.8,-3) {$\pair{1\!\cdot\!1}{\avariable_1}$};
\node[boxnode] (112) at (3,-3) {$\pair{1\!\cdot\!1}{\avariable_2}$};
\node[boxnode] (113) at (5.1,-3) {$\pair{1\!\cdot\!1}{7}$};

\node[boxnode] (211) at (7.4,-3) {$\pair{2\!\cdot\!1}{\avariable_1}$};
\node[boxnode] (212) at (9.6,-3) {$\pair{2\!\cdot\!1}{\avariable_2}$};
\node[boxnode] (213) at (11.7,-3) {$\pair{2\!\cdot\!1}{7}$};

\path[->] (22) edge [bend right,left] node  {} (211);
\path[->] (211) edge [bend left,left] node  {} (212);
\node at (9,-2.2) {\tiny{\bf non-direct path}};

\path[->,line width=1.1pt] (02) edge [bend left,right] node[pos=0.7,left]  {$<$} (013);
\path[->,line width=1.1pt] (013) edge  node  {} (01);
\path[->,line width=1.1pt] (01) edge [bend left,left] node  {} (ep1);
\node at (-2,-0.7) {\tiny{\bf direct, non-rooted path}};

\path[->,line width=1.9pt] (ep1) edge  node  {} (13);
\path[->,line width=1.9pt] (13) edge  [bend left,right] node  {$=$} (113);
\path[->,line width=1.9pt] (113) edge [bend left,left] node[pos=0.8,right]  {$<$} (12);
\path[->,line width=1.9pt] (12) edge [bend right,left] node  {} (111);
\node at (5.5,-0.7) {\tiny{\bf direct, rooted, non $\downtouparrow$-structured path}};

\end{tikzpicture}
}
\end{center}
\caption{Different kinds of paths in $\newGt$}
\label{figure-Gct}
\end{figure}
The equivalence between (II)(1) and (II)(4) in Lemma~\ref{lemma-equivalences-slen} below,
is the main property to prove Proposition~\ref{proposition-star-oplus}.
The proof for the equivalence between (II)(1) and (II)(3)
(resp.  between (II)(3) and (II)(4)) is inspired from
the proof sketch of~\cite[Lemma 5.20]{Labai21}
(resp. from the proof  of~\cite[Lemma 5.27]{Labai21}).
In both cases, we provide several substantial adjustments. 
\begin{lem} \label{lemma-equivalences-slen}\ 
\begin{description}
\item[(I)] If $\slen{\pair{\anode}{\advar},\pair{\anode'}{\advar'}} < \omega$, then
  $$
  \frac{\slen{\pair{\anode}{\advar},\pair{\anode'}{\advar'}}}{\beta+2}
  \leq
  \dslen{\pair{\anode}{\advar},\pair{\anode'}{\advar'}}
  \leq 
  \slen{\pair{\anode}{\advar},\pair{\anode'}{\advar'}} 
  $$ 
 
\item[(II)] The statements below are equivalent.
  \begin{enumerate}
  \item There is $\pair{\anode}{\avariable}$ such that $\slen{\anode,\avariable} = \omega$.
  \item There is $\pair{\anode}{\avariable}$ such that $\dslen{\anode,\avariable} = \omega$.
  \item There is $\pair{\anode}{\avariable}$ such that $\rslen{\anode,\avariable} = \omega$.
  \item There is $\pair{\anode}{\avariable}$ such that $\uslen{\anode,\avariable} = \omega$. 
  \end{enumerate}
\end{description}
\end{lem}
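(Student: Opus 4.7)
For part (I), the upper bound $\dslen \le \slen$ is immediate from the definitions. For the lower bound, I would take a path $\apath$ from $\pair{\anode}{\advar}$ to $\pair{\anode'}{\advar'}$ of strict length $m$ and produce a direct path with strict length at least $m/(\beta+2)$. The construction partitions $\apath$ into maximal chunks, where a chunk is a maximal subsegment consisting of edges that stay within a single tree node. Within each chunk I first remove any cycles via Lemma~\ref{lemma-nocycle} (such cycles carry only $=$ edges, so no strict edges are lost), leaving at most $\beta+2$ distinct elements at that node and hence at most $\beta+1$ intra-chunk strict edges. Using Lemma~\ref{lemma-shortcircuit-equality-lessthan}, I would then absorb each chunk (including its initial and final intra-chunk structure) into adjacent inter-chunk edges, obtaining a path between the original endpoints whose consecutive interior nodes differ. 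A further round of shortcuts, again via Lemma~\ref{lemma-shortcircuit-equality-lessthan} and Lemma~\ref{lemma-nocycle}, eliminates repeated elements and enforces that any revisit of a node carries strict progress, yielding a direct path. Each of its strict edges accounts for at most $\beta+2$ strict edges of $\apath$ (one inter-chunk edge plus up to $\beta+1$ intra-chunk strict edges), giving $m \le (\beta+2)\,\dslen{\pair{\anode}{\advar},\pair{\anode'}{\advar'}}$.

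For part (II), the implications $(4)\Rightarrow(3)\Rightarrow(2)\Rightarrow(1)$ are immediate from $\uslen{\cdot} \le \rslen{\cdot} \le \dslen{\cdot} \le \slen{\cdot}$: any element witnessing the smaller quantity being $\omega$ also witnesses the larger one. The implication $(1)\Rightarrow(2)$ follows from part (I), since any finite strict length $n$ realised between two elements yields a direct path of strict length at least $n/(\beta+2)$, so unboundedness of $\slen{\anode,\avariable}$ forces unboundedness of $\dslen{\anode,\avariable}$.

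The remaining implications $(2)\Rightarrow(3)$ and $(3)\Rightarrow(4)$ are the technical core, and crucially exploit the regularity of $\asymtree$---the finiteness of its set of distinct subtrees---through pumping and pigeonhole arguments in the spirit of~\cite{Labai&Ortiz&Simkus20,Labai21}. For $(2)\Rightarrow(3)$, I would assume $\dslen{\anode_0,\avariable_0} = \omega$ and consider an infinite family of direct paths witnessing unbounded strict length. Such paths may leave the subtree rooted at $\anode_0$ by climbing to ancestors, but because only finitely many subtree types occur in $\asymtree$, a pumping argument produces a node $\anode_1$ and an element with $\pair{\anode_1}{\advar_1} \in U_{<\adatum_1} \cup U_{>\adatum_\alpha}$ for which arbitrarily large strict lengths are realised by direct paths that remain within the subtree rooted at $\anode_1$, yielding $\rslen{\anode_1,\advar_1} = \omega$. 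For $(3)\Rightarrow(4)$, a rooted direct path decomposes into a sequence of descent--ascent excursions from $\anode_1$; pigeonhole on the finitely many subtree types visited at the bottoms of these excursions isolates a single down-then-up segment whose strict length is unbounded across the family, so $\uslen{\anode_2,\advar_2} = \omega$ for a suitable element. The main obstacle is precisely concentrating unboundedness first into a rooted subpath and then into a single $\downtouparrow$-structured subpath without losing more than a bounded factor of strict length at each step; the rest of the argument reduces to bookkeeping together with the shortcut Lemmas~\ref{lemma-shortcircuit-equality-lessthan} and~\ref{lemma-nocycle}.
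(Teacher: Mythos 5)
Part (I), the chain $(4)\Rightarrow(3)\Rightarrow(2)\Rightarrow(1)$, and $(1)\Rightarrow(2)$ are all fine and match the paper's proof. The problems are in $(2)\Rightarrow(3)$ and $(3)\Rightarrow(4)$. First, a structural issue: you make regularity of $\asymtree$ the engine of both implications, but regularity is not a hypothesis of this lemma (the standing assumption in this section is only local consistency), and the paper's proof never uses it. The finiteness the paper exploits is local: a direct path visits the elements of any fixed node at most $\beta+2$ times, and the set of ``types'' $(j,\advar,\advar')$ recording the direction of an excursion together with its entrance and exit terms is finite. For $(2)\Rightarrow(3)$ the pigeonhole is over this finite type set across the family of witnessing paths; the case where the heaviest excursion goes to the parent is handled by recursing on a node of strictly smaller depth, which terminates at the root. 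Your ``pumping on subtree types'' could perhaps be repaired for regular trees (one would also have to argue that rooted strict lengths depend only on the isomorphism type of the subtree), but it proves a weaker statement than the one claimed and misidentifies where the finiteness comes from.

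The genuine gap is $(3)\Rightarrow(4)$. You correctly name the obstacle --- concentrating unbounded strict length into a single direct, rooted, $\downtouparrow$-structured path --- but you offer no mechanism that overcomes it, and the one you gesture at cannot work. A rooted direct path of strict length $M$ decomposes, at each level of descent, into at most $\beta+1$ excursions plus at most $2(\beta+1)$ connecting edges, so the heaviest excursion retains only about $(M-2(\beta+1))/(\beta+1)$ strict edges; since the descent has \emph{unbounded} depth, the total loss is multiplicative per level and cannot be controlled by ``a bounded factor of strict length at each step'', nor does a pigeonhole on subtree types at the ``bottoms'' of excursions isolate anything, because the strict edges may be spread thinly across a deeply nested chain of excursions with no single down-then-up segment carrying more than a constant. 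The paper's resolution is quantitative: it fixes a target $j$, defines $N_j$ recursively so that $\lceil(N_{j+1}-2(\beta+1))/(\beta+1)\rceil\geq N_j$ (so $N_j$ grows roughly like $(\beta+1)^j$), starts from a rooted direct path of strict length $\geq N_j$, and iteratively peels one edge onto a committed descending prefix and one onto a committed ascending suffix while maintaining the invariant that the remaining middle path has strict length $\geq N_{j-C}$, where $C$ is the strict length already committed; termination of the peeling forces $C\geq j$. Without this (or an equivalent) bookkeeping, the implication $(3)\Rightarrow(4)$ --- and hence the equivalence that Proposition~\ref{proposition-star-oplus} rests on --- is not established.
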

\begin{proof}
  \newcommand{\maxvisits}{\gamma}
(I) Since direct paths are paths and the strict length between two elements is computed
  by taking the supremum, we get
  $$\dslen{\pair{\anode}{\advar},\pair{\anode'}{\advar'}}
  \leq 
  \slen{\pair{\anode}{\advar},\pair{\anode'}{\advar'}}.
  $$
  In order to show that
  $$
  \frac{\slen{\pair{\anode}{\advar},\pair{\anode'}{\advar'}}}{\beta+2}
  \leq
  \dslen{\pair{\anode}{\advar},\pair{\anode'}{\advar'}},
  $$
  we establish that for any path
  $\apath = \pair{\anode_0}{\advar_0} \step{\sim_1} \cdots \step{\sim_n} \pair{\anode_n}{\advar_n}$,
  there is a direct path $\apath'$ from $\pair{\anode_0}{\advar_0}$ to
  $\pair{\anode_n}{\advar_n}$ with $\slen{\apath'} \geq \frac{\slen{\apath}}{\beta+2}$.

  Given a path $\apath$ of the above form, we transform it a finite amount of times
  leading to a final path $\apath'$.
  If there are $i < j$ such that $\anode_i = \anode_{i+1} = 
  \cdots = \anode_j$ (and no way to extend the interval of indices $\interval{i}{j}$
  while satisfying the sequence
  of equalities),
  we replace
  the subpath $\pair{\anode_i}{\advar_i}  \cdots  \pair{\anode_j}{\advar_j}$ in $\apath$ by a shortcut.
  Note that $j-i \leq \beta+1$ because
  $\domnewGt =
  \interval{0}{\degree-1}^*
  \times \set{\avariable_1,\dots,\avariable_\beta\}\cup\{\adatum_1,\adatum_\alpha}$
  and $\beta+2 = \card{\set{\avariable_1,\dots,\avariable_\beta\}\cup\{\adatum_1,\adatum_\alpha}}$. 
  Several cases need to be distinguished.
  \begin{itemize}
  \item If $i > 0$, then $\pair{\anode_{i-1}}{\advar_{i-1}}$ and $\pair{\anode_j}{\advar_j}$
    are neighbours and therefore by Lemma~\ref{lemma-shortcircuit-equality-lessthan}, we can safely replace
    $\pair{\anode_{i-1}}{\advar_{i-1}}  \cdots  \pair{\anode_j}{\advar_j}$ by
    $\pair{\anode_{i-1}}{\advar_{i-1}} \step{\sim} \pair{\anode_j}{\advar_j}$ for some $\sim \in \set{=,<}$
    leading to a new value
    for $\apath$. The label $\sim$ on the edge depends whether $<$ occurs in the subpath
    from $\pair{\anode_{i-1}}{\advar_{i-1}}$ to $\pair{\anode_j}{\advar_j}$.
    Moreover, the strict length of the new path is decreased by at most $\beta+1$
    (and we do not meet again an element with the node $\anode_i$) and divided by at most $\beta+2$. 
  \item If $i = 0$ and $j < n$,
        then $\pair{\anode_{0}}{\advar_{0}}$ and $\pair{\anode_{j+1}}{\advar_{j+1}}$
    are neighbours and therefore by Lemma~\ref{lemma-shortcircuit-equality-lessthan}, we can safely replace
    $\pair{\anode_{0}}{\advar_{0}}  \cdots  \pair{\anode_{j+1}}{\advar_{j+1}}$ by
    $\pair{\anode_{0}}{\advar_{0}} \step{\sim} \pair{\anode_{j+1}}{\advar_{j+1}}$ for some $\sim \in \set{=,<}$
    leading to a new value
    for $\apath$. 
   Again, the strict length of the new path is decreased by at most $\beta+1$ and divided by at most $\beta+2$.
  \item Finally, if $i=0$ and $j = n$, then by Lemma~\ref{lemma-shortcircuit-equality-lessthan},
    we  replace
    $\pair{\anode_{0}}{\advar_{0}}  \cdots  \pair{\anode_{n}}{\advar_{n}}$ by
    $\pair{\anode_{0}}{\advar_{0}} \step{\sim} \pair{\anode_{n}}{\advar_{n}}$ for some $\sim \in \set{=,<}$
    leading to a new value
    for $\apath$. Again, the strict length of the new path is decreased by at most $\beta+1$
    and divided by at most $\beta+2$.
  \end{itemize}
  
  In the second step, we proceed as follows.
  If there are $i+1 < j$ such that $\pair{\anode_i}{\advar_i} = \pair{\anode_j}{\advar_j}$
  ($j-i > 1$ because otherwise we could apply the previous transformation),
  then by Lemma~\ref{lemma-nocycle}, the subpath
  $\pair{\anode_i}{\advar_i} \cdots \pair{\anode_j}{\advar_j}$ in $\apath$ contains
  no edge labelled by $<$ and therefore by Lemma~\ref{lemma-shortcircuit-equality-lessthan},
   we can safely replace
    $\pair{\anode_{i}}{\advar_{i}}  \cdots  \pair{\anode_j}{\advar_j}$ by
    $\pair{\anode_{i}}{\advar_{i}}$, 
    leading to a new value
    for $\apath$.
    Note that the strict length of the new path is unchanged.
    Similarly, if there are $i+1 < j$ such that $\anode_i = \anode_j$ and
    $\sim_{i+1} = \cdots = \sim_j$ is equal to '$=$', we can safely remove
    the subpath $\pair{\anode_i}{\advar_i} \cdots \pair{\anode_j}{\advar_j}$
    by Lemma~\ref{lemma-shortcircuit-equality-lessthan}
    along the lines of the previous transformations (easy details are omitted). 
    We proceed as many times as necessary until the path $\apath$ becomes direct.
 As a consequence, the strict length of the final direct path is at least equal to the
 strict length of the initial value for $\apath$ divided by $\beta+2$.
 Indeed, in the first round of transformations, a subpath with $\beta+2$ strict edges
 may be replaced by a path with only one strict edge.

  \noindent
  (II) Obviously, (4) implies (3), (3) implies (2) and (2) implies (1). It remains to prove
  that (1) implies (2), (2) implies (3) and (3) implies (4).
  By the proof of (I), we get that (1) implies (2).

  \noindent
  \fbox{Let us show that (2) implies (3).} Let
  $\pair{\anode}{\avariable}$ be an element of $\newGt$ such that
  $\dslen{\anode,\avariable} = \omega$.
  The element 
  $\pair{\anode}{\avariable}$ belongs to $U_{< \adatum_1} \cup U_{> \adatum_{\alpha}}$ in $\newGt$.
  Suppose that $\pair{\anode}{\avariable}$ is in $U_{< \adatum_1}$
  (we omit the other case as it admits a  similar analysis).
  By definition of $\dslen{\anode,\avariable}$, there is a family of direct paths
  $(\apath_{i})_{i \in \Nat}$ from $\pair{\anode}{\avariable}$ to $\pair{\anode}{\adatum_1}$
  such that $\slen{\apath_i} \geq i$.

  For each path in the family $(\apath_{i})_{i \in \Nat}$, below, we define its 
  \defstyle{maximal entrance signature}
  as an element of the set 
  $ES \egdef \interval{-1}{\degree-1} \times
  \DVAR{\beta}{\adatum_1}{\adatum_{\alpha}}^2$.
  Since the set 
  $ES$
  is finite,
  there is 
   $\triple{j}{\advar}{\advar'} \in ES$
  such that
  for infinitely many $i$, $\apath_i$ has 
  maximal entrance signature
  $\triple{j}{\advar}{\advar'}$, which
  shall allow us to conclude easily. Firstly, let us provide a few definitions. 

  Let $\apath = \pair{\anode_0}{\advar_0} \cdots \pair{\anode_m}{\advar_m}$ be a path
  in $(\apath_{i})_{i \in \Nat}$ of strict length at least $3(\beta + 1)$. 
  This means that $\pair{\anode}{\avariable} = \pair{\anode_0}{\advar_0}$
  and $\pair{\anode}{\adatum_1} = \pair{\anode_m}{\advar_m}$.
  Since $\apath$ is direct, there are at most $\beta+2$ positions in $\apath$ visiting
  an element on the node $\anode$ (called an \defstyle{$\anode$-element} later on).
  Such positions are written $i_0 < i_1 < \cdots < i_s$
  with $i_0 = 0$ and $i_s = m$. 
  For each $h \in \interval{0}{s-1}$, we write $\apath_h^{\dag}$ to denote the subpath
  of $\apath$ below:
  $$
  \apath_h^{\dag} \egdef \pair{\anode_{i_h+1}}{\advar_{i_h+1}} \cdots \pair{\anode_{i_{h+1}-1}}{\advar_{i_{h+1}-1}}.
  $$
  The
  \defstyle{entrance signature}
  of $\apath_h^{\dag}$, 
   written $es(\apath_h^{\dag})$,
  is the triple 
  $\triple{j}{\advar}{\advar'} \in ES$
  defined as follows:
  \begin{itemize}
  \item $\advar \egdef \advar_{i_h+1}$ (entrance term) and $\advar' \egdef \advar_{i_{h+1}-1}$ (exit term).
  \item If $\anode_{i_h+1} = \anode \cdot j'$ for some $j' \in \interval{0}{\degree-1}$,
    then $j \egdef j'$. Otherwise $j \egdef -1$. The value $j$ is a direction, where $-1$
    is intended to mean ``to the parent node''
    and $j \in \interval{0}{\degree-1}$ ``to the $j$$^{\rm th}$ child node''.
    Note also that in the degenerate case $i_{h}+1 = i_{h+1}-1$, the path
    $\apath_h^{\dag}$ is made of a single element and $\advar = \advar'$.  
  \end{itemize}
  A few useful properties are worth being stated.
  \begin{itemize}
  \item $\anode_{i_h+1} = \anode_{i_{h+1}-1}$, $\apath_h^{\dag}$ does not contain
    an $\anode$-element and $\apath_h^{\dag}$ is rooted. 
  \item By Lemma~\ref{lemma-shortcircuit-equality-lessthan},
    $\pair{\anode}{\avariable} \step{\sim} \pair{\anode_{i_h+1}}{\advar_{i_h+1}}$ 
    and 
    $\pair{\anode_{i_{h+1}-1}}{\advar_{i_{h+1}-1}} \step{\sim'} \pair{\anode}{\adatum_1}$
    for some $\sim,\sim' \in \set{<,=}$.
  \item For all $h \neq h'$,
    if
    $es(\apath_h^{\dag}) = \triple{j_1}{\advar_1}{\advar'_1}$,
    $es(\apath_{h'}^{\dag}) = \triple{j_2}{\advar_2}{\advar'_2}$
    and $j_1 = j_2$, then
    $\advar_1 \neq \advar_1'$ and $\advar_2 \neq \advar_2'$ ($\apath$ is direct).
    As a consequence, $s \leq \beta + 1$
    and there are no $\apath_h^{\dag}$ and
    $\apath_{h'}^{\dag}$ with $h \neq h'$ having the same
    entrance signature. 
  \end{itemize}
  As a conclusion, there is $h$ such that the strict length of $\apath_h^{\dag}$ is at least
  $$
  \left \lceil
  \frac{\slen{\apath} - 2(\beta+1)
  }{(\beta+1)
    }
  \right \rceil 
  $$
  When $\slen{\apath} \geq 3(\beta+1)$, the above value is at least one. 
  Indeed, we subtract $2(\beta+1)$ from $\slen{\apath}$
  to take into account the edges from some $\anode$-element to
  the elements $\pair{\anode_{i_h+1}}{\advar_{i_h+1}}$,
  and from the elements $\pair{\anode_{i_{h+1}-1}}{\advar_{i_{h+1}-1}}$
  to some $\anode$-element.
  The maximal entrance of $\apath$ is defined as an entrance signature $es(\apath_h^{\dag})$
  such that $\slen{\apath_h^{\dag}}$ is maximal (we can also fix an arbitrary linear
  ordering on
  $ES$
  in case maximality of the -- direct -- strict lengths is witnessed
  strictly more than once).
  As announced earlier, there is
  $\triple{j}{\advar}{\advar'} \in ES$
  such that
  for infinitely many $i$ (say belonging to the infinite set $\aset
  \subseteq [3(\beta+1),+\infty)$), $\apath_i$ has maximal type $\triple{j}{\advar}{\advar'}$.

  \begin{enumerate}[label=(\alph*)]
  \item If $j \in \interval{0}{\degree-1}$, we have $\rslen{\pair{\anode \cdot j}{\advar},
    \pair{\anode \cdot j}{\advar'}} = \omega$ because
    $$
    \rslen{\pair{\anode \cdot j}{\advar},
      \pair{\anode \cdot j}{\advar'}} \geq
      \sup_{i \in \aset}
      \left \lceil
  \frac{i - 2(\beta+1)
  }{(\beta+1)
    }
  \right \rceil 
  $$
  Since $\pair{\anode}{\avariable} \step{\sim} \pair{\anode \cdot j}{\advar}$
  and $\pair{\anode \cdot j}{\advar'} \step{\sim'} \pair{\anode}{\adatum_{1}}$
  for some $\sim, \sim' \in \set{<,=}$, we get that $\rslen{\anode, \avariable} = \omega$ too. 
\item If $j = -1$, then let $\anode'$ be the unique parent of $\anode$. 
           We have $\dslen{\pair{\anode'}{\advar},
             \pair{\anode'}{\advar'}} = \omega$
           (the maximal entrance signature is $\triple{j}{\advar}{\advar'}$)
           and
           $\pair{\anode'}{\advar'} \step{\sim} \pair{\anode}{\adatum_{1}}$ for
           some $\sim \in \set{<,=}$. Since $\anode$ and $\anode'$ are neighbour nodes,
           we also get that  $\pair{\anode'}{\advar'} \step{\sim} \pair{\anode'}{\adatum_{1}}$
           by construction of $\newGt$.
           Moreover, 
           $\advar$ is necessarily a variable. Indeed, 
           by Lemma~\ref{lemma-shortcircuit-equality-lessthan}, $\advar$ is distinct from $\adatum_1$
           and by satisfiability of the elements in $\sattypes{\beta}$, $\advar$ is distinct from $\adatum_{\alpha}$. 
           Consequently, $\dslen{\anode',\advar} = \omega$ and we can apply the above
           construction but $\length{\anode'} < \length{\anode}$.
           This means that at some point, we must meet  the case (a), which allows
           us eventually to identify some element $\pair{\anodebis}{\avariablebis}$ such that
           $\rslen{\anodebis,\avariablebis} = \omega$.
           Another way to proceed would be to provide a proof by induction on
           $\length{\anode}$.
           The base case $\length{\anode} = 0$  corresponds to  $\anode = \varepsilon$
           for which only case (a) can hold.
  \end{enumerate}

\noindent
\fbox{Let us show that (3) implies (4).}
 Let
  $\pair{\anode}{\avariable}$ be an element of $\newGt$ such that
  $\rslen{\anode,\avariable} = \omega$.
  The element 
  $\pair{\anode}{\avariable}$ belongs to $U_{< \adatum_1} \cup U_{> \adatum_{\alpha}}$ in $\newGt$.
  Suppose that $\pair{\anode}{\avariable}$ is in $U_{> \adatum_{\alpha}}$
  (we omit the other case as it admits a similar analysis).
  By definition of $\rslen{\anode,\avariable}$, there is a family of direct and rooted paths
  $(\apath_{i})_{i \in \Nat}$  from $\pair{\anode}{\adatum_{\alpha}}$ to $\pair{\anode}{\avariable}$ 
  such that $\slen{\apath_i} \geq i$.
  Below, we show that $\uslen{\anode,\avariable} = \omega$ (and therefore no need
  to witness unbounded strict length on another element as it may happen to prove that (2) implies (3)).
  To do so, for every $j$, we explain how to construct
  a directed, rooted and $\downtouparrow$-structured path
  from $\pair{\anode}{\adatum_{\alpha}}$ to $\pair{\anode}{\avariable}$ of strict length
  at least $j$. We use the auxiliary family $(L_{i,j})_{i,j \in \Nat}$ of integers defined
  recursively.
  \begin{itemize}
  \item For all $i \in \Nat$, $L_{i,0} \egdef i$ and for all $j \in \Nat$,
        $L_{i,j+1} \egdef \left \lceil \frac{L_{i,j} -  2 (\beta+1)}{(\beta+1)} \right \rceil$.  
  \end{itemize}
  Here are properties that can be easily shown. 
\begin{description}
\item[(PL1)] For all $i < i'$ and $j$, $L_{i,j} \leq L_{i',j}$.
\item[(PL2)] For all $i$ and $j < j'$,  $L_{i,j} \geq L_{i,j'}$. 
\end{description}
Moreover, partly based on (PL1), by induction on $j$, it is easy to show that for each
fixed $j$, 
  $\sup \set{L_{i,j} \mid i \in \Nat} = \omega$.
  Indeed, for the base case $\sup \set{L_{i,0} \mid i \in \Nat} =
  \sup \set{i \mid i \in \Nat} = \omega$. Suppose the property is true
  for $j$, i.e. $\sup \set{L_{i,j} \mid i \in \Nat} = \omega$.
  Substracting a constant and dividing by a constant preserves
  the limit behavior, consequently
  $\sup \set{\frac{L_{i,j}  -  2 (\beta+1)}{(\beta+1)} \mid i \in \Nat}
  = \omega$ too
  and therefore $\sup \set{L_{i,j+1} \mid i \in \Nat} = \omega$.
  As a consequence, for all $j \in \Nat$, there is $N_j$ such that
  $L_{N_j,j} \geq 1$. For instance, $N_0$ can take the value 1, $N_1$ the value
  $3 (\beta+1)$ and $N_2$ the value $3 (\beta+1)^2 + 2 (\beta+1)$.
  However, we shall require a bit more properties about the $N_j$'s
  using simple properties about $(L_{i,j})_{i,j \in \Nat}$.
  \begin{description}
  \item[(PL3)] For all $j$, $N_{j+1} >  N_{j} \geq 2$ and
    $\left \lceil \frac{N_{j+1} -  2 (\beta+1)}{(\beta+1)} \right \rceil \geq N_j$.
  \item[(PL4)] For all $M, j \in \Nat$,
    $M \geq N_j$ implies (if $M,j \geq 1$, then $M-1 \geq N_{j-1}$ and if $M,j \geq 2$, then
    $M-2 \geq N_{j-2}$).     
  \end{description}
  Thanks to (PL1), we can choose the $N_j$'s as large as possible, which allows us to establish
  (PL3). By contrast, the satisfaction of (PL4) is a consequence of (PL3).
  By way of example, $M, j \geq 1$ and $M \geq N_j$ imply $M \geq N_j \geq N_{j-1} + 1$ and
  therefore $M-1 \geq N_{j-1}$. 

  Let  $\apath = \pair{\anode_0}{\advar_0} \step{\sim_1} \cdots \step{\sim_m}
  \pair{\anode_m}{\advar_m}$ be a direct and rooted path from
  $\pair{\anode}{\adatum_{\alpha}}$ to $\pair{\anode}{\avariable}$
  of strict length at least $N_j$ for some $j \geq 1$.
  By (PL3), we have $\left \lceil \frac{N_{j} -  2 (\beta+1)}{(\beta+1)} \right
  \rceil \geq N_{j-1} \geq 2$.
  For instance, $\apath$ can take
  the value $\apath_{N_j}$ from the family $(\apath_{i})_{i \in \Nat}$.
  In the developments below, we build a path $\apath'$  
  from $\pair{\anode}{\adatum_{\alpha}}$ to $\pair{\anode}{\avariable}$
  that is of strict length at least $j$ and that is direct, rooted and
  $\downtouparrow$-structured. 
  To do so, we maintain three auxiliary paths while guaranteing the satisfaction of an invariant.
  \begin{itemize}
  \item $\apath_{des}$ is a descending and direct  path from
        $\pair{\anode}{\adatum_{\alpha}}$. Its initial value is $\pair{\anode}{\adatum_{\alpha}}$.
  \item $\apath_{asc}$ is an ascending and direct path to $\pair{\anode}{\avariable}$.
    Its initial value is $\pair{\anode}{\avariable}$.
  \item $\apath^{\dag}$ is a direct and rooted path from the last
    element of $\apath_{des}$ (say $\pair{\anode^{\dag}}{\advar_1^{\dag}}$) to the
    first element of $\apath_{asc}$ (say $\pair{\anode^{\dag}}{\advar_2^{\dag}}$) and is a
    (consecutive) subpath
    of the initial path $\apath$. The first and last elements belong therefore to the
    same node $\anode^{\dag}$.
    The initial value for $\apath^{\dag}$ is $\apath$. 
    By slight abuse, below we assume that $\apath^{\dag}$ can be also written
    $\pair{\anode_0}{\advar_0} \step{\sim_1} \cdots \step{\sim_m}
    \pair{\anode_m}{\advar_m}$.
  \item Let $C$ be equal to $\slen{\apath_{des}} + \slen{\apath_{asc}}$.
    If $C \geq j$, then there is a direct, rooted and $\downtouparrow$-structured path
    from  $\pair{\anode}{\adatum_{\alpha}}$ to $\pair{\anode}{\avariable}$
    passing via $\pair{\anode^{\dag}}{\advar_1^{\dag}}$. It uses the edges from
    $\apath_{des}$ and $\apath_{asc}$.
    Otherwise ($C < j$), we maintain $\slen{\apath^{\dag}} \geq N_{j - C}$.
    Consequently, if $\apath^{\dag}$ is made of elements from the same node, then
    $\slen{\apath^{\dag}} \leq 1$ and therefore $C \geq j$. 
    
  \end{itemize}
  We transform these three paths $\apath_{des}$, $\apath_{asc}$ and  $\apath^{\dag}$  with a process that terminates because
  the length of $\apath^{\dag}$ decreases strictly after  each step.
  Moreover, we shall verify that the invariant
  holds after each step (sometimes after a step, $\slen{\apath^{\dag}}$ and $C$ are unchanged).
  Initially, we have $\slen{\apath^{\dag}} \geq N_{j - C}$ because
    $C =0 < j$, $\apath^{\dag} = \apath$ and $\slen{\apath} \geq N_j$. 

  Let us define the
  \defstyle{maximal entrance signature}
  of $\apath^{\dag}$ similarly to what is
  done earlier. Here,
  $ES \egdef \interval{0}{\degree-1} \times
  \DVAR{\beta}{\adatum_1}{\adatum_{\alpha}}^2$
  (no value $-1$
  in the interval because $\apath^{\dag}$ is rooted). 
  Since $\apath^{\dag}$ is direct, there is at most $\beta+2$ positions in $\apath^{\dag}$ visiting
  an element on the node $\anode^{\dag}$.
  Such positions are written $i_0 < i_1 < \cdots < i_s$
  with $i_0 = 0$, $i_s = m$ (and $s \leq (\beta+1)$).
  For each $h \in \interval{0}{s-1}$, we write $\apath_h^{\dag \dag}$ to denote the subpath
  of $\apath^{\dag}$ below:
  $$
  \apath_h^{\dag \dag} \egdef
  \pair{\anode_{i_h+1}}{\advar_{i_h+1}} \cdots \pair{\anode_{i_{h+1}-1}}{\advar_{i_{h+1}-1}}.
  $$
  The
  \defstyle{entrance signature}
  of $\apath_h^{\dag \dag}$,
  written $es(\apath_h^{\dag})$,
  is the triple
  $\triple{j}{\advar}{\advar'} \in ES$ defined as follows:
  $\advar \egdef \advar_{i_h+1}$, $\advar' \egdef \advar_{i_{h+1}-1}$ and 
  $\anode_{i_h+1} = \anode \cdot j$ for some $j \in \interval{0}{\degree-1}$.
  A few useful properties are worth being stated.
  \begin{itemize}
  \item $\anode_{i_h+1} = \anode_{i_{h+1}-1}$, $\apath_h^{\dag \dag}$ does not contain
    an $\anode^{\dag}$-element and $\apath_h^{\dag \dag}$ is rooted. 
  \item By Lemma~\ref{lemma-shortcircuit-equality-lessthan},
    $\pair{\anode^{\dag}}{\advar_1^{\dag}} \step{\sim^{\dag}_{1,h}}
    \pair{\anode_{i_h+1}}{\advar_{i_q+1}}$ for some $\sim^{\dag}_{1,h}
    \in \set{<,=}$ and  \\ 
    $\pair{\anode_{i_{h+1}-1}}{\advar_{i_{h+1}-1}} \step{\sim^{\dag}_{2,h}} \pair{\anode^{\dag}}{\advar_2^{\dag}}$
    for some $\sim^{\dag}_{2,h} \in \set{<,=}$.
  \end{itemize}
  There is $h$ such that the strict length of $\apath_h^{\dag \dag}$ is at least
  $
  \lceil
  \frac{\slen{\apath^{\dag}} - 2 (\beta+1)
  }{(\beta+1)
    }
  \rceil 
  $.
  Indeed, $s \leq \beta +1$ and there are at most $\beta+1$ paths of the form
  $\apath_h^{\dag \dag}$ with at most $2(\beta+1)$ connecting edges. 
  
  \begin{itemize}
  \item If $s = 1$, then $\anode_1 = \anode_{m-1}$, there is a single $\apath_0^{\dag \dag}$ and
    the paths are updated as follows.
    \begin{itemize}
    \item $\apath_{des}$ becomes $\apath_{des} \step{\sim^{\dag}_{1,h}}
      \pair{\anode_1}{\advar_1}$,
    $\apath_{asc}$ becomes $\pair{\anode_{m-1}}{\advar_{m-1}}
      \step{\sim^{\dag}_{2,h}} \apath_{asc}$.
    \item $\apath^{\dag}$ takes the value  $\apath_0^{\dag \dag}$ if
      $\apath_0^{\dag \dag}$ is made of elements from distinct nodes.
    \end{itemize}
    Note that the strict length of $\apath^{\dag}$ decreases by at most two
   (can be zero if $\sim^{\dag}_{1,h}$ and $\sim^{\dag}_{2,h}$ are both the equality)
    and thanks to (PL4), the invariant is maintained.
    If $\apath_0^{\dag \dag}$ is made of elements from the  same node,
    $\slen{\apath_0^{\dag \dag}} \leq 1$ because $\apath$ is direct and
    $C \geq j$. The final direct, rooted and $\downtouparrow$-structured path
    is $\apath_{des} \step{\sim^{\dag}_{1,h}} \pair{\anode_1}{\advar_1} \step{\sim}
    \pair{\anode_{m-1}}{\advar_{m-1}}
    \step{\sim^{\dag}_{2,h}} \apath_{asc}$ and its strict length is at least $j$.
    
  \item Otherwise, i.e. $s > 1$. Pick $h \in \interval{0}{s-1}$ such that
    $$
  \slen{\apath_h^{\dag \dag}} \geq 
  \left \lceil
  \frac{\slen{\apath^{\dag}} - 2 (\beta+1)
  }{(\beta+1)
    }
  \right \rceil 
  $$
  Note that $\pair{\anode^{\dag}}{\advar_1^{\dag}} \step{\sim^{\dag}_{1,h}}
    \pair{\anode_{i_h+1}}{\advar_{i_h+1}}$ and
    $\pair{\anode_{i_{h+1}-1}}{\advar_{i_{h+1}-1}} \step{\sim^{\dag}_{2,h}}
    \pair{\anode^{\dag}}{\advar_2^{\dag}}$ and $< \in \set{\sim^{\dag}_{1,h},\sim^{\dag}_{2,h}}$
    because  $h < s-1$ or $0 < h$, and $\apath^{\dag}$ is direct.
   \end{itemize}
  The paths are updated as follows.
  $\apath_{des}$ becomes $\apath_{des} \step{\sim^{\dag}_{1,h}}
      \pair{\anode_{i_h+1}}{\advar_{i_h+1}}$, 
  $\apath_{asc}$ becomes $\pair{\anode_{i_{h+1}-1}}{\advar_{i_{h+1}-1}}
      \step{\sim^{\dag}_{2,h}} \apath_{asc}$ and
   $\apath^{\dag}$ becomes $\apath_h^{\dag \dag}$. 
    From $\slen{\apath^{\dag}} \geq N_{j - C}$, 
    we get
    $$ 
    \slen{\apath_h^{\dag \dag}} \geq 
  \left \lceil
  \frac{\slen{\apath^{\dag}} - 2 (\beta+1)
  }{(\beta+1)
    }
  \right \rceil
  \geq  \underbrace{N_{j - C-1} >  N_{j - C-2}}_{\mbox{by (PL3)}}
   $$
  by the invariant, (PL3) and (PL1) (assuming that $j - C-2 \geq 0$, otherwise we remove the appropriate expressions).
  Indeed, by the satisfaction of the invariant, we have $\slen{\apath^{\dag}} \geq N_{j-C}$ and
  \[
  \overbrace{
  \left \lceil
  \frac{\slen{\apath^{\dag}} - 2 (\beta+1)
  }{(\beta+1) 
    }
  \right \rceil
  \geq
  \left \lceil
  \frac{N_{j-C} - 2 (\beta+1)
  }{(\beta+1)
    }
  \right \rceil}^{\mbox{by (PL1)}}
  \geq
  N_{j - C-1}
  \]
  Therefore the invariant is maintained.
  The process terminates and at termination, we have seen that this implies
  that we have built a direct, rooted and $\downtouparrow$-structured path
  from $\pair{\anode}{\adatum_{\alpha}}$ to $\pair{\anode}{\avariable}$
  that is of strict length at least $j$.  
\end{proof}
Here is the final step to prove Proposition~\ref{proposition-star-oplus}.
Since
the violation of $(\newbigstar)$ is witnessed on a single branch, the proof
is analogous to the proof of~\cite[Lemma 6.2]{Demri&DSouza07} and reformulates
the final part of  the proof of~\cite[Lemma 5.16]{Labai21}. 

\begin{proof} (Proposition~\ref{proposition-star-oplus})
  Let $\asymtree$ be a regular locally consistent symbolic tree.
  We write $N$ to denote the number of distinct subtrees in $\asymtree$ ($N$ exists because $\asymtree$
  is regular). 
  It is easy to show that the satisfiability of $\asymtree$ implies that
  $\asymtree$ satisfies $(\newbigstar)$. Indeed, if
  $\asymtree$ does not satisfy $(\newbigstar)$, then the existence of the witness
  path map $p$ and the reverse path map $rp$
  forbids the possibility to interpret the variables so that $\asymtree$ is satisfiable (by Lemma~\ref{lemma-correctness-newgt}).
  The main part of the proof consists in showing that if $\asymtree$ is not satisfiable,
  then $\newGt$ does not satisfy $(\newbigstar)$.
  By Lemma~\ref{lemma-characterisation-satisfiability-goplus},
  there is $\pair{\anode}{\avariable}$ in $U_{< \adatum_1} \cup U_{> \adatum_{\alpha}}$ in $\newGt$,
  such that  $\slen{\anode,\avariable} = \omega$.
  By Lemma~\ref{lemma-equivalences-slen}, there is
  $\pair{\anode}{\avariable}$ in $U_{< \adatum_1} \cup U_{> \adatum_{\alpha}}$ in $\newGt$,
  such that  $\uslen{\anode,\avariable} = \omega$.
  Suppose that $\pair{\anode}{\avariable}$ belongs to $U_{> \adatum_{\alpha}}$ in $\newGt$
  (the other case is very similar, and is  omitted).

  Let $M = 2((\beta+2)^2 N + 1)$ and $\apath$ be a direct, rooted and $\downtouparrow$-structured
  path of strict length at least $M$
  from $\pair{\anode}{\adatum_{\alpha}}$ to $\pair{\anode}{\avariable}$.
  The path $\apath$ is of the form below
  $$
  \pair{\anode_0}{\advar_0} \step{\sim_1} \pair{\anode_1}{\advar_1} \cdots
  \step{\sim_\ell} \pair{\anode_\ell}{\advar_\ell} =
  \pair{\anode_\ell}{\advar_\ell'} \step{\sim_\ell'} \pair{\anode_{\ell-1}}{\advar_{\ell-1}'}
  \cdots \step{\sim_1'} \pair{\anode_0}{\advar_0'},
  $$
  where the following conditions hold for some $j_1 \cdots j_{\ell} \in \interval{0}{\degree-1}^*$:
  \begin{itemize}
  \item $\pair{\anode_0}{\advar_0}  = \pair{\anode}{\adatum_{\alpha}}$
    and $\pair{\anode_0}{\advar_0'} = \pair{\anode}{\avariable}$.
  \item For all $k \in \interval{1}{\ell}$, we have $\anode_k = \anode \cdot j_1 \cdots j_k$. 
  \end{itemize}
  Since $\slen{\apath} \geq 2((\beta+2)^2 N + 1)$, one of the paths among
  $\pair{\anode_0}{\advar_0} \cdots \pair{\anode_\ell}{\advar_\ell}$ and
  $\pair{\anode_\ell}{\advar_\ell'} \cdots \pair{\anode_0}{\advar_0'}$ has strict length
  at least $(\beta+2)^2 N + 1$.
  Below, we assume that the strict length
  of $\pair{\anode_0}{\advar_0} \cdots \pair{\anode_\ell}{\advar_\ell}$ is
  at least $(\beta+2)^2 N + 1$ (the other case is very similar, and is omitted herein).
  By the Pigeonhole Principle, there are $K < K' \in \interval{0}{\ell}$ such that
  the subtree rooted at $\anode_K$ is equal to the subtree rooted at $\anode_{K'}$,
  $\advar_K = \advar_{K'}$ and $\advar_K' = \advar_{K'}'$ and
  the (consecutive) subpath $\pair{\anode_K}{\advar_K} \cdots \pair{\anode_{K'}}{\advar_{K'}}$
  has strict length at least one.
  Let us provide a bit more details about this claim. 
  The strict length of the descending path $\apath^{\dag} = \pair{\anode_0}{\advar_0} \step{\sim_1} \pair{\anode_1}{\advar_1} \cdots
  \step{\sim_\ell} \pair{\anode_\ell}{\advar_\ell}$ is at least $L = (\beta+2)^2 N + 1$.
  Both $\pair{\anode_K}{\advar_K}$ and $\pair{\anode_{K'}}{\advar_{K'}}$ are claimed to belong
  to this descending path. Let us see why exactly.
  The path $\apath^{\dag}$ can be decomposed as a sequence of consecutive subpaths  such that
  $\apath^{\dag} = \cdots \apath^{\dag}_1 \cdots \apath^{\dag}_L \cdots$ where each $\apath^{\dag}_i$ contains a single strict edge (i.e., labelled by `$<$'), say its first edge.
  So, let us say that $\apath^{\dag}_i$ starts by the elements $\pair{\anode_{s_i}}{\advar_{s_i}}$.
  Hence, if $i < i'$, then there is a strict path from $\pair{\anode_{s_i}}{\advar_{s_i}}$ to $\pair{\anode_{s_{i'}}}{\advar_{s_{i'}}}$.
  Let us define  that the 
  subtree signature of $\pair{\anode_{s_i}}{\advar_{s_i}}$ 
  is a triple made of the subtree at
  $\anode_{s_i}$ from the regular tree (a finite amount of possibilities), $\advar_{s_i}$ and
  $\advar_{s_i}'$ on the return ascending path at the node $\anode_{s_i}$ (unique visit in the ascending part).
  By the Pigeonhole Principle, there are $K < K'$ such that $\pair{\anode_{s_{K}}}{\advar_{s_K}}$ and
  $\pair{\anode_{s_K'}}{\advar_{s_K'}}$ have the same 
  subtree signature. 
  
  Observe that by Lemma~\ref{lemma-shortcircuit-equality-lessthan},  for every $k \in \interval{K}{K'}$,
  we have $\pair{\anode_k}{\advar_k} \step{<} \pair{\anode_k}{\advar_k'}$.
  Moreover,  since $\asymtree_{\mid \anode_K} = \asymtree_{\mid \anode_{K'}}$ (same subtree), 
  for all $\pair{\anodebis_1}{\advar_1^{*}}, \pair{\anodebis_2}{\advar_2^{*}} \in \set{\anode_{K}, \ldots, \anode_{K'-1}} \times
  \DVAR{\beta}{\adatum_1}{\adatum_{\alpha}}$,  for all $i \in \Nat$
  and $\sim \in \set{<,=}$,
  $$
  \pair{\anodebis_1}{\advar_1^{*}} \step{\sim} \pair{\anodebis_2}{\advar_2^{*}} \ 
  \mbox{iff}
  $$
  $$
  \pair{\anodebis_1 \cdot (j_{K+1} \cdots j_{K'})^i (j_{K+1} \cdots j_{K+m_1})}{\advar_1^{*}}
  \step{\sim}
  \pair{\anodebis_2 \cdot (j_{K+1} \cdots j_{K'})^i (j_{K+1} \cdots j_{K+m_2})}{\advar_2^{*}},
  $$
  with $\anodebis_1 = \anode_K \cdot j_{K+1} \cdots j_{K+m_1}$,
  $\anodebis_2 = \anode_K \cdot j_{K+1} \cdots j_{K+m_2}$
  and by convention if $m= 0$, then $j_{K+1} \cdots j_{K+m} = \varepsilon$.

  Let us define $\pair{\anode}{\advar}$, $\pair{\anode}{\advar'}$,
  $\abranch$, $p$ and $rp$ that witness the violation of the condition $(\newbigstar)$
  (see Section~\ref{section-introduction-starproperty}).
  \begin{itemize}
  \item $\pair{\anode}{\advar} \egdef \pair{\anode_K}{\advar_K}$, 
    $\pair{\anode}{\advar'} \egdef \pair{\anode_K}{\advar_K'}$ and
    $\abranch \egdef (j_{K+1} \cdots j_{K'})^{\omega}$.
  \item for all $i \in \Nat$, $m \in \interval{0}{K'-K-1}$,
    \begin{itemize}
    \item $p(i (K'-K) + m ) \egdef
      \pair{\anode_{K} (j_{K+1} \cdots j_{K'})^i (j_{K+1} \cdots j_{K+m})}{\advar_{K+m}}$
    and,
  \item $rp(i (K'-K) + m ) \egdef
    \pair{\anode_{K} (j_{K+1} \cdots j_{K'})^i (j_{K+1} \cdots j_{K+m})}{\advar_{K+m}'}$.
    \end{itemize}
  
  \end{itemize}
  One can check that all the properties hold to violate $(\newbigstar)$,
  in particular, $p$ is strict and for all $i$, $p(i) \step{<} rp(i)$. Indeed, let us explain a bit more why the conditions
  for violating $(\newbigstar)$ are meet.
  \begin{itemize}
  \item The path map $p$ can be written as follows:
    {\footnotesize
    $$
      \pair{\anode_K}{\advar_K}  \step{\sim_{K+1}} \cdots  \step{\sim_{K'}}
      \pair{\anode_{K} (j_{K+1} \cdots j_{K'})}{\advar_{K}}
    \step{\sim_{K+1}} \pair{\anode_{K} (j_{K+1} \cdots j_{K'}) \cdot j_{K+1}}{\advar_{K+1}} \cdots \step{\sim_{K'}}
    $$
$$
    \pair{\anode_{K} \cdot (j_{K+1} \cdots j_{K'})^2}{\advar_{K}}
    \step{\sim_{K+1}} \pair{\anode_K \cdot (j_{K+1} \cdots j_{K'})^2 \cdot j_{K+1}}{\advar_{K+1}} \cdots \step{\sim_{K'}}
    \pair{\anode_{K} \cdot (j_{K+1} \cdots j_{K'})^3}{\advar_{K}}  \cdots
    $$
    }
    Note that this is a valid  path map because $\anode_K$ and $\anode_K'$ have the same subtree. 
    Moreover $p$ is strict, i.e. edges of the form $\step{<}$  occur infinitely often because the
    strict length of the path
    $\pair{\anode_K}{\advar_K}  \step{\sim_{K+1}} \cdots  \step{\sim_{K'}} \pair{\anode_{K'}}{\advar_{K'}}$ is at least one.
  \item Similarly, the reverse path map $rp$ can be written as follows:
    {\footnotesize
    $$
      \pair{\anode_K}{\advar_K'}  \step{\overline{\sim_{K+1}'}} \cdots  \step{\overline{\sim_{K'}'}}
      \pair{\anode_{K} (j_{K+1} \cdots j_{K'})}{\advar_{K}'}
      \step{\overline{\sim_{K+1}'}} \pair{\anode_K (j_{K+1} \cdots j_{K'})
        \cdot j_{K+1}}{\advar_{K+1}'} \cdots \step{\overline{\sim_{K'}'}}
    $$
$$
    \pair{\anode_{K} \cdot (j_{K+1} \cdots j_{K'})^2}{\advar_{K}'}
    \step{\overline{\sim_{K+1}'}} \pair{\anode_K (j_{K+1} \cdots j_{K'})^2 \cdot j_{K+1}}{\advar_{K+1}'} \cdots \step{\overline{\sim_{K'}'}}
    \pair{\anode_{K} \cdot (j_{K+1} \cdots j_{K'})^3}{\advar_{K}'}  \cdots
    $$
    }
    where $\overline{=}$ (``reverse of $=$'') is $=$ and $\overline{<}$ (``reverse of $<$'') is $>$.  Note that this is a valid  reverse path map ($\anode_K$ and $\anode_K'$
    have the same subtree). 
  \item Concerning the fourth condition for violating  $(\newbigstar)$, namely for all $i$, $p(i) \step{<} rp(i)$,
    we have already observed that by Lemma~\ref{lemma-shortcircuit-equality-lessthan},  for every $k \in \interval{K}{K'}$,
    we have $\pair{\anode_k}{\advar_k} \step{<} \pair{\anode_k}{\advar_k'}$.
    Since $\anode_K$ and $\anode_K'$
    have the same subtree, for all $i \in \Nat$, for all $k \in \interval{K}{K'}$,
    $\pair{\anode_K \cdot (j_{K+1} \cdots j_{K'})^i \cdot (j_{K+1} \cdots j_k)}{\advar_k} \step{<}
    \pair{\anode_K (j_{K+1} \cdots j_{K'})^i \cdot (j_{K+1} \cdots j_k)}{\advar_k'}$.\qedhere 
  \end{itemize}
\end{proof}

\section{Related Work}
\label{section-related-work}

This section is dedicated to related work and more specifically
to results about temporal logics with numerical values
(Section~\ref{section-rw-temporal-logics}) and some technical differences
with~\cite{Labai&Ortiz&Simkus20,Labai21} (Section~\ref{section-rw-comparison} and Section~\ref{section-rabin-pairs-comparison}).
It goes a bit further than what can be found in the rest of the document
but, of course, related work is also discussed in other sections.

\subsection{Model-checking problem}
\label{section-rw-temporal-logics}
A problem related to satisfiability is the \emph{model-checking problem} (bibliographical
references about satisfiability can be mainly found in
Sections~\ref{section-introduction}-\ref{section-introduction-temporal-logics}). 
Fragments of the model-checking problem involving a temporal logic similar to
$\CTLStar(\Zed)$ are investigated in~\cite{Cerans94,Bozzelli&Gascon06,Bozzelli&Pinchinat14,Felli&Montali&Winkler22bis}
(see also~\cite{Golleretal12,Cook&Khlaaf&Piterman15,Vester15,Amparore&Donatelli&Galla20}). 
However, 
model-checking problems with $\CTLStar(\Zed)$-like languages are easily undecidable,
see e.g.~\cite[Theorem 1]{Cerans94} and~\cite[Theorem 4.1]{Mayr&Totzke16} (more general constraints are
used in~\cite{Mayr&Totzke16} but the undecidability proof
uses only the constraints involved herein). 
The difference between model-checking and satisfiability is subtle and underlines that the decidability/complexity of
$\CTL(\Zed)$/$\CTLStar(\Zed)$ satisfiability is not immediate.
For instance, a reduction following the undecidability proof in~\cite{Mayr&Totzke13} fails:
tree models satisfying a similar formula as in~\cite{Mayr&Totzke13} can still "cheat" as one
cannot express that {\em it is not possible}
to move to a state detecting cheating.

\subsection{$(\texorpdfstring{\newbigstar}{*\textasciicircum C})$: variant of $(\texorpdfstring{\bigstar}{*})$.}
\label{section-rw-comparison}

Proposition~\ref{proposition-star-oplus} is a variant of~\cite[Lemma 22]{Labai&Ortiz&Simkus20}. 
Let us  explain the improvement of our developments compared to what is done in~\cite{Labai&Ortiz&Simkus20,Labai21}. 
The \emph{framified constraint graphs} defined in~\cite[Definition 14]{Labai&Ortiz&Simkus20} correspond to the graph $\newGt$ without
the elements in
$\interval{0}{\degree-1}^*\times\{\adatum_1,\adatum_\alpha\}$ and corresponding edges. 
However, Example~\ref{example-bigstar} illustrates the importance of taking into account these elements when deciding satisfiability (without $\adatum_1$, the graph would satisfy $(\newbigstar)$).
Actually,  Example~\ref{example-bigstar} invalidates $(\bigstar)$ as used in~\cite{Labai&Ortiz&Simkus20,Labai21}
because the constants are missing to apply properly results from~\cite{Carapelle&Kartzow&Lohrey16}.
The problematic part in~\cite{Labai&Ortiz&Simkus20,Labai21}
is due to the proof of~\cite[Lemma 5.18]{Labai21} whose main argument takes advantage of~\cite{Carapelle&Kartzow&Lohrey16} but without
the elements related to constant values (see also~\cite[Lemma 8]{Demri&Gascon08}).
Note also that the condition $(\bigstar)$ in~\cite[Section 3.3]{Labai&Ortiz&Simkus20} generalises
the condition $C_{\Zed}$ from~\cite[Section 6]{Demri&DSouza07}
(see also the condition $\mathcal{C}$ in~\cite[Definition 2]{Demri&Gascon08}
and a similar condition in~\cite[Section 2]{Exibard&Filiot&Khalimov21}).
A condition similar to $(\bigstar)$ 
is also introduced recently in~\cite{Bhaskar&Praveen23bis} (conference version in~\cite{Bhaskar&Praveen22}) to decide the single-sided realizability
problem based on $\LTL(\Zed,<,=)$ (without constant values).
Though the technical developments are presented quite differently and the settings and purposes are distinct,
one can establish interesting connections. For instance, there are correspondences between~\cite[Lemma 23]{Bhaskar&Praveen23bis} and
Lemma~\ref{lemma-characterisation-satisfiability-goplus}, between~\cite[Lemma 26]{Bhaskar&Praveen23bis} and
Lemma~\ref{lemma-automaton-star}, and between the construction of tree automata in~\cite[Section 7]{Bhaskar&Praveen23bis}
and what is done in the paper to build Rabin tree automata in order to analyze the computational complexity
of the decision problems. It remains open how to use the best of the two papers for further analysis, for instance how to
take advantage of our results on constraint automata?
It is also possible to find relationships between techniques developed
for lock-sharing systems in~\cite{Mascle&Muscholl&Walukiewicz23} and what we developed
herein. 

Besides, our proof of Lemma~\ref{lemma-automaton-star} also proposes a slight novelty compared to the construction
 in~\cite{Labai&Ortiz&Simkus20}:  we design $\starautomaton$ without
firstly constructing  {\em a tree automaton for the complement language}
(as done in~\cite[Section 3.4]{Labai&Ortiz&Simkus20})
and then
using 
results from~\cite{Muller&Schupp95} (elimination of alternation in tree automata).
Our new path  happens to be rewarding: not only we can better understand how to
express the condition $(\newbigstar)$, but also we 
control  the size parameters 
of $\starautomaton$ involved in our  complexity analysis. 
This may prove useful when implementing
the decision procedure
for solving the satisfiability problem for $\CTL(\Zed)$  (resp. for  $\CTLStar(\Zed)$). 

\subsection{Differences with~\cite{Labai&Ortiz&Simkus20} about the number of Rabin pairs}
\label{section-rabin-pairs-comparison}

To conclude this section about related works, we would like to draw the attention of the reader that 
Lemma~\ref{lemma-automaton-star} is  similar
to~\cite[Proposition 26]{Labai&Ortiz&Simkus20}
but there is an essential difference: the number of Rabin pairs in
Lemma~\ref{lemma-automaton-star} is not a constant
but a value depending on $\beta$, an outcome of our investigations.
Note also that the question on the number of Rabin pairs
discussed herein
is independent from
the question discussed above about having the elements in $\interval{0}{\degree-1}^*\times\{\adatum_1,\adatum_\alpha\}$ within $\newGt$. 
It is important to know the number of Rabin pairs in
$\starautomaton$ for our  complexity analysis
as checking nonemptiness of
 Rabin tree automata is {\em exponential} in the number of Rabin pairs~\cite[Theorem 4.1]{Emerson&Jutla00}. 
It is  polynomial in the cardinality of the transition relation
and exponential in the number of Rabin  pairs, see e.g.~\cite[Theorem 4.1]{Emerson&Jutla00}.
More precisely, for binary trees, it is in time $(m \times N)^{\mathcal{O}(N)}$, where $m$  is the number of locations
  and $N$ is the number of Rabin pairs, see the statement~\cite[Theorem 4.1]{Emerson&Jutla00}.
  However, this is not exactly what we need herein as the branching  degree $\degree \geq 1$ is arbitrary
  in our investigations.
  That is  why, for the proof of Lemma~\ref{lemma-exptime-tca} we used
  that nonemptiness can be checked in time $(\card{\delta} \times \gamma \times N)^{\mathcal{O}(N)}$ ($\degree$ is taken into account in $\card{\delta}$).
  This makes a difference with the argument used to establish~\cite[Proposition 27]{Labai&Ortiz&Simkus20}.

  As a conclusion, our analysis about the number of Rabin pairs,  and the complexity measure to test nonemptiness
  of Rabin tree automata differs slightly from what is used in~\cite{Labai&Ortiz&Simkus20}. This lead us to establish the complexity
  measures sometimes quite pedantically. However, this  was for the sake of guaranteing that, indeed, nonemptiness of $\alang(\locautomaton) \cap \alang(\starautomaton)$
  can be tested in exponential-time, as stated in Lemma~\ref{lemma-exptime-tca}.

\section{Concluding Remarks}\label{section-conclusion}
We developed an automata-based approach to solve
\satproblem{$\CTL(\Zed)$} and \satproblem{$\CTLStar(\Zed)$}, by introducing
tree constraint automata that
accept infinite data trees with data domain $\Zed$.
The nonemptiness problem for tree constraint automata with B\"uchi acceptance conditions (resp.
      with Rabin pairs) is \exptime-complete, see Theorem~\ref{theorem-exptime-tca} (resp. Theorem~\ref{theorem-exptime-rtca}). 
      The difficult part consists in proving the \exptime-membership
      for which we show how to substantially adapt
      the material in~\cite[Section 5.2]{Labai21} that guided us to design  the correctness proof of $(\newbigstar)$.
      The work~\cite{Labai&Ortiz&Simkus20} was indeed a great inspiration but we adjusted a few statements from there.
      We recall that $(\bigstar)$ in~\cite{Labai&Ortiz&Simkus20} is not fully correct
      (see Section~\ref{section-rw-comparison})
      as we need to add constants (leading to the variant condition $(\newbigstar)$). 
      Moreover, our construction of the automaton in Lemma~\ref{lemma-automaton-star} does depend on the number of variables
      unlike~\cite[Proposition 26]{Labai&Ortiz&Simkus20}
      (again, see Section~\ref{section-rw-comparison}).
      This is crucial for complexity, as it is related to
      the number of Rabin pairs. We also use~\cite{Emerson&Jutla00} more precisely than~\cite[p.621]{Labai&Ortiz&Simkus20} as  we handle non-binary trees.
In short, we introduced TCA for which we characterise complexity of the nonemptiness problem
(providing a few improvements to~\cite{Labai&Ortiz&Simkus20}). We left aside the question of the  expressiveness of TCA, which is
interesting but  out of the scope of this paper.

      This lead us to show that $\satproblem{\CTL(\Zed)}$ is \exptime-complete (Theorem~\ref{theorem-ctlz}),
and $\satproblem{\CTLStar(\Zed)}$ is \twoexptime-complete (Theorem~\ref{theorem-ctlstarz}).
  The only decidability proof for $\satproblem{\CTLStar(\Zed)}$ done so far,
  see~\cite[Theorem 32]{Carapelle&Kartzow&Lohrey16}, is  by reduction to a decidable
  second-order
  logic. 
  Our complexity characterisation for $\satproblem{\CTLStar(\Zed)}$ provides an  answer to
  several open problems related to $\CTLStar(\Zed)$ fragments,
  see e.g.~\cite{Bozzelli&Gascon06,Gascon09,Carapelle&Kartzow&Lohrey16,CarapelleTurhan16,Labai&Ortiz&Simkus20}.

We believe that our results on TCA
  can help to establish complexity results for other logics (see also Section~\ref{section-ctlstarz}
  about a domain for strings,~\cite[Section 4]{Exibard&Filiot&Khalimov22} to handle more concrete domains
  and a recent result for some description logic with a concrete domain made of finite
  strings~\cite{Demri&Quaas23ter}).

  \section*{Acknowledgement} We would like to warmly thank the anonymous reviewers for all their comments
  and suggestions that help us a lot to improve the quality of this document. Thanks also to the reviewers
  of the conference version of this work. 


\newcommand{\etalchar}[1]{$^{#1}$}

 \newpage
\appendix

\section{Proof for Section~\ref{section-automata}}
\label{appendix-first}

\subsection{Proof of Lemma~\ref{lemma-intersection-rtca}}
\label{appendix-proof-lemma-intersection-rtca}
\renewcommand{\aautomatonbis}{{\mathbb A}}
 \newcommand{\Fun}{\mathbf{F}}
\begin{proof}
  For each $k \in \interval{1}{n}$,
    let $\aautomatonbis_k = \triple{\locations_k,\aalphabet,\degree,\beta}{\locations_{k,\init},\delta_k}{\rabinacc_k}$
with $\rabinacc_k = (L_k^{\gamma},U_k^{\gamma})_{\gamma \in \interval{1}{N_k}}$  be a Rabin TCA.
We define a Rabin TCA
$\aautomatonbis = \triple{\locations,\aalphabet,\degree,\beta}{\locations_{\init},\delta}{\rabinacc}$
such that $\alang(\aautomatonbis) = \bigcap_{1\leq k\leq n} \alang(\aautomatonbis_k)$.
The Rabin TCA $\aautomatonbis$ is designed as a product automaton, where the locations
are of the form $\tuple{\alocation_1}{\alocation_n, \amap}$ with
$\tuple{\alocation_1}{\alocation_n} \in \locations_1 \times \cdots \times \locations_n$
and $\amap$ is a map $\amap: \interval{1}{N_1} \times \cdots \times\interval{1}{N_n} \to \interval{0}{2n-1}$ that can be viewed
as a finite memory related to the visit of locations in the sets $L_k^{\gamma}$'s.
We write $\Fun(N_1, \ldots,N_n)$ to denote the set of maps of the form  $\amap: \interval{1}{N_1} \times \cdots
\times\interval{1}{N_n} \to \interval{0}{2n-1}$. Let us explain how we use its elements to encode the conjunction of the Rabin acceptance conditions $\rabinacc_1, \ldots, \rabinacc_n$ into
a single Rabin acceptance condition in the product tree automaton, with a single exponential
blow-up, which happens to be of reasonable magnitude. 
Suppose that along a path of a run for $\aautomatonbis$ (yet to be defined), for the sequence of locations below 
$$
\tuple{\alocation_{1,0}}{\alocation_{n,0}, \amap^0}, \tuple{\alocation_{1,1}}{\alocation_{n,1}, \amap^1}, \tuple{\alocation_{1,2}}{\alocation_{n,2}, \amap^2}
\ldots, 
$$
we wish to guarantee that it satisfies the conjunction of the Rabin acceptance conditions $\rabinacc_1, \ldots, \rabinacc_n$.
Since the tuples $\tuple{\alocation_{1,i}}{\alocation_{n,i}}$'s are updated synchronously according to the transition relations
of the Rabin TCA $\aautomatonbis_k$'s, we have  to enforce that
for some $\tuple{\gamma_1}{\gamma_n} \in  \interval{1}{N_1} \times \cdots \times\interval{1}{N_n}$,
the following properties hold. 
\begin{enumerate}
\item[(a)] For all $k \in \interval{1}{n}$,
           $L_k^{\gamma_k}$ is visited infinitely often in $\alocation_{k,0}, \alocation_{k,1}, \alocation_{k,2}, \ldots$.
\item[(b)] For all $k \in \interval{1}{n}$,
           $U_k^{\gamma_k}$ is visited finitely  in $\alocation_{k,0}, \alocation_{k,1}, \alocation_{k,2}, \ldots$.
\end{enumerate}
In the product automaton $\aautomatonbis$, we shall associate a unique Rabin pair $\pair{L}{U}$ to each tuple
$\tuple{\gamma_1}{\gamma_n}$ in order to enforce the satisfaction of (a) and (b) above.
Since the $\locations_k$'s and $\Fun(N_1, \ldots,N_n)$ are finite sets, $U$ can be defined as
the  union of sets of the form
$\locations_1 \times \cdots \times \locations_{k-1} \times U^{\gamma_k}_k \times \locations_{k+1} \times
    \cdots \times \locations_{n}
    \times \Fun(N_1, \ldots,N_n)$, $k \in \interval{1}{n}$.
    In order to take care of (a), we use the maps from the set $\Fun(N_1, \ldots,N_n)$.
    The idea is to enforce that $L_1^{\gamma_1}$ is visited, then $L_2^{\gamma_2}$, then \ldots then $L_n^{\gamma_n}$, and this is repeated
    infinitely along the sequence of locations labelling the path of the run.
    For each map $\amap \in \Fun(N_1, \ldots,N_n)$, the intentions for the value $\amap(\gamma_1,\ldots,\gamma_n)$ are the 
    following ones. If $\amap(\gamma_1,\ldots,\gamma_n) = 2k-2$ for some $k \in \interval{1}{n}$, then along the path 
    we are waiting to meet a forthcoming location $\tuple{\alocation_{1,i}}{\alocation_{n,i}, \amap_i}$ such that
    $\alocation_{k,i}$ belongs to $L_k^{\gamma_k}$. Once it is done, $\amap_i(\gamma_1,\ldots,\gamma_n)$ takes the odd value $2k-1$. By construction, at the next step $\amap_{i+1}(\gamma_1,\ldots,\gamma_n) = 2k \ \bmod \ 2n$.
    So, $\amap_{i+1}(\gamma_1,\ldots,\gamma_n)$ is equal to $2k'$ for some $k' \in \interval{0}{n-1}$, and now along the path
    we are waiting to meet a forthcoming location $\tuple{\alocation_{1,i'}}{\alocation_{n,i'}, \amap_{i'}}$ such that
    now $\alocation_{k'+1,i'}$ belongs to $L_{k'+1}^{\gamma_{k'+1}}$. By requiring that $\amap_i(\gamma_1,\ldots,\gamma_n) = 1$ infinitely often
    along the sequence of locations labelling the path of the run, we shall guarantee the satisfaction of (a).
    As a consequence, $L$ simply contains all the locations $\tuple{\alocation_1}{\alocation_n, \amap}$ such that
    $\amap(\gamma_1,\ldots,\gamma_n) = 1$. 

    Before providing the formal definition for $\aautomatonbis$, we define below the function
     $$\mathcal{U}:  \Fun(N_1, \ldots,N_n) \times (\locations_1 \times \cdots \times \locations_n) \rightarrow
    \Fun(N_1, \ldots,N_n)$$ that is instrumental to update the maps in $\Fun(N_1, \ldots,N_n)$ along
    the paths of $\aautomatonbis$'s runs:  
     define $\mathcal{U}(\amap,\tuple{\alocation_1}{\alocation_n}) = \amap'$, where, for all 
      $\tuple{\gamma_1}{\gamma_n} \in \interval{1}{N_1} \times \cdots \times\interval{1}{N_n}$,
    $$\amap'\tuple{\gamma_1}{\gamma_n}\egdef
     \begin{cases} 
     2k \bmod 2n & \text{if } \amap\tuple{\gamma_1}{\gamma_n}=2k-1 \text{ for some } k\in\interval{1}{n}, \\
     2k - 1 & \text{if } \amap\tuple{\gamma_1}{\gamma_n}=2k-2 \text{ for some } k\in\interval{1}{n}, \text{ and  }q_k\in L_k^{\gamma_k}, \\
     2k - 2 & \text{if } \amap\tuple{\gamma_1}{\gamma_n}=2k-2 \text{ for some } k\in\interval{1}{n}, \text{ and  }q_k\not\in L_k^{\gamma_k}. 
     \end{cases}$$    
     Next, we state several simple properties. 
Let us consider the $\omega$-sequence below in $(\locations_1 \times \cdots \times \locations_n \times \Fun(N_1, \ldots,N_n))^{\omega}$
$$
\tuple{\alocation_{1,0}}{\alocation_{n,0}, \amap_0}, \tuple{\alocation_{1,1}}{\alocation_{n,1}, \amap_1}, \tuple{\alocation_{1,2}}{\alocation_{n,2}, \amap_2}
\ldots, 
$$
such that for all $i \in \Nat$, $\amap_{i+1} = \mathcal{U}(\amap_i, \tuple{\alocation_{1,i+1}}{\alocation_{n,i+1}})$
and $\amap_0$ is the unique map that takes always the value zero. 
Such a sequence may correspond to the source locations of the transitions along a path in a run of $\aautomatonbis$, see below.
One can easily show the truth of the following two claims.

\,

\noindent
{\bf Claim 1}
For all $\tuple{\gamma_1}{\gamma_n} \in \interval{1}{N_1} \times \cdots \times\interval{1}{N_n}$, the following three statements are equivalent.
\begin{enumerate}
\item For infinitely many $i\geq 0$, $\amap_i(\gamma_1,\ldots,\gamma_n) = 1$.
\item For all $k \in \interval{0}{2n-1}$,  for infinitely many $i\geq 0$, $\amap_i(\gamma_1,\ldots,\gamma_n) = k$.
\item For all $k \in \interval{1}{n}$, $L_k^{\gamma_k}$ is visited infinitely often in
   $\alocation_{k,0}, \alocation_{k,1}, \alocation_{k,2} \ldots$.
\end{enumerate}
\begin{proof}[Proof of Claim 1]
The direction from (2) to (1) is trivial. For the direction from (1) to (2), suppose $\amap_i(\gamma_1,\ldots,\gamma_n) = 1$ for infinitely many $i\geq 0$. 
Let $i$ be the first index such that $\amap_i(\gamma_1,\ldots,\gamma_n) = 1$. 
Recall that $\amap_0(\gamma_1,\dots,\gamma_n)=0=2k-2$ for $k=1$. 
By definition of $\mathcal{U}$, we have $\amap_j(\gamma_1,\dots,\gamma_n)=0$ for $0\leq j< i$, and $q_{1,i}\in L^{\gamma_1}_1$. 
But then, also by definition of $\mathcal{U}$, 
$\amap_{i+1}(\gamma_1,\dots,\gamma_n)=2\neq 1$.  Using similar arguments, one can show that the successive values for $\amap_0(\gamma_1,\ldots,\gamma_n), \amap_1(\gamma_1,\ldots,\gamma_n),\amap_2(\gamma_1,\ldots,\gamma_n)  \ldots
  $
  modulo stuttering 
  are $(0 \cdot 1  \cdots (2n-2) \cdot (2n-1))^{\omega}$.
  For the direction from (2) to (3), 
  the satisfaction of (2) implies that for all $k \in \interval{1}{n}$,  for infinitely many $i$, $\amap_i(\gamma_1,\ldots,\gamma_n) = 2k-1$.  Therefore  $L_k^{\gamma_k}$ is visited infinitely often. 
  Conversely, if (3) holds,
  then again, 
  the successive values for
  $\amap_0(\gamma_1,\ldots,\gamma_n), \amap_1(\gamma_1,\ldots,\gamma_n),\amap_2(\gamma_1,\ldots,\gamma_n)  \ldots$ modulo stuttering
  are $(0 \cdot 1  \cdots (2n-2) \cdot (2n-1))^{\omega}$ and therefore (2)  holds.\end{proof}

\noindent
{\bf Claim 2}
For all $\tuple{\gamma_1}{\gamma_n} \in \interval{1}{N_1} \times \cdots \times\interval{1}{N_n}$, the following two statements are equivalent.
\begin{enumerate}
\item For all $k \in \interval{1}{n}$,
$U_k^{\gamma_k}$ is visited only finitely in $\alocation_{k,0}, \alocation_{k,1}, \alocation_{k,2} \ldots$. 
\item $U$ is visited only finitely in
$
\tuple{\alocation_{1,0}}{\alocation_{n,0}, \amap_0}, \tuple{\alocation_{1,1}}{\alocation_{n,1}, \amap_1}, \tuple{\alocation_{1,2}}{\alocation_{n,2}, \amap_2}
\ldots.
$, where
$$
U =
\big(
    \bigcup_{k=1}^{n} \locations_1 \times \cdots \times \locations_{k-1} \times U^{\gamma_k}_k \times \locations_{k+1} \times
    \cdots \times \locations_{n}
    \big) \times \Fun(N_1, \ldots,N_n).
$$
\end{enumerate}
\begin{proof}[Proof of Claim 2]
Suppose that (1) holds. Since all $Q_k$'s are finite, it follows that $U$ cannot be  visited infinitely often, and hence (2) holds. The direction from (2) to (1) is trivial. 
\end{proof}

Below, we provide the formal definition for $\aautomatonbis$ according to the above informal description and
then we prove the correctness of the construction
based on the properties established above. 

\begin{itemize}

\item $\locations \egdef \locations_1 \times \cdots \times \locations_n \times \Fun(N_1, \ldots,N_n)$.

\item The tuple
  $\triple{\pair{\alocation_1, \ldots,\alocation_n}{\amap}}{\aletter}{
  \pair{\acons_0}{\pair{\alocation^1_0, \ldots,\alocation^n_0}{\amap^0}}, \ldots,
  \pair{\acons_{\degree-1}}{\pair{\alocation^1_{\degree-1}, \ldots,\alocation^n_{\degree-1}}{\amap^{\degree-1}}}}
  $
  belongs to $\delta$ iff the conditions below hold.
  \begin{enumerate}

  \item  For each $k \in \interval{1}{n}$, there is
    $\triple{\alocation_k}{\aletter}{\tuple{\pair{\acons^k_0}{\alocation^k_0}}{
      \pair{\acons^k_{\degree-1}}{\alocation^k_{\degree-1}}}} \in \delta_k$ and
    for each $j \in \interval{0}{\degree-1}$, 
    $$
    \acons_j \egdef \bigwedge_{k=1}^{n} \acons^k_j .
    $$
    Observe that the size of $\acons_j$ is bounded above by
    $\underset{k}{\sum} \ (1 + \maxconstraintsize{\aautomatonbis_k})$.

  \item For all $j \in \interval{0}{\degree-1}$,
    $\amap^j \egdef  \mathcal{U}(\amap, \tuple{\alocation^1_j}{\alocation^n_j})$
    (deterministic update).

    When $n =2$, the above conditions are identical to those in the proof of
    Lemma~\ref{lemma-intersection-automaton} and therefore the above developments generalise
    what is done in that proof. 
  \end{enumerate}

\item $\locations_{\init} \egdef \locations_{1,\init} \times \cdots \times \locations_{n,\init} \times
  \set{\amap_0}$, where $\amap_0$ is
  the unique map that takes always the value zero
  (this value is arbitrary and any value will do the job). 

  \item The set of Rabin pairs in $\rabinacc$ contains exactly the pairs $\pair{L}{U}$ for which there is
    $\tuple{\gamma_1}{\gamma_n} \in \interval{1}{N_1} \times \cdots \times \interval{1}{N_n}$ such that
    {\small 
    $$
    U \egdef
    \big(
    \bigcup_{k=1}^{n} \locations_1 \times \cdots \times \locations_{k-1} \times U^{\gamma_k}_k \times \locations_{k+1} \times
    \cdots \times \locations_{n}
    \big) \times
     \Fun(N_1, \ldots,N_n)
    $$
    $$
    L \egdef \big(\locations_1 \times \locations_2 \times \cdots \times \locations_n \big) \times
    \set{\amap \mid \amap(\gamma_1,\ldots,\gamma_n) = 1}.
    $$
    }Again, when $n =2$, the above construction is identical to
    the one  in the proof of
    Lemma~\ref{lemma-intersection-automaton}.
    Strictly speaking, $\pair{L}{U}$ above is indexed by  a tuple $\tuple{\gamma_1}{\gamma_n}$
    but we omit such decorations as it will not lead to any confusion. 
\end{itemize}
Finally, by construction $\rabinacc$ contains at most $N = \underset{k}{\Pi} \ N_k$ pairs
as required in the statement.

Let us prove that 
$\alang(\aautomatonbis) = \bigcap_{1\leq k\leq n} \alang(\aautomatonbis_k)$.
For the first inclusion,  assume that $\atree \in \bigcap_{1\leq k\leq n} \alang(\aautomatonbis_k)$.
For each $1\leq k\leq n$, 
there exists some initialized accepting run $\rho_k:\interval{0}{\degree-1}^* \to \delta_k$ of $\aautomatonbis_k$ on $\atree$. 
That is, for every node $\anode\in\interval{0}{\degree-1}^*$ with $\atree(\anode)=(\aletter,\vect{z})$ and 
$\atree(\anode\cdot i) = (\aletter_i, \vect{z}_i)$ for all $i\in\interval{0}{\degree-1}$, 
if $\arun_k(\anode)=(\alocation^k, \aletter,(\acons_0^k,\alocation_0^k), \dots, (\acons_{\degree-1}^k,\alocation_{\degree-1}^k)$, then, for all $i\in\interval{0}{\degree-1}$
\begin{enumerate}
\item $\arun_k(\anode\cdot i)$'s source location is $\alocation_i^k$, 
\item $\Zed\models\acons_i^k(\vect{z},\vect{z}_i)$. 
Further, since $\arun_k$ is initialized and accepting, we have 
\item $\arun_k(\varepsilon)$'s source location is in $\locations_{k,\init}$,  and
\item for all paths $\apath$ in $\arun$ starting from the root node $\varepsilon$, there exists
      some $(L,U)\in\rabinacc_k$ such that 
      $\inf(\arun_k,\apath)\cap L\neq\emptyset$ and $\inf(\arun_k,\apath)\cap U\ = \emptyset$. 
\end{enumerate}

From these runs $\arun_k$, 
we define 
the map $\arun: \interval{0}{\degree-1}^* \to \delta$ as follows: 
for all $\anode \in \interval{0}{\degree-1}^*$
with $\atree(\anode) = \pair{\aletter}{\vect{z}}$, define 
$$\arun(\anode)\egdef\triple{\pair{\alocation_1, \ldots,\alocation_n}{\amap}}{\aletter}{
  \pair{\acons_0}{\pair{\alocation^1_0, \ldots,\alocation^n_0}{\amap^0}}, \ldots,
  \pair{\acons_{\degree-1}}{\pair{\alocation^1_{\degree-1}, \ldots,\alocation^n_{\degree-1}}{\amap^{\degree-1}}}},
  $$
where
$
    \acons_j \egdef \bigwedge_{k=1}^{n} \acons_k^j
    $
    and
    $\amap^j \egdef  \mathcal{U}(\amap, \tuple{\alocation^1_j}{\alocation^n_j})$
for each $j \in \interval{0}{\degree-1}$. 
For this definition to make sense, 
we additionally require the source location of $\rho(\varepsilon)$ to be equal to 
$(\arun_1(\varepsilon), \ldots,
\arun_n(\varepsilon), \amap_0)$, where
$\amap_0$ is the function mapping everything to $0$. 
Note that the corresponding transition is indeed in $\delta$.  
Let us prove that $\arun$ is an initialized accepting run of $\aautomatonbis$ on $\atree$. 
Let $\anode\in\interval{0}{\degree-1}^*$ and $j\in\interval{0}{\degree-1}$. Suppose $\atree(\anode)=(\aletter,\vect{z})$, 
$\atree(\anode\cdot j) = (\aletter_j, \vect{z}_j)$, and 
$\arun(\anode)=\triple{\pair{\alocation_1, \ldots,\alocation_n}{\amap}}{\aletter}{
  \pair{\acons_0}{\pair{\alocation^1_0, \ldots,\alocation^n_0}{\amap^0}}, \ldots,
  \pair{\acons_{\degree-1}}{\pair{\alocation^1_{\degree-1}, \ldots,\alocation^n_{\degree-1}}{\amap^{\degree-1}}}}$.
\begin{itemize}
\item That $\arun(\anode\cdot j)$'s source location is $(\alocation^1_j,\dots,\alocation^n_j,\amap^j)$ follows from (1) above, and by the deterministic update of the function. 
\item That $\Zed\models\acons_j(\vect{z},\vect{z}_j)$ follows from (2) above.
\item That $\arun(\varepsilon)$'s source location $(\arun_1(\varepsilon), \ldots,
\arun_n(\varepsilon), \amap_0)$ is in $\locations_{\init}$ follows from the definition of $\locations_{\init}$ and (3) above. 
\item For proving that $\arun$ is accepting, let $\apath=j_1 j_2 \cdots \in \interval{0}{\degree-1}^{\omega}$ be an infinite path in $\arun$ starting from $\varepsilon$. 
Let $s=(\alocation_{1,0},\alocation_{2,0},\dots,\alocation_{n,0},\amap^0), 
(\alocation_{1,1},\alocation_{2,1},\dots,\alocation_{n,1},\amap^1), 
(\alocation_{1,2},\alocation_{2,2},\dots,\alocation_{n,2},\amap^2)\dots$ be the corresponding sequence of source locations.  
By (4) above, 
for each $k \in \interval{1}{n}$,
there is an index $\gamma_k$ such that some location in $L_k^{\gamma_k}$ occurs
infinitely often in 
$q_{k,0},q_{k,1},q_{k,2}\dots$.
and all the locations in  $U_k^{\gamma_k}$ occur only finitely often in
$q_{k,0},q_{k,1},q_{k,2}\dots$.
Let us pick $\pair{L}{U}$ in
$\rabinacc$ such that
{\small 
    $$
    U =
    \big(
    \bigcup_{k=1}^{n} \locations_1 \times \cdots \times \locations_{k-1} \times U^{\gamma_k}_k \times \locations_{k+1} \times
    \cdots \times \locations_{n}
    \big) \times
    \Fun(N_1, \ldots,N_n)
    $$
    $$
    L = \big(\locations_1 \times \locations_2 \times \cdots \times \locations_n \big) \times
    \set{\amap \mid \amap(\gamma_1,\ldots,\gamma_n) = 1}.
    $$
}By Claim 2, 
all the locations in $U$ occur only finitely often in
the sequence $s$. Hence $\inf(\arun, \apath)\cap U=\emptyset$. 
By Claim 1 and the definition of $L$ for the tuple $\tuple{\gamma_1}{\gamma_n}$,  there is some location in $L$ occurring  infinitely often in the sequence $s$. Hence 
$\inf(\arun, \apath)\cap L\neq \emptyset$, which finishes the proof of the first direction.
\end{itemize}

For the other inclusion, assume that $\atree \in \alang(\aautomatonbis)$. 
Then there exists some initialized accepting run $\arun: \interval{0}{\degree-1}^* \to \delta$. 
That is, 
for every node $\anode\in\interval{0}{\degree-1}^*$ with $\atree(\anode)=(\aletter,\vect{z})$ and 
$\atree(\anode\cdot i) = (\aletter_i, \vect{z}_i)$ for all $i\in\interval{0}{\degree-1}$, 
if
\[\arun(\anode)=((\alocation_1,\dots,\alocation_n,\amap), \aletter,(\acons_0,(\alocation_0^1,\dots,\alocation_0^n,\amap^0)), \dots, (\acons_{\degree-1},(\alocation_{\degree-1}^1,\dots,\alocation_{\degree-1}^n,\amap^{\degree-1}))
\]
then, for all $i\in\interval{0}{\degree-1}$
\begin{enumerate}
\item $\arun(\anode\cdot i)$'s source location is $(\alocation_i^1,\dots,\alocation_i^n,\amap^i)$, 
\item $\Zed\models\acons_i(\vect{z},\vect{z}_i)$. 

\item  By definition of $\delta$, we can also infer that 
    $\triple{\alocation_k}{\aletter}{\tuple{\pair{\acons^k_0}{\alocation^k_0}}{
      \pair{\acons^k_{\degree-1}}{\alocation^k_{\degree-1}}}} \in \delta_k$,
      and
    $\acons_i \egdef \bigwedge_{k=1}^{n} \acons_k^i$ for all $k \in \interval{1}{n}$. 
Further, since $\arun$ is initialized and accepting, we have 
\item $\arun(\varepsilon)$'s source location is in $\locations_{\init}$,  and
\item for all paths $\apath$ in $\arun$ starting from $\varepsilon$, there exists some $(L,U)\in\rabinacc$ such that 
$\inf(\arun,\apath)\cap L\neq\emptyset$ and $\inf(\arun,\apath)\cap U\ = \emptyset$.
\end{enumerate}
For all $k\in\interval{1}{n}$, 
we define the map $\arun_k: \interval{0}{\degree-1}^* \to \delta_k$ by 
    $$\arun_k(\anode) \egdef \triple{\alocation_k}{\aletter}{\tuple{\pair{\acons^k_0}{\alocation^k_0}}{
        \pair{\acons^k_{\degree-1}}{\alocation^k_{\degree-1}}}},$$ for each 
        $\anode\in\interval{0}{\degree-1}^*$. 
    Let us prove that $\arun_k$ is an initialized accepting run of $\aautomatonbis_k$ on $\atree$. 
    Let $\anode\in\interval{0}{\degree-1}^*$ and $j\in\interval{0}{\degree-1}$. Suppose $\atree(\anode)=(\aletter,\vect{z})$, 
$\atree(\anode\cdot j) = (\aletter_j, \vect{z}_j)$, and 
$\arun_k(\anode)=\triple{\alocation^k}{\aletter}{ 
  \pair{\acons^k_0}{\alocation^k_0}, \ldots,
  \pair{\acons^k_{\degree-1}}{\alocation^k_{\degree-1}}}$.
\begin{itemize}
\item That $\arun_k(\anode\cdot j)$'s source location is $\alocation^1_k$ follows from (1) above. 
\item That $\Zed\models\acons_j^k(\vect{z},\vect{z}_j)$ follows from (2) and (3) above ($\acons^k_j$ is a conjunct of $\acons_j$).
\item That $\arun_k(\varepsilon)$'s source location is in $\locations_{k,\init}$ follows from  (4) above and the definition of $\locations_{\init}$. 
\item For proving that $\arun_k$ is accepting, 
let $\apath= j_1 j_2 \cdots \in \interval{0}{\degree-1}^{\omega}$ be an infinite path of $\arun$ from $\varepsilon$. 
By (5) above, there exists some $(L,U)\in\rabinacc$ such that 
$\inf(\arun,\apath)\cap L\neq\emptyset$ and $\inf(\arun,\apath)\cap U\ = \emptyset$. 
By definition of $\rabinacc$, $(L,U)$ is defined for some tuple of indices $(\gamma_1,\dots,\gamma_n)\in \interval{1}{N_1} \times \cdots \times \interval{1}{N_n}$. 
This implies that, assuming that  $s=(\alocation_{1,0},\alocation_{2,0},\dots,\alocation_{n,0},\amap^0), 
(\alocation_{1,1},\alocation_{2,1},\dots,\alocation_{n,1},\amap^1), 
(\alocation_{1,2},\alocation_{2,2},\dots,\alocation_{n,2},\amap^2)\dots$ be the  sequence of source locations in $\aautomatonbis$ corresponding to $\pi$, 
that all elements in
    $$\big(
    \bigcup_{k=1}^{n} \locations_1 \times \cdots \times \locations_{k-1} \times U^{\gamma_k}_k \times \locations_{k+1} \times
    \cdots \times \locations_{n}
    \big) \times
    \Fun(N_1, \ldots,N_n)
    $$
    occur only finitely in $s$.
    Moreover, some location in $s$
    $$
    \big(\locations_1 \times \locations_2 \times \cdots \times \locations_n \big) \times
    \set{\amap \mid \amap(\gamma_1,\ldots,\gamma_n) = 1}
    $$
    occurs infinitely often in the sequence $s$. 
    By projecting on the locations in $\locations_k$, this means that
    all the elements in $U^{\gamma_k}_k$ occur only finitely in
    $q_{k,0},q_{k,1},q_{k,2}\dots$ 
    and, by 
    Claim 1, some location in $L_k^{\gamma_k}$ occurs
    infinitely often in 
    $q_{k,0},q_{k,1},q_{k,2}\dots$. 
    This concludes that for all $k \in \interval{1}{n}$,
    $\arun_k$ is an accepting run on $\atree$ and therefore
    $\atree \in \bigcap_{k} \alang(\aautomatonbis_k)$.  \qedhere
\end{itemize}  
  \end{proof}

\renewcommand{\aautomatonbis}{{\mathbb B}}
\section{Proof for Section~\ref{section-complexity-nonemptiness}}
\subsection{Proof of \exptime-hardness for the nonemptiness problem}
\label{appendix-exptime-hardness} 
\begin{proof}
 \exptime-hardness of the nonemptiness problem for TCA
can be shown by reduction from the acceptance problem for alternating Turing machines
running in polynomial space, see e.g.~\cite{Chandra&Kozen&Stockmeyer81}. 
Let us first give a formal definition of alternating Turing machines. 
An \defstyle{alternating Turing machine} (ATM) is a tuple
$\aatm = \triple{\locations,\aalphabet}{\delta,\alocation_0,\alocation_{\text{acc}},\alocation_{\text{rej}}}{g}$ defined as follows. 

\begin{itemize}
\item $\locations$ is the finite set of control states.
\item $\aalphabet$ is the finite tape alphabet including the blank symbol $\sharp$ and the left endmarker $\rhd$.
\item $\delta: \locations\setminus\{\alocation_{\text{acc}},\alocation_{\text{rej}}\} \times \aalphabet \rightarrow \powerset{\locations\setminus\{\alocation_0\} \times \aalphabet \times
  \set{\leftarrow, \rightarrow}}$ is the transition function and each $\delta(\alocation,\aletter)$ contains exactly two elements. 
\item $\alocation_0 \in \locations$ is the initial state, $\alocation_{\text{acc}}\in\locations$ is the accepting state, and 
$\alocation_{\text{rej}}\in\locations$ is the rejecting state. 
\item $g: \locations\setminus\{\alocation_{\text{acc}},\alocation_{\text{rej}}\} \rightarrow \set{\forall,\exists}$ specifies the type of a state 
  ($\forall$ for universal states, $\exists$ for existential states); without loss of generality, we assume $g(\alocation_0)=\forall$. 
  \end{itemize}
A \defstyle{configuration} is a finite word in $\aalphabet^* (\locations \times \aalphabet) \aalphabet^{*}$, 
and we write $\Configs(\aatm)$ to denote the set of all configurations of the ATM $\aatm$. 
As usual, a configuration $\aconfiguration$ of the form $\aword \pair{\alocation}{\aletter} \aword'$ encodes a word
$\aword \aletter \aword'$ on the tape, with control state $\alocation$ and the head is on the $\length{\aword \aletter}$$^{\rm th}$ tape cell
where $\length{\aword \aletter}$ denotes the length of the word $\aword \aletter$.
We also say that $\aconfiguration$ uses $\length{\aword \aletter \aword'}$ tape cells. 
We write $\vdash_{\aatm}$ to denote the derivation relation of the ATM $\aatm$,  defined as follows, and 
where $\aconfiguration = \aword \pair{\alocation}{\aletter} \aword'$.
\begin{itemize}
\item $\aconfiguration \vdash_{\aatm} \aconfiguration'$ with $\aword = \aword'' \aletterbis$, $\aconfiguration' = \aword'' \pair{\alocation'}{\aletterbis} \aletterter \aword'$
  and $(\alocation',\aletterter,\leftarrow) \in \delta(\alocation,\aletter)$.
\item $\aconfiguration \vdash_{\aatm} \aconfiguration'$ with $\aword' = \aletterbis \aword''$, $\aconfiguration' = \aword \aletterter \pair{\alocation'}{\aletterbis} 
  \aword''$
  and $(\alocation',\aletterter,\rightarrow) \in \delta(\alocation,\aletter)$.
\item $\aconfiguration \vdash_{\aatm} \aconfiguration'$ with $\aword' = \varepsilon$, $\aconfiguration' = \aword \aletterter \pair{\alocation'}{\sharp}$
  and $(\alocation',\aletterter,\rightarrow) \in \delta(\alocation,\aletter)$. 
 \end{itemize}
 
 Given a word $\aword$ in $(\aalphabet \setminus \set{\rhd, \sharp})^*$, its initial configuration $\aconfiguration_{\init}(\aword)$ is
$\pair{\alocation_0}{\rhd} \aword$.  
The configuration $\aconfiguration$ is \defstyle{accepting} if $\alocation = \alocation_{\text{acc}}$, and \defstyle{rejecting}
if $\alocation = \alocation_{\text{rej}}$. 
An \defstyle{accepting run} for $\aword$ is a finite tree $\atree: \domain{\atree} \rightarrow \Configs(\aatm)$
with at most two children per node (but in $\domain{\atree}$, some nodes may have a unique child)
satisfying the following conditions:
\begin{itemize}
\item We have $\atree(\varepsilon) = \aconfiguration_{\init}(\aword)$ and all the leaves are labelled by accepting configurations.
\item The tree $\atree$ does not contain any node labelled by a rejecting configuration.
\item For all $\anode \in \domain{\atree}$ with $\atree(\anode) = \aword \pair{\alocation}{\aletter} \aword'$ and $\alocation$ is universal,
  the node $\anode$ has two children $\anode \cdot 0$ and $\anode \cdot 1$ that are labelled respectively
  by 
  the  configurations $\aconfiguration'$
   such that $\aword \pair{\alocation}{\aletter} \aword' \vdash_{\aatm} \aconfiguration'$ (maybe the same for the two options).
\item For all $\anode \in \domain{\atree}$ with $\atree(\anode) = \aword \pair{\alocation}{\aletter} \aword'$ and $\alocation$ is existential,
  the node $\anode$ has one child $\anode \cdot 0$  that is labelled by some configuration
  $\aconfiguration'$ such that $\aconfiguration \vdash_{\aatm} \aconfiguration'$.
\end{itemize}
In that case, we say that $\aword$ is \defstyle{accepted by the ATM $\aatm$}. 
The ATM $\aatm$ is polynomially space-bounded if there is a polynomial $P$ such that
for all $\aword \in (\aalphabet \setminus \set{\rhd, \sharp})^*$, if $\aword$ has an accepting
run, then it has an accepting run such that all the configurations labelling the nodes use at most
$P(\length{\aword})$ tape cells. 
The problem of determining whether a polynomially space-bounded ATM $\aatm$ accepts
the input word $\aword$ is known to be \exptime-complete~\cite[Corollary 3.6]{Chandra&Kozen&Stockmeyer81}.

Let us now turn to the reduction proof.  
Let $\aatm = \triple{\locations,\aalphabet}{\delta,\alocation_0}{g}$
be a polynomially space-bounded ATM (with polynomial $P(n) \geq n$) and $\aword \in \aalphabet^*$ be an input
word.
We define a B\"uchi TCA
$\aautomaton_{\aatm, \aword} = \triple{\locations',\aalphabet',2,\beta}{\locations'_\init,\delta'}{F}$
such that $\aatm$ accepts $\aword$ iff $\alang(\aautomaton_{\aatm, \aword}) \neq \emptyset$.

 The idea is to design $\aautomaton_{\aatm, \aword}$ so that it accepts a representation of the accepting runs of $\aatm$ from $\aword$. Let us first explain how we can represent a configuration of $\aatm$. 
 Without any loss of generality, we assume that all the configurations  
 use exactly $P(\length{\aword})$ tape cells,
possibly by padding the suffix with copies of the letter $\sharp$. 
The content of each of the $P(\length{\aword})$ tape cells is encoded by the value of its own variable in the TCA; hence we set the number of variables $\beta = P(\length{\aword})$.  For the encoding,  let 
$\amap: \aalphabet \cup \set{\sharp,\rhd} \rightarrow \interval{1}{\card{\aalphabet}+2}$
be an arbitrary one-to-one map. 
The tape of length $P(\length{\aword})$ is encoded by the values of the variables
$\avariable_1, \ldots, \avariable_{P(\length{\aword})}$, i.e. the letter $\aletterbis$ 
on the $j$$^{\rm th}$ tape cell
is encoded by the constraint $\avariable_j = \amap(\aletterbis)$.  
Let us define a simple and natural correspondence between configurations
from $\aalphabet^* (\locations \times \aalphabet) \aalphabet^{*}$ using $\beta$ tape cells
and a subset of  $\locations \times \interval{1}{\beta} \times \interval{1}{\card{\aalphabet}}^{\beta}$.
Elements of the set $\locations \times \interval{1}{\beta} \times \interval{1}{\card{\aalphabet}}^{\beta}$ are called
\defstyle{numerical configurations}. 
We say that
$
\aword \pair{\alocation}{\aletter} \aword' \approx
\triple{\alocation'}{i}{\adatum_1, \ldots, \adatum_{\beta}}
$
iff $\alocation = \alocation'$, $i = \length{\aword \aletter}$
and for all $j \in \interval{1}{\beta}$, we have $\adatum_j = \amap((\aword \aletter \aword')(j))$.

Let us complete the definition of the TCA
$\aautomaton_{\aatm, \aword} = \triple{\locations',\aalphabet',2,\beta}{\locations'_\init,\delta'}{F'}$ by 

\begin{itemize}

\item $\aalphabet' = \set{\arbitraryletter}$ ($\aalphabet'$ plays no essential role). 

\item $\locations' \egdef \locations \times \interval{1}{\beta} \uplus \set{\alocation^{\star}}$.
  Each pair $\pair{\alocation}{i}$ encodes part of a configuration with control state $\alocation$ and the head
  is on the $i$$^{\rm th}$ tape cell. The location $\alocation^{\star}$ is a special accepting state. 

\item $\locations'_\init = \set{\pair{\alocation_0}{1}}$ and $F' \egdef  \set{\pair{\alocation_{\text{acc}}}{i} \mid
 i\in \interval{1}{\beta}} \cup \set{\alocation^{\star}}$.

\item It remains to define the transition relation $\delta'$.

  \begin{itemize}

  \item For all $\alocation \in \{\alocation_{\text{acc}},\alocation_{\text{rej}}\}$, 
    and all $i \in \interval{1}{\beta}$,
    the only transitions starting from $\pair{\alocation}{i}$ are 
    $
    \triple{\pair{\alocation}{i}}{\arbitraryletter}{\pair{\pair{\true}{\pair{\alocation}{i}}}{\pair{\true}{\pair{\alocation}{i}}}}
    $.

  \item
    The only transition starting from $\alocation^{\star}$ is
    $
    \triple{\alocation^{\star}}{\arbitraryletter}{\pair{\pair{\true}{\alocation^{\star}}}{\pair{\true}{\alocation^{\star}}}}
    $.

  \item  Given $i \in \interval{1}{\beta}$, $\aletter, \aletter' \in \aalphabet$, we write
    $\acons(i, \aletter, \aletter')$ to denote the constraint in $\treeconstraints{\beta}$ below:
    $$
    \big( \bigwedge_{j \in \interval{1}{\beta} \setminus \set{i}} \mynext \avariable_j = \avariable_j \big)
    \wedge \avariable_i = \amap(\aletter)
    \wedge \mynext \avariable_i = \amap(\aletter').
    $$
    For all $\alocation \in \locations$ s.t. $g(\alocation)=\exists$, for all $\aletter \in \aalphabet$ with
    $\delta(\alocation, \aletter)$ of the form $\set{\triple{\alocation_1}{\aletter_1}{d_1},\triple{\alocation_2}{\aletter_2}{d_2}}$,
    for all $u \in \set{1,2}$ and $i \in \interval{1}{\beta}$,  $\delta'$ contains the transition
    $$
    \triple{\pair{\alocation}{i}}{\arbitraryletter}{\pair{\pair{\acons(i, \aletter, \aletter_u)}{\pair{\alocation_u}{m(i,d_u)}}}{
        \pair{\true}{\alocation^{\star}}}}
    $$
    with the expression $m(i,d_u)$ just above defined from
     $m(i,\rightarrow) \egdef i+1$ and $m(i,\leftarrow) \egdef i-1$.
    
   \end{itemize}    

\item For all $\alocation \in \locations$ such that $g(\alocation)=\forall$  and $\alocation \neq \alocation_0$, $\aletter \in \aalphabet$ with
    $\delta(\alocation, \aletter) = \set{\triple{\alocation_1}{\aletter_1}{d_1},\triple{\alocation_2}{\aletter_2}{d_2}}$,
    for all $i \in \interval{1}{\beta}$,  $\delta'$ contains the transition 
    $$
    \triple{\pair{\alocation}{i}}{\arbitraryletter}{\pair{\pair{\acons(i, \aletter, \aletter_1)}{\pair{\alocation_1}{m(i,d_1)}}}{
        \pair{\acons(i, \aletter, \aletter_2)}{\pair{\alocation_2}{m(i,d_2}}}}.
    $$

  \item Assuming that $\aword = \aletterbis_1 \cdots \aletterbis_n$ and
        $\awordbis = \rhd \aword \sharp^{\beta'}$ with $\beta'
    = \beta - (n+1)$, for all
    $\aletter \in \aalphabet$ with
    $\delta(\alocation_0, \aletter) = \set{\triple{\alocation_1}{\aletter_1}{d_1},
      \triple{\alocation_2}{\aletter_2}{d_2}}$,
    $\delta'$ contains the transition 
    $$
    \triple{\pair{\alocation_0}{1}}{\arbitraryletter}{\pair{\pair{\acons'(1, \aletter, \aletter_1)}{\pair{\alocation_1}{m(1,d_1)}}}{
        \pair{\acons'(1, \aletter, \aletter_2)}{\pair{\alocation_2}{m(1,d_2}}}}, 
    $$
    where
    $
    \acons'(1, \aletter, \aletter_u) \egdef
    \acons(1, \aletter, \aletter_u) \wedge
    \bigwedge_{1 \leq j \leq \beta} \avariable_j = \amap(\awordbis(j))
    $ ($u \in \set{1,2}$).
    Observe that necessarily $d_1 = d_2 = \rightarrow$ because the head cannot go to the left of the first position. 
    The location $\alocation_0$ has a special status and the transitions from it encodes the input
    word $\aword$ (and that is why we never come back to it). 
\end{itemize}

Correctness of the construction is stated below. 

\begin{lem} \label{lemma-correctness-hardness-tca}
  $\aatm$ accepts $\aword$ iff $\alang(\aautomaton_{\aatm, \aword}) \neq \emptyset$.
\end{lem}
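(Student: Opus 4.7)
The plan is to prove both directions by exhibiting a correspondence between accepting run trees of the ATM $\aatm$ on $\aword$ and initialised accepting runs of the TCA $\aautomaton_{\aatm, \aword}$ on full binary trees over $\set{\arbitraryletter} \times \Zed^{\beta}$. Under the correspondence, each node of the ATM run tree labelled by a configuration $\aconfiguration = \aword_1 \pair{\alocation}{\aletter} \aword_2$ is mapped to a TCA run node whose source location is $\pair{\alocation}{\length{\aword_1 \aletter}}$ and whose data label $\atree(\anode) = \pair{\arbitraryletter}{\vect{z}}$ satisfies $\vect{z}_j = \amap((\aword_1 \aletter \aword_2)(j))$ for all $j \in \interval{1}{\beta}$, so that $\aconfiguration \approx \triple{\alocation}{\length{\aword_1 \aletter}}{\vect{z}}$. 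The central observation driving both directions is that the constraints $\acons(i,\aletter,\aletter')$ faithfully implement one ATM step: if $\vect{z} \approx \aconfiguration$ and $(\vect{z},\vect{z}') \models \acons(i,\aletter,\aletter')$ for a transition corresponding to $(\alocation',\aletter',d) \in \delta(\alocation,\aletter)$, then $\vect{z}' \approx \aconfiguration'$ for the unique ATM successor $\aconfiguration \vdash_{\aatm} \aconfiguration'$ produced by that choice.

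For the $(\Rightarrow)$ direction, assume an accepting ATM run tree $\atree_{\aatm}$. I would construct the TCA run $\arun$ by induction on depth. The root is handled by the special initial transition from $\pair{\alocation_0}{1}$, whose constraint $\acons'(1,\aletter,\aletter_u)$ pins down all $\beta$ tape variables to the encoding of $\rhd \aword \sharp^{\beta'}$; this is exactly $\aconfiguration_{\init}(\aword)$. At a universal node of $\atree_{\aatm}$ with two ATM children, both TCA children are used, each guided by one of the two universal transitions of $\delta'$. At an existential node, the single ATM child is placed as the left TCA child using the corresponding existential transition, and the right TCA child is sent to the sink location $\alocation^{\star}$ with constraint $\top$, which loops forever in $F$. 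Accepting leaves of $\atree_{\aatm}$ are continued by the self-loops on $\pair{\alocation}{i} \in F$ with $\top$; data values beyond the accepting frontier can be chosen arbitrarily because all constraints from that point on are trivial. Every branch of the resulting infinite $\arun$ eventually reaches and stays in $F$, so $\arun$ is accepting.

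For the $(\Leftarrow)$ direction, let $\arun$ be an initialised accepting run of $\aautomaton_{\aatm, \aword}$ on a tree $\atree$. Since each $F$-location and each $\alocation^{\star}$-location admits only self-loops with $\top$, and since rejecting locations (with $g(\alocation)=\perp$) are trapping and lie outside $F$, every branch of $\arun$ must enter $F$ after finitely many steps and no branch ever enters a rejecting location. I would then prune $\arun$ by cutting each branch at its first $F$-entry, and further delete every subtree rooted at an $\alocation^{\star}$-child of an existential node. By König's lemma, the pruned object is a finite binary tree. Re-labelling each retained non-$\alocation^{\star}$ node by the configuration read off from its TCA source location and the data vector $\vect{z}$ via $\approx$ yields a tree of configurations: the root is $\aconfiguration_{\init}(\aword)$ (forced by the initial transition), every universal node has two children encoding the two ATM successor configurations, every existential node has one child encoding one ATM successor configuration, and every leaf is an accepting configuration. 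This is precisely an accepting ATM run for $\aword$.

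The main obstacle is the bookkeeping that links the symbolic constraints in $\delta'$ to the step relation $\vdash_{\aatm}$; once this is verified for a single transition (by case analysis on the direction $d \in \set{\leftarrow,\rightarrow}$ and on whether the head is at the right end of the tape), both implications reduce to routine inductions over the depth of the respective trees, and the infinite-vs-finite mismatch between TCA runs and ATM accepting runs is handled entirely by the trapping self-loops at $\alocation^{\star}$ and at accepting states.
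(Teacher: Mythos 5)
Your proposal is correct and follows essentially the same route as the paper's proof: both directions rest on the bijective correspondence $\approx$ between configurations and numerical configurations (the paper isolates this as properties (I)--(VI)), the forward direction pads the finite ATM run tree into an infinite TCA run using the $\top$-self-loops at accepting locations and the $\alocation^{\star}$ sink on the unused child of existential nodes, and the backward direction prunes the accepting TCA run at the first $F$-entry and invokes K\"onig's Lemma to recover a finite accepting ATM run. The only difference is presentational: the paper states the six transfer properties between $\vdash_{\aatm}$ and $\delta'$ explicitly before the two inductions, whereas you defer that case analysis to a single remark, but the content is the same.
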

\begin{proof} Let us state a few simple properties that are used in the sequel (whose easy proofs
  are omitted herein).
\begin{description}

  \item[(I)] For every configuration $\aconfiguration$ using $\beta$ tape cells, there is a unique 
        numerical configuration $\triple{\alocation}{i}{\adatum_1, \ldots, \adatum_{\beta}}$ such that
        $\aconfiguration \approx \triple{\alocation}{i}{\adatum_1, \ldots, \adatum_{\beta}}$.

  \item[(II)] For every numerical configuration  $\triple{\alocation}{i}{\adatum_1, \ldots, \adatum_{\beta}}$,
        there is a unique configuration such that $\aconfiguration \approx \triple{\alocation}{i}{\adatum_1, \ldots, \adatum_{\beta}}$.

      \item[(III)] For every configuration $\aconfiguration$ with universal control state such that $\aconfiguration \vdash_{\aatm} \aconfiguration_1$ and
        $\aconfiguration \vdash_{\aatm} \aconfiguration_2$ using the transition function $\delta$,
        there are numerical configurations
        \begin{center}
        $\triple{\alocation}{i}{\adatum_1, \ldots, \adatum_{\beta}}$, $\triple{\alocation_1}{i_1}{\adatum_1^1, \ldots, \adatum_{\beta}^1}$ and
          $\triple{\alocation_2}{i_2}{\adatum_1^2, \ldots, \adatum_{\beta}^2}$,
        \end{center}
        and a transition 
        $\triple{\pair{\alocation}{i}}{\dag}{\pair{\pair{\acons_1}{\pair{\alocation_1}{i_1}}}{\pair{\acons_2}{\pair{\alocation_2}{i_2}}}}
        \in \delta'$ such that
        $$
        \aconfiguration \approx \triple{\alocation}{i}{\adatum_1, \ldots, \adatum_{\beta}}, \ \ \ \ 
        \aconfiguration_1 \approx \triple{\alocation_1}{i_1}{\adatum_1^1, \ldots, \adatum_{\beta}^1}, \ \ \ \
        \aconfiguration_2 \approx \triple{\alocation_2}{i_2}{\adatum_1^2, \ldots, \adatum_{\beta}^2}
        $$
$$
        \Zed \models \acons_1(\overline{\adatum},\overline{\adatum^1}), \ \ \ \
        \Zed \models \acons_2(\overline{\adatum},\overline{\adatum^2}).
        $$

  \item[(IV)] For every configuration $\aconfiguration$ with existential control state such that $\aconfiguration \vdash_{\aatm} \aconfiguration_1$,
    there are numerical configurations $\triple{\alocation}{i}{\adatum_1, \ldots, \adatum_{\beta}}$ and $\triple{\alocation_1}{i_1}{\adatum_1^1, \ldots, \adatum_{\beta}^1}$,
    and a transition  $\triple{\pair{\alocation}{i}}{\dag}{\pair{\pair{\acons_1}{\pair{\alocation_1}{i_1}}}{\pair{\top}{\alocation^{\star}}}}$ such that
    $$
    \aconfiguration \approx \triple{\alocation}{i}{\adatum_1, \ldots, \adatum_{\beta}}, \ \ \ \ 
    \aconfiguration_1 \approx \triple{\alocation_1}{i_1}{\adatum_1^1, \ldots, \adatum_{\beta}^1}, \ \ \ \
    \Zed \models \acons_1(\overline{\adatum},\overline{\adatum^1}).
    $$
    Conditions  (III) and (IV) are proved using condition (I) and the way $\delta'$ is defined.
    Moreover, this means that the relationships between nodes in an accepting run can be simulated by runs on trees for TCA.

  \item[(V)] For all numerical configurations $\triple{\alocation}{i}{\adatum_1, \ldots, \adatum_{\beta}}$, 
    $\triple{\alocation_1}{i_1}{\adatum_1^1, \ldots, \adatum_{\beta}^1}$ and \\ $\triple{\alocation_2}{i_2}{\adatum_1^2, \ldots, \adatum_{\beta}^2}$,
    and transitions  $\triple{\pair{\alocation}{i}}{\dag}{\pair{\pair{\acons_1}{\pair{\alocation_1}{i_1}}}{\pair{\acons_2}{\pair{\alocation_2}{i_2}}}}$ such that
    $\Zed \models \acons_1(\overline{\adatum},\overline{\adatum^1})$ and 
    $\Zed \models \acons_2(\overline{\adatum},\overline{\adatum^2})$, there are configurations $\aconfiguration$, $\aconfiguration_1$ and $\aconfiguration_2$ such that
    $$
        \aconfiguration \approx \triple{\alocation}{i}{\adatum_1, \ldots, \adatum_{\beta}}, \ \ \ \ 
        \aconfiguration_1 \approx \triple{\alocation_1}{i_1}{\adatum_1^1, \ldots, \adatum_{\beta}^1}, \ \ \ \
        \aconfiguration_2 \approx \triple{\alocation_2}{i_2}{\adatum_1^2, \ldots, \adatum_{\beta}^2},
        $$
$$
        \aconfiguration \vdash_{\aatm} \aconfiguration_1, \ \ \ \ 
        \aconfiguration \vdash_{\aatm} \aconfiguration_2.
        $$

  \item[(VI)] For all numerical configurations $\triple{\alocation}{i}{\adatum_1, \ldots, \adatum_{\beta}}$ and 
    $\triple{\alocation_1}{i_1}{\adatum_1^1, \ldots, \adatum_{\beta}^1}$ and transitions
    $$\triple{\pair{\alocation}{i}}{\dag}{\pair{\pair{\acons_1}{\pair{\alocation_1}{i_1}}}{\pair{\top}{\alocation^{\star}}}}$$
    such that $\Zed \models \acons_1(\overline{\adatum},\overline{\adatum^1})$, there are configurations $\aconfiguration$ and $\aconfiguration_1$ such that
    $$
    \aconfiguration \approx \triple{\alocation}{i}{\adatum_1, \ldots, \adatum_{\beta}}, \ \ \ \ 
    \aconfiguration_1 \approx \triple{\alocation_1}{i_1}{\adatum_1^1, \ldots, \adatum_{\beta}^1}, \ \ \ \
    \aconfiguration \vdash_{\aatm} \aconfiguration_1.
    $$
    Conditions (V) and (VI) are proved using condition (II) and the way $\delta'$ is defined.
    Moreover, this means that the relationships between nodes in runs on  trees for TCA can be simulated by accepting runs.

  \end{description}

  \noindent
  ($\Rightarrow$) Let $\atree: \domain{\atree} \rightarrow \Configs(\aatm)$ be an accepting run of $\aatm$ on the input word $\aword$.
  Let $\atree^{\star}: \interval{0}{1}^* \rightarrow \Zed^{\beta}$ be an infinite tree such that
  for all $\anode \in \domain{\atree}$ with $\atree(\anode)  \approx
  \triple{\alocation}{i}{\adatum_1, \ldots, \adatum_{\beta}}$, we have $\atree^{\star}(\anode) \egdef
  \tuple{\adatum_1}{\adatum_{\beta}}$. Note that $\atree^{\star}(\anode)$ has a unique value by Condition (I) when
  $\anode \in \domain{\atree}$ and we have omitted to represent the unique possible letter on each node.
  Let $\arun^{\star}: \interval{0}{1}^* \rightarrow \delta'$ be the run defined as follows. 
  \begin{itemize}
  \item For all $\anode \in \domain{\atree}$ with $\atree(\anode) \approx
    \triple{\alocation}{i}{\adatum_1, \ldots, \adatum_{\beta}}$, we have $\arun^{\star}(\anode) \egdef (\pair{\alocation}{i},\cdot, \cdot,\cdot)$, that is, a transition with source location $\pair{\alocation}{i}$.
    By way of example, assuming
    that $\delta(\alocation, \amap^{-1}(\adatum_i)) =
    \set{\triple{\alocation_1}{\aletter_1}{d_1},\triple{\alocation_2}{\aletter_2}{d_2}}$,
    then more precisely,
    $$
    \arun^{\star}(\anode) \egdef
    \triple{\pair{\alocation}{i}}{\arbitraryletter}{\pair{\pair{\acons(i, \amap^{-1}(\adatum_i),
          \aletter_1)}{\pair{\alocation_1}{m(i,d_1)}}}{
        \pair{\acons(i, \amap^{-1}(\adatum_i), \aletter_2)}{\pair{\alocation_2}{m(i,d_2}}}}.
    $$
    
\item For all $\anode \in (\interval{0}{1}^* \setminus \domain{\atree})$ such that there is no strict prefix $\anode'$
  of $\anode$ that is a leaf of $\domain{\atree}$, we have $\arun^{\star}(\anode) \egdef
  \triple{\alocation^{\star}}{\arbitraryletter}{\pair{\pair{\top}{\alocation^{\star}}}{\pair{\top}{\alocation^{\star}}}}$. 
  \item  For all $\anode \in (\interval{0}{1}^* \setminus \domain{\atree})$ such that there is a strict prefix $\anode'$
    of $\anode$ that is a leaf of $\domain{\atree}$ and $\atree(\anode') \approx \triple{\alocation}{i}{\adatum_1, \ldots, \adatum_{\beta}}$,
    we have $\arun^{\star}(\anode) \egdef
    \triple{\pair{\alocation}{i}}{\arbitraryletter}{\pair{\pair{\top}{\pair{\alocation}{i}}}{\pair{\top}{\pair{\alocation}{i}}}}$. 
  \end{itemize}
  One can easily show that $\arun^{\star}$ is an accepting run on  $\atree^{\star}$ by using the conditions (III) and (IV),
  as well as the definition of the set $F$ of accepting states in $\aautomaton_{\aatm, \aword}$.

  \noindent
  ($\Leftarrow$) Let $\atree: \interval{0}{1}^* \rightarrow \aalphabet' \times \Zed^{\beta}$ be an infinite tree
  and $\arun: \interval{0}{1}^* \rightarrow \delta'$ be an accepting run on $\atree$.
  By construction of $\aautomaton_{\aatm, \aword}$, along any infinite branch, once a location in $F$ is visited, it is visited
  forever along that branch, and moreover any infinite branch always visits such a location in $F$.
  Let $\aset$ be the finite subset of $\interval{0}{1}^*$ such that $\anode \in \aset$ iff
  either $\arun(\anode)$'s source locations is not in $F$, or the source location of $\arun(\anode)$ is in $F \setminus \set{\alocation^{\star}}$ and the source locations of all its ancestors do not belong to $F$.
  One can check that $\aset$ is a finite tree (use of K\"onig's Lemma here). 
  Let $\atree^{\star}: \aset \rightarrow \Configs(\aatm)$ be the map such that for all $\anode \in \aset$, we have
  $\atree^{\star}(\anode) = \aconfiguration$ for the unique configuration
  such that $\aconfiguration \approx \triple{\alocation}{i}{\adatum_1, \ldots, \adatum_{\beta}}$
  with the source location of $\arun(\anode)$ being $\pair{\alocation}{i}$ and $\atree(\anode) = \tuple{\adatum_1}{\adatum_{\beta}}$.
  One can easily show that $\atree^{\star}$ is an accepting run for $\aword$ by using the conditions (V) and (VI),
  as well as the definition of the set of accepting states in $\aatm$. 
\end{proof}
This concludes the \exptime-hardness proof for the nonemptiness problem for
tree constraint  automata. Observe that it is also easy to get \exptime-hardness
without using any constant. Indeed, $N$ distinct constants can be simulated by $N$ variables
whose values remain unchanged along the accepted tree. The initial constraints between these
variables at the root correspond to the constraints between the corresponding constants.
Furthermore, note that the ordering $<$ is not needed for the reduction (the constraints use
only the equality), the \exptime-hardness holds without constants and $<$. 
\end{proof} 

\section{Proofs for Section~\ref{section-complexity-ctlz}}
\subsection{Proof of Proposition~\ref{proposition-simple-form}}
\label{appendix-proof-proposition-simple-form}
\begin{proof} First, we establish that $\CTL(\Zed)$ has the tree model
  property using the standard unfolding technique for Kripke structures. Then, we show that the restriction
  to formulae in simple form is possible using the renaming technique (correctness is guaranteed by the
  tree model property).

  Let $\aks = \triple{\worlds}{\arelation}{\avaluation}$ be a  total Kripke structure and $w \in \worlds$.
  We write $\hat{\aks}_{w}$ to denote the (total) Kripke structure $\triple{\hat{\worlds}}{\hat{\arelation}}{\hat{\avaluation}}$ defined as follows
  (understood as the unfolding of $\aks$ from the world $\aworld$).
  \begin{itemize}
  \item $\hat{\worlds}$ is the set of finite paths in $\aks$ starting from the world $\aworld$.
  \item The relation $\hat{\arelation}$ contains all the pairs of the form $\pair{\apath}{\apath'}$
    such that $\apath$ is of the form $w_1 \cdots w_n$ and $\apath'$ is of the form $w_1 \cdots w_{n+1}$
    with $\pair{w_n}{w_{n+1}} \in \arelation$.
  \item For all paths $\apath = w_1 \cdots w_n$, for all variables
        $\avariable \in \VAR$, we have $\hat{\avaluation}(\apath, \avariable) \egdef \avaluation(w_n,\avariable)$.
  \end{itemize}
  Satisfaction of all the state formulae is preserved using classical arguments. More precisely,
  for all $\apath = w_1 \cdots w_n \in \hat{\worlds}$, for all state formulae $\aformula$ in $\CTL(\Zed)$, we have
  $\aks, w_n \models \aformula$ iff $\hat{\aks}_{w}, \apath \models \aformula$.
  The proof is by structural induction using the correspondence between paths in $\aks$ from some $\aworld'\in\worlds$ such that $\aworld'$ is reachable
  from $\aworld$, and paths in $\hat{\aks}_{\aworld}$. 
  As a conclusion, the logic $\CTL(\Zed)$ has the tree model property since the structures of the
  form $\hat{\aks}_{w}$ are trees.
  
   \begin{figure}
\begin{center}
  \begin{tabular}{ll}
  \scalebox{1}{
  \begin{tikzpicture}[->,>=stealth',shorten >=1pt,auto,node distance=4cm,thick,node/.style={circle,draw,scale=0.9}, roundnode/.style={circle, black, draw=black},]
\tikzset{every state/.style={minimum size=0pt},
dotted_block/.style={draw=black!20!white, line width=1pt, dash pattern=on 1pt off 2pt on 6pt off 2pt,
            inner sep=3mm, minimum width=5.4cm, rectangle, rounded corners}}, 
;
\node[roundnode] (eps) at (0,0) {}; 
\node[roundnode] (0) at (-.5,-.7) {}; 
\node[roundnode] (1) at (.5,-.7) {};
\node[roundnode] (11) at (1,-1.4) {};
\node[roundnode] (111) at (1.5,-2.1) {};

\node at (-.5, -1.1) {$\vdots$}; 
\node at (1.5, -2.5) {$\vdots$}; 

\path [-] (eps)  edge (0);
\path [-] (eps)  edge (1);
\path [-] (1)  edge (11);
\path [-] (11)  edge (111);

\node[gray!50] at (.8,0.4) {\scriptsize{$(\avariable_1,\avariable_2)$}};
\node at (.8,0) {\large{$(  2, 2 )$}};
\node at (1.4,-.7) {\large{$( 3, 4  )$}};
\node at (1.9,-1.4) {\large{$( 9,  7  )$}};
\node at (2.5,-2.1) {\large{$( 5,    8  )$}};

\node (a) at (1.2, -.7) {}; 
\node (b) at (2.7, -2.1) {}; 
\path [->, line width=.5mm, cyan, dotted] (a)  edge[bend left=90]  node [above] {$<$} (b); 

\node (c) at (0.8,0) {}; 
\node (d) at (1.2,0) {}; 
\path [-,orange]  (c) edge[loop below] node [right] {$=$}  (c); 

\node at (0,-3.4) {};

 	\end{tikzpicture} 
}
&  \hspace{1cm}\scalebox{0.8}{
   \begin{tikzpicture}[->,>=stealth',shorten >=1pt,auto,node distance=4cm,thick,node/.style={circle,draw,scale=0.9}, roundnode/.style={circle, black, draw=black},]
\tikzset{every state/.style={minimum size=0pt},
dotted_block/.style={draw=black!40, line width=1pt, dash pattern=on 1pt off 1pt,
            inner sep=3mm, minimum height=3cm, rectangle, rounded corners}}, 
;

\node[roundnode] (eps) at (0,0) {}; 
\node[roundnode] (0) at (-.5,-.7) {}; 
\node[roundnode] (1) at (.5,-.7) {};
\node[roundnode] (11) at (1,-1.4) {};
\node[roundnode] (111) at (1.5,-2.1) {};

\node at (-.5, -1.1) {$\vdots$}; 
\node at (1.5, -2.5) {$\vdots$}; 

\path [-] (eps)  edge (0);
\path [-] (eps)  edge (1);
\path [-] (1)  edge (11);
\path [-] (11)  edge (111);

\node[gray!50] at (2.8,0.4) {\scriptsize{$(\avariable^{-3}_1,\avariable^{-2}_1,\avariable^{-1}_1, \avariable^0_1, \avariable^{-3}_2,\avariable^{-2}_2,\avariable^{-1}_2, \avariable^0_2)$}};
\node at (2.8,0) {\large{$(\cdot \ , \ \cdot \  , \  \cdot \  ,  2 \ , \  \cdot \ , \  \cdot \  , \  \cdot \ , 2)$}}; 
\node at (3.4,-.7) {\large{$(\cdot \ , \  \cdot \ , 2 \  , 3 \ , \  \cdot \ , \  \cdot \ , 2 \  , \ 4 )$}};
\node at (3.94,-1.4) {\large{$(\cdot \ , 2 \ , 3 \ , \hspace{.4mm} 9 \  , \  \cdot \  , 2 \  , \  4 \ , 7)$}};
\node at (4.5,-2.1) {\large{$(2 \ , 3 \ ,  \hspace{.4mm} 9 \ ,  5 \ ,  2 \ , \  4 \  ,  7 \  , \  8)$}};

\node (a) at (2.8, -2.2) {}; 
\node (b) at (6.4, -2.2) {}; 
\path [->, line width=.5mm, cyan, dotted] (a)  edge[bend left=-50]  node [below] {$<$} (b); 

\node (c) at (2.2,-2.2) {}; 
\node (d) at (4.6,-2.2) {}; 
\path [-,orange]  (c) edge[bend left=-90]  node [below] {$=$}  (d); 
 	\end{tikzpicture} 
}
\end{tabular}
\end{center}
\caption{A tree over two variables $\avariable_1$ and $\avariable_2$ (left) and the corresponding tree over eight variables $\avariable^{-3}_1,\avariable^{-2}_1,\avariable^{-1}_1,\avariable^{0}_1,\avariable^{-3}_2,\avariable^{-2}_2,\avariable^{-1}_2,\avariable^{0}_2$ (right) for illustrating the translation of the formula $\existspath \mynext  \ \existspath ((\mynext \avariable_1 < \mynext \mynext \mynext \avariable_2) \wedge (\avariable_1 = \avariable_2))$ into its simple form. }
\label{figure-graph-symb-tree}
\end{figure}

  Now, we move to the construction of $\aformula'$ in simple form. Let us introduce below the natural notion of forward degree. 
  Given a term $\aterm = \mynext^i \avariable$ for some $i \in \Nat$,
  the \label{page-fd} \defstyle{forward degree} of $\aterm$, written $\fd{\aterm}$,
  is equal to $i$. Given a constraint $\acons$, we write $\fd{\acons}$ to denote the  \defstyle{forward degree} of $\acons$
  defined as the maximal forward degree of any term occurring in $\acons$. For instance,
  $\fd{(\mynext \avariable_1 < \mynext \mynext \mynext \avariable_2) \wedge   (\avariable_1 = \avariable_2)} = 3$.
  By extension, given a $\CTL(\Zed)$ formula $\aformula$, we write $\fd{\aformula}$ to denote
  the  \defstyle{forward degree} of $\aformula$ defined as the maximal forward degree of any constraint occurring in $\aformula$.
  For instance, $\fd{\existspath \mynext  \ \existspath ((\mynext \avariable_1 < \mynext \mynext \mynext \avariable_2) \wedge (\avariable_1 = \avariable_2))}
  = 3$ and $\fd{\existspath \ ((\mynext \mynext \avariable_1 < \mynext \mynext \mynext \mynext \avariable_2) \wedge (\mynext \avariable_1 = \mynext \avariable_2))}
  = 4$. 

  Let $\aformula$ be a $\CTL(\Zed)$ formula in negation normal form  such that $\fd{\aformula} = N \geq 2$ and the variables occurring in
  $\aformula$ are among $\avariable_1, \ldots, \avariable_{\beta}$.
  Below, we build a formula $\aformula'$  over 
  $\avariable_1^{-N}, \ldots, \avariable_1^{0}, \ldots, \avariable_{\beta}^{-N}, \ldots, \avariable_{\beta}^{0}$
  with $\fd{\aformula'} \leq 1$ such that $\aformula$ is satisfiable in a tree Kripke structure iff
  $\aformula'$ is satisfiable in a tree Kripke structure and $\aformula'$ can be computed in polynomial-time in the size of
  $\aformula$.
  The above-mentioned variables can be obviously renamed but the current naming is helpful to grasp the correctness of the whole enterprise.
  In short, the value for $\avariable_j^{-k}$ on a node  should be understood as the value for $\avariable_j$ exactly $k$ nodes behind along
  the branch leading to the node. Thanks to the tree structure, the variable $\avariable_j^{-k}$ takes a unique value.
  Note that we could not work with forward values (say with $\avariable_j^{+k}$ to keep
  the same syntactic rule for naming)
  because the structures are branching and therefore $k$ steps ahead
  does not lead necessarily to a unique world.
  
  Given  $\existspath \ \acons$ or $\forallpaths \ \acons$
  occurring in $\aformula$ with $\fd{\acons} = M$ (so $M \leq N$), we write $\mathtt{jump}(\acons,M)$ to denote
  the constraints built over $\avariable_1^{-N}, \ldots, \avariable_1^{0}, \ldots, \avariable
  _{\beta}^{-N}, \ldots, \avariable_{\beta}^{0}$
  such that
  \begin{itemize}
  \item $\mathtt{jump}(\cdot,M)$ is homomorphic for Boolean connectives,
  \item $\mathtt{jump}(\aterm_1 < \aterm_2, M) \egdef \mathtt{jump}(\aterm_1, M) < \mathtt{jump}(\aterm_2, M)$,
  \item $\mathtt{jump}(\aterm_1 = \aterm_2, M) \egdef \mathtt{jump}(\aterm_1, M) = \mathtt{jump}(\aterm_2, M)$,
  \item $\mathtt{jump}(\aterm = \adatum, M) \egdef \mathtt{jump}(\aterm, M) = \adatum$ with 
    $\mathtt{jump}(\mynext^i \avariable_j, M) \egdef \avariable_j^{i-M}$.
  The value $i-M$ is greater than $-N$ because $\fd{\acons} \leq N$. 
  \end{itemize}
  For instance, $\mathtt{jump}((\mynext \avariable_1 < \mynext \mynext \mynext \avariable_2) \wedge (\avariable_1 = \avariable_2),3) =
  (\avariable_1^{-2} < \avariable_2^{0}) \wedge (\avariable_1^{-3} = \avariable_2^{-3})$. 
  In short, $\mathtt{jump}(\acons,M)$ corresponds to the constraint equivalent to $\acons$ if we evaluate it exactly $M$ steps ahead.
  To do so, we therefore need to evoke in $\mathtt{jump}(\acons,M)$ variables capturing previous local values along the branch.
  Let $\atranslation$ be the translation map that is homomorphic for Boolean and temporal connectives such that
  \begin{itemize}
  \item $\atranslation(\existspath \ \acons) \egdef (\existspath \mynext)^M  \ \mathtt{jump}(\acons,M)$, 
  \item $\atranslation(\forallpaths \ \acons) \egdef (\forallpaths \mynext)^M \ \mathtt{jump}(\acons,M)$, 
  \end{itemize}
   where  $\fd{\acons} = M$.
   Let $\aformula'$ be defined as follows:
  $$
   \atranslation(\aformula) \wedge \forallpaths \always \ \forallpaths \big(\bigwedge_{j \in \interval{1}{\beta},
     \ k \in \interval{0}{N-1}} \avariable_j^{-k} = \mynext \avariable_j^{-k-1}\big).
  $$
  The second conjunct in the definition of $\aformula'$  corresponds to the renaming part.
  Recall that $\forallpaths \always$ is the \CTL temporal connective that enforces a property for all
  the reachable nodes from the root node (i.e. to all the worlds of the tree model). 
  Let us provide hints to understand why $\aformula$ is satisfiable in a tree Kripke structure iff
  $\aformula'$ is satisfiable in a tree Kripke structure.

  First, suppose that $\aks, \aworld \models \aformula$, where $\aks = \triple{\worlds}{\arelation}{\avaluation}$ is a tree Kripke
  structure with root $\aworld$.
  Let $\aks' = \triple{\worlds}{\arelation}{\avaluation'}$ be the Kripke structure that differs from
  $\aks$ only in the definition of the valuation.
  Given a node $\aworld' \in \worlds$ reachable from $\aworld$ via the branch $\aworld_0 \cdots \aworld_n$ with $\aworld_0 = \aworld$ and $\aworld_n = \aworld'$ for some $n \geq 0$,
  for all $k \in \interval{0}{N}$ and $j \in \interval{1}{\beta}$, we require
  $\avaluation'(w', \avariable_j^{-k}) \egdef \avaluation(w_{n-k}, \avariable_j)$ if $n-k \geq 0$,
  otherwise $\avaluation'(\aworld', \avariable_j^{-k}) \egdef 0$ (arbitrary value).
  Figure~\ref{figure-graph-symb-tree} illustrates how $\avaluation'$ is defined. 
  In order to verify that $\aks', \aworld \models \aformula'$, it boils down to check the three properties below (for which we omit the proofs at this stage).
  \begin{itemize}
  \item $\aks', \aworld \models \forallpaths \always \ \forallpaths \big(\bigwedge_{j \in \interval{1}{\beta}, \
    k \in \interval{0}{N-1}} \avariable_j^{-k} = \mynext \avariable_j^{-k-1}\big)$.
    This holds thanks to the definition of $\avaluation'$ and the tree structure of $\aks$.
    
  \item If $\existspath \ \acons$ occurs in $\aformula$ and $\aks, \aworld' \models \existspath \ \acons$, then
    $\aks', \aworld' \models (\existspath \mynext)^M \ \mathtt{jump}(\acons,M)$, where  $\fd{\acons} = M$.

  \item If $\forallpaths \ \acons$ occurs in $\aformula$ and $\aks, \aworld' \models \forallpaths \ \acons$, then
    $\aks', \aworld' \models (\forallpaths \mynext)^M \ \mathtt{jump}(\acons,M)$, where  $\fd{\acons} = M$.
    
  \end{itemize}

  For the other direction, suppose that $\aks, \aworld \models \aformula'$ and $\aks = \triple{\worlds}{\arelation}{\avaluation}$ is a tree Kripke
  structure with root $\aworld$ (recall the tree model property holds).
   Let $\aks' = \triple{\worlds}{\arelation}{\avaluation'}$ be the Kripke structure that differs from
   $\aks$ only in the definition of the valuation. More precisely, for all $\aworld' \in \worlds$
   and all $j \in \interval{1}{\beta}$, we have
   $\avaluation'(\aworld',\avariable_j) \egdef \avaluation(\aworld',\avariable_j^0)$.
   By structural induction, one can show that $\aks, \aworld \models \atranslation(\aformula)$
   implies $\aks', \aworld \models \aformula$.
\end{proof}

\subsection{Proof of Proposition~\ref{proposition-tree-model-for-Z}}
\label{appendix-proof-proposition-tree-model-for-Z}
\begin{proof}
The direction from right to left (``if'') is trivial. So let us prove the direction from left to right (``only if''): suppose that
$\aformula$ is a satisfiable $\CTL(\Zed)$ formula in simple form and built over the terms
$\avariable_1,\dots,\avariable_\beta, \mynext \avariable_1,\dots,\mynext \avariable_\beta$.
We recall that the terms $\mynext \avariable_1,\dots,\mynext \avariable_\beta$ are also sometimes denoted
by the primed variables $\avariable'_1,\dots,\avariable'_\beta$. 
Let $\iota: (\parsubf{\existspath \mynext}{\aformula} \cup \parsubf{\existspath}{\aformula})
\rightarrow \interval{1}{\degree}$ be a direction map for $\aformula$, 
where $\degree = \card{\parsubf{\existspath \mynext}{\aformula}} + \card{\parsubf{\existspath}{\aformula}}$.

Let $\adks=\triple{\worlds}{\arelation}{\avaluation}$ be a  total Kripke structure,
$\aworld_\init\in\worlds$ be a world in $\adks$ such that $\adks,\aworld_\init\models\aformula$.
Since $\aformula$ contains only the variables in $\avariable_1,\dots,\avariable_\beta$, the map
$\avaluation$ can be restricted to the variables among $\avariable_1,\dots,\avariable_\beta$.
Furthermore,  below, we can represent $\avaluation$ as a map $\worlds \to \Zed^{\beta}$ such that
$\avaluation(\aworld)(i)$ for some $i \in \interval{1}{\beta}$ is understood as
the value of the variable $\avariable_i$ on $\aworld$. 
Below, we construct a tree $\atree: \interval{0}{\degree}^* \to \Zed^\beta$ such that
$\atree$ obeys $\iota$ and $\atree, \varepsilon \models \aformula$.

We introduce an auxiliary map $g:  \interval{0}{\degree}^* \rightarrow \worlds$ such that
$g(\varepsilon) \egdef \aworld_{\init}$, $\atree(\varepsilon) \egdef \avaluation(g(\varepsilon))$ and
more generally, we require that for all $\anode \in \interval{0}{\degree}^*$, we have
$\atree(\anode) \egdef \avaluation(g(\anode))$. The definition of $g$ is performed by picking
the smallest
element $\anode \cdot j \in \interval{0}{\degree}^*$ with respect to the lexicographical ordering such that
$g(\anode)$ is defined and $g(\anode \cdot j)$ is undefined.
Let $\anode \cdot j$ be the smallest node such that $g(\anode)$ is defined and $g(\anode \cdot j)$ is undefined.
If $j=0$, then, since $\adks$ is total, there is an infinite path $\apath = \aworld_0 \aworld_1 \aworld_2 \cdots$ starting from  $g(\anode)$. For all $i \geq 1$, we set $g(\anode \cdot 0^i) \egdef \aworld_i$ and $\atree(\anode \cdot 0^i) \egdef \avaluation(\aworld_i)$. 
So let $j>0$ and $\aformula'$ be the unique subformula of $\aformula$ such that $\iota(\aformula')=j$. 
If $\adks, g(\anode) \not \models \aformula'$, then,  since $\adks$ is total, there is an infinite path $\apath = \aworld_0 \aworld_1 \aworld_2 \cdots$
  starting from  $g(\anode)$. We define $g(\anode \cdot j) \egdef \aworld_1$, $\atree(\anode \cdot j) \egdef \avaluation(\aworld_1)$ and
  for all $i \geq 1$, we set $g(\anode \cdot j \cdot 0^i) \egdef \aworld_{i+1}$
  and $\atree(\anode \cdot j \cdot 0^i) \egdef \avaluation(\aworld_{i+1})$. 
  If $\adks, g(\anode) \models \aformula'$, then we distinguish the following cases.   
\begin{description}
 \item[Case $\aformula'=\existspath \ \acons$] By $\adks, g(\anode) \models \existspath \ \acons$, 
there is an infinite path $\apath = \aworld_0 \aworld_1 \aworld_2 \cdots$
starting from  $g(\anode)$ and such that $\Zed \models \acons(\avaluation(\aworld_0),\avaluation(\aworld_1))$.
Recall that all the terms in $\acons$ are in $\myterms{\leq 1}{\VAR}$. 
As above, we define $g(\anode \cdot j) \egdef \aworld_1$, $\atree(\anode \cdot j) \egdef \avaluation(\aworld_1)$ and
  for all $i \geq 1$, we set $g(\anode \cdot j \cdot 0^i) \egdef \aworld_{i+1}$
  and $\atree(\anode \cdot j \cdot 0^i) \egdef \avaluation(\aworld_{i+1})$.
\item[Case $\aformula'=\existspath \mynext \existspath (\aformula_1 \until \aformula_2)$]

  By $\adks, g(\anode) \models \existspath \mynext \existspath (\aformula_1 \until \aformula_2)$,
   there is an infinite path $\apath = \aworld_0 \aworld_1 \aworld_2 \cdots$ starting from  $g(\anode)$
  and $k \geq 1$ such that $\adks, \aworld_k \models \aformula_2$ and for all $i \in \interval{1}{k-1}$
  we   have
  $\adks, \aworld_i \models \aformula_1$.
  For all $i \in \interval{1}{k}$, we set $g(\anode \cdot j^i) \egdef \aworld_i$ and $\atree(\anode \cdot j^i) \egdef \avaluation(\aworld_i)$,
  and
  $g(\anode \cdot j^k \cdot 0^i) \egdef \aworld_{k+i}$ and 
  $\atree(\anode \cdot j^k \cdot 0^i) \egdef \avaluation(\aworld_{k+i})$    
  for all $i \geq 1$. 
  \item[Case $\aformula'=\existspath \mynext \existspath (\aformula_1 \release \aformula_2)$]
 
  If 
  there is an infinite path $\apath = \aworld_0 \aworld_1 \aworld_2 \cdots$ starting from  $g(\anode)$
  such that 
  $\adks, \aworld_i \models \aformula_2$ for all $i\geq 1$,
  then for all $i > 0 $, $g(\anode \cdot j^i) \egdef \aworld_i$ and $\atree(\anode \cdot j^i) \egdef \avaluation(\aworld_i)$.
  Otherwise,  
  by $\adks, g(\anode) \models \existspath \mynext \existspath (\aformula_1 \release \aformula_2)$  there must exist an infinite path $\apath = \aworld_0 \aworld_1 \aworld_2 \cdots$ starting from  $g(\anode)$
  and $k \geq 1$ such that $\adks, \aworld_k \models \aformula_1 \wedge \aformula_2 $ and for all $i \in \interval{0}{k-1}$, we
  have 
  $\adks, \aworld_i \models  \aformula_2$, 
  then for all $i \in \interval{1}{k}$, we set $g(\anode \cdot j^i) \egdef \aworld_i$, $\atree(\anode \cdot j^i) \egdef \avaluation(\aworld_i)$,
  and
  $g(\anode \cdot j^k \cdot 0^i) \egdef \aworld_{k+i}$, 
  $\atree(\anode \cdot j^k \cdot 0^i) \egdef \avaluation(\aworld_{k+i})$
  for all $i \geq 1$. 

\item[Case $\aformula'=\existspath \mynext \aformulabis$ and $\aformulabis$ is neither an $\existspath \until$-formula
  nor an $\existspath \release$-formula] \ \\
  By $\adks, g(\anode) \models \existspath \mynext \aformulabis$, 
  there is an infinite path $\apath = \aworld_0 \aworld_1 \aworld_2 \cdots$ starting from  $g(\anode)$
  such that $\adks, \aworld_1 \models \aformulabis$.
  We set $g(\anode \cdot j) \egdef \aworld_1$, $\atree(\anode \cdot j) \egdef \avaluation(\aworld_1)$ and
  for all $i \geq 1$,
  $g(\anode \cdot j \cdot 0^i) \egdef \aworld_{i+1}$ and $\atree(\anode \cdot j \cdot 0^i) \egdef \avaluation(\aworld_{i+1})$.

\end{description}

It remains to show that for all $\anode \in \interval{0}{\degree}^*$ and for all $\aformula' \in \subf{\aformula}$,
if $\adks, g(\anode) \models \aformula'$, then $\atree, \anode \models \aformula'$, and $\atree$ obeys $\iota$, that is,
\begin{itemize}
\item if $\aformula'=\existspath\mynext\phi_1$ and $\atree,\anode\models\aformula'$, then $\atree,\anode\cdot j\models\phi_1$ with $j=\iota(\aformula')$, 
\item if $\aformula'=\existspath \aformula_1\until\aformula_2$ and $\atree,\anode\models\aformula'$, then there exists some $k\geq 0$ such that 
$\atree, \anode\cdot j^i \models\aformula_1$ for all $0\leq i<k$, and $\atree,\anode\cdot j^k\models\aformula_2$  with $j=\iota(\aformula')$, 
\item if $\aformula'= \existspath \ \acons$ and $\atree,\anode\models\aformula'$, 
then $\Zed \models \acons(\atree(\anode),\atree(\anode\cdot j))$  with $j=\iota(\aformula')$. 
\end{itemize}
The proof is by induction on the subformula relation. 
\begin{itemize}
\item Suppose $\adks, g(\anode)\models \existspath \ \acons$, and let $j=\iota(\existspath \ \acons)$ for some $1\leq j\leq \degree$. 
By definition of $\atree$, 
$g(\anode\cdot j)=\aworld_1$ and $g(\anode\cdot j\cdot 0^i)=\aworld_{i+1}$, 
where $\aworld_0, \aworld_1, \dots$ is an infinite path in $\adks$ starting from $g(\anode)$ satisfying
$\Zed\models \acons(\avaluation(\aworld_0), \avaluation(\aworld_1))$. We also have $\atree(\anode)=\avaluation(\aworld_0)$ and
$\atree(\anode \cdot j)=\avaluation(\aworld_1)$, hence the result. 
\item Suppose  $\adks,g(\anode)\models \aformula_1\wedge \aformula_2$. Hence  $\adks,g(\anode)\models \aformula_1$ and $\adks,g(\anode)\models\aformula_2$, so that by the induction hypothesis $\atree,\anode\models\aformula_1$ and $\atree, \anode\models\aformula_2$, and thus
  $\atree,\anode\models\aformula_1\wedge\aformula_2$.
  The case with $\aformula_1 \vee \aformula_2$ is similar. 
\item Suppose $\adks,g(\anode)\models\existspath\mynext\aformula'$.
Suppose $j=\iota(\existspath\mynext\aformula')$ for some $1\leq j\leq \degree$. 
We distinguish three cases (coming from the definition of $\atree$). 
\begin{enumerate}
\item $\aformula'$ is of the form $\existspath\aformula_1\until\aformula_2$: 
By definition of $\atree$, we have
$g(\anode \cdot j^i)=\aworld_i$ for all $0\leq i\leq k$, 
where $\aworld_0 \aworld_1 \aworld_2 \dots$ is a path starting from $g(\anode)$
satisfying $\adks,\aworld_k \models \aformula_2$ and $\adks, \aworld_i\models\aformula_1$ for all $1\leq i<k$. 
By the induction hypothesis, we have 
$\atree, \anode \cdot j^k\models\aformula_2$ and $\atree,\anode\cdot j^i\models\aformula_1$ for all $1\leq i<k$. 
Hence $\atree, \anode\models\existspath\mynext\aformula'$. 
\item $\aformula'$ is of the form $\existspath\aformula_1\release\aformula_2$: 
By definition of $\atree$, there are two cases. 
\begin{itemize}
\item Either $g(\anode \cdot j^i)=\aworld_i$ for all $i\geq 0$, where $\aworld_0 \aworld_1 \aworld_2 \dots$ is a path starting from
  $\anode$ such that
  $\adks,\aworld_i\models \aformula_2$
  for all $i\geq 1$. 
  By the induction hypothesis, we have
  $\atree,\anode \cdot j^i\models \aformula_2$ for all $i\geq 1$, and hence
$\atree,\anode\models\existspath\mynext\aformula'$. 
\item Or $g(\anode \cdot j^i)=\aworld_i$ for all $0\leq i\leq k$, where $\aworld_0 \aworld_1 \aworld_2 \dots$ is a path starting from
  $\anode$ such that $\adks,\aworld_k\models\aformula_1\wedge\aformula_2$ and
  $\adks,\aworld_i\models \aformula_2$
  for all $1\leq i <k$. 
  By the induction hypothesis, we have $\atree,\anode\cdot j^k\models \aformula_1\wedge \aformula_2$ and
  $\atree,\anode \cdot j^i\models \aformula_2$
  for all $1\leq i<k$. Hence
$\atree,\anode\models\existspath\mynext\aformula'$. 
\end{itemize}
\item $\aformula'$ is neither an $\existspath\until$-formula nor an $\existspath\release$-formula: 
Then $g(\anode \cdot j)=\aworld_1$ and $g(\anode\cdot j \cdot 0^i)=\aworld_{i+1}$, 
where $\aworld_0 \aworld_1 \aworld_2\dots $ is a path from $g(\anode)$ such that $\adks,\aworld_1\models\aformula'$. 
By the induction hypothesis, we have $\atree,\anode \cdot j\models\aformula'$, and hence $\atree,\anode\models\existspath\mynext\aformula'$.
\end{enumerate}

\item Suppose $\adks, g(\anode)\models\existspath\aformula_1\until\aformula_2$. 
We distinguish two cases. 
\begin{itemize}
\item Suppose $\adks,g(\anode)\models\aformula_2$. By the induction hypothesis,
  $\atree,\anode\models\aformula_2$ and hence $\atree,\anode\models\existspath\aformula_1\until\aformula_2$. 
\item Suppose $\adks,g(\anode)\not\models\aformula_2$. Then  $\adks,g(\anode)\models\aformula_1$ and  $\adks,g(\anode)\models\existspath\mynext
  \existspath\aformula_1\until\aformula_2$. 
By the induction hypothesis, we also have $\atree,\anode\models\aformula_1$.
Suppose $\iota(\existspath\mynext \existspath\aformula_1\until\aformula_2)=j$ for some $1\leq j\leq \degree$. 
By definition of $\atree$, there exists some (minimal) $k\geq 1$ such that  
$g(\anode \cdot j^i)=\aworld_i$ for all $0\leq i\leq k$, 
where $\aworld_0 \aworld_1  \aworld_2 \dots $ is a path starting from $g(\anode)$ satisfying $\adks, \aworld_k\models\aformula_2$ and
$\adks,\aworld_i\models\aformula_1$ for all $1\leq i<k$. 
By the induction hypothesis, 
we have
$\atree, \anode\cdot j^k\models\aformula_2$ and $\atree,\anode\cdot j^i\models\aformula_1$ for all $1\leq i <k$. 
Recall that we also have $\atree,\anode\models\aformula_1$, so that indeed 
$\atree,\anode\models\existspath\aformula_1\until\aformula_2$. 
\end{itemize}
\item Suppose $\adks,g(\anode)\models\existspath\aformula_1\release\aformula_2$. 
We distinguish two cases.
\begin{itemize}
\item Suppose $\adks,g(\anode)\models\aformula_2$ and $\adks,g(\anode)\models\aformula_1$. 
By the induction hypothesis, $\atree,\anode\models\aformula_2$ and $\atree,\anode\models\aformula_1$, so that 
$\atree,\anode\models\existspath\aformula_1\release\aformula_2$. 
\item Suppose $\adks,g(\anode)\models\neg\aformula_1\wedge \aformula_2$ and $\adks,g(\anode)\models\existspath\mynext \existspath\aformula_1\release\aformula_2$. 
  By the induction hypothesis,
  $\atree,\anode\models \aformula_2$. 
Suppose $\iota(\existspath\mynext \existspath\aformula_1\release\aformula_2)=j$ for some $1\leq j\leq \degree$. 
We distinguish two more cases. 
\begin{itemize}
\item There is an infinite path $\aworld_0 \aworld_1 \aworld_2\dots$ from $g(\anode)$ such that 
  $\adks,\aworld_i\models \aformula_2$ for all $i\geq 1$. 
We then have $g(\anode \cdot j^i)=\aworld_i$ for all $i\geq 1$. 
By the induction hypothesis, we have 
 $\atree, \anode\cdot j^i\models \aformula_2$ for all $i\geq 1$. 
Together with 
$\atree,\anode\models \aformula_2$,
we obtain $\atree,\anode\models\existspath\aformula_1\release\aformula_2$. 
\item Otherwise, we have 
$g(\anode\cdot j^k)=\aworld_k$ and $g(\anode\cdot j^i)=\aworld_i$, where 
  $\aworld_0 \aworld_1 \aworld_2\dots$ is a path from $g(\anode)$ such that $\adks,\aworld_k\models\aformula_1\wedge\aformula_2$ and 
  $\adks,\aworld_i\models \aformula_2$ for all $1\leq i<k$. 
By the induction hypothesis, 
we have 
$\atree,\anode\cdot j^k\models\aformula_1\wedge\aformula_2$, and 
 $\atree, \anode\cdot j^i\models \aformula_2$ for all $1\leq i<k$. 
Together with 
$\atree,\anode\models\aformula_2$,
we obtain $\atree,\anode\models\existspath\aformula_1\release\aformula_2$. 
\end{itemize}
\end{itemize}

\item Suppose $\adks,g(\anode)\models \forallpaths \ \acons$. By construction of $\atree$, 
for all infinite paths $\anode \cdot j_1 \cdot j_2 \cdot j_3 \cdots \in \interval{0}{\degree}^{\omega}$,
$g(\anode) \cdot g(\anode \cdot j_1) \cdot g(\anode \cdot j_1 j_2) \cdots$ is an infinite path
from $g(\anode)$. Consequently, for all $\anode \cdot j_1 \cdot j_2 \cdot j_3 \cdots \in \interval{0}{\degree}^{\omega}$,
we have $\Zed \models \acons(\avaluation(g(\anode)),\avaluation(g(\anode \cdot j_1)))$ and therefore
$\Zed \models \acons(\atree(\anode), \atree(\anode \cdot j_1))$ because by definition, we have
$\atree(\anode) = \avaluation(g(\anode))$ and $\atree(\anode \cdot j) = \avaluation(g(\anode \cdot j))$.
As a consequence, $\atree, \anode \models \forallpaths \ \acons$.

\item Suppose $\adks,g(\anode)\models\forallpaths \ \apathformula$, where $\apathformula$ is a path formula
  such that $\apathformula$ is not a constraint.
By construction of $\atree$,
for all infinite paths $\anode \cdot j_1 \cdot j_2 \cdot j_3 \cdots \in \interval{0}{\degree}^{\omega}$,
$g(\anode) \cdot g(\anode \cdot j_1) \cdot g(\anode \cdot j_1 j_2) \cdots$ is an infinite path
from $g(\anode)$. By way of example, assume that $\apathformula = \aformula_1 \until \aformula_2$
(the cases $\apathformula = \aformula_1 \release \aformula_2$ and $\apathformula = \mynext \aformula_1$
are handled in the very same way). 
For all  infinite paths $\anode \cdot j_1 \cdot j_2 \cdot j_3 \cdots \in \interval{0}{\degree}^{\omega}$
with $\apath = g(\anode) \cdot g(\anode \cdot j_1) \cdot g(\anode \cdot j_1 j_2) \cdots$, we have
$\adks, \apath \models  \aformula_1 \until \aformula_2$ and therefore there is $k \in \Nat$
such that $\adks, g(\anode \cdot j_1 \cdots j_k) \models \aformula_2$ and
for all $0 \leq k' < k$, we have $\adks, g(\anode \cdot j_1 \cdots j_{k'}) \models \aformula_1$.
By the induction hypothesis, we have
$\atree, \anode \cdot j_1 \cdots j_k \models \aformula_2$ and
$\atree, \anode \cdot j_1 \cdots j_{k'} \models \aformula_1$ for all $0 \leq k' < k$.
Consequently, $\atree, \anode \cdot j_1 \cdot j_2 \cdot j_3 \cdots \models \aformula_1 \until \aformula_2$
and therefore $\atree, \anode \models \forallpaths \  \aformula_1 \until \aformula_2$ because
the above path from $\anode$ was arbitrary in $\atree$. \qedhere
\end{itemize}
\end{proof}

\subsection{Proof of Lemma~\ref{lemma-correctness-ctlz-aut-to-form}}
\label{appendix-proof-lemma-correctness-ctlz-aut-to-form}
\begin{proof}
``if:'' 
Suppose $\alang(\aautomaton_{\aformula})\neq \emptyset$. 
Then there exists a tree $\atree:\interval{0}{\degree}^*\to \aalphabet\times\Zed^\beta$ in $\alang(\aautomaton_{\aformula})$. 
Let
$\arun:\interval{0}{\degree}^*\to \delta$
be an accepting run of $\aautomaton_{\aformula}$ on $\atree$.  
We prove for all $\aformula'\in\subf{\aformula}$, for all nodes $\anode$ in $\atree$ with
$\atree(\anode)=(\arbitraryletter,\vect{z})$ and  
$\arun(\anode)=(\pair{d_\anode}{\aset_\anode},\arbitraryletter,\pair{\acons_0}{\pair{0}{\aset_{\anode\cdot 0}}}, \ldots,
\pair{\acons_{\degree}}{\pair{\degree}{\aset_{\anode\cdot \degree}}})$,
if $\aformula'\in \aset_{\anode}$, then $\atree, \anode \models \aformula'$
(this is not an equivalence, recall that $\aformula$ is in negation normal form).
Here, $\atree$ is also understood as a tree Kripke model (in which we may ignore the finite alphabet $\aalphabet$). 
Note that this indeed implies $\atree, \varepsilon \models\aformula$, as $\arun(\varepsilon)$ starts with  some initial location
of the form $(0,\aset_{\varepsilon})$ such that $\aformula \in \aset_{\varepsilon}$. Hence $\aformula$ is indeed satisfiable. 

So let $\anode$ be a node in $\atree$ with 
$$\arun(\anode) = (\pair{d_{\anode}}{\aset_{\anode}}, \arbitraryletter, \pair{\acons_0}{\pair{0}{\aset_{\anode \cdot 0}}}, \dots,
\pair{\acons_\degree}{\pair{\degree}{\aset_{\anode \cdot \degree}}})$$
and such that
\begin{enumerate}
  \item for all $\existspath \mynext \aformulabis \in \aset_\anode$, we have $\aformulabis \in \aset_{\anode\cdot j}$ with $j=\iota(\existspath \mynext \aformulabis)$;
  \item for all $\forallpaths \mynext \aformulabis \in \aset_\anode$ and $j \in \interval{0}{\degree}$,
        we have $\aformulabis \in \aset_{\anode \cdot j}$;
  \item
  for all $j \in \interval{0}{\degree}$,
  if there is $\existspath \ \acons \in \aset_{\anode}$ such that $\iota(\existspath \ \acons) =j$, then
    $$
    \acons_j \egdef (\bigwedge_{\forallpaths \acons' \in \aset_{\anode}} \acons') \wedge \acons
    \ \ \ 
    \mbox{otherwise,}
    \ \ \ 
    \acons_j  \egdef \bigwedge_{\forallpaths \acons' \in \aset_\anode} \acons'.
    $$
    \end{enumerate}
The proof of the claim is by the induction on the formula with respect to the subformula ordering. So suppose $\phi'\in\aset_{\anode}$. 
\begin{itemize}
\item $\aformula' = \existspath \ \acons$, where $\acons$ is a Boolean combination
  of terms among $\avariable_1, \dots, \avariable_\beta,\avariable'_1,\dots,\avariable'_\beta$. 
Suppose $\iota(\aformula')=j$ for some $1\leq j\leq \degree$. 
From the condition (3) above, we obtain that $\acons_j \egdef (\bigwedge_{\forallpaths \acons' \in \aset_{\anode}} \acons') \wedge \acons$. 
Since $\arun$ is a run, we have $\Zed \models \acons_j(\vect{z},\vect{z}_j)$, where $\atree(\anode)=\pair{\arbitraryletter}{\vect{z}}$ and 
$\atree(\anode\cdot j) =\pair{\arbitraryletter}{\vect{z}_j}$, 
and hence also $\Zed \models \acons(\vect{z},\vect{z}_j)$. 
This yields $\atree,\anode \models \existspath \ \acons$. 
\item $\aformula' = \forallpaths \ \acons$, where $\acons$ is a Boolean combination  of atomic constraints built over
  $\avariable_1, \dots, \avariable_\beta$ and $\avariable'_1,\dots,\avariable'_\beta$. 
  From the condition (3) above, we know that $\acons$ appears in $\acons_j$ as a conjunct.  Since $\arun$ is a run, we have
  $\Zed \models \acons_j(\vect{z},\vect{z}_j)$ for all $0\leq j\leq \degree$, where $\atree(\anode)=\pair{\arbitraryletter}{\vect{z}}$ and 
$\atree(\anode\cdot j) =\pair{\arbitraryletter}{\vect{z}_j}$, 
and hence also $\Zed \models \acons(\vect{z},\vect{z}_j)$ for all $0\leq j\leq \degree$. 
This yields $\atree,\anode \models \forallpaths \ \acons$. 
\item $\aformula'=\aformula_1\wedge \aformula_2$. Since $\aset_\anode$ is propositionally consistent, we have
  $\aformula_1,\aformula_2\in \aset_\anode$. By the induction hypothesis, $\atree,\anode \models \aformula_1$
  and $\atree,\anode\models \aformula_2$. Hence $\atree,\anode\models\aformula'$. 
\item Similarly for $\aformula'=\aformula_1\vee\aformula_2$. 
\item $\aformula' = \existspath\mynext \aformula_1$. 
From condition (1) above we know that $\aformula_1\in \aset_{\anode\cdot i}$, where $i=\iota(\aformula')$. 
By the induction hypothesis, $\atree,\anode \cdot i\models \aformula_1$. Hence $\atree,\anode\models \aformula'$. 
\item $\aformula' = \forallpaths \mynext \aformula_1$. 
From condition (2) above, we know that $\aformula_1\in \aset_{\anode\cdot i}$ for all $0\leq i\leq \degree$. 
By the induction hypothesis, $\atree,\anode \cdot i \models \aformula_1$ for all $0\leq i\leq \degree$. Hence $\atree,
\anode\models\aformula'$. 
\item $\aformula' = \existspath \aformula_1 \until \aformula_2$. 
Since $\aset_\anode$ is propositionally consistent, 
we have $\aformula_2\in \aset_\anode$, or $\aformula_1, \existspath \mynext \aformula' \in \aset_\anode$. 
In the first case, we have $\atree,\anode \models \aformula_2$ by the induction hypothesis, and hence $\atree, \anode \models \aformula'$. 
In the second case, we have $\atree,\anode\models \aformula_1$ by the induction hypothesis, 
and we have $\aformula' \in \aset_{\anode\cdot j}$, where $j=\iota(\existspath\mynext\aformula')$. 
Consider the infinite path $\anode\cdot j\cdot j \cdot j\dots$ starting from $\anode$.   
Since $\arun$ is an accepting run, 
states from $F_{\aformula'}$ must occur infinitely often along this path; that is, 
there must exist some  $k\geq 0$  
such that $\aformula_2\in \aset_{\anode\cdot j^k}$ 
or $\existspath \aformula_1 \until \aformula_2\not\in \aset_{\anode\cdot j^k}$.
Recall that
 \[
    F_{\existspath \aformulabis_1 \until \aformulabis_2 } \egdef
    \set{\pair{i}{\aset} \in \locations \mid
      i \neq \iota(\existspath \mynext \existspath \aformulabis_1 \until \aformulabis_2 ) \
      \mbox{or} \
      \aformulabis_2 \in \aset \
      \mbox{or} \
      \existspath \aformulabis_1 \until \aformulabis_2 \not \in \aset
    }.
    \]
    Indeed, the first component $i$ in the location $\pair{i}{\aset}$ in $\locations$ records the direction
    $i$ from which the node is reached. 
    Let $k\geq 0$ be minimal. 
We first show by induction that if $\aformula'\in \aset_{\anode\cdot j^{i}}$, then  $\atree, \anode\cdot j^i \models\aformula_1$ and
$\aformula'\in \aset_{\anode\cdot j^{i+1}}$ for all $0\leq i<k$. 
For $i=0$, we have shown this above. 
So let $0< i < k$ and suppose $\aformula'\in \aset_{\anode\cdot j^i}$. 
Since $\arun$ is a run, $\aset_{\anode\cdot j^i}$ is propositionally consistent and hence
$\aformula_2\in \aset_{\anode \cdot j^i}$  or $\aformula_1, \existspath\mynext\aformula'\in \aset_{\anode \cdot j^i}$. 
Note that $\aformula_2\in \aset_{\anode \cdot j^i}$ contradicts that $k$ is minimal; hence
$\aformula_1, \existspath\mynext\aformula'\in \aset_{\anode\cdot j^i}$. 
By the induction hypothesis $\atree,\anode\cdot j^i\models\aformula_1$, and $\aformula'\in \aset_{\anode\cdot j^{i+1}}$. We conclude that 
$\atree,\anode\cdot j^i\models\aformula_1$ for all $0\leq i<k$. 

Next we prove that $\atree,\anode \cdot j^k \models\aformula_2$. 
Recall that  $\aformula_2\in \aset_{\anode\cdot j^k}$ or $\aformula'\not\in \aset_{\anode\cdot j^k}$. 
Note that $\aformula' \in \aset_{\anode\cdot j^k}$ by what we just proved before. 
Hence $\aformula_2\in \aset_{\anode\cdot j^k}$. 
By the induction hypothesis, we obtain $\atree,\anode\cdot j^k\models\aformula_2$. 
Hence, we have  proved the existence of a path starting in $\anode$ and satisfying
$\aformula_1 \until \aformula_2$, which leads to  
$\atree,\anode \models \aformula'$. 
\item $\aformula' =\forallpaths \aformula_1 \until \aformula_2$. 
Since $\aset_\anode$ is propositionally consistent, we have 
 $\aformula_2\in \aset_\anode$, 
or $\aformula_1, \forallpaths \mynext \aformula'\in \aset_\anode$. 
In the first case, we have $\atree,\anode\models\aformula_2$ by the induction hypothesis, and hence $\atree,\anode\models \aformula'$. 
In the second case, we have $\atree,\anode\models \aformula_1$ by the induction hypothesis, and
we have $\aformula' \in \aset_{\anode\cdot i}$ for all $0\leq i\leq \degree$. 
We will prove that every path that starts in $\anode$ satisfies $\aformula_1 \until \aformula_2$. 
So consider an arbitrary infinite path $\anode\cdot j_1 \cdot j_2 \cdot \dots \in \interval{0}{\degree}^\omega$. 
Since $\arun$ is accepting, states from $F_{\aformula'}$ must occur infinitely often in $\apath$; that is 
there exists some $k\geq 0$ such that $\phi_2\in \aset_{\anode\cdot j_1\dots \cdot j_k}$ or $\aformula'
\not\in \aset_{\anode\cdot j_1 \dots \cdot j_k}$. 
Let $k\geq 0$ be minimal. 
We first show by induction that if $\aformula' \in \aset_{\anode\cdot j_1 \dots \cdot j_{m}}$, then $\atree,\anode\cdot j_1 \dots j_m\models\aformula_1$,
and $\aformula' \in \aset_{\anode\cdot j_1 \dots \cdot j_{m+1}}$ for all $0\leq m<k$. 
We have proved this for $m=0$ above. 
So let $0\leq m < k$ and suppose $\aformula'\in \aset_{\anode\cdot j_1\cdot\dots\cdot j_m}$. 
Since $\arun$ is a run, $\aset_{\anode \cdot j_1\dots \cdot j_m}$ is propositionally consistent and hence $\aformula_2\in
\aset_{\anode\cdot j_1\dots j_m}$  or $\aformula_1, \forallpaths\mynext \aformula'\in \aset_{\anode\cdot j_1 \dots j_m}$.  
Note that $\aformula_2\in \aset_{\anode\cdot j_1\dots j_m}$ contradicts that $k$ is minimal; hence
$\aformula_1, \forallpaths\mynext\aformula'\in \aset_{\anode\cdot j_1\dots j_m}$. 
By the induction hypothesis $\atree,\anode\cdot j_1\dots j_m \models \aformula_1$, and since $\arun$ is a run, we also have $\aformula'\in \aset_{\anode\cdot j_1 \dots j_{m+1}}$. 
We conclude that $\atree,\anode\cdot j_1\dots j_m \models \aformula_1$ for all $0\leq m<k$. 

Next we prove that $\atree,\anode\cdot j_1\dots j_k\models\aformula_2$. 
Recall that  $\aformula_2\in \aset_{\anode\cdot j_1\dots\cdot j_k}$ or $\aformula'\not\in \aset_{\anode\cdot j_1 \dots j_k}$. 
But $\aformula' \in \aset_{\anode\cdot j_1 \dots j_k}$ as we proved before. 
Hence $\aformula_2\in \aset_{\anode\cdot j_1 \dots j_k}$. 
By the induction hypothesis, we obtain $\atree,\anode\cdot j_1 \dots \cdot j_k\models\aformula_2$.
Hence, we have proved that an arbitrary chosen path starting in $\anode$
satisfies $\aformula_1 \until \aformula_2$,
leading to $\atree,\anode\models \aformula'$. 
\item $\aformula'=\existspath \aformula_1 \release \aformula_2$.  Since
  $\aset_\anode$ is consistent, we have $\aformula_2\in \aset_\anode$, and $\aformula_1\in \aset_\anode$ or
  $\existspath\mynext\aformula'\in \aset_\anode$. 
  In the first case, by the induction hypothesis we have $\atree,\anode\models\aformula_1$ and $\atree, \anode \models \aformula_2$,
  and hence $\atree,\anode\models \aformula'$. 
  In the second case, we have $\atree,\anode\models\aformula_2$ and $\aformula' \in \aset_{\anode \cdot j}$,
  where $j=\iota(\existspath\mynext\aformula')$. 
Consider the infinite path $\anode \cdot j  \cdot j \cdot j\dots$ starting from $\anode$. 
We prove that this path satisfies  $\aformula_1\release\aformula_2$. 
We distinguish two cases. 
\begin{itemize}
\item Suppose there exists some $k\geq 0$ such that $\aformula_1\in \aset_{\anode \cdot j^k}$. 
Let $k$ be minimal. 
By the induction hypothesis, $\atree,\anode\cdot j^k \models\aformula_1$. 
We prove that if $\aformula'\in \aset_{\anode\cdot j^i}$, then 
$\atree,\anode\cdot j^i\models \aformula_2$ and $\aformula'\in \aset_{\anode\cdot j^{i+1}}$
for all $0\leq i<k$. 
For $i=0$, we have proved this above. So let $0< i< k$ and suppose $\aformula'\in \aset_{\anode\cdot j^i}$. 
Since $\arun$ is a run of $\aautomaton_{\aformula}$, 
$\aset_{\anode\cdot j^i}$ is propositionally consistent, and hence $\aformula_2\in \aset_{\anode\cdot j^i}$, and
$\aformula_1\in \aset_{\anode\cdot j^i}$ or $\existspath\mynext\aformula'\in \aset_{\anode\cdot j^i}$.  
Since $\aformula_1\in \aset_{\anode\cdot j^i}$ contradicts the minimality of $k$, we have $\existspath\mynext\aformula'\in \aset_{\anode\cdot j^i}$. 
By the induction hypothesis, we have $\atree,\anode\cdot j^i\models\aformula_2$, and since $\arun$ is a run,
we have $\aformula'\in \aset_{\anode\cdot j^{i+1}}$. 
We can conclude that 
$\atree,\anode\cdot j^i\models \aformula_2$ for all $0\leq i\leq k$. 
\item Suppose that $\aformula_1\not\in \aset_{\anode\cdot j^k}$ for all $k\geq 0$. 
  We prove that if  $\aformula'\in \aset_{\anode\cdot j^{k}}$, then $\atree,\anode \cdot j^k\models\aformula_2$ and $\aformula'\in
  \aset_{\anode\cdot j^{k+1}}$ for all $k\geq 0$.
  For $k=0$ we have proved this above. So let $k> 0$ and suppose $\aformula'\in \aset_{\anode\cdot j^{k}}$. Since $\arun$ is a run, we know that
  $\aset_{\anode\cdot j^k}$ is
  propositionally consistent. Hence $\aformula_2\in \aset_{\anode\cdot j^k}$ (which, by the induction
  hypothesis, implies $\atree,\anode\cdot j^k\models\aformula_2$) and 
  $\aformula_1\in \aset_{\anode\cdot j^k}$ or $\existspath\mynext\aformula'\in \aset_{\anode\cdot j^k}$. Note that
  $\aformula_1\in \aset_{\anode\cdot j^k}$ cannot be by assumption, hence we have $\existspath\mynext\aformula'\in \aset_{\anode\cdot j^k}$. 
  We conclude that 
$\atree,\anode\cdot j^k\models \aformula_2$ for all $k\geq 0$. 
\end{itemize}

Hence, we have proved that the path above satisfies $\aformula_1\release\aformula_2$, and therefore
$\atree,\anode \models \aformula'$. 
\item $\aformula'= \forallpaths \aformula_1 \release \aformula_2$. The proof is a combination of the proofs
  for $\forallpaths\aformula_1 \until\aformula_2$ and $\existspath\aformula_1\release\aformula_2$. 
   Since
   $\aset_\anode$ is consistent, we have $\aformula_2\in \aset_\anode$, and
   $\aformula_1\in \aset_\anode$ or
  $\forallpaths \mynext\aformula'\in \aset_\anode$. 
   In the first case, by the induction hypothesis we have
   $\atree,\anode\models\aformula_1$ and $\atree, \anode \models \aformula_2$,
   and hence $\atree,\anode\models \aformula'$.
   In the second case, we have $\atree,\anode\models \aformula_1$ by the induction
   hypothesis, and we have $\aformula' \in \aset_{\anode\cdot i}$ for all
   $0\leq i\leq \degree$.  We will prove that every path that starts in $\anode$ satisfies
   $\aformula_1 \release \aformula_2$. 
   So consider an arbitrary infinite path
   $\anode\cdot j_1 \cdot j_2 \cdot \dots \in \interval{0}{\degree}^\omega$.
   We distinguish two cases. 
\begin{itemize}
\item Suppose there exists some $k\geq 1$ such that $\aformula_1\in
  \aset_{\anode \cdot j_1 \cdots j_k}$. Let $k$ be minimal. 
   By the induction hypothesis, $\atree,\anode\cdot j_1 \cdots j_k \models\aformula_1$. 
   We prove that if $\aformula'\in \aset_{\anode\cdot j_1 \cdots j_i}$, then 
   $\atree,\anode\cdot j_1 \cdots j_i \models \aformula_2$ and
   $\aformula'\in \aset_{\anode\cdot j_1 \cdots j_{i+1}}$
   for all $0\leq i<k$. 
   For $i=0$, we have proved this above. So let $0< i< k$ and suppose
   $\aformula'\in \aset_{\anode\cdot j_1 \cdots j_i}$. 
   Since $\arun$ is a run of $\aautomaton_{\aformula}$, 
   $\aset_{\anode\cdot j_1 \cdots j_i}$ is propositionally consistent, and hence
   $\aformula_2\in \aset_{\anode\cdot j_1 \cdots j_i}$, and
   $\aformula_1\in \aset_{\anode\cdot j_1 \cdots j_i}$ or
   $\forallpaths \mynext\aformula'\in \aset_{\anode\cdot j_1 \cdots j_i}$.  
   Since $\aformula_1\in \aset_{\anode\cdot j_1 \cdots j_i}$ contradicts the minimality of $k$,
   we have $\forallpaths \mynext\aformula'\in \aset_{\anode\cdot j_1 \cdots j_i}$. 
   By the induction hypothesis, we have $\atree,\anode\cdot j_1 \cdots j_i \models\aformula_2$,
   and since $\arun$ is a run, we have $\aformula'\in \aset_{\anode\cdot j_1 \cdots j_{i+1}}$. 
   We can conclude that 
   $\atree,\anode\cdot j_1 \cdots j_i \models \aformula_2$ for all $0\leq i\leq k$. 
   \item  Suppose that $\aformula_1\not\in \aset_{\anode\cdot j_1 \cdots j_k}$ for all $k\geq 0$. 
     We prove that if  $\aformula'\in \aset_{\anode\cdot j_1 \cdots j_k}$, then
     $\atree,\anode \cdot j_1 \cdots j_k \models\aformula_2$ and $\aformula'\in
     \aset_{\anode\cdot j_1 \cdots j_{k+1}}$ for all $k\geq 0$.
     For $k=0$ we have proved this above. So let $k> 0$ and suppose
     $\aformula'\in \aset_{\anode\cdot j_1 \cdots j_k}$. Since $\arun$ is a run, we know that
     $\aset_{\anode\cdot j_1 \cdots j_k}$ is
     propositionally consistent. Hence $\aformula_2\in \aset_{\anode\cdot j_1 \cdots j_k}$
     (which, by the induction hypothesis, implies
     $\atree,\anode\cdot j_1 \cdots j_k \models\aformula_2$) and 
     $\aformula_1\in \aset_{\anode\cdot j_1 \cdots j_k}$ or
     $\forallpaths \mynext\aformula'\in \aset_{\anode\cdot j_1 \cdots j_k}$.
     Note that $\aformula_1\in \aset_{\anode\cdot j_1 \cdots j_k}$ cannot be by assumption,
     hence we have $\forallpaths \mynext\aformula'\in \aset_{\anode\cdot j_1 \cdots j_k}$. 
     We conclude that 
     $\atree,\anode\cdot j_1 \cdots j_k \models \aformula_2$ for all $k\geq 0$. 
\end{itemize}
  
\end{itemize}

``only if:''
Suppose $\aformula$ is satisfiable. 
By Proposition~\ref{proposition-tree-model-for-Z}, $\aformula$ has a tree model $\atree$ with
domain $\interval{0}{\degree}^*$ that obeys the direction map $\iota$. 
We prove that $\atree \in \alang(\aautomaton_{\aformula})$, that is,
there exists some accepting run 
$\arun:\interval{0}{\degree}^*\to \delta$
of $\aautomaton_{\aformula}$ on $\atree$.
Note that $\atree$ belongs to $\alang(\aautomaton_{\aformula})$, assuming that each node is labelled with 
the letter $\arbitraryletter$ from the single-letter alphabet $\aalphabet$. Below, we omit to mention the letter
from this singleton alphabet and keep $\atree$ of the form $\atree: \interval{0}{\degree}^* \to \Zed^{\beta}$ to stick
to the Kripke structure.

For every node $\anode$ in $\atree$, define $\aset_\anode$ to be the set of formulas $\aformula'$
in $\subf{\aformula}$ such that $\atree,\anode\models\aformula'$. 
Define $\arun$ inductively as follows.
First set
\[
\arun(\varepsilon)\egdef
\triple{\pair{0}{\aset_{\varepsilon}}}{\arbitraryletter}{\pair{\acons_0}{\pair{0}{\aset_0}},\ldots,\pair{\acons_\degree}{\pair{\degree}{\aset_\degree}}}
\]
where for all $j \in \interval{0}{\degree}$,
if there is $\existspath  \ \acons \in \aset_{\varepsilon}$
such that $\iota(\existspath  \ \acons) =j$, then
    \[
    \acons_j \egdef (\bigwedge_{\forallpaths \ \acons' \in \aset} \acons') \wedge \acons
    \ \ \ 
    \mbox{otherwise,}
    \ \ \ 
    \acons_j  \egdef \bigwedge_{\forallpaths \ \acons' \in \aset} \acons'.
    \]
    More generally, $\arun(\anode \cdot j)$ is defined by
\[
\arun(\anode\cdot j)\egdef
\triple{\pair{j}{\aset_{\anode \cdot j}}}{\arbitraryletter}{\pair{\acons_{0}}{\pair{0}{\aset_{\anode \cdot j \cdot 0}}},\ldots,\pair{\acons_\degree}{\pair{\degree}{\aset_{\anode \cdot j
        \cdot \degree}}}}
\]
where for all $j \in \interval{0}{\degree}$,
if there is $\existspath  \ \acons \in \aset_{\anode \cdot j}$
such that $\iota(\existspath  \ \acons) =j$, then
    \[
    \acons_j \egdef (\bigwedge_{\forallpaths \ \acons' \in \aset} \acons') \wedge \acons
    \ \ \ 
    \mbox{otherwise,}
    \ \ \ 
    \acons_j  \egdef \bigwedge_{\forallpaths \ \acons' \in \aset} \acons'.
    \] 
We prove that $\arun$ is an accepting run of $\aautomaton_{\aformula}$ on $\atree$.

\begin{itemize}
\item Let us first prove a well-known property: 
  $\aset_\anode$ is propositionally consistent for all nodes
  $\anode \in \interval{0}{\degree}^*$.
\begin{itemize}
\item Suppose $\aformula_1 \vee \aformula_2\in \aset_{\anode}$. That is $\atree,\anode\models\aformula_1$ or
  $\atree,\anode\models\aformula_2$. But then also $\aformula_1\in \aset_{\anode}$ or $\aformula_2\in \aset_{\anode}$, and
  hence $\{\aformula_1,\aformula_2\}\cap \aset_{\anode}\neq\emptyset$. 
\item Suppose $\aformula_1\wedge\aformula_2\in \aset_{\anode}$. That is $\atree,\anode\models\aformula_1$ and
  $\atree,\anode\models\aformula_2$. But then also $\aformula_1\in \aset_{\anode}$ and $\aformula_2\in \aset_{\anode}$,
  and hence $\{\aformula_1,\aformula_2\}\subseteq \aset_{\anode}$.
\item Suppose $\existspath \aformula_1\until\aformula_2\in \aset_{\anode}$. We distinguish two cases. 
(i) Suppose $\atree,\anode\models \aformula_2$. Then $\aformula_2\in \aset_{\anode}$. 
  (ii) Suppose $\atree,\anode \not\models \aformula_2$. By $\atree,\anode\models\existspath\aformula_1\until \aformula_2$,
  we conclude that $\atree,\anode\models\aformula_1$ and $\atree,\anode\models \existspath\mynext\existspath \aformula_1\until\aformula_2$.
  But then $\{\aformula_1,\existspath\mynext\existspath\aformula_1\until\aformula_2\}\subseteq \aset_{\anode}$. 
\item Suppose $\forallpaths\aformula_1\until\aformula_2\in \aset_{\anode}$. 
We distinguish two cases. 
(i) Suppose $\atree,\anode\models \aformula_2$. Then $\aformula_2\in \aset_{\anode}$.  
(ii) Suppose $\atree,\anode\not\models\aformula_2$. By $\atree,\anode\models\forallpaths\aformula_1\until\aformula_2$, we conclude that
$\atree,\anode\models\aformula_1$ and $\atree,\anode\models\forallpaths\mynext\forallpaths\aformula_1\until\aformula_2$.
But then $\{\aformula_1,\forallpaths\mynext\forallpaths\aformula_1\until\aformula_2\}\subseteq \aset_{\anode}$. 
\item Suppose $\existspath\aformula_1\release\aformula_2\in \aset_{\anode}$. 
  That is, $\atree,\anode\models\existspath\aformula_1\release\aformula_2\in \aset_{\anode}$ and hence
  $\atree,\anode\models\aformula_2$, so that $\aformula_2\in \aset_{\anode}$, too. 
We distinguish two cases: 
(i) If $\atree,\anode\models\aformula_1$, then $\aformula_1\in \aset_{\anode}$. 
(ii) If $\atree,\anode\not\models\aformula_1$, then $\atree,\anode\models\existspath\mynext\existspath\aformula_1\release\aformula_2$. 
But then $\existspath\mynext\existspath\aformula_1\release\aformula_2\in \aset_{\anode}$. 
Hence we can conclude that $\{\aformula_1,\existspath\mynext\existspath\aformula_1\release\aformula_2\}\cap \aset_{\anode}\neq\emptyset$. 
\item The proof for $\forallpaths \aformula_1 \release \aformula_2$ is analogous using the
  validity of $\forallpaths \aformula_1 \release \aformula_2 \Leftrightarrow
               \big(\aformula_2 \wedge (\aformula_1 \vee \forallpaths \mynext \forallpaths \aformula_1 \release \aformula_2)\big)$. 
\end{itemize}
\item Since $\atree\models\aformula$, we must have $\aformula\in \aset_{\varepsilon}$, so that $(0,\aset_{\varepsilon})$ is an
  initial location of $\aautomaton_\aformula$. 
\item Next we prove that, for all nodes $\anode$ with the source location of $\arun(\anode)$ being $(i,\aset_\anode)$, there exists a transition
  $$((i,\aset_\anode), \arbitraryletter, (\acons_0, (0,\aset_{\anode \cdot 0})), \dots, (\acons_\degree,(\degree,\aset_{\anode \cdot \degree})))\in \delta$$
  satisfying the conditions (1)--(3) below.
\begin{enumerate}
\item Let $\existspath\mynext \aformula_1 \in \aset_\anode$ and suppose $\iota(\existspath\mynext\aformula_1)=j$ for some $1\leq j\leq \degree$. 
By definition of $\aset_\anode$ we have $\atree,\anode \models \existspath\mynext\aformula_1$. 
Since $\atree$ obeys $\iota$, 
we have $\atree, \anode\cdot j\models \aformula_1$. 
Hence indeed $\aformula_1\in \aset_{\anode\cdot j}$ (and $\subf{\aformula}$ is closed under subformulae). 
\item Let $\forallpaths\mynext\aformula_1\in \aset_\anode$. 
  By definition of $\aset_\anode$, we have $\atree,\anode\models \forallpaths\mynext\aformula_1$. By definition of
  the satisfaction relation for
  $\CTL(\Zed)$, we can conclude that $\atree,\anode \cdot j\models \aformula_1$ for all $0\leq j\leq \degree$.
  Hence $\aformula_1\in \aset_{\anode \cdot j}$ for all $0\leq j\leq \degree$. 
\item Let $0\leq j \leq \degree$. 
Let $\forallpaths \ \acons \in \aset_\anode$. 
Hence $\atree,\anode\models \forallpaths \ \acons$. By definition of the satisfaction relation for $\CTL(\Zed)$, 
$\Zed\models \acons(\vect{z},\vect{z}_j)$, where $\vect{z}=\atree(\anode)$ and $\vect{z}_j=\atree(\anode\cdot j)$. 
We can conclude that $\Zed\models \left( \bigwedge_{\forallpaths \acons \in\aset_\anode}\acons \right)\left(\vect{z},\vect{z}_j\right)$. 
If, additionally, there exists $\existspath \acons'\in \aset_\anode$ such that $\iota(\existspath \ \acons')=j$, then, 
since $\atree$ obeys $\iota$, 
we have $\Zed \models\acons'(\vect{z},\vect{z}_j)$. In that case we have
$\Zed\models \left( \bigwedge_{\forallpaths \acons\in\aset_\anode}\acons \wedge \acons' \right)\left(\vect{z},\vect{z}_j\right)$. 
\end{enumerate}
\item Next we prove that $\arun$ is accepting. 

  Suppose $\existspath \aformula_1\until\aformula_2$ is in $\parsubf{\existspath\until}{\aformula}$, and let
  $\iota(\existspath\mynext\existspath\aformula_1\until\aformula_2)=j$.
  {\em This is the place where we use the index recording the direction.}
  We show that for all branches $j_1 j_2 \cdots \in \interval{0}{\degree}^{\omega}$,
  a location in $F_{\existspath\aformula_1\until\aformula_2}$ occurs infinitely
  often in $\arun(j_1) \arun(j_2) \cdots$.
  If $j_1 j_2 \cdots$ is not of the form $\anode \cdot j^{\omega}$,
  $\arun(j_1) \arun(j_2) \cdots$ does not belong to $\locations^+ \set{\pair{j}{\aset} \mid \pair{j}{\aset} \in \locations}^{\omega}$
  and therefore a location $\pair{i}{\asetbis}$ with $j \neq i$ in $F_{\existspath\aformula_1\until\aformula_2}$ occurs infinitely
  often in $\arun(j_1) \arun(j_2) \cdots$.
  Now suppose that $j_1 j_2 \cdots$ is of the form $\anode \cdot j^{\omega}$.
  {\em Ad absurdum}, assume that there exists some $m\geq 0$ such that 
  the source location in $\arun(\anode\cdot j^{m+k})$ does not belong to $F_{\existspath\aformula_1\until\aformula_2}$ for all $k\geq 0$.
  By definition of $F_{\existspath \aformula_1\until\aformula_2}$, we obtain
  the source location of $\arun(\anode\cdot j^{m+k})$ is $(j,\aset_{\anode\cdot j^{m+k}})$, 
  $\aformula_2\not\in\aset_{\anode\cdot  j^{m+k}}$, and $\existspath\aformula_1\until\aformula_2\in\aset_{\anode\cdot j^{m+k}}$,
  for all $k\geq 0$. 
  By definition of $\aset_{\anode\cdot j^{m+k}}$, 
  we have $\atree, \anode \cdot j^{m+k}\models\existspath\aformula_1\until\aformula_2$. 
  Since  $\atree$ obeys $\iota$, 
  there must exist some $p\geq 0$ such that $\atree,\anode\cdot j^{m+p} \models\aformula_2$. 
  But then $\aformula_2\in \aset_{\anode\cdot j^{m+p}}$, which leads to a contradiction.

Suppose $\forallpaths\aformula_1\until\aformula_2$ is in $\parsubf{\forallpaths\until}{\aformula}$. 
We show that for all nodes $\anode$ in $\arun$, 
every path starting in $\anode$ visits $F_{\forallpaths \aformula_1\until\aformula_2}$ infinitely often. 
Towards contradiction, suppose that there exists a node $\anode$ and some infinite path
$\anode\cdot j_1 \cdot j_2 \cdot \dots\in \interval{0}{\degree}^\omega$  that visits
$F_{\forallpaths \aformula_1\until\aformula_2}$ only finitely. 
That is, there exists some $m\geq 0$ such that 
the source location in $\arun(\anode_k)$ does not belong to $F_{\forallpaths\aformula_1\until\aformula_2}$ for all $k\geq m$. 
By definition of $F_{\forallpaths\aformula_1\until\aformula_2}$, 
and assuming $\arun(\anode\cdot j_1 \dots j_k)=(d_{\anode\cdot j_1 \dots j_k},\aset_{\anode\cdot j_1 \dots j_k})$, we have
$\forallpaths \aformula_1\until\aformula_2\in \aset_{\anode\cdot j_1 \dots j_k}$ and $\aformula_2\not\in \aset_{\anode\cdot j_1 \dots j_k}$
for all $k\geq m$. 
By definition of $\aset_{\anode\cdot j_1 \dots j_k}$, we know that $\atree_{\anode\cdot j_1 \dots j_m}\models\forallpaths\aformula_1\until\aformula_2$. 
But then there must exist some $p\geq m$ such that 
$\atree_{\anode\cdot j_1 \dots j_p}\models\aformula_2$. 
By definition, $\aformula_2\in \aset_{\anode\cdot j_1 \dots j_p}$, which leads to a contradiction. \qedhere
\end{itemize}
\end{proof}

\section{Proofs for Section~\ref{section-ctlstarz}} 
\subsection{Proof of Proposition~\ref{proposition-simple-form-ctlstarz}}
\label{appendix-proof-proposition-simple-form-ctlstarz} 
\begin{proof} The proof is a slight variant of the proof of Proposition~\ref{proposition-simple-form},
  we keep the same notations whenever possible.
  First, we can establish that $\CTLStar(\Zed)$ has the tree model
  property, exactly as done in the proof of Proposition~\ref{proposition-simple-form} for
  $\CTL(\Zed)$.
  We use  
  unfoldings for Kripke structures and we omit the details herein. 

  Now, we move to the construction of $\aformula'$ in simple form.
  We use the notion of forward degree introduced in the proof of
  Proposition~\ref{proposition-simple-form} that applies to
   $\CTLStar(\Zed)$ state formulae and to constraints $\acons$. 
  Let $\aformula$ be a state formula in $\CTLStar(\Zed)$ such that $\fd{\aformula} = N$
  (see page~\pageref{page-fd} the definition of $\fd{}$) 
  and the variables occurring in
  $\aformula$ are among $\avariable_1, \ldots, \avariable_{\beta}$.
  Below, we build a formula $\aformula'$  over the variables $\avariable_1^{-N}, \ldots, \avariable_1^{0}, \ldots, \avariable_{\beta}^{-N}, \ldots, \avariable_{\beta}^{0}$
  with $\fd{\aformula'} \leq 1$ such that $\aformula$ is satisfiable in a tree Kripke structure iff
  $\aformula'$ is satisfiable in a tree Kripke structure and $\aformula'$ can be computed in polynomial-time in the size of
  $\aformula$.

  Given  $\acons$ occurring in $\aformula$ with $\fd{\acons} = M$ (and therefore $M \leq N$),
  we write $\mathtt{jump}(\acons,M)$ to denote
  the constraints 
  as done in the proof of
  Proposition~\ref{proposition-simple-form} (page~\pageref{appendix-proof-proposition-simple-form}). 
  
  Let $\atranslation$ be the translation map that is homomorphic for Boolean
  and temporal connectives and path quantifiers such that
  $\atranslation(\acons) \egdef \mynext^M  \ \mathtt{jump}(\acons,M)$, where  $\fd{\acons} = M$ for all
  maximal constraints $\acons$ occurring in $\aformula$. This is the most significant
  change with respect to the proof of Proposition~\ref{proposition-simple-form}
  (we have replaced $(\existspath \mynext)^M$ from the proof of Proposition~\ref{proposition-simple-form}
  by $\mynext^M$).
  As $\acons$ is always in the scope of a path quantifier, the path formula 
  $\mynext^M  \ \mathtt{jump}(\acons,M)$ is well-defined. 
   Let $\aformula'$ be defined as follows:
  \[
  \atranslation(\aformula) \wedge \forallpaths \always \ \forallpaths \big(\bigwedge_{j \in \interval{1}{\beta}, \ k \in \interval{0}{N-1}} \avariable_j^{-k} = \mynext \avariable_j^{-k-1}\big).
  \]
  Observe that the second conjunct of $\aformula'$ is identical to the case
  for $\CTL(\Zed)$ in the proof of Proposition~\ref{proposition-simple-form}. 
  One can show that  $\aformula$ is satisfiable in a tree Kripke structure iff
  $\aformula'$ is satisfiable in a tree Kripke structure.
  The main argument can be provided along the lines of the proof of
  Propositionn~\ref{proposition-simple-form}. Below, we briefly provide the essential steps.

  First, suppose that $\aks, \aworld \models \aformula$, where
  $\aks = \triple{\worlds}{\arelation}{\avaluation}$ is a
  {\em tree Kripke structure} with root $\aworld$.
  Let $\aks' = \triple{\worlds}{\arelation}{\avaluation'}$ be the Kripke structure that differs from
  $\aks$ only in the definition of the valuation.
  Given a node $\aworld' \in \worlds$ reachable from $\aworld$ via the branch $\aworld_0 \cdots \aworld_n$ with $\aworld_0 = \aworld$ and $\aworld_n = \aworld'$ for some $n \geq 0$,
  for all $k \in \interval{0}{N}$ and $j \in \interval{1}{\beta}$, we require
  $\avaluation'(w', \avariable_j^{-k}) \egdef \avaluation(w_{n-k}, \avariable_j)$ if $n-k \geq 0$,
  otherwise $\avaluation'(\aworld', \avariable_j^{-k}) \egdef 0$ (arbitrary value).
  To establish $\aks', \aworld \models \aformula'$, it boils down to check the
  properties below.
  \begin{itemize}
  \item $\aks', \aworld \models \forallpaths \always \ \forallpaths \big(\bigwedge_{j \in \interval{1}{\beta}, \
    k \in \interval{0}{N-1}} \avariable_j^{-k} = \mynext \avariable_j^{-k-1}\big)$.
    This holds thanks to the definition of $\avaluation'$ and the tree structure of $\aks$.
    
  \item For every infinite path $\apath$,
    $\aks, \apath \models \acons$
    implies $\aks', \apath \models \mynext^M \ \mathtt{jump}(\acons,M)$, where  $\fd{\acons} = M$.
    This can be lifted to all path formulae and to all state formulae (in negation normal
    form)
    by structural induction. 
  \end{itemize}
  
  For the other direction, suppose that $\aks, \aworld \models \aformula'$ and $\aks = \triple{\worlds}{\arelation}{\avaluation}$ is a
  {\em tree Kripke structure} with root $\aworld$.
   Let $\aks' = \triple{\worlds}{\arelation}{\avaluation'}$ be the Kripke structure that differs from
   $\aks$ only in the definition of the valuation. More precisely, for all $\aworld' \in \worlds$
   and all $j \in \interval{1}{\beta}$, we have
   $\avaluation'(\aworld',\avariable_j) \egdef \avaluation(\aworld',\avariable_j^0)$.
   By structural induction, one can show that $\aks, \aworld \models \atranslation(\aformula)$
   implies $\aks', \aworld \models \aformula$.
\end{proof}

\subsection{Proof of Proposition~\ref{proposition-ctlstarz-special-form}}
\label{appendix-proof-proposition-ctlstarz-special-form}
\begin{proof}
  By Proposition~\ref{proposition-simple-form-ctlstarz}, we can assume that $\aformula$ is
  in simple form, and $\aformula$ is built over
  the terms  $\avariable_1,\dots,\avariable_\beta$ and
  $\avariable'_1,\dots,\avariable'_\beta$.
  Consequently, we can assume that $\aformula$ is in negation normal
  form.

  We use a standard property that illustrates the renaming technique~\cite{Scott62} used below.
  Let $\aformulabis$ be a $\CTLStar(\Zed)$ state formula with state subformula
  $\aformulabis'$ and $\avariablebis$ be a (fresh) variable {\em not occurring} in $\aformulabis$.
  Then, $\aformulabis$ is satisfiable iff
  $\aformulabis[\aformulabis' \leftarrow \existspath (\avariablebis = 0)]
  \wedge
  \forallpaths \always (\existspath (\avariablebis = 0) \Leftrightarrow \aformulabis')
  $ is satisfiable, where $\aformulabis[\aformulabis' \leftarrow \existspath (\avariablebis = 0)]$
  denotes the state formula obtained from $\aformulabis$ by replacing every occurrence
  of $\aformulabis'$ by $\existspath (\avariablebis = 0)$.
  The idea is to replace $\aformulabis'$ by the atomic constraint
  $(\avariablebis = 0)$, where $\avariablebis$ is a fresh variable not occurring before.
  The new constraint $(\avariablebis = 0)$ is tied to the original subformula $\aformulabis'$
  via the conjunct $\forallpaths \always (\existspath (\avariablebis = 0) \Leftrightarrow \aformulabis')$.
  Note that strictly speaking, $\avariablebis = 0$ is not a state formula
  but morally it is because its satisfaction depends only on the current state.
  That is why we use $\existspath (\avariablebis = 0)$ instead of the more natural
  constraint $\avariablebis = 0$.
  In the transformations below, we need sometimes to remove $\existspath$ in $\existspath  (\avariablebis = 0)$
  when the occurrence of $\avariablebis = 0$ is already  in the scope of a path quantifier.
  Observe that alternatively, we could slightly redefine $\CTLStar(\Zed)$ to accept also as atomic formulae
  constraints $\acons$ in which all terms are some variable $\avariable_i$ (no prefix
  with $\mynext$). In that slight extension, $\avariablebis = 0$ would be authorised
  as a state formula.

  In order to compute $\aformula'$, we perform on $\aformula$ several transformations
  of the above form. We write $\aformulabis$ to denote current state formulae on
  which the transformations are performed and initially $\aformulabis$ takes the value
  $\aformula$, which is a $\CTLStar(\Zed)$ state formula in simple form.
  Through the sequence of transformations, $\aformulabis$ is maintained in the following shape:
  \[
  \aformulater \wedge \bigwedge_i \forallpaths \always (\existspath (\avariablebis_i = 0) \Leftrightarrow
  \pathquantifier_i \ \apathformula_i),
  \]
  where each $\pathquantifier_i \in\{\existspath,\forallpaths\}$, each $\apathformula_i$ is an
  $\LTL(\Zed)$ (path) formula in simple form and $\aformulater$ is a $\CTLStar(\Zed)$ formula in simple form. 
  In order to compute the new value for $\aformulabis$, suppose that $\aformulater$ contains
  a state subformula $\aformulabis'$ of the form $\pathquantifier \ \apathformula$, where
  the only path quantifiers occurring in $\apathformula$ occur in state formulae of the
  form $\pathquantifier' \ \avariableter = 0$ (no need to perform a renaming on $\pathquantifier' \ \avariableter = 0$).
  We write $\apathformula^{\dag}$ to denote
  the path formula (in simple form) obtained from $\apathformula$ by replacing every
  occurrence of $\pathquantifier' \ \avariableter = 0$ by $\avariableter = 0$.
  Obviously, $\pathquantifier \ \apathformula$ is logically equivalent to
  $\pathquantifier \ \apathformula^{\dag}$. 
  The new value for
  $\aformulabis$ is defined below ($\avariablebis$ is a fresh variable):
  \[
  \overbrace{\aformulater[\pathquantifier \ \apathformula \leftarrow \existspath (\avariablebis = 0)]}^{
   \mbox{renaming of $\pathquantifier \ \apathformula$}}
  \wedge
  \underbrace{
  \forallpaths \always (\existspath (\avariablebis = 0) \Leftrightarrow
  \pathquantifier \ \apathformula^{\dag})}_{\mbox{new equivalence}}
  \wedge
  \overbrace{
  \bigwedge_i \forallpaths \always (\existspath (\avariablebis_i = 0) \Leftrightarrow
  \pathquantifier_i \ \apathformula_i)}^{\mbox{conjunction already in $\aformulabis$}},
  \]
  One can show that the transformation preserves satisfiability and moreover, repeating this procedure can be
  done only a 
  polynomial amount of times, guaranteing termination.
  At the end of all these transformations, the resulting
  formula $\aformulabis$ is now of the
  form
  \[
  \aformulater \wedge \bigwedge_i \forallpaths \always (\existspath (\avariablebis_i = 0) \Leftrightarrow
  \pathquantifier_i \ \apathformula_i),
  \]
  where $\aformulater$ is a Boolean combination of state formulae of the form
  $\pathquantifier \ \avariable = 0$ and, the $\avariablebis_i$'s are distinct and new
  variables not occurring in the original formula $\aformula$. 
  We write $\aformulater^{\dag}$ to denote
  the path formula obtained from $\aformulater$ by removing all the path quantifiers.
  Again, $\pathquantifier \ \aformulater$ is logically equivalent to $\pathquantifier \ \aformulater^{\dag}$
  for all $\pathquantifier \in \set{\existspath, \forallpaths}$. 
  The intermediate state formula $\aformula^{\star}$ takes the value below:
  \[
  \existspath \ (\avariableter = 0) \wedge
  \forallpaths \always (\overline{\avariableter = 0 \Leftrightarrow
  \aformulater^{\dag}})
  \wedge
  \bigwedge_i \forallpaths \always (\existspath (\avariable_i = 0) \Leftrightarrow
  \pathquantifier_i \ \apathformula_i),
  \]
  where $\avariableter$ is again a fresh variable
  and $\overline{\avariableter = 0 \Leftrightarrow
    \aformulater^{\dag}}$ is in negation normal form
  and logically equivalent to $\avariableter = 0 \Leftrightarrow
    \aformulater^{\dag}$. 
  The formulae $\aformula$ and $\aformula^{\star}$ are equi-satisfiable.

  It remains to explain how to transform each
  $\forallpaths \always (\existspath (\avariable_i = 0) \Leftrightarrow
  \pathquantifier_i \ \apathformula_i)$ so that we get the final formula $\aformula'$
  in special form from $\aformula^{\star}$.
  For instance, $\forallpaths \always (\existspath (\avariablebis_i = 0) \Leftrightarrow
  \forallpaths \ \apathformula_i)$ shall be replaced by
  \[
  \forallpaths \always (\neg (\avariablebis_i = 0) \vee 
  \apathformula_i)
  \ \ \wedge \ \
  \forallpaths \always \existspath \
    (\avariablebis_i = 0 \vee
    \overline{\neg \apathformula_i}),
   \]
     where $\overline{\neg \apathformula_i}$ is logically equivalent
  to $\neg \apathformula_i$ but in negation normal form.
  Note that $\neg (\avariablebis_i = 0) \vee 
  \apathformula_i$ and
  $\avariablebis_i = 0 \vee
    \overline{\neg \apathformula_i}$
  are  $\LTL(\Zed)$ formulae in simple form, exactly what is needed
  for the final $\CTLStar(\Zed)$ state formula $\aformula'$ in special form.

  Below, we list the  logical equivalences (hinted above) we take advantage of and 
  that are slight variants of equivalences used for \CTLStar in the
  proof of~\cite[Theorem 3.1]{Emerson&Sistla84}.
  \begin{itemize}
  \item $\forallpaths \always (\existspath (\avariablebis_i = 0) \Rightarrow
  \forallpaths \ \apathformula_i) \ \Leftrightarrow \forallpaths \always (\neg (\avariablebis_i = 0) \vee 
  \apathformula_i)$. 
  \item $\forallpaths \always (\existspath (\avariablebis_i = 0) \Rightarrow
    \existspath \ \apathformula_i) \ \Leftrightarrow \forallpaths \always \existspath (\neg (\avariablebis_i = 0)
    \vee \apathformula_i)$.
  \item $\forallpaths \always (\neg \existspath (\avariablebis_i = 0) \Rightarrow
    \neg \forallpaths \ \apathformula_i) \ \Leftrightarrow \forallpaths \always \existspath \
    (\avariablebis_i = 0 \vee
    \overline{\neg \apathformula_i})$.
  \item $\forallpaths \always (\neg \existspath (\avariablebis_i = 0) \Rightarrow
    \neg \existspath \ \apathformula_i) \ \Leftrightarrow \forallpaths
    \always  ((\avariablebis_i = 0) \vee
  \overline{\neg \apathformula_i})$.
  \end{itemize}
  The formula $\aformula'$ is obtained from $\aformula^{\star}$ by replacing 
  each element of the generalised conjunction in $\aformula^{\star}$ by two formulae based
  on these equivalences.
\end{proof}

\begin{exa}
  To illustrate the construction from the above proof of Proposition~\ref{proposition-ctlstarz-special-form},
  we consider the formula $\aformula$ below. 
  $$\aformula = \existspath ((\avariable_1'<\avariable_1) \until \ \forallpaths \mynext (\avariable_2 =\avariable'_2)) \,
  \wedge \, \existspath \always (\avariable_1 < \avariable_2).$$
  Below, we present the formulae $\aformulabis_0 = \aformula$, $\aformulabis_1$, $\aformulabis_2$,
  $\aformulabis_3$, $\aformula^{\star}$ obtained by application of the different renaming steps.
\[
  \aformulabis_1 =
  \overbrace{\forallpaths \always \big(\existspath(\avariablebis_1 = 0) \Leftrightarrow
  (\forallpaths \mynext (\avariable_2 =\avariable'_2))\big)}^{=  \ \aformulabis'_1} \, \wedge \,
  \existspath ((\avariable_1'<\avariable_1) \until \ \existspath(\avariablebis_1 = 0)) \,
  \wedge \, \existspath \always (\avariable_1 < \avariable_2).
\]
\[
  \aformulabis_2 =
  \overbrace{\forallpaths \always \big(\existspath(\avariablebis_2 = 0) \Leftrightarrow
  (\existspath \ (\avariable_1'<\avariable_1) \until \ \avariablebis_1 = 0)\big)}^{=  \ \aformulabis'_2}
  \, \wedge \, \aformulabis'_1 \, \wedge \,
  \existspath(\avariablebis_2 = 0) \,
  \wedge \, \existspath \always (\avariable_1 < \avariable_2).
\]
  Note that `$\existspath$' is removed from $\existspath(\avariablebis_1 = 0)$
  in $\aformulabis'_2$. 
\[
  \aformulabis_3 =
  \overbrace{\forallpaths \always \big(\existspath(\avariablebis_3 = 0) \Leftrightarrow
  \existspath \always (\avariable_1 < \avariable_2)\big)}^{=  \ \aformulabis'_3} \, \wedge \,
  \aformulabis'_2 \, \wedge  \, \aformulabis'_1 \, \wedge \,
  \existspath(\avariablebis_2 = 0) \,
  \wedge \, \existspath(\avariablebis_3 = 0).
\]
\[
  \aformula^{\star} = \forallpaths \always \big((\avariablebis_4 = 0) \Leftrightarrow
  (\avariablebis_2 = 0 \wedge \avariablebis_3 = 0)\big) \, \wedge \,
  \aformulabis'_3 \, \wedge \, \aformulabis'_2 \, \wedge  \, \aformulabis'_1 \, \wedge \,
  \existspath(\avariablebis_4 = 0).
\]
  Each state formula $\aformulabis_i'$ in $\aformula^{\star}$ is then also replaced by a conjunction of two state
  formulae in order to compute $\aformula'$. By way of example, we present below the conjunction replacing
  $\aformulabis'_2$.
  \[
  \forallpaths \always \existspath \big( \neg (\avariablebis_2 = 0) \vee 
  (\avariable_1'<\avariable_1) \until \ (\avariablebis_1 = 0)\big)
  \, \wedge \,
  \forallpaths \always \big( (\avariablebis_2 = 0) \vee
  (\neg (\avariable_1'<\avariable_1)) \release \ \neg(\avariablebis_1 = 0)\big)
  \]
\end{exa}

\subsection{Proof of Proposition~\ref{proposition-ltlz}}
\label{appendix-proof-proposition-ltlz}
\begin{proof} We use the standard automata-based approach for \LTL~\cite{Vardi&Wolper94}, except that we have
  to deal with constraints. Let $\apathformula$ be an $\LTL(\Zed)$ in simple form. We provide below
  usual notations to define the automaton $\aautomaton_{\apathformula}$. 
We write $\subf{\apathformula}$ to denote the smallest set
such that
\begin{itemize}
\item $\apathformula \in \subf{\apathformula}$; $\subf{\apathformula}$ is closed under subformulae,
\item for all $\mathsf{Op} \in \set{\until, \release}$, if
  $\apathformula_1 \ \mathsf{Op} \ \apathformula_2 \in \subf{\apathformula}$, then
   $ \mynext (\apathformula_1  \ \mathsf{Op} \ \apathformula_2) \in \subf{\apathformula}$.
  \end{itemize}
The cardinality of $\subf{\apathformula}$ is at most twice the
number of subformulae of $\apathformula$.
Given $\aset \subseteq \subf{\apathformula}$, $\aset$
is \defstyle{propositionally consistent}  $\equivdef$ the conditions below hold.
\begin{itemize}
\item If $\apathformula_1 \vee \apathformula_2 \in \aset$, then
      $\set{\apathformula_1, \apathformula_2} \cap \aset \neq \emptyset$; 
      if $\apathformula_1 \wedge \apathformula_2 \in \aset$, then
  $\set{\apathformula_1, \apathformula_2} \subseteq \aset$.
\item If $\apathformula_1 \until \apathformula_2 \in \aset$, then
  $\apathformula_2 \in \aset$ or
  $\set{\apathformula_1, \mynext (\apathformula_1 \until \apathformula_2)} \subseteq \aset$.
  
  \item If $\apathformula_1 \release \apathformula_2 \in \aset$, then
  $\apathformula_2 \in \aset$ and
    $\set{\apathformula_1,  \mynext (\apathformula_1 \release \apathformula_2)} \cap \aset \neq \emptyset$.    
  \end{itemize}

We write $\parsubf{\mynext}{\apathformula}$ to denote the set of formulae in $\subf{\apathformula}$
of the form $\mynext \apathformula'$. Similarly, we write
$\parsubf{\until}{\apathformula}$ 
to denote the set of formulae in $\subf{\apathformula}$
of the form $\apathformula_1 \until \apathformula_2$.
Finally, we  write  $\parsubf{{\rm cons}}{\apathformula}$ to denote the set of formulae of
the form $\acons$ in $\subf{\apathformula}$. 

We build a generalised 
word constraint automaton
$\aautomatonbis_{\apathformula} = \triple{\locations,\aalphabet,\beta}{\locations_{\init},\delta}{\rabinacc}$ 
such that
$
  \set{\aword: \Nat \to \Zed^{\beta} \mid \aword \models \apathformula}
  = \alang(\aautomaton_\apathformula)
$.
The automaton $\aautomatonbis_{\apathformula}$ accepts infinite
words $\aword: \Nat \rightarrow \aalphabet \times \Zed^{\beta}$ with $\aalphabet = \set{\dag}$. 
Let us define $\aautomatonbis_{\apathformula}$ formally.
\begin{itemize}
\item $\aalphabet \egdef \set{\arbitraryletter}$; 
  $\locations \subseteq\powerset{\subf{\apathformula}}$ contains all the
  propositionally consistent sets by definition.
\item $\locations_{\init} \egdef \set{\aset \in \locations \mid \apathformula \in \aset}$.
\item The transition relation $\delta$ is made of tuples of the form
  $
  \triple{\aset}{\dag}{\acons,\aset'}
  $, 
  verifying the conditions below.
  \begin{enumerate}
  \item For all $\mynext \apathformula' \in \aset$, we have $\apathformula' \in \aset'$.
  \item $\acons$ is equal to $(\bigwedge_{\acons' \in \parsubf{{\rm cons}}{\apathformula} \cap \aset} \acons')$. 
    \end{enumerate}
 
\item $\rabinacc$ is made of the sets  $F_{\apathformulabis_1 \until \apathformulabis_2}$ with
  $\apathformulabis_1 \until \apathformulabis_2 \in \parsubf{\until}{\apathformula}$
  with $
    F_{\apathformulabis_1 \until \apathformulabis_2 } \egdef
    \set{\aset \mid
      \apathformulabis_2 \in \aset \
      \mbox{or} \
      \apathformulabis_1 \until \apathformulabis_2 \not \in \aset
    }$. 
\end{itemize}
Transforming generalised B\"uchi conditions to standard B\"uchi conditions leads to
a set of locations multiplied by the factor $\card{\rabinacc} + 1$ (which is bounded
by $\size{\apathformula}$) and to the word constraint automaton
$\aautomaton_{\apathformula}$. Moreover, one can check that the number of transitions
in $\aautomaton_{\apathformula}$ is exponential in $\size{\apathformula}$.
We omit the standard proof for correctness (similar to the proof of Lemma~\ref{lemma-correctness-ctlz-aut-to-form}).
\end{proof}

\subsection{Proof of Lemma~\ref{lemma-automaton-AGEPhi}}
\label{appendix-proof-lemma-automaton-AGEPhi} 
\begin{proof}
  Let us start checking that (I) and (II) hold true.
  We have $\locations' = \interval{0}{\degree-1} \times (\locations\cup\{\bot\})$
  with  $\card{\locations}$ exponential in $\size{\apathformula_i}$ (and therefore
  exponential in $\size{\aformula}$) by Lemma~\ref{proposition-ltlz}(I).
  Moreover, $\degree$  is linear in $\size{\aformula}$, whence
  $\card{\locations'}$ is exponential in $\size{\aformula}$.
  By Lemma~\ref{proposition-ltlz}(II),
  $\maxconstraintsize{\aautomaton}$ is quadratic in
  $\size{\apathformula_i}$ (and therefore
  quadratic in $\size{\aformula}$).
  The constraints in $\aautomaton_i$ are those from $\aautomaton$ (maybe except $\top$).
  Consequently, $\maxconstraintsize{\aautomaton_i}$ is quadratic in
  $\size{\aformula}$. This concludes the proof of (I) and (II).

  Let $\aformula = \forallpaths \always \existspath \ \apathformula_i$
and $\atree:\interval{0}{\degree-1}^*\to \aalphabet\times\Zed^\beta$  be a tree
   such that $\atree \in \alang(\aautomaton_i)$.
   Below, we prove that for all nodes 
   $\anode\in \interval{0}{\degree-1}^*$, 
   the word $\atree(\anode) \ \atree(\anode \cdot i) \ \atree(\anode\cdot i \cdot 0) \ 
   \atree(\anode \cdot i \cdot 0^2) \dots$ satisfies $\apathformula_i$. 
   This implies $\atree \models \aformula$ and
   $\atree$ satisfies $\aformula$ via $i$. 

   So let $\anode\in \interval{0}{\degree-1}^*$ be an arbitrary node in $\atree$. 
   Consider the infinite path $\anode \cdot i \cdot 0^\omega \in \interval{0}{\degree-1}^\omega$. 
   Let $\atree(\anode)=\pair{\aletter}{\vect{z}}$ and
   $\atree(\anode \cdot i \cdot 0^k) = \pair{\aletter_k}{\vect{z}_k}$ for all $k\geq 0$. 
   Let $\arun: \interval{0}{\degree-1}^*\to \delta'$
   be an accepting run of $\aautomaton_i$ on $\atree$. 
   By definition of $\delta'$, 
   $\arun(\anode \cdot i)$ is of the form $(\pair{i}{\alocation_0},\dag, \cdots)$
   for some transition $(\alocation_\init,\aletter, \acons,\alocation_0)\in\delta$
   with $\alocation_\init\in\locations_\init$ and $\Zed\models\acons(\vect{z},\vect{z}_0)$. 
   Again by definition of $\delta'$, 
   we have, for all $k\geq 1$, 
   $\arun(\anode\cdot i \cdot 0^k)$ is of the form $(\pair{0}{\alocation_k},\dag,\cdots)$ for some location
   $\alocation_k\in\locations$ such that there exists some transition
   $(\alocation_{k-1},\aletter_{k-1},\acons_{k-1},\alocation_k)\in\delta$, and
   we have $\Zed\models\acons_{k-1}(\vect{z}_{k-1},\vect{z}_k)$. 
   Since $\arun$ is accepting, 
we know that there are infinitely many positions $\ell\geq 1$ such that $\alocation_{\ell}\in F$. 
Hence the run $\alocation \step{\pair{\aletter}{\vect{z}}} \alocation_0
\step{\pair{\aletter_0}{\vect{z}_0}} \alocation_1 \dots$ is an accepting run of $\aautomaton$ on
$(\aletter,\vect{z})(\aletter_0,\vect{z}_0) \dots$. 
But then also $(\aletter,\vect{z})(\aletter_0,\vect{z}_0) \dots$ satisfies $\apathformula_i$. 
Since $\anode$ is arbitrary, we have
$\atree \models \aformula$
and $\atree$ satisfies $\aformula$ via $i$.

For the other direction, suppose that
$\atree \models \aformula$
and $\atree$ satisfies $\aformula$ via $i$. 
We prove that $\atree \in \alang(\aautomaton_i)$, that is, there exists
some accepting run $\arun:\interval{0}{\degree-1}^* \to \delta'$ of $\aautomaton_i$ on $\atree$.
We prove that we can construct a run $\arun$ such that, for every
node $\anode\in\interval{0}{\degree-1}^*$, 
the path $\arun(\anode)\arun(\anode\cdot i) \arun(\anode\cdot i \cdot 0)
\arun(\anode\cdot i \cdot 0^2) \dots$ corresponds to some accepting run of
$\aautomaton_i$. This implies that $\arun$ is an accepting run of $\aautomaton$ on $\atree$
(all the other paths are non-critical). 

Let $\anode\in\interval{0}{\degree-1}^*$ be an arbitrary node in $\atree$. 
Let us assume $\atree(\anode)=\pair{\aletter}{\vect{z}}$ and 
$\atree(\anode\cdot i \cdot 0^k)=\pair{\aletter_k}{\vect{z}_k}$ for all $k\geq 0$, 
and we use $\aword_\anode$ to denote the corresponding infinite word 
$\pair{\aletter}{\vect{z}} \pair{\aletter_0}{\vect{z}_0} \pair{\aletter_1}{\vect{z}_1}\dots$. 
Since $\atree$ satisfies $\aformula$ via $i$, we know that 
$\aword_\anode \models \apathformula_i$. 
But then also $\aword_\anode \in \alang(\aautomaton)$. 
So there must exist some accepting run of $\aautomaton$ on $\aword_\anode$,
say with projection on the set of locations equal to $\alocation_\init^{\anode},\alocation_0^{\anode},\alocation_1^{\anode},\alocation_2^{\anode}\dots$
such that
\begin{itemize}
\item $\alocation_\init^{\anode} \in \locations_\init$,
\item there exist transitions
$(\alocation_\init^{\anode}, \aletter,\acons,\alocation_0^{\anode})\in\delta$ and
$(\alocation_k^{\anode},\aletter_k,\acons_k,\alocation_{k+1}^{\anode})\in\delta$ for all $k\geq 0$, 
with  $\Zed\models\acons(\vect{z},\vect{z}_0)$ and $\Zed\models\acons_k(\vect{z}_k,\vect{z}_{k+1})$
for all $k\geq 0$.
\end{itemize}
The definition of $\delta'$ now allows us to define 
$\arun(\anode\cdot i)= (\pair{i}{\alocation_0^{\anode}}, \dag, \cdots, \pair{i}{\alocation_0^{\anode \cdot i}},
\cdots)$, 
and $\arun(\anode\cdot i \cdot 0^k)= (\pair{0}{\alocation_k^{\anode}},\dag, \cdots,
\pair{i}{\alocation_0^{\anode \cdot i \cdot 0^k}},\cdots)$ for every $k\geq 1$.
\end{proof}

\section{Proofs for Section~\ref{section-ctlstarz-determinisation-safra}} 
\subsection{Proof of Lemma~\ref{lemma-safra-node-number}}
\label{appendix-proof-lemma-safra-node-number} 
\begin{proof}
Let $\safra$ be a Safra tree over $\locations$. 
If $\safra$ has no nodes, the claim is of course true. 
So let us assume that $\safra$ contains at least one node. 
We prove the claim by induction on the height $H$ of $\safra$. 
For the induction base, let $H=1$. The tree then has exactly one node, namely the root node, and the claim is trivially true. 
So suppose the claim holds for $H\geq 1$. We prove the claim for $H+1$. 
Suppose the root node of $\safra$ has $k$ children, denoted by $\anode_1, \dots, \anode_k$. 
For every $1\leq i \leq k$, let $\locations_i\subseteq \locations$ denote the label of $\anode_i$. 
For every $1\leq i \leq k$, 
the subtree of $\anode_i$ is a Safra tree over $\locations_i$ 
with depth at most $H$. 
By the induction hypothesis, such a subtree has at most $\card{\locations_i}$ nodes. 
By condition 5 of Safra trees, 
the sets $\locations_1,\dots, \locations_k$ are pairwise disjoint. 
By condition 4, the union $\bigcup_{1\leq i\leq k} \locations_i$ is a proper subset of
$\locations$. 
Hence $\sum_{i=1}^k \card{\locations_i} < \card{\locations}$. 
Altogether, the number of nodes in $\safra$ is at most
$1+\sum_{i=1}^k \card{\locations_i} < 1 + \card{\locations} \leq \card{\locations}$. 
\end{proof}

\subsection{Proof of Lemma~\ref{lemma_safra_dir_one_rabin}}
\label{appendix-proof-lemma_safra_dir_one_rabin}
\begin{proof}
Suppose $i\geq 1$, $1\leq J_i \leq 2 \cdot \card{\locations}$, and for all $k\geq i$, 
$\safra_k$ contains a node with name $J_i$ and $\alocation_k\in \saflab(\safra_k, J_i)$. 
If the first property holds, we are done. 
Otherwise, there exists some position $m\geq i$ such that 
\begin{itemize}
\item for all $k\geq m$, $\safra_k$ contains the node with name $J_i$ unmarked, and
\item $\alocation_{m}=\alocation_\acc$, and hence $\alocation_\acc\in \saflab(\safra_{m}, J_i)$. 
\end{itemize}
Using the definition of $\delta'$, it is easy to prove that in $\safra_{m+1}$, 
the node with name $J_i$ has a child node with name $J_m\neq J_i$ such that 
$\alocation_{m+1}\in \saflab(\safra_{m+1}, J_m)$. 
If for all $k\geq m+1$, $\safra_k$ contains a node with name $J_m$ and $\alocation_{k}\in \saflab(\safra_k,J_m)$, we are done. 
Otherwise, there must exist some position $p> m+1$ such that 
\begin{itemize}
\item $\safra_k$ contains a node with name $J_m$ and $\alocation_k\in \saflab(\safra_k, J_m)$ for all $m+1\leq k < p$, and 
\item $\safra_p$ does not contain a node with name $J_m$, or $\alocation_p\not\in \saflab(\safra_p, J_m)$. 
\end{itemize}
By definition of $\delta'$, there are three cases: during the construction of $\safra_p$ out of $\safra_{p-1}$. 
\begin{enumerate}
\item[(a)] The location $\alocation_{p}$ is removed from the label of the node with name $J_m$, because there exists some younger sibling of the node named $J_m$ (that is, an older child of $J_i$) whose label set contains $\alocation_p$. 
\item[(b)] The node with name $J_m$ has been removed from the Safra tree during step (5). But for this, the node with the name $J_m$ must
  have an empty label set, contradicting  $\alocation_{p-1}\in \saflab(\safra_{p-1}, J_m)$ and
  there is $(\alocation_{p-1}, \aletter_{p-1}, \acons'_{p-1}, \alocation_p) \in \delta$
  such that $\Zed \models \acons'_{p-1}(\vect{z}_{p-1},\vect{z}_{p})$-- so this case cannot occur.
\item[(c)] The node with name $J_m$ has been removed from the Safra tree during step (6). 
But for this the parent node with name $J_i$ must be marked, contradiction   -- so this case cannot occur. 
\end{enumerate}
Note that  case (a) can only occur at most $\card{\locations}-1$ times, as by
Lemma~\ref{lemma-safra-node-number}, 
the node with name $J_i$ can have at most $\card{\locations}-1$ children nodes. We can conclude that there must exist some position $i'\geq i$ and some name $1\leq J_{i'} \leq 2 \cdot \card{\locations}$
with $J_i\neq J_{i'}$ such that for all $k\geq i'$, $\safra_k$ contains a node with name $J_{i'}$ and $\alocation_k\in\saflab(\safra_k,J_{i'})$. 
\end{proof}

\subsection{Proof of Nonemptiness of $\text{Acc}(\alocation,j)$ and $\text{Pre}(\alocation, j,k)$}
\label{appendix-proof-lemma-safra-correctness}
In this section, we prove that the sets 
$\text{Acc}(\alocation,j)$ and $\text{Pre}(\alocation, j,k)$ are nonempty as stated in forthcoming Lemma~\ref{proof-lemma-safra-correctness-two-nonemptysets}.
Before, we prove one helpful lemma, also related to the correctness of the Safra construction. 
\begin{lem}
\label{lemma_safra_powerset_path}
For all infinite runs $\arun'$ of the form
$$(\safra_1,\aletter_1,\acons_1,\safra_2)(\safra_2,\aletter_2,\acons_2,\safra_3)(\safra_3,\aletter_3,\acons_3,\safra_4)\dots$$
of $\aautomaton'$ on
$(\aletter_{1}, \vect{z}_1)  (\aletter_{2},\vect{z}_{2})(\aletter_{3},\vect{z}_{3}) \dots$, 
for every $1\leq J\leq 2 \cdot \card{\locations}$ and for every $1\leq j\leq j'$, 
if $\safra_k$ contains a node with name $J$ for all $j\leq k\leq j'$, 
then for every $\alocation'\in \saflab(\safra_{j'}, J)$ there exist some $q\in \saflab(\safra_j,J)$ and some finite run 
$$(\alocation_j, \aletter_j,\acons'_j,\alocation_{j+1}) \dots (\alocation_{j'-1},\aletter_{j'-1},\acons'_{j'-1}, \alocation_{j'})$$ of $\aautomaton$ on $(\aletter_{j}, \vect{z}_j)(\aletter_{j+1}, \vect{z}_{j+1}) \dots (\aletter_{j'},\vect{z}_{j'})$ with 
$\alocation_j=\alocation$ and $\alocation_{j'}=\alocation'$. 
\end{lem}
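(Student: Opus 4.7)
The plan is to prove the statement by induction on the quantity $j' - j$, exploiting the fact that labels in Safra trees evolve essentially by a guarded powerset construction. The base case $j' = j$ is immediate: pick $\alocation = \alocation'$, which lies in $\saflab(\safra_j, J)$ by assumption, and take the empty finite run.

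For the inductive step, the idea is to trace how $\alocation'$ comes to lie in $\saflab(\safra_{j'}, J)$ by inspecting the six construction steps that transform $\safra_{j'-1}$ into $\safra_{j'}$ in the definition of $\delta'$ (cf. Theorem~\ref{theorem-size-safra-automaton}). Writing $\safra^{(\ell)}_{j'-1}$ for the intermediate tree after step $\ell$, I would first observe that steps (4), (5), (6) can only shrink (or preserve) the label of the node named $J$: horizontal merge removes locations from younger nodes, step (5) removes only nodes whose label is empty (so $J$ survives because $\alocation' \in \saflab(\safra_{j'}, J)$), and vertical merge affects only descendants and marks. Hence $\alocation' \in \saflab(\safra^{(3)}_{j'-1}, J)$.

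Next, I would invoke step (3), the powerset construction: there must exist $\alocation \in \saflab(\safra^{(2)}_{j'-1}, J)$ together with a transition $(\alocation, \aletter_{j'-1}, \acons', \alocation') \in \delta$ such that $\acons_{j'-1} \models \acons'$. Since the hypothesis ensures that $\safra_{j'-1}$ already contains a node with name $J$, this node was \emph{not} freshly created by step (2) (which only adds new children under fresh names and leaves existing labels untouched); together with the fact that step (1) merely unmarks, this yields $\saflab(\safra^{(2)}_{j'-1}, J) = \saflab(\safra_{j'-1}, J)$, whence $\alocation \in \saflab(\safra_{j'-1}, J)$. Because $\arun'$ is a valid run of $\aautomaton'$, we have $\Zed \models \acons_{j'-1}(\vect{z}_{j'-1}, \vect{z}_{j'})$, which combined with $\acons_{j'-1} \models \acons'$ gives $\Zed \models \acons'(\vect{z}_{j'-1}, \vect{z}_{j'})$; thus $(\alocation, \aletter_{j'-1}, \acons', \alocation')$ is a legitimate $\aautomaton$-transition reading $(\aletter_{j'-1}, \vect{z}_{j'-1})$ and $(\aletter_{j'}, \vect{z}_{j'})$.

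It then suffices to apply the induction hypothesis to the pair $(j, j'-1)$ with the target location $\alocation \in \saflab(\safra_{j'-1}, J)$ (the running hypothesis that $\safra_k$ contains a node with name $J$ for all $j \leq k \leq j'-1$ is inherited). This yields some $\alocation^{*} \in \saflab(\safra_j, J)$ together with a finite run from $\alocation^{*}$ to $\alocation$ of $\aautomaton$ on $(\aletter_j, \vect{z}_j) \cdots (\aletter_{j'-1}, \vect{z}_{j'-1})$; concatenating the transition built above produces the required run ending in $\alocation'$. The main obstacle is bookkeeping across the six Safra steps: one must carefully verify that nothing in steps (4)--(6) can introduce $\alocation'$ into $\saflab(\safra_{j'}, J)$ unless it was already present after step (3), and that the hypothesis rules out $J$ being born in step (2). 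Both facts follow directly from the definitions, but writing the case analysis cleanly is where most of the work lies.
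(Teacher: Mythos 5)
Your proof is correct and follows essentially the same route as the paper's: induction on $j'-j$, with the inductive step extracting a predecessor location $\alocation''\in\saflab(\safra_{j'-1},J)$ via the powerset step (3) of $\delta'$, using $\acons_{j'-1}\models\acons'_{j'-1}$ and $\Zed\models\acons_{j'-1}(\vect{z}_{j'-1},\vect{z}_{j'})$ to validate the $\aautomaton$-transition, and concatenating with the run given by the induction hypothesis. Your extra bookkeeping showing that steps (1), (2), (4)--(6) cannot introduce $\alocation'$ into $\saflab(\safra_{j'},J)$ beyond what step (3) produces is a detail the paper leaves implicit, so if anything your write-up is slightly more complete.
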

\begin{proof}
The proof is by induction on the difference $\Delta=j'-j$. 
For the induction base, let $\Delta=0$. 
By convention, $\alocation'$ is a finite run on $(\aletter_{j'},\vect{z}_{j'})$ for every $\alocation'\in \saflab(\safra_{j'}, J)=\saflab(\safra_j,J)$. 
Suppose  the claim holds for $\Delta\geq 0$; we prove it for $\Delta+1$. 
So suppose $\alocation'\in \saflab(\safra_{j'},J)$. 
Consider the transition step $(\safra_{j'-1},\aletter_{j'-1}, \acons_{j'-1}, \safra_{j'})\in \delta'$ used in $\arun'$. 
By step (3) of the definition of $\delta'$, 
there exists some $\alocation''\in \saflab(\safra_{j'-1},J)$ 
and some transition $(\alocation'',\aletter_{j'-1},\acons_{j'-1}',\alocation')\in \delta$ such that
$\acons_{j'-1} \models \acons_{j'-1}'$. 
From $\arun'$ being a run, we obtain that
$\Zed\models \acons_{j'-1}(\vect{z}_{j'-1},\vect{z}_{j'})$, hence also 
$\Zed\models\acons_{j'-1}'(\vect{z}_{j'-1},\vect{z}_{j'})$. 
Hence $\arun_2=(\alocation'',\aletter_{j'-1},\acons_{j'-1}',\alocation')$ is a finite run of $\aautomaton$ on $(\aletter_{j'-1},\vect{z}_{j'-1})(\aletter_{j'},\vect{z}_{j'})$. 
By the induction hypothesis, 
there exist $\alocation\in \saflab(\safra_j,J)$ and 
some finite run 
$$\arun_1=(\alocation_j, \aletter_j,\acons'_j,\alocation_{j+1}) \dots (\alocation_{j'-2},\aletter_{j'-2},\acons'_{j'-2}, \alocation_{j'-1})$$ of $\aautomaton$ on $(\aletter_{j}, \vect{z}_j)(\aletter_{j+1}, \vect{z}_{j+1}) \dots (\aletter_{j'-1},\vect{z}_{j'-1})$ with 
$\alocation_j=\alocation$ and $\alocation_{j'}=\alocation''$. 
The final run is obtained by composing $\arun_1$ and $\arun_2$.
\end{proof}

\begin{lem}
\label{proof-lemma-safra-correctness-two-nonemptysets}
For every  $j\geq 1$ and for every $\alocation\in\saflab(\safra_{i_j}, J)$, $\textup{Acc}(\alocation,j)\neq \emptyset$, 
and  for every $j\geq 2$, every $i_{j-1} \leq k < i_j $ and every $\alocation \in \saflab(\safra_{k},J)\cap F$, 
$\text{Pre}(\alocation,j,k)\neq\emptyset$. 
\end{lem}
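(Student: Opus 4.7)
Both nonemptiness claims follow from Lemma~\ref{lemma_safra_powerset_path}, which already translates the powerset bookkeeping that a single Safra node carries across time into a run of $\aautomaton$. The $\text{Pre}$ claim is essentially a direct application of the lemma to the node named $J$; the $\textup{Acc}$ claim needs, in addition, one powerset back-step at the creation of a suitable child of $J$ (to expose an accepting seed state) and, because that child is wiped out by step (6) at $i_j$, one powerset back-step at the final transition.

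For $\alocation \in \saflab(\safra_k, J) \cap F$ with $i_{j-1} \leq k < i_j$, I would first observe that, since $i_{j-1} \geq i_0$, the node named $J$ is present in every $\safra_m$ with $i_{j-1} \leq m \leq k$. Applying Lemma~\ref{lemma_safra_powerset_path} with $j \leftarrow i_{j-1}$, $j' \leftarrow k$ and name $J$ then yields some $\alocation' \in \saflab(\safra_{i_{j-1}}, J)$ together with a finite run of $\aautomaton$ from $\alocation'$ at step $i_{j-1}$ to $\alocation$ at step $k$, witnessing $\alocation' \in \text{Pre}(\alocation, j, k)$.

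For $\alocation \in \saflab(\safra_{i_j}, J)$, I would use that the marking of $J$ at $i_j$ is realised by step (6) of the transition $\safra_{i_j-1} \to \safra_{i_j}$, so in the intermediate tree $\safra_{i_j}^{(5)}$ (just before step (6)) the label of $J$ equals the union of the labels of its children; hence $\alocation$ lies in $\saflab(\safra_{i_j}^{(5)}, J^*)$ for some child $J^*$ of $J$ in $\safra_{i_j}^{(5)}$. Since the marking at $i_{j-1}$ erased every descendant of $J$, this $J^*$ was created via step (2) at some $k^* \in (i_{j-1}, i_j]$, with initial label $\saflab(\safra_{k^*-1}, J) \cap F \subseteq F$. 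I would then distinguish two subcases. If $k^* = i_j$, the powerset step (3) of the $i_j$-transition directly produces $\alocation' \in \saflab(\safra_{k^*-1}, J) \cap F$ with a transition $(\alocation', \aletter_{i_j-1}, \acons', \alocation) \in \delta$ satisfying $\Zed \models \acons'(\vect{z}_{i_j-1}, \vect{z}_{i_j})$, and setting $k := i_j-1$ witnesses $\alocation' \in \textup{Acc}(\alocation, j)$. If $k^* < i_j$, the powerset step (3) of the $i_j$-transition first provides some $\alocation^{\circ} \in \saflab(\safra_{i_j-1}, J^*)$ with a transition to $\alocation$; Lemma~\ref{lemma_safra_powerset_path} applied to name $J^*$ between $k^*$ and $i_j-1$ (throughout which $J^*$ is continuously present) provides $\alocation'' \in \saflab(\safra_{k^*}, J^*)$ and a run from $\alocation''$ at $k^*$ to $\alocation^{\circ}$ at $i_j-1$; and the powerset step (3) of the $k^*$-transition then yields $\alocation' \in \saflab(\safra_{k^*-1}, J) \cap F$ with a transition to $\alocation''$. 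Concatenating these three pieces gives the run from $\alocation'$ at $k := k^*-1 \in [i_{j-1}, i_j)$ to $\alocation$ at $i_j$ required to witness $\alocation' \in \textup{Acc}(\alocation, j)$.

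The main subtlety is that $J^*$ does not survive in $\safra_{i_j}$ (step (6) that marks $J$ removes every descendant, including $J^*$), so Lemma~\ref{lemma_safra_powerset_path} cannot be invoked with name $J^*$ on the closed interval $[k^*, i_j]$; the lemma must be stopped at $i_j - 1$ and the missing top step recovered manually via the powerset step (3) of the $i_j$-transition. Justifying that $k^*$ strictly exceeds $i_{j-1}$ likewise relies on step (6) at $i_{j-1}$ erasing every descendant of $J$, together with the convention that Safra names are re-used only after their previous bearer has been removed. Aside from this bookkeeping, the argument is a direct unfolding of the Safra marking invariant.
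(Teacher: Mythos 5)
Your proof is correct and follows essentially the same route as the paper: $\text{Pre}$ is a direct application of Lemma~\ref{lemma_safra_powerset_path}, and $\textup{Acc}$ is obtained by locating the child $J^*$ of $J$ that absorbs $\alocation$ at the marking step, tracing it back to its creation (where its label is contained in $F$), and stitching the two manual powerset back-steps around the lemma applied to $J^*$ on $[k^*,i_j-1]$. The only nitpick is that your justification that $J^*$ is created strictly after $i_{j-1}$ invokes the marking at $i_{j-1}$, which covers $j\geq 2$ but not $j=1$; for $j=1$ one instead uses that the node named $J$ is freshly introduced at $i_0$ and hence has no children there, as the paper notes explicitly.
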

\begin{proof}
Let $j\geq 1$ and $\alocation\in\saflab(\safra_{i_j}, J)$. 
We prove that $\textup{Acc}(\alocation,j)\neq \emptyset$. 
By definition of $\delta'$, $\saflab(\safra_{i_j},J) = \saflab((\safra_{i_j-1})^{(5)},J)$, hence $\alocation\in\saflab((\safra_{i_j-1})^{(5)},J)$. 
Since the node with name $J$ is marked in $\safra_{i_j}$ (step (6)), 
it  must have some child node in $(\safra_{i_j-1})^{(5)}$, say with name $K\neq J$, such that $\alocation\in \saflab((\safra_{i_j-1})^{(5)}, K)$.
Then there exists $\alocation' \in \saflab((\safra_{i_j-1})^{(2)}, K)$ such that  
$(\alocation',\aletter_{i_j-1},\acons'_{i_j-1},\alocation)$ is a finite run of $\aautomaton$ on $(\aletter_{i_j-1},\vect{z}_{i_j-1})(\aletter_{i_j},\vect{z}_{i_j})$. 
By property (4) of Safra trees, we know that $\alocation'\in \saflab((\safra_{i_j-1})^{(2)}, J)$. 
If $\alocation'\in F$, then we are done. 
Otherwise, $K$ is a node in $\safra_{i_j-1}$ (it was not created as a youngest child of $J$ by step (2) of $\delta'$). 
Let $i_{j-1}<k<i_j-1$ be the minimal position such that node named $J$ has no child node with name $K$ in $\safra_{k-1}$, and node named $J$ has a child node
with name $K$ in $\safra_{k'}$ for all $k\leq k'\leq i_j-1$.
Such a position necessarily exists; indeed, for all $m\geq 0$, 
node $J$ in $\safra_{i_m}$ has no children nodes: for $\safra_{i_0}$, this is because
$J$ is freshly introduced by step (2),
and for all $m>1$, this is because the node with name $J$ in $\safra_{i_m}$ is marked and  marked nodes (step (6)) do not have children nodes. 
By Lemma~\ref{lemma_safra_powerset_path}, 
there exists $\alocation_k\in\saflab(\safra_k,K)$ and some finite run 
$(\alocation_k,\aletter_k,\acons'_k,\alocation_{k+1})\dots (\alocation_{i_j-2},\aletter_{i_j-2},\acons'_{i_j-2},\alocation_{i_j-1})$ of $\aautomaton$ on $(\aletter_k,\vect{z}_k)\dots (\aletter_{i_j-1},\vect{z}_{i_j-1})$, where $\alocation_{i_j-1}=\alocation'$. 
Finally, using the fact that $\safra_k$ contains the node with name $K$, whereas $\safra_{i_{j-1}}$ does not, it is not hard to prove that there exists $\alocation_{k-1}\in\saflab(\safra_{k-1}, J)\cap F$ such that
$(\alocation_{k-1},\aletter_{k-1},\acons'_{k-1},\alocation_k)$ is a finite run of $\aautomaton$ on
$(\aletter_{k-1},\vect{z}_{k-1})(\aletter_k,\vect{z}_k)$.
 Hence $\alocation_{k-1}\in\textup{Acc}(\alocation,j)$, which finishes the proof. 

 Let $j\geq 2$,  $i_{j-1} \leq k < i_j $ and  $\alocation \in \saflab(\safra_{k},J)\cap F$. For
 proving that $\text{Pre}(\alocation,j,k)\neq\emptyset$, we can apply Lemma~\ref{lemma_safra_powerset_path}. 
\end{proof}

\section{Proofs for Section~\ref{section-starproperty}}
\label{appendix-last} 
\subsection{Proof of Lemma~\ref{lemma-shortcircuit-equality-lessthan}}
\label{appendix-proof-lemma-shortcircuit-equality-lessthan} 
\begin{proof} Let $\apath = \pair{\anode_0}{\advar_0} \step{\sim_1} \cdots \step{\sim_n} \pair{\anode_n}{\advar_n}$ be a path
  in $\newGt$ such that $\anode_0$ and $\anode_n$ are neighbours.

(Property 1)
Firstly, if $n \geq 2$, then one can show that there is  $0 < h < n$ such that $\anode_0$, $\anode_h$ and $\anode_n$ are pairwise neighbours.
Let us explain briefly how $h$ is computed.
\begin{itemize}
\item Case $\length{\anode_0} = \length{\anode_n}$ ($\anode_0 = \anode_n$). $h \egdef n-1$.
\item Case $\length{\anode_0} = \length{\anode_n} + 1$ ($\anode_0$ is a child of $\anode_n$)
  and for all $i \in \interval{1}{n-1}$, $\length{\anode_i} \geq \length{\anode_0}$.
      $h \egdef n-1$.
\item Case $\length{\anode_0} = \length{\anode_n} + 1$ and there is $i \in \interval{1}{n-1}$, $\length{\anode_i} < \length{\anode_0}$.
  $h \egdef \min \set{i \in \interval{1}{n-1} \mid \length{\anode_i} = \length{\anode_n}}$.
  Note that actually $\anode_h = \anode_n$. 
\item Case $\length{\anode_n} = \length{\anode_0} + 1$
      ($\anode_n$ is a child of $\anode_0$)
      and for all $i \in \interval{1}{n-1}$, $\length{\anode_i} \geq \length{\anode_n}$.
      $h \egdef 1$.
      Note that actually $\anode_h = \anode_n$. 
\item Case $\length{\anode_n} = \length{\anode_0} + 1$ and there is $i \in \interval{1}{n-1}$, $\length{\anode_i} < \length{\anode_n}$.
      $h \egdef \min \set{i \in \interval{1}{n-1} \mid \length{\anode_i} = \length{\anode_0}}$. 
\end{itemize}
Observe that $h \leq n-1$, $n-h \leq n-1$, and that $\anode_0$ and $\anode_h$ are neighbours, and 
$\anode_n$ and $\anode_h$ are neighbours.

(Property 2) Second, by construction of $\newGt$ from $\asymtree$ built over satisfiable constraints in $\sattypes{\beta}$, we can show the
property below (this requires a lengthy case analysis).
Let $\pair{\anodebis_1}{\advar_1}$, $\pair{\anodebis_2}{\advar_2}$ and $\pair{\anodebis_3}{\advar_3}$
be nodes in the graph $\newGt$ that are pairwise neighbours such that
$\pair{\anodebis_1}{\advar_1} \step{\sim_1} \pair{\anodebis_2}{\advar_2}$ and $\pair{\anodebis_2}{\advar_2} \step{\sim_2}
\pair{\anodebis_3}{\advar_3}$ with $\sim_1, \sim_2 \in \set{<,=}$. If $< \in \set{\sim_1, \sim_2}$, then
$\pair{\anodebis_1}{\advar_1} \step{<} \pair{\anodebis_3}{\advar_3}$ otherwise
$\pair{\anodebis_1}{\advar_1} \step{=} \pair{\anodebis_3}{\advar_3}$
(this uses the local consistency of $\asymtree$).

Now, we can prove the lemma. If $n = 0$ or $n =1$, we are done. Otherwise, the induction hypothesis
assumes that the property holds for $n \leq K$ and
let  $\apath = \pair{\anode_0}{\advar_0} \step{\sim_1} \cdots \step{\sim_n} \pair{\anode_n}{\advar_n}$ be a path
  in $\newGt$ such that $\anode_0$ and $\anode_n$ are neighbours with $n = K+1$.
By (Property 1), there  is $0 < h < n$ such that $\anode_0$, $\anode_h$ and $\anode_n$ are pairwise neighbours,
$h \leq K$ and $n-h \leq K$. By the induction hypothesis, we have
$\pair{\anode_0}{\advar_0} \step{<} \pair{\anode_h}{\advar_h}$ if $< \in \set{\sim_1, \ldots, \sim_h}$,
otherwise $\pair{\anode_0}{\advar_0} \step{=} \pair{\anode_h}{\advar_h}$. Similarly,
we have
$\pair{\anode_h}{\advar_h} \step{<} \pair{\anode_n}{\advar_n}$ if $< \in \set{\sim_{h+1}, \ldots, \sim_n}$,
otherwise $\pair{\anode_h}{\advar_h} \step{=} \pair{\anode_n}{\advar_n}$.
By (Property 2), we get $\pair{\anode_0}{\advar_0} \step{<} \pair{\anode_n}{\advar_n}$ if
$< \in \set{\sim_{1}, \ldots, \sim_n}$, otherwise $\pair{\anode_0}{\advar_0} \step{=} \pair{\anode_n}{\advar_n}$. 
\end{proof}

\subsection{Proof of  Lemma~\ref{lemma-characterisation-satisfiability-goplus}}
\label{appendix-proof-lemma-characterisation-satisfiability-goplus}
\begin{proof} \fbox{(I) $\Rightarrow$ (II)}
Suppose $\asymtree$ is satisfiable.
Then there exists a tree $\atree:\interval{0}{\degree-1}^*\to \aalphabet \times \Zed^\beta$ such that
for all $\anode \cdot i \in \interval{0}{\degree-1}^+$ with
$\asymtree(\anode \cdot i) = \pair{\aletter}{\acons}$,
we have $\Zed \models \acons(\atree(\anode), \atree(\anode \cdot i))$.
Moreover, if $\asymtree(\varepsilon) = \pair{\aletter}{\acons}$
and $\atree(\varepsilon) = \pair{\aletter}{\vect{z}}$, then
$\Zed \models \acons(\vect{0},\vect{z})$.

Given $\anode\in \interval{0}{\degree-1}^*$ and $\avariable_i\in \{\avariable_1,\dots,\avariable_\beta\}$, 
in the following, we write $\atree(\anode)(\avariable_i)$ to denote the data value $z_i$ if $\atree(\anode)=(z_1,\dots,z_i,\dots,z_\beta)$. 
Similarly, we write $\atree(\anode)(\adatum_1)$ to denote $\adatum_1$ and $\atree(\anode)(\adatum_\alpha)$ to denote $\adatum_\alpha$. 

{\em Ad absurdum}, suppose there exists $(\anode,\avariable_i)\in U_{< \adatum_1} \cup U_{> \adatum_{\alpha}}$  in $\newGt$ such that 
$\slen{\anode,\avariable_i} = \omega$. 
We prove the claim for the case $(\anode,\avariable_i)\in U_{< \adatum_1}$; the proof for
$(\anode,\avariable)\in U_{> \adatum_{\alpha}}$ is analogous.
Recall that, by definition, $\slen{\anode,\avariable_i} =\slen{\pair{\anode}{\avariable_i},\pair{\anode}{\adatum_1}}$.
Define $\Delta = \adatum_1 - \atree(\anode)(\avariable_i)$. 
Let $\apath$ be a path 
$(\anode_0,\advar_0)\step{\sim_1} (\anode_1,\advar_1) \step{\sim_2} \dots \step{\sim_k}(\anode_k,\advar_k)$ such that 
$\pair{\anode_0}{\advar_0}=\pair{\anode}{\avariable_i}$,
$\pair{\anode_k}{\advar_k}=\pair{\anode}{\adatum_1}$, 
and 
$\slen{\apath}>\Delta$. Such a path must exist by assumption because $\slen{\anode,\avariable_i} = \omega$.
By Lemma~\ref{lemma-correctness-newgt},
we have $\atree(\anode_{i-1})(\advar_{i-1}) \sim_i \atree(\anode_i)(\advar_i)$ for all $1\leq i\leq k$. 
But this implies that there are more than $\Delta$ different data values in the interval
$\interval{\atree(\anode_0)(\advar_0)}{\atree(\anode_k)(\advar_k)} = 
\interval{\atree(\anode)(\avariable_i)}{\atree(\anode)(\adatum_1)}$, which leads to a contradiction.

\fbox{(II) $\Rightarrow$ (I)} Suppose that for all $\pair{\anode}{\avariable}$ in $(U_{< \adatum_1} \cup U_{> \adatum_{\alpha}})$ in
$\newGt$ we have $\slen{\anode,\avariable} < \omega$.             
We define the mapping $g:\interval{0}{\degree-1}^*  \times \{\avariable_1,\dots,\avariable_\beta\}      \to \Zed$ as follows:
\begin{itemize}
\item $g(\anode,\avariable) \egdef \adatum$ if $\pair{\anode}{\avariable} \in U_{\adatum}$ for
some $\adatum \in\interval{\adatum_1}{\adatum_{\alpha}}$, 
\item $g(\anode,\avariable) \egdef \adatum_1 - \slen{\anode,\avariable}$ if $\pair{\anode}{\avariable} \in U_{< \adatum_1}$, and
\item $g(\anode,\avariable) \egdef \adatum_\alpha + \slen{\anode,\avariable}$ if
$\pair{\anode}{\avariable} \in U_{> \adatum_{\alpha}}$. 
\end{itemize}
Recall that
$\set{U_{\adatum} \mid \adatum \in \interval{\adatum_1}{\adatum_{\alpha}}} \cup \set{U_{< \adatum_1}, U_{> \adatum_{\alpha}}}$
is a partition of
$\interval{0}{\degree-1}^* \times \DVAR{\beta}{\adatum_1}{\adatum_{\alpha}}$ so that $g$ is indeed well-defined. 
Now 
define $\atree':\interval{0}{\degree-1}^* \to \aalphabet \times \Zed^\beta$ by $\atree'(\anode) \egdef
\pair{\aletter}{\tuple{g(\anode,\avariable_1)}{
g(\anode,\avariable_\beta)}}$ for all $\anode\in\interval{0}{\degree-1}^*$ with  $\asymtree(\anode)=(\aletter,\cdot)$.
We prove that $\atree'$ witnesses the satisfaction of $\asymtree$. 
For this,
we verify that
for all
$\anode \cdot j \in \interval{0}{\degree-1}^+$ with $\asymtree(\anode \cdot j)=(\aletter,\acons)$,
we have $\Zed \models\acons(\atree'(\anode), \atree'(\anode \cdot j))$.
Moreover, concerning the case with the root $\varepsilon$, one can show that $\Zed \models\acons(\vect{0}, \atree'(\varepsilon))$
with $\asymtree(\varepsilon)=(\aletter,\acons)$ but we omit it below as it is very similar to the general case. 

\begin{itemize}
\item Suppose $\avariable'=\adatum \in \acons$ for some $\adatum \in\interval{\adatum_1}{\adatum_{\alpha}}$.
By definition of $\newGt$, 
we have $(\anode \cdot j,\avariable)\in U_{\adatum}$. By definition, $g(\anode \cdot j,\avariable)=\adatum$
and therefore $\Zed \models (\avariable'=\adatum)(\atree'(\anode), \atree'(\anode \cdot j))$.
\item Suppose $\avariable=\adatum \in \acons$ for some $\adatum \in\interval{\adatum_1}{\adatum_{\alpha}}$.
By definition of $\newGt$, and as $\asymtree$ is locally consistent, 
we have $(\anode,\avariable)\in U_{\adatum}$. By definition, $g(\anode,\avariable)=\adatum$
and therefore $\Zed \models (\avariable=\adatum)(\atree'(\anode), \atree'(\anode \cdot j))$. 
\item Suppose $\avariable'<\adatum_1\in \acons$. By definition of $\newGt$, 
we have $(\anode \cdot j,\avariable)\in U_{< \adatum_1}$. By definition, $g(\anode \cdot j,\avariable)=\adatum_1-\slen{\anode \cdot j,\avariable}$.  
By definition of $\newGt$, we also have $\pair{\anode \cdot j}{\adatum_1}\in U_{\adatum_1}$ and hence
$\pair{\anode \cdot j}{\avariable} \step{<} \pair{\anode \cdot j}{\adatum_1}$. Hence $\slen{\anode \cdot j,\avariable}\geq 1$,
so that indeed $g(\anode \cdot j,\avariable)<\adatum_1$ and  $\Zed \models (\avariable'<\adatum_1)(\atree'(\anode), \atree'(\anode \cdot j))$.
\item Suppose $\avariable<\adatum_1\in \acons$. By definition of $\newGt$, and as $\asymtree$ is locally consistent, 
we have $(\anode,\avariable)\in U_{< \adatum_1}$. By definition, $g(\anode,\avariable)=\adatum_1-\slen{\anode,\avariable}$.  
By definition of $\newGt$,
we also have $\pair{\anode}{\adatum_1}\in U_{\adatum_1}$ and hence
$\pair{\anode}{\avariable} \step{<} \pair{\anode}{\adatum_1}$. Hence $\slen{\anode,\avariable}\geq 1$,
so that indeed $g(\anode,\avariable)<\adatum_1$ and  $\Zed \models (\avariable<\adatum_1)(\atree'(\anode), \atree'(\anode \cdot j))$.
\item Suppose $\avariable'>\adatum_\alpha\in \acons$. By definition of $\newGt$, 
we have $(\anode \cdot j,\avariable)\in U_{> \adatum_{\alpha}}$. 
By definition, $g(\anode \cdot j,\avariable)=\adatum_\alpha+\slen{\anode \cdot j,\avariable}$. 
We also have $\pair{\anode \cdot j}{\adatum_\alpha} \in U_{\adatum_{\alpha}}$ and hence
$\pair{\anode \cdot j}{\adatum_\alpha} \step{<} \pair{\anode \cdot j}{\avariable}$. Hence $\slen{\anode \cdot j,\avariable}\geq 1$, so that
indeed $g(\anode \cdot j,\avariable)>\adatum_\alpha$ and therefore $\Zed \models (\avariable'> \adatum_{\alpha})(\atree'(\anode), \atree'(\anode \cdot j))$.
\item Suppose $\avariable>\adatum_\alpha\in \acons$. By definition of $\newGt$ and as $\asymtree$ is locally consistent, 
we have $(\anode,\avariable)\in U_{> \adatum_{\alpha}}$. 
By definition, $g(\anode,\avariable)=\adatum_\alpha+\slen{\anode,\avariable}$. 
We also have $\pair{\anode}{\adatum_\alpha} \in U_{\adatum_{\alpha}}$ and hence
$\pair{\anode}{\adatum_\alpha} \step{<} \pair{\anode}{\avariable}$. Hence $\slen{\anode,\avariable}\geq 1$, so that
indeed $g(\anode,\avariable)>\adatum_\alpha$ and therefore $\Zed \models (\avariable> \adatum_{\alpha})(\atree'(\anode), \atree'(\anode \cdot j))$.
\item Suppose $\avariable' < \avariablebis'\in\acons$. 
We distinguish the following cases.

\begin{itemize}
\item Suppose $\pair{\anode \cdot j}{\avariable}\in U_{\adatum}$ and $\pair{\anode \cdot j}{\avariablebis}\in U_{\adatum'}$ for some
$\adatum, \adatum' \in \interval{\adatum_1}{\adatum_{\alpha}}$. 
This also implies $\avariable'=\adatum, \avariablebis'=\adatum'\in\acons$. 
Recall that $\acons$ is satisfiable, hence $\adatum<\adatum'$ must hold. 
By definition, $g(\anode \cdot j,\avariable)=\adatum$ and $g(\anode \cdot j,\avariablebis)=\adatum'$, and hence clearly
$g(\anode \cdot j,\avariable)<g(\anode \cdot j,\avariablebis)$ and therefore
$\Zed \models (\avariable' < \avariablebis')(\atree'(\anode), \atree'(\anode \cdot j))$. 
\item  Suppose $\pair{\anode \cdot j}{\avariable}\in U_{< \adatum_1}$ and $\pair{\anode \cdot j}{\avariablebis}\in U_{\adatum}$ for some
$\adatum \in \interval{\adatum_1}{\adatum_{\alpha}}$.
This also implies $\avariable'<\adatum_1\in\acons$. We have proved above that $g(\anode \cdot j,\avariable)<\adatum_1$. 
By definition, $g(\anode \cdot j,\avariablebis)=\adatum \geq \adatum_1$.  
Hence $g(\anode \cdot j,\avariable)<g(\anode \cdot j,\avariablebis)$
and therefore
$\Zed \models (\avariable' < \avariablebis')(\atree'(\anode), \atree'(\anode \cdot j))$. 
\item 
Suppose $\pair{\anode \cdot j}{\avariable}\in U_{< \adatum_1}$ and $\pair{\anode \cdot j}{\avariablebis}\in U_{< \adatum_1}$. 
By definition, 
$g(\anode \cdot j,\avariable)=\adatum_1 - \slen{\anode \cdot j,\avariable}$ and 
$g(\anode \cdot j,\avariablebis)=\adatum_1 - \slen{\anode \cdot j,\avariablebis}$. 
By assumption and definition of $\newGt$, 
we have 
$\pair{\anode \cdot j}{\avariable} \step{<} \pair{\anode \cdot j}{\avariablebis}$. 
Recall that $\slen{\anode \cdot j,\avariable} = \slen{\pair{\anode \cdot j}{\avariable},\pair{\anode \cdot j}{\adatum_1}}$
and  $\slen{\anode \cdot j,\avariablebis} = \slen{\pair{\anode \cdot j}{\avariablebis},\pair{\anode \cdot j}{\adatum_1}}$. 
By construction of $\newGt$, we have $\pair{\anode \cdot j}{\adatum_1}\in U_{\adatum_1}$, and hence
$\pair{\anode \cdot j}{\avariable} \step{<} \pair{\anode \cdot j}{\adatum_1}$ and $\pair{\anode \cdot j}{\avariablebis} \step{<} \pair{\anode \cdot j}{\adatum_1}$.
This clearly yields $\slen{\anode \cdot j ,\avariable}
\geq \slen{\anode \cdot ,\avariablebis}+1$. Hence $\slen{\anode \cdot j,\avariable}>\slen{\anode \cdot j,\avariablebis}$, so that indeed 
$g(\anode \cdot j,\avariable)<g(\anode \cdot j,\avariablebis)$ and $\Zed \models (\avariable' < \avariablebis')(\atree'(\anode), \atree'(\anode \cdot j))$. 
\item Suppose $\pair{\anode \cdot j}{\avariable}\in U_{< \adatum_1}$ and $\pair{\anode \cdot j}{\avariablebis}\in U_{> \adatum_{\alpha}}$. 
This implies $\avariable'<\adatum_1,\avariablebis'>\adatum_\alpha\in \acons$. 
We have proved above that 
$g(\anode \cdot j,\avariable)<\adatum_1$ and $g(\anode \cdot j,\avariablebis)>\adatum_\alpha$. Hence 
$g(\anode \cdot j,\avariable)<g(\anode \cdot j,\avariablebis)$ and $\Zed \models (\avariable' < \avariablebis')(\atree'(\anode), \atree'(\anode \cdot j))$.
\item Suppose $\pair{\anode \cdot j}{\avariable}\in U_{\adatum}$ for some
$\adatum \in \interval{\adatum_1}{\adatum_{\alpha}}$ and $\pair{\anode \cdot j}{\avariablebis}\in U_{> \adatum_{\alpha}}$. 
This also implies $\avariablebis'>\adatum_\alpha\in\acons$. We have proved above that $g(\anode \cdot j,\avariablebis)>\adatum_\alpha$.
By definition, $g(\anode \cdot j,\avariable)=\adatum\leq \adatum_\alpha$. 
Hence 
$g(\anode \cdot j,\avariable)<g(\anode \cdot j,\avariablebis)$ and
$\Zed \models (\avariable' < \avariablebis')(\atree'(\anode), \atree'(\anode \cdot j))$.
\item Suppose $\pair{\anode \cdot j}{\avariable}, \pair{\anode \cdot j}{\avariablebis}\in U_{> \adatum_{\alpha}}$.
By definition, 
$g(\anode \cdot j,\avariable)=\adatum_\alpha + \slen{\anode \cdot j,\avariable}$ and 
$g(\anode \cdot j,\avariablebis)=\adatum_\alpha + \slen{\anode \cdot j,\avariablebis}$. 
By assumption and definition of $\newGt$, 
we have 
$\pair{\anode \cdot j}{\avariable} \step{<} \pair{\anode \cdot j}{\avariablebis}$. 
Recall that $\slen{\anode \cdot j,\avariable} = \slen{\pair{\anode \cdot j}{\adatum_\alpha},\pair{\anode \cdot j}{\avariable}}$
and  $\slen{\anode \cdot j,\avariablebis} = \slen{\pair{\anode \cdot j}{\adatum_\alpha},\pair{\anode \cdot j}{\avariablebis}}$. 
By construction of $\newGt$, we have $\pair{\anode \cdot j}{\adatum_\alpha}\in U_{\adatum_{\alpha}}$, and hence
$\pair{\anode \cdot j}{\adatum_\alpha} \step{<} \pair{\anode \cdot j}{\avariable}$ and $\pair{\anode \cdot j}{\adatum_\alpha} \step{<}
\pair{\anode \cdot j}{\avariablebis}$.
This clearly yields $\slen{\anode \cdot j,\avariablebis}\geq \slen{\anode \cdot j,\avariable}+1$
because $\pair{\anode \cdot j}{\avariable} \step{<} \pair{\anode \cdot j}{\avariablebis}$.
Hence $\slen{\anode \cdot j,\avariable}<\slen{\anode \cdot j,\avariablebis}$, so that indeed 
$g(\anode \cdot j,\avariable)<g(\anode \cdot j,\avariablebis)$ and $\Zed \models (\avariable' < \avariablebis')(\atree'(\anode), \atree'(\anode \cdot j))$.
\end{itemize}
The other cases cannot happen thanks to local consistency. For instance,
$\avariablebis' < \adatum_{1}, \avariable' > \adatum_{\alpha}$ and
$\avariable' < \avariablebis'$ in $\acons$ cannot happen due to consistency
for elements in $\sattypes{\beta}$. 

\item The cases $\avariable' < \avariablebis \in \acons$, $\avariable < \avariablebis' \in \acons$
      and $\avariable < \avariablebis \in \acons$ are similar to the previous case and are omitted herein.

\item Suppose $\avariable' = \avariablebis'\in\acons$ with $\asymtree(\anode \cdot j) = \pair{\aletter}{\acons}$.
  Since $\acons$ is satisfiable, for some $\adatum^{\dag}$
  in `$< \adatum_1$', $\adatum_{1}$, \ldots,
  $\adatum_{\alpha}$, `$> \adatum_{\alpha}$', 
   we have
   $\pair{\anode \cdot j}{\avariable}, \pair{\anode \cdot j}{\avariablebis} \in U_{\adatum^{\dag}}$. If $\adatum^{\dag}$ is different from `$> \adatum_1$' and
   `$> \adatum_{\alpha}$',
necessarily $g(\anode \cdot j,\avariable) = g(\anode \cdot j,\avariablebis)$. Otherwise, since
$\pair{\anode \cdot j}{\avariable} \step{=} \pair{\anode \cdot j}{\avariablebis}$ in $\newGt$, we have
$\slen{\anode \cdot j, \avariable} = \slen{\anode \cdot j, \avariablebis}$. Consequently, $g(\anode \cdot j,\avariable) =
g(\anode \cdot j,\avariablebis)$ too and $\Zed \models (\avariable' = \avariablebis')(\atree'(\anode), \atree'(\anode \cdot j))$. 
\item Suppose $\avariable = \avariablebis\in\acons$ with $\asymtree(\anode \cdot j) = \pair{\aletter}{\acons}$.
  Since $\acons$ is satisfiable and $\asymtree$ is locally consistent, for some $\adatum^{\dag}$
  in `$< \adatum_1$', $\adatum_{1}$, \ldots,
  $\adatum_{\alpha}$, `$> \adatum_{\alpha}$', 
   we have
   $\pair{\anode}{\avariable}, \pair{\anode}{\avariablebis} \in U_{\adatum^{\dag}}$. If $\adatum^{\dag}$ is different from `$> \adatum_1$' and
   `$> \adatum_{\alpha}$',
necessarily $g(\anode,\avariable) = g(\anode,\avariablebis)$. Otherwise, since
$\pair{\anode}{\avariable} \step{=} \pair{\anode}{\avariablebis}$ in $\newGt$, we have
$\slen{\anode, \avariable} = \slen{\anode, \avariablebis}$. Consequently, $g(\anode,\avariable) =
g(\anode,\avariablebis)$ too and $\Zed \models (\avariable = \avariablebis)(\atree'(\anode), \atree'(\anode \cdot j))$.
\item The case  $\avariable' = \avariablebis \in\acons$ with $\asymtree(\anode \cdot j) = \pair{\aletter}{\acons}$ is similar and it is omitted
  below. \qedhere
\end{itemize}
\end{proof}

\end{document}